\newtheorem*{rep@theorem}{\rep@title}
\newcommand{\newreptheorem}[2]{%
\newenvironment{rep#1}[1]{%
 \def\rep@title{#2 \ref*{##1}, repeated}%
 \begin{rep@theorem}}%
 {\end{rep@theorem}}}
\newtheorem{theorem}{Theorem}
\newcommand\numeq[1]%
\newcommand\numleq[1]%
\newcommand\numgeq[1]%
\newcommand\numpreceq[1]%
\newcommand\numsucceq[1]%
\begin{document}
\title{Dynamical complexity of non-Gaussian many-body systems with dissipation}
\author{Guillermo González-García$^{1, 2}$}
\author{Alexey V. Gorshkov$^{3,4}$}
\author{J.~Ignacio Cirac$^{1, 2}$}
\author{Rahul Trivedi$^{1, 2}$}
\email{rahul.trivedi@mpq.mpg.de}
\address{$^1$Max-Planck-Institut für Quantenoptik, Hans-Kopfermann-Str.~1, 85748 Garching, Germany \\
$^2$Munich Center for Quantum Science and Technology (MCQST), Schellingstr. 4, D-80799 Munich, Germany \\
$^3$Joint Quantum Institute, NIST/University of Maryland, College Park, Maryland 20742, USA.\\ $^4$Joint Center for Quantum Information and Computer Science, NIST/University of Maryland, College Park, Maryland 20742, USA.}

\date{\today}

\begin{abstract}
We characterize the dynamical state of many-body bosonic and fermionic many-body models with inter-site Gaussian couplings, on-site non-Gaussian interactions and local dissipation comprising incoherent particle loss, particle gain, and dephasing. We first establish that, for fermionic systems, if the dephasing noise is larger than the non-Gaussian interactions, irrespective of the Gaussian coupling strength, the system state is a convex combination of Gaussian states at all times. Furthermore, for bosonic systems, we show that if the particle loss and particle gain rates are larger than the Gaussian inter-site couplings, the system remains in a separable state at all times. Building on this characterization, we establish that at noise rates above a threshold, there exists a classical algorithm that can efficiently sample from the system state of both the fermionic and bosonic models. Finally, we show that,  unlike fermionic systems, bosonic systems can evolve into states that are not convex-Gaussian even when the dissipation is much higher than the on-site non-Gaussianity. Similarly, unlike bosonic systems, fermionic systems can generate entanglement even with noise rates much larger than the inter-site couplings.
\end{abstract}
\maketitle

\raggedbottom

\emph{Introduction}. Whether many-body quantum systems evolve into classically non-trivial states under decoherence is of fundamental interest to the theory of open quantum systems and also has implications for the quantum advantage in quantum computers and simulators \cite{Preskill2018NISQ, daley2022practical_analogue,ebadi2021analog,scholl2021analog,wei2022_analog,Lukin_2021_probing}. Traditionally, this has been mostly studied for many-body spin models, including  extensive recent activity in both the discrete-time setting (i.e.~quantum circuits interspersed with noise) and in the continuous-time setting (modeled by a many-body Lindblad master equation \cite{shtanko2021_fermion_complexity,Trivedi2022_transitions}). For discrete-time models, early results showed that sufficiently high noise suppresses entanglement thus enabling classical simulation \cite{aharonov2000_percolation}. Recent results have shown classical simulability, even with a small amount of depolarizing noise, for both sampling or computing local observables in both random \cite{Aharonov2023_paulipaths,tindall2023efficient,kechedzhi2024effective,shao2023simulating,fontana2023classical_LOWESA,rudolph2023classical_LOWESA,gao2018efficient_simulation,liao2023simulation,gonzalez2022_errors,schuster2024polynomialtime} and structured models \cite{gonzalez2024_paulipaths,rajakumar2024_IQP,debmishra2024bounds}. These results have partly been generalized to the geometrically-local continuous-time setting, which more accurately models analog quantum simulators, to show that the system remains classically simulable when the noise rate is larger than the interaction terms in the Hamiltonian \cite{Trivedi2022_transitions}.

Quantum simulators based on platforms such as ultracold atoms in optical lattices \cite{Yan2022_fermion_simulation,Spar2022_fermion_simulation,Norcia2018_fermion_simulation,bloch_2017_boson_simulation,yang2020simulator_bose_hubbard}, superconducting circuits \cite{zhang2023_superconducting_simulator,shi2024simulator_superconducting} or nonlinear photonics \cite{saxena2023realizing, chang2014quantum, noh2016quantum}, are often described by a family of Hamiltonians that, only in certain regimes, reduce to spin systems. They are modeled by a fermionic or bosonic lattice with two kinds of terms: (i) Gaussian coupling terms which are linear or quadratic in creation/annihilation operators (e.g. particle hopping or pair production); (ii) Non-Gaussian interaction terms that typically act on particles only on one site. Consequently, there are two relevant frequency scales: the strength of the Gaussian couplings, $J$, and that of the on-site (non-Gaussian) interactions, $U$. If $J=0$ or $U=0$, this model is classically simulable. When $J=0$, the Hamiltonian is a sum of single-site terms which maps product states to product states which can be classically simulated. When $U=0$, the dynamics is Gaussian and thus local observables can be efficiently computed and, for fermions, even sampling is efficient \cite{Koenig_2012_simulation_fermions,bartlett2002_gaussian_cv,Terhal2002_classical_simulation_fermions, valiant2001quantum}. However, when both $J, U \neq 0$ and the system is noiseless, this model is universal for quantum computation \cite{BRAVYI2002_fermionicqc,LLoyd_1999_continuousvariables} and thus worst-case hard to simulate classically. While the presence of dissipation should make the model classically simulable, the amount and type of local dissipation needed remains unclear.

For fermionic systems, the impact of noise on non-Gaussianity was studied in a circuit model with Gaussian gates and non-Gaussian ancillas, which showed that the state remains a convex combination of Gaussian states above a noise-threshold\cite{Terhal_2013_noisy_fermionic,oszmaniec2014classical,Bravyi_2006_majorana}. Studies analyzing continuous-time dynamics have focused on the non-Gaussianity introduced via two-body dissipation \cite{shtanko2021_fermion_complexity}. For bosonic systems, previous studies have either focused on understanding their complexity as a function of evolution time in the absence of noise \cite{maskara2022_bosons_phase_diagram,Deshpande_2018_bosons_complexity,Muraleedharan_2019_bosons}, or for the specific task of boson sampling in the presence of noise \cite{oh2023_noisy_boson_sampling,Qi2020_noisy_boson_sampling,Shchesnovich_2021_noisy_boson, pan_2020_boson_sampling_exp,madsen2022boson_sampling_supremacy}. However, the classical simulability of the noisy continuous-time model motivated above remains unresolved.

In this Letter, we rigorously address this question---we consider fermionic and bosonic systems with $n$ sites, each containing locally $L$ modes (Fig. \ref{fig:diagram}). The annihilation operators corresponding to the $\sigma^\text{th}$ mode at the $i^\text{th}$ site, where $\sigma \in \{1, 2 \dots L\}$ and $i \in \{1, 2 \dots n\}$, is given by  $a_{i,\sigma}$. We will use the Hermitian operators $c_{i,\sigma}^{\alpha}$, with $\alpha \in \{1,2\}$, defined as
$c_{i,\sigma}^1=(a_{i,\sigma}+a_{i,\sigma}^{\dagger})/\sqrt{2},c_{i,\sigma}^2=-i(a_{i,\sigma}-a_{i,\sigma}^{\dagger})/\sqrt{2}$, which represent either Majorana operators (for fermions) or position and momentum quadrature operators (for bosons). The noisy dynamics is described by the Lindblad master equation

\begin{align}\label{eq:lindblad}
 \frac{d \rho(t)}{dt}=-i[H(t),\rho(t)] + \sum_{i, \sigma} \mathcal{L}_{i,\sigma}\rho(t),
\end{align}
were $H(t)$ is a (possibly time-dependent) Hamiltonian describing the system. $\mathcal{L}_{i,\sigma}$ captures the noise, which is assumed to act locally on every mode $(i,\sigma)$, and is modeled by
\begin{align}\label{eq:noise_lindblad_fermions}
\mathcal{L}_{i,\sigma} (\cdot) = \sum_{l=1}^3\kappa_l \bigg(L_{i, \sigma}^{(l)} (\cdot)L^{(l)\dagger}_{i, \sigma} -\frac{1}{2}\{L_{i, \sigma}^{(l)\dagger}L_{i, \sigma}^{(l)},(\cdot)\}\bigg),
\end{align}
with jump operators $L_{i, \sigma}^{(1)}=a_{i,\sigma},L_{i, \sigma}^{(2)}=a_{i,\sigma}^{\dagger},L_{i, \sigma}^{(3)}=a_{i,\sigma}^{\dagger}a_{i,\sigma}$ and decay rates $\kappa_1, \kappa_2, \kappa_3$ respectively. The jump operator $a_{i,\sigma}$ models particle loss, $a_{i,\sigma}^{\dagger}$ models incoherent particle gain, and $a_{i,\sigma}^{\dagger}a_{i,\sigma}$ models dephasing. We remark that our conclusions also hold for other physically relevant dissipators such as $L_{i, \sigma}^{(1)}=c_{i,\sigma}^1$, $L_{i, \sigma}^{(2)}=c_{i,\sigma}^2$ which, in the bosonic case, would correspond to white noise fluctuations in the quadratures. We use the same noise Lindbladian for both bosons and fermions---for the bosonic case, we will additionally assume that the particle loss occurs at a rate strictly higher than (both coherent and incoherent) particle gain (see Supplement \cite{supplemental_material} for the exact assumption) so as to avoid an unbounded growth of the number of particles with $t$ which would be unphysical in an actual experiment.

We will assume that the Hamiltonian can be written as $H(t) = H_\text{g}(t) + H_\text{ng}(t)$, where $H_\text{g}(t)$ contains Gaussian general intersite terms:
\begin{subequations}\label{eq:Hamiltonian}
\begin{align}\label{eq:Gaussian}
 &H_\text{g}(t)=\sum_{i, j}\sum_{\substack{\alpha,\alpha^{\prime} \\ \sigma,\sigma^{\prime}}}
 J_{i,\sigma;j,\sigma^{\prime}}^{\alpha,\alpha^{\prime}}(t) c_{i,\sigma}^{\alpha}c_{j,\sigma^{\prime}}^{\alpha^{\prime}}+\sum_{i,\alpha,\sigma}\Omega_{i;\sigma}^{\alpha}(t)c_{i,\sigma}^{\alpha},
\end{align}

and $H_\text{ng}(t)$ contains on-site non-Gaussian terms which account for particle-particle repulsion and attraction between different modes at the same site:
\begin{align}\label{eq:non_Gaussian}
    H_\text{ng}(t) = \sum_{i,\sigma,\sigma^{\prime}} U_{i,\sigma;i,\sigma^{\prime}}(t)n_{i,\sigma}n_{i,\sigma^{\prime}}.
\end{align}
\end{subequations}
Note that in the fermionic case $\Omega_{i;\sigma}^{\alpha}(t)=0$, since physical Hamiltonians must preserve fermionic parity, while in the bosonic case $\Omega_{i;\sigma}^{\alpha}(t)$ can be a non-zero real scalar. For fermions, we can assume that $J_{i,\sigma;j,\sigma^{\prime}}^{\alpha,\alpha'}(t)$ is purely imaginary and anti-symmetric i.e. 
\begin{align}
J_{i,\sigma;j,\sigma^{\prime}}^{\alpha,\alpha'}(t)= \big(J_{i,\sigma;j,\sigma^{\prime}}^{\alpha,\alpha'}(t)\big)^* = -J_{j,\sigma^{\prime};i,\sigma}^{\alpha',\alpha}(t),
\end{align}
while for bosons it can be assumed to be purely real and symmetric: 
\begin{align}
J_{i,\sigma;j,\sigma^{\prime}}^{\alpha,\alpha^{\prime}}(t)=J_{j,\sigma^{\prime};i,\sigma}^{\alpha^{\prime},\alpha}(t).
\end{align}
The non-Gaussian onsite interactions $U_{i,\sigma; i, \sigma'}(t)$ can be assumed to be real and symmetric for both fermions and bosons.

\begin{figure}
    \includegraphics[scale=0.27]{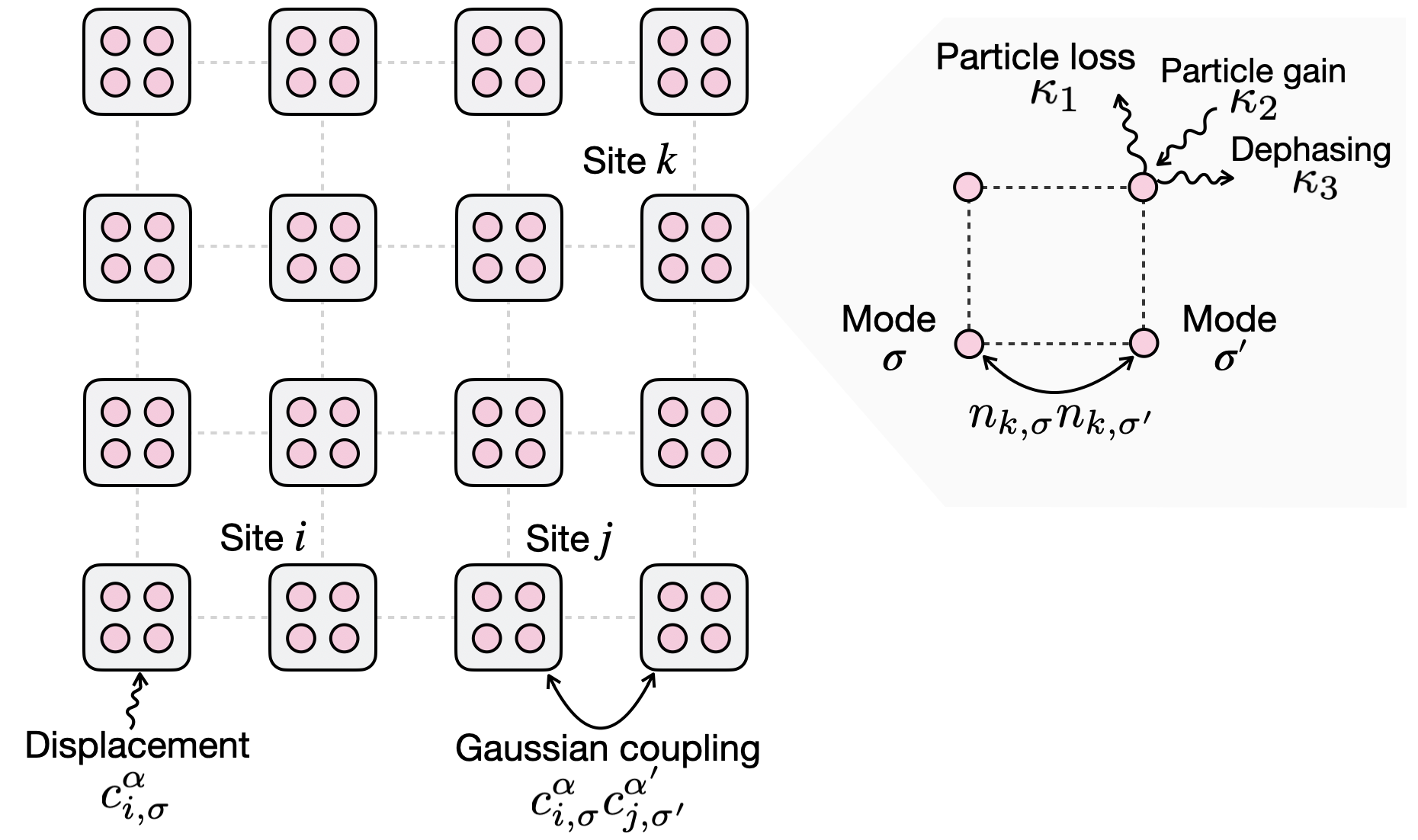}
    \caption{Sketch of the model, with $n$ sites on a lattice, where each site contains $L$ modes. There are Gaussian couplings between the different sites, while the non-Gaussian interactions are only onsite. Nonlocal couplings are allowed. In the bosonic case, interactions of the form $n_{i,\sigma}^2$ are also allowed.
    }\label{fig:diagram}
\end{figure}

We also define the parameters $J, \Omega, U$ as the smallest constants such that for every mode $(i, \sigma)$ and all times $t$
\begin{align}\label{eq:constants}
&\sum_{j\neq i, \sigma'} \sum_{\alpha, \alpha'}|J_{i,\sigma;j,\sigma^{\prime}}^{\alpha,\alpha^{\prime}}(t)| \leq  J, \sum_{\alpha}|\Omega_{i;\sigma}^{\alpha}(t)| \leq \Omega \text{ and } \nonumber \\ 
&\ \ \sum_{\sigma^{\prime}} |U_{i,\sigma;i,\sigma^{\prime}}(t)| \leq U.
\end{align}
The parameter $J$ captures the Gaussian coupling strength between a mode and the modes at all other sites, $U$ captures the on-site non-Gaussian interaction strength, and $\Omega$ captures the coherent drive at each site. We will also assume that $J$, $U$, and $\Omega$ are $O(1)$ constants, which is true in most physical models. Finally, we remark that we do not need to assume geometrical locality of the model---our results will apply to geometrically local and non-local models.

The initial state $\rho(0)$ is either a product state (when analyzing entanglement), a Gaussian state (when analyzing non-Gaussianity), or both  (e.g., the vacuum state). In the bosonic case, additionally, $\rho(0)$ will be assumed to satisfy $\mathrm{Tr}(n_{i,\sigma}^k \rho(0)) \leq C_0^k k^{\alpha_0 k + \beta_0}, \forall (i,\sigma)$, for $k \in \{1,2...\}$, and for some $C_0, \alpha_0, \beta_0 > 0$: this condition guarantees that the probability of finding $\geq k$ particles in a mode decreases super-polynomially with $k$, as would be expected in a physically preparable bosonic state \cite{kuwahara2024bosons_liebrobinson}.

\begin{figure}
    \includegraphics[scale=0.45]{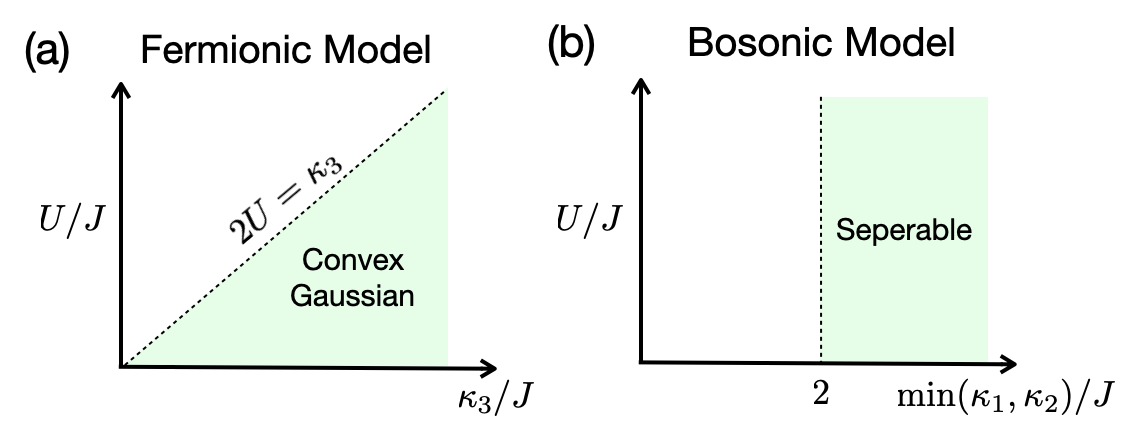}
    \caption{Phase diagram for both bosonic and fermionic systems in the presence of generic noise. (a) For fermionic system the state remains convex-Gaussian at all times for error rates $\kappa_3 \geq 2U$. (b) In bosonic systems the state remains separable at all times for error rates $\kappa_1, \kappa_2 \geq 2J$.}\label{fig:phase_diagram}
\end{figure}

\emph{Results}: Our results, depicted in Fig.~\ref{fig:phase_diagram}, show the simulability of the fermionic and bosonic models as a function of $J, U, \kappa_i$. We first establish that when the noise rate is larger than the on-site non-Gaussian interaction strength $U$, the fermionic model remains convex-Gaussian at all times, and can therefore be classically efficiently sampled from.

\begin{figure*}
    \centering
    \includegraphics[scale=0.35]{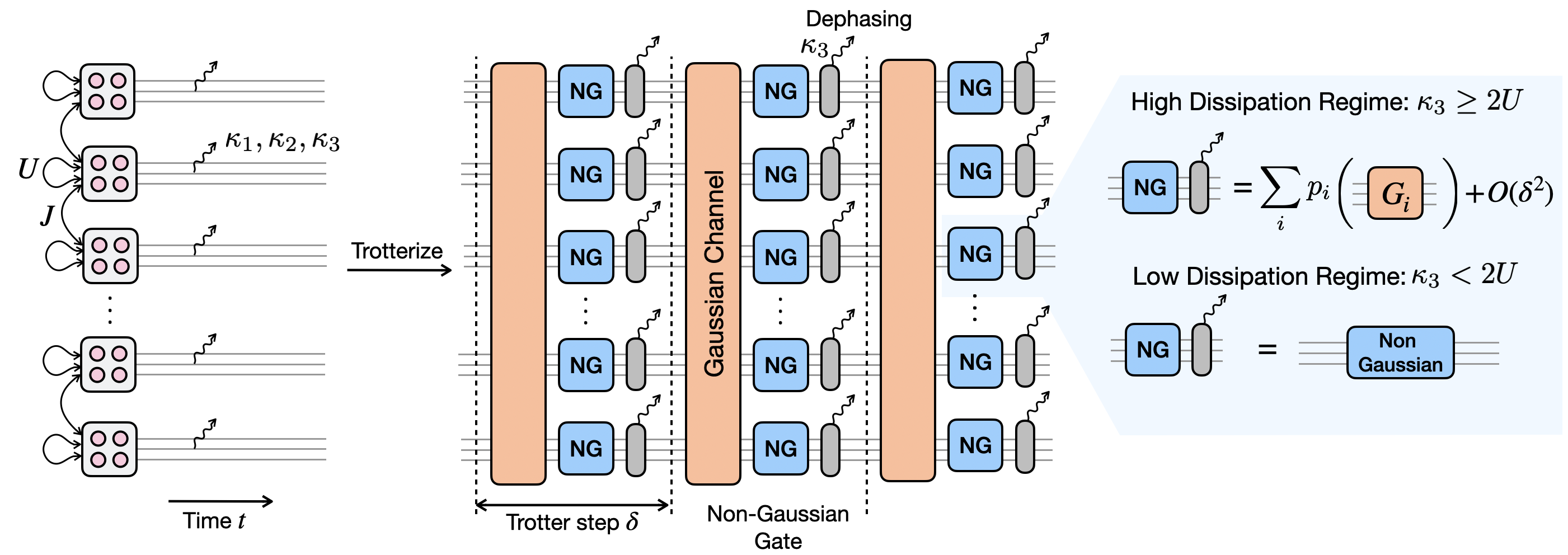}
    \caption{Schematic depiction of the Trotterization schemes in the proof of Theorem \ref{theorem:fermions_high_noise} (fermionic systems with weak non-Gaussianity). For simplicity, we only depict a 1D setting, with each site containing $3$ modes ($L=3$).
     A single Trotter step consists of a Gaussian channel (orange rectangles) that includes the combined effect of $H_g(t)$ and the particle gain and loss dissipators, followed by non-Gaussian gates (blue rectangles) interspersed with the dephasing dissipator (gray curved rectangles). Crucially, a non-Gaussian gate followed by sufficiently strong dephasing can be written as a convex combination of Gaussian channels.}
    \label{fig:trotterization_gaussian}
\end{figure*}

\begin{theorem}\label{theorem:fermions_high_noise}
For an initial Gaussian state, if $\kappa_3 \geq 2U$, then the state of the fermionic model at time $t$, $\rho(t)$, is a convex combination of Gaussian states for all $t \geq 0$. Furthermore, $\rho(t)$ can be classically sampled in the Fock state basis to an $\epsilon$ total variation error in $\textnormal{poly}(n, t, 1/\epsilon)$ time.
\end{theorem}
\noindent Physically Theorem \ref{theorem:fermions_high_noise} suggests that when $\kappa_3 \geq 2U$, dephasing noise destroys non-Gaussianity faster than $H_\text{ng}(t)$ creates it, so that $\rho(t)$ always remains convex-Gaussian. Since $H_{\mathrm{g}}(t)$ preserves convex-Gaussianity, the noise threshold in Theorem~\ref{theorem:fermions_high_noise} is independent of $J$. Notably, it is dephasing that results in this convex-Gaussianity. With only incoherent particle loss/gain, $\rho(t)$ could evolve into a non-Gaussian state at short times even with large dissipation. This arises from the fermionic parity structure---the density matrix of the fermionic model has the form $\rho(t) = \rho_+(t)+\rho_-(t)$, where $\rho_\pm(t)$ is supported only on even/odd parity states. Due to this structure, convex-Gaussianity in $\rho(t)$ requires both $\rho_\pm(t)$ to be convex-Gaussian \cite{oszmaniec2014classical}. $H_\text{ng}(t)$ generates non-Gaussianity individually in both $\rho_\pm(t)$. However, the loss/gain dissipators, to first order, switch the parity of the state and {do not} act within the two parity subspaces. Consequently, they cannot immediately counter the non-convex-Gaussianity created by $H_\text{ng}(t)$.

We provide a complete proof of Theorem \ref{theorem:fermions_high_noise} in the supplement: The starting point is a Trotterization of the Lindbladian in Eq.~\ref{eq:lindblad} --- in each Trotter step, we express the evolution as (a) a Gaussian unitary corresponding to $H_\text{g}(t)$, particle loss and gain followed by (b) the single-site channels generated by $H_\text{ng}(t)$ and dephasing (Fig.~\ref{fig:trotterization_gaussian}). Analyzing the single-site channel, we show that for $\kappa_3 \geq 2U$, this channel maps an input convex-Gaussian state to a convex-Gaussian state. Furthermore, we explicitly construct the output convex-Gaussian state, which allows us to sample from it \cite{Terhal2002_classical_simulation_fermions,knill2001fermioniclinearopticsmatchgates}.

Next, we consider the bosonic model and establish that when the noise rate is larger than the inter-site Gaussian coupling, $\rho(t)$ remains separable at all times, and can therefore be classically efficiently sampled from in the Fock state basis. 

\begin{figure*}
    \centering
    \includegraphics[scale=0.35]{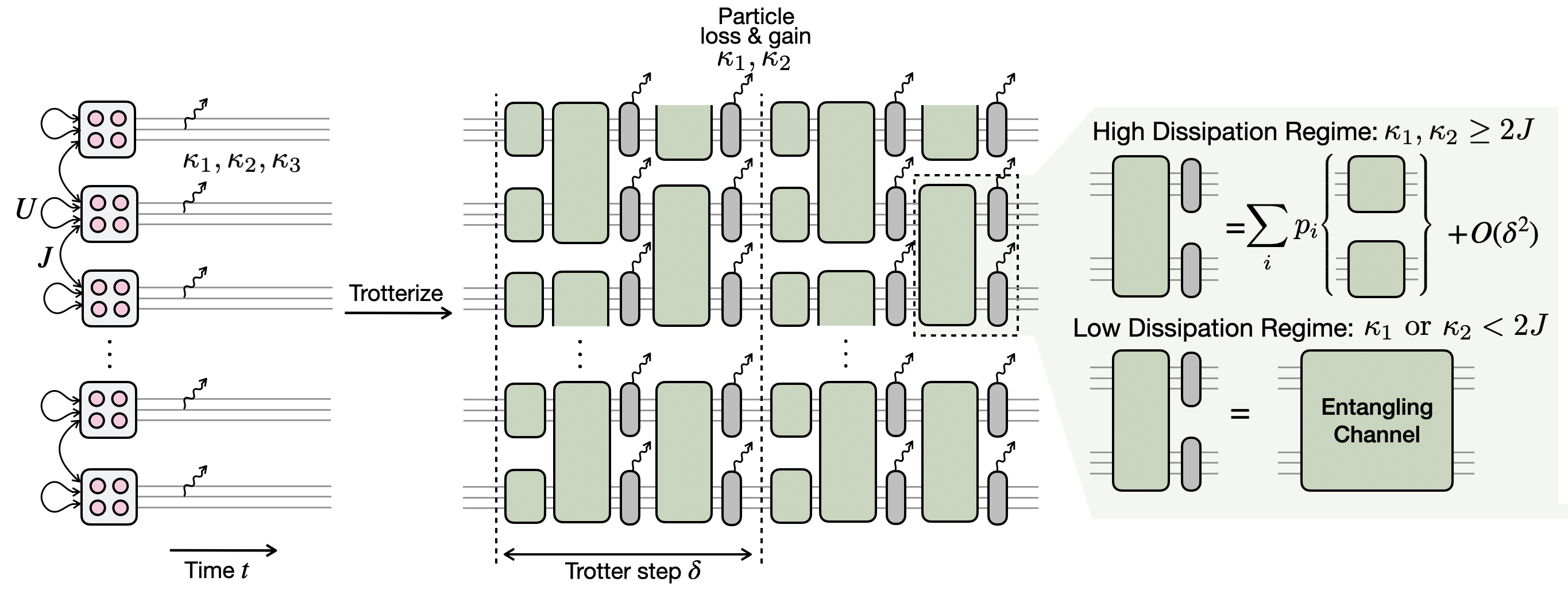}
    \caption{Schematic depiction of the Trotterization schemes in the proof of Theorem \ref{theorem:weak_couplings} (bosonic systems with weak inter-site couplings). For simplicity, we only depict a 1D setting, with each site containing $3$ modes ($L=3$).
    A single Trotter step consists of a layer of single-site channels which include the Hamiltonian terms acting on that site and the dephasing dissipator, followed by 2-site gates interspersed with particle gain and loss dissipators (in gray circles). Crucially, a 2-site gate followed by sufficiently strong noise can be written as a convex combination of single-site channels}
    \label{fig:trotterization_entgl}
\end{figure*}

\begin{theorem}\label{theorem:weak_couplings} Suppose $\rho(t)$ is the state obtained after evolving the bosonic model for time $t$ with an initial product state, then for $\kappa_1, \kappa_2 \geq 2J $ the state $\rho(t)$ is separable for all $t\geq 0$. Furthermore, there is a randomized classical algorithm that can sample $\rho(t)$ in the Fock state basis to $\epsilon$ total variation error in $\textnormal{poly}(n, t, 1/\epsilon)$ time.
\end{theorem}
\noindent Our result formalizes the intuition that, when noise exceeds the inter-site coupling strength, a buildup of entanglement is prohibited and  no classically non-trivial state is generated. Notably, the noise threshold is determined by particle loss and gain noise.  This arises from the dephasing dissipator being diagonal in the Fock basis, while evolution under $H_g(t)$ creates entanglement through off-diagonal elements (i.e., coherences). Thus, to first order, dephasing cannot counter this entanglement generation. Particle gain or loss dissipators, however, are not diagonal in the Fock basis and can prevent it. In the Supplement, we also extend Theorem \ref{theorem:weak_couplings} to quantum spin models with single spin noise and bosonic models with inter-site non-Gaussian couplings \cite{supplemental_material}. Unlike previous percolation-based arguments limiting entanglement to $O(\log n)$ qubit clusters for sufficiently strong noise \cite{Trivedi2022_transitions, aharonov2000_percolation}, we show that the state is entirely separable, and provide an explicit construction that can be efficiently sampled from.

A detailed proof of Theorem 2 is provided in the Supplement \cite{supplemental_material}. Similar to Theorem 1, we begin by a first-order Trotterization of the model but with a different decomposition of the Lindbladian: We express it as a product of (a) single site gates, which contain the unitary generated by $H_\text{ng}(t)$, single-site terms in $H_\text{g}(t)$ and the dephasing dissipator and (b) two-site channels which contain the unitary generated by the inter-site terms in $H_\text{g}(t)$ paired together with the particle gain and loss dissipators [Fig.~\ref{fig:trotterization_entgl}]. We denote the channel acting between modes $(i,\sigma)$ and $(j,\sigma^{\prime})$ at the Trotter-step $\tau$ as $\Phi_{\tau \delta,(\tau-1) \delta}^{i,\sigma;j,\sigma^{\prime}}$, where $\delta$ is the size of the Trotter-step. Importantly, the Trotterization is performed in such a way that the channel $\Phi_{\tau \delta,(\tau-1) \delta}^{i,\sigma;j,\sigma^{\prime}}$ can be understood as a time evolution of the inter-site Gaussian couplings between modes $(i,\sigma)$ and $(j,\sigma^{\prime})$, followed by noise on both modes. This effectively redistributes the single-site noise into ``gate-based" noise on the inter-site gates. Analyzing $\Phi_{\tau \delta,(\tau-1) \delta}^{i,\sigma;j,\sigma^{\prime}}$, we show that for $\kappa_1, \kappa_2 \geq 2J$, it is separability-preserving and thus the state remains separable at all times. Furthermore, we also explicitly construct an $O(\delta^2)$ approximation to the separable state after each time-step and obtain an explicit algorithm to sample from $\rho(t)$.

\emph{Tightness of Theorems 1 and 2}. We can now ask if a version of Theorem \ref{theorem:fermions_high_noise} holds for bosonic systems i.e., is there a noise threshold $\kappa_\text{th}(U)$ dependent only on the non-Gaussian strength $U$ and uniform in $J, \Omega$ that guarantees a convex-Gaussian at all times? For dephasing noise, we provide numerical evidence to the contrary: even for $\kappa_3 \gg U$, single-mode dynamics can yield states with negative Wigner function, and thus not convex-Gaussian states can be generated at time-scales $\sim 1/U$ \cite{Mari_2012_wigner, supplemental_material}. While negativity of the Wigner function suggests classical simulation hardness \cite{Mari_2012_wigner,cormick2006_wigner_classicality}, we do not rule out the existence of an efficient classical algorithm. When $\kappa_3 = 0$, in the Supplement we show that no matter how small $U$ is relative to $\kappa_1, \kappa_2$, computing expected local particle numbers is BQP-hard if $J, \Omega$ can be arbitrarily large but $O(1)$ \cite{supplemental_material}. This builds upon Refs.~\cite{liangjiang2023universalcontrolbosonicsystems,eickbusch2022fast_gates} which perform a universal gate-set on a single bosonic mode with an arbitrarily small effective gate-error rate by engineering the displacement and squeezing. We extend this technique to also implement an entangling gate between two oscillators thus yielding a high-fidelity universal multi-mode gate-set. Together with results from Ref.~\cite{benor2013quantumrefrigerator}, this suggests that when the noise is non-unital (i.e. $\kappa_1 \neq \kappa_{2}$), by using sufficiently large $J$ and $\Omega$, a fault-tolerant quantum computation can be encoded into the model \cite{noh2020_fault_tolerant_bosons,matsuura2024fault_tolerant_bosons,aharonov1999faulttolerantquantum}. Thus, it is unlikely to be able to classically compute even local observables in this setting unless BQP = BPP.

Finally, we consider if a version of Theorem \ref{theorem:weak_couplings} holds for fermions i.e., do noise rates larger than the Gaussian inter-site couplings result in separability at all times. We answer this question in the negative: in the Supplement we show by analyzing few-mode fermionic models that, in contrast with bosons, no matter how high $\kappa_2, \kappa_3$ are, the system does not remain separable at all times and can exhibit entanglement at time-scales $\sim \min(\kappa_2^{-1}, \kappa_3^{-1})$. In fact, this short-time non-separability holds not only if we consider separability with respect to all observables \cite{Moriya_2006_entanglement_fermions}, but also if we consider a weaker notion of separability with respect to only parity-conserving observables \cite{maricarmen2007_entanglement_fermions}. However, this result does not rule out separability at longer times or other routes to classical simulation.

\emph{Conclusion and outlook}. We have characterized the classical complexity of simulating the continuous-time evolution of fermionic and bosonic systems as a function of the noise, Gaussian and non-Gaussian interaction strengths. Future theoretical directions include the study extending our results to non-Markovian models of dissipation. 

The models considered in this paper can be experimentally realised in several platforms. The bosonic model can be implemented in superconducting systems where the Gaussian Hamiltonian can be controlled by designing capacitive couplings between different qubits and the single-site non-Gaussianity by the nonlinear Josephson potential in the qubit \cite{zhang2023_superconducting_simulator,shi2024simulator_superconducting}. We can also use cold bosonic atoms in optical lattices where the strength of both the Gaussian and the non-Gaussian Hamiltonians can be controlled by tuning the optical lattice potential  \cite{bloch_2017_boson_simulation,yang2020simulator_bose_hubbard}. The fermionic model can be implemented either with cold atoms in optical lattices by using a fermionic species of atoms  \cite{Yan2022_fermion_simulation,Spar2022_fermion_simulation,Norcia2018_fermion_simulation}, or in solid-state systems such Moir{\'e} superlattices hosting trions \cite{liu2021signatures, wang2021moire, baek2021optical}. Since all of these systems will have intrinsic particle loss, gain, and dephasing, tuning the parameters in the Gaussian and non-Gaussian Hamiltonians could allow us to access the parameter regimes in Theorems 1 and 2. A major challenge in experimentally verifying the threshold behavior predicted by Theorems 1 and 2 would be verifying the presence (or absence) of entanglement/non-Gaussianity in $\rho(t)$. While this could be hard to do for large systems, we remark that the difference in the threshold behavior in fermionic and bosonic models that we described can be understood even with systems with few ($\leq 4$) fermionic or bosonic  modes, which is well within the regime where a full state tomography can already be performed.

\begin{acknowledgements}
We thank Ashish Clerk and Liang Jiang for useful discussions and Peter McMahon for discussions that inspired this project. R.T acknowledges support from Center for Integration of Modern Optoelectronic Materials on Demand (IMOD) seed grant (DMR-2019444). This research was supported in part by grant NSF PHY-2309135 to the Kavli Institute for Theoretical Physics (KITP). The research is part of the Munich Quantum Valley, which is supported by the Bavarian State Government with funds from the High tech Agenda Bayern Plus. J.I.C, R.T, G.G.G acknowledge funding from the project FermiQP of the Bildungsministerium für Bildung und Forschung (BMBF). A.V.G.~acknowledges support from the U.S.~Department of Energy, Office of Science, Accelerated Research in Quantum Computing, Fundamental Algorithmic Research toward Quantum Utility (FAR-Qu). A.V.G.~was also supported in part by NSF QLCI (award No.~OMA-2120757), DoE ASCR Quantum Testbed Pathfinder program (awards No.~DE-SC0019040 and No.~DE-SC0024220), NSF STAQ program, AFOSR MURI, DARPA SAVaNT ADVENT, and NQVL:QSTD:Pilot:FTL. A.V.G.~also acknowledges support from the U.S.~Department of Energy, Office of Science, National Quantum Information Science Research Centers, Quantum Systems Accelerator. 

\end{acknowledgements}
\nocite{kraus2009thesis,2022_tagliacozzo_fermionic_gs,cormen2022introduction,Fagotti_2010_entanglement_fermions,Terhal2002_classical_simulation_fermions,knill2001fermioniclinearopticsmatchgates,kuwahara2024bosons_liebrobinson,noh2020_fault_tolerant_bosons,aharonov1999faulttolerantquantum,matsuura2024fault_tolerant_bosons,Mari_2012_wigner,1974_hudson_theorem,walschaers2021_nongaussian_states,liangjiang2023universalcontrolbosonicsystems,eickbusch2022fast_gates,lingenfelter2021_fock_state_generation,boykin2002_cooling1,schulman1999_cooling2,Alhambra2019heatbathalgorithmic,benor2013quantumrefrigerator,Trivedi2022_transitions,shtanko2024complexitylocalquantumcircuits,maricarmen2007_entanglement_fermions,Moriya_2006_entanglement_fermions,code_and_data}
\bibliographystyle{apsrev4-1}
\bibliography{references.bib}

\begin{thebibliography}{77}%
\makeatletter
\providecommand \@ifxundefined [1]{%
 \@ifx{#1\undefined}
}%
\providecommand \@ifnum [1]{%
 \ifnum #1\expandafter \@firstoftwo
 \else \expandafter \@secondoftwo
 \fi
}%
\providecommand \@ifx [1]{%
 \ifx #1\expandafter \@firstoftwo
 \else \expandafter \@secondoftwo
 \fi
}%
\providecommand \natexlab [1]{#1}%
\providecommand \enquote  [1]{``#1''}%
\providecommand \bibnamefont  [1]{#1}%
\providecommand \bibfnamefont [1]{#1}%
\providecommand \citenamefont [1]{#1}%
\providecommand \href@noop [0]{\@secondoftwo}%
\providecommand \href [0]{\begingroup \@sanitize@url \@href}%
\providecommand \@href[1]{\@@startlink{#1}\@@href}%
\providecommand \@@href[1]{\endgroup#1\@@endlink}%
\providecommand \@sanitize@url [0]{\catcode `\\12\catcode `\$12\catcode `\&12\catcode `\#12\catcode `\^12\catcode `\_12\catcode `\%12\relax}%
\providecommand \@@startlink[1]{}%
\providecommand \@@endlink[0]{}%
\providecommand \url  [0]{\begingroup\@sanitize@url \@url }%
\providecommand \@url [1]{\endgroup\@href {#1}{\urlprefix }}%
\providecommand \urlprefix  [0]{URL }%
\providecommand \Eprint [0]{\href }%
\providecommand \doibase [0]{http://dx.doi.org/}%
\providecommand \selectlanguage [0]{\@gobble}%
\providecommand \bibinfo  [0]{\@secondoftwo}%
\providecommand \bibfield  [0]{\@secondoftwo}%
\providecommand \translation [1]{[#1]}%
\providecommand \BibitemOpen [0]{}%
\providecommand \bibitemStop [0]{}%
\providecommand \bibitemNoStop [0]{.\EOS\space}%
\providecommand \EOS [0]{\spacefactor3000\relax}%
\providecommand \BibitemShut  [1]{\csname bibitem#1\endcsname}%
\let\auto@bib@innerbib\@empty
\bibitem [{\citenamefont {Preskill}(2018)}]{Preskill2018NISQ}%
  \BibitemOpen
  \bibfield  {author} {\bibinfo {author} {\bibfnamefont {J.}~\bibnamefont {Preskill}},\ }\href {\doibase 10.22331/q-2018-08-06-79} {\bibfield  {journal} {\bibinfo  {journal} {{Quantum}}\ }\textbf {\bibinfo {volume} {2}},\ \bibinfo {pages} {79} (\bibinfo {year} {2018})}\BibitemShut {NoStop}%
\bibitem [{\citenamefont {Daley}\ \emph {et~al.}(2022)\citenamefont {Daley}, \citenamefont {Bloch}, \citenamefont {Kokail}, \citenamefont {Flannigan}, \citenamefont {Pearson}, \citenamefont {Troyer},\ and\ \citenamefont {Zoller}}]{daley2022practical_analogue}%
  \BibitemOpen
  \bibfield  {author} {\bibinfo {author} {\bibfnamefont {A.~J.}\ \bibnamefont {Daley}}, \bibinfo {author} {\bibfnamefont {I.}~\bibnamefont {Bloch}}, \bibinfo {author} {\bibfnamefont {C.}~\bibnamefont {Kokail}}, \bibinfo {author} {\bibfnamefont {S.}~\bibnamefont {Flannigan}}, \bibinfo {author} {\bibfnamefont {N.}~\bibnamefont {Pearson}}, \bibinfo {author} {\bibfnamefont {M.}~\bibnamefont {Troyer}}, \ and\ \bibinfo {author} {\bibfnamefont {P.}~\bibnamefont {Zoller}},\ }\href@noop {} {\bibfield  {journal} {\bibinfo  {journal} {Nature}\ }\textbf {\bibinfo {volume} {607}},\ \bibinfo {pages} {667} (\bibinfo {year} {2022})}\BibitemShut {NoStop}%
\bibitem [{\citenamefont {Ebadi}\ \emph {et~al.}(2021)\citenamefont {Ebadi}, \citenamefont {Wang}, \citenamefont {Levine}, \citenamefont {Keesling}, \citenamefont {Semeghini}, \citenamefont {Omran}, \citenamefont {Bluvstein}, \citenamefont {Samajdar}, \citenamefont {Pichler}, \citenamefont {Ho} \emph {et~al.}}]{ebadi2021analog}%
  \BibitemOpen
  \bibfield  {author} {\bibinfo {author} {\bibfnamefont {S.}~\bibnamefont {Ebadi}}, \bibinfo {author} {\bibfnamefont {T.~T.}\ \bibnamefont {Wang}}, \bibinfo {author} {\bibfnamefont {H.}~\bibnamefont {Levine}}, \bibinfo {author} {\bibfnamefont {A.}~\bibnamefont {Keesling}}, \bibinfo {author} {\bibfnamefont {G.}~\bibnamefont {Semeghini}}, \bibinfo {author} {\bibfnamefont {A.}~\bibnamefont {Omran}}, \bibinfo {author} {\bibfnamefont {D.}~\bibnamefont {Bluvstein}}, \bibinfo {author} {\bibfnamefont {R.}~\bibnamefont {Samajdar}}, \bibinfo {author} {\bibfnamefont {H.}~\bibnamefont {Pichler}}, \bibinfo {author} {\bibfnamefont {W.~W.}\ \bibnamefont {Ho}},  \emph {et~al.},\ }\href@noop {} {\bibfield  {journal} {\bibinfo  {journal} {Nature}\ }\textbf {\bibinfo {volume} {595}},\ \bibinfo {pages} {227} (\bibinfo {year} {2021})}\BibitemShut {NoStop}%
\bibitem [{\citenamefont {Scholl}\ \emph {et~al.}(2021)\citenamefont {Scholl}, \citenamefont {Schuler}, \citenamefont {Williams}, \citenamefont {Eberharter}, \citenamefont {Barredo}, \citenamefont {Schymik}, \citenamefont {Lienhard}, \citenamefont {Henry}, \citenamefont {Lang}, \citenamefont {Lahaye} \emph {et~al.}}]{scholl2021analog}%
  \BibitemOpen
  \bibfield  {author} {\bibinfo {author} {\bibfnamefont {P.}~\bibnamefont {Scholl}}, \bibinfo {author} {\bibfnamefont {M.}~\bibnamefont {Schuler}}, \bibinfo {author} {\bibfnamefont {H.~J.}\ \bibnamefont {Williams}}, \bibinfo {author} {\bibfnamefont {A.~A.}\ \bibnamefont {Eberharter}}, \bibinfo {author} {\bibfnamefont {D.}~\bibnamefont {Barredo}}, \bibinfo {author} {\bibfnamefont {K.-N.}\ \bibnamefont {Schymik}}, \bibinfo {author} {\bibfnamefont {V.}~\bibnamefont {Lienhard}}, \bibinfo {author} {\bibfnamefont {L.-P.}\ \bibnamefont {Henry}}, \bibinfo {author} {\bibfnamefont {T.~C.}\ \bibnamefont {Lang}}, \bibinfo {author} {\bibfnamefont {T.}~\bibnamefont {Lahaye}},  \emph {et~al.},\ }\href@noop {} {\bibfield  {journal} {\bibinfo  {journal} {Nature}\ }\textbf {\bibinfo {volume} {595}},\ \bibinfo {pages} {233} (\bibinfo {year} {2021})}\BibitemShut {NoStop}%
\bibitem [{\citenamefont {Wei}\ \emph {et~al.}(2022)\citenamefont {Wei}, \citenamefont {Rubio-Abadal}, \citenamefont {Ye}, \citenamefont {Machado}, \citenamefont {Kemp}, \citenamefont {Srakaew}, \citenamefont {Hollerith}, \citenamefont {Rui}, \citenamefont {Gopalakrishnan}, \citenamefont {Yao}, \citenamefont {Bloch},\ and\ \citenamefont {Zeiher}}]{wei2022_analog}%
  \BibitemOpen
  \bibfield  {author} {\bibinfo {author} {\bibfnamefont {D.}~\bibnamefont {Wei}}, \bibinfo {author} {\bibfnamefont {A.}~\bibnamefont {Rubio-Abadal}}, \bibinfo {author} {\bibfnamefont {B.}~\bibnamefont {Ye}}, \bibinfo {author} {\bibfnamefont {F.}~\bibnamefont {Machado}}, \bibinfo {author} {\bibfnamefont {J.}~\bibnamefont {Kemp}}, \bibinfo {author} {\bibfnamefont {K.}~\bibnamefont {Srakaew}}, \bibinfo {author} {\bibfnamefont {S.}~\bibnamefont {Hollerith}}, \bibinfo {author} {\bibfnamefont {J.}~\bibnamefont {Rui}}, \bibinfo {author} {\bibfnamefont {S.}~\bibnamefont {Gopalakrishnan}}, \bibinfo {author} {\bibfnamefont {N.~Y.}\ \bibnamefont {Yao}}, \bibinfo {author} {\bibfnamefont {I.}~\bibnamefont {Bloch}}, \ and\ \bibinfo {author} {\bibfnamefont {J.}~\bibnamefont {Zeiher}},\ }\href {\doibase 10.1126/science.abk2397} {\bibfield  {journal} {\bibinfo  {journal} {Science}\ }\textbf {\bibinfo {volume} {376}},\ \bibinfo {pages} {716} (\bibinfo {year} {2022})}\BibitemShut {NoStop}%
\bibitem [{\citenamefont {Semeghini}\ \emph {et~al.}(2021)\citenamefont {Semeghini}, \citenamefont {Levine}, \citenamefont {Keesling}, \citenamefont {Ebadi}, \citenamefont {Wang}, \citenamefont {Bluvstein}, \citenamefont {Verresen}, \citenamefont {Pichler}, \citenamefont {Kalinowski}, \citenamefont {Samajdar}, \citenamefont {Omran}, \citenamefont {Sachdev}, \citenamefont {Vishwanath}, \citenamefont {Greiner}, \citenamefont {Vuletić},\ and\ \citenamefont {Lukin}}]{Lukin_2021_probing}%
  \BibitemOpen
  \bibfield  {author} {\bibinfo {author} {\bibfnamefont {G.}~\bibnamefont {Semeghini}}, \bibinfo {author} {\bibfnamefont {H.}~\bibnamefont {Levine}}, \bibinfo {author} {\bibfnamefont {A.}~\bibnamefont {Keesling}}, \bibinfo {author} {\bibfnamefont {S.}~\bibnamefont {Ebadi}}, \bibinfo {author} {\bibfnamefont {T.~T.}\ \bibnamefont {Wang}}, \bibinfo {author} {\bibfnamefont {D.}~\bibnamefont {Bluvstein}}, \bibinfo {author} {\bibfnamefont {R.}~\bibnamefont {Verresen}}, \bibinfo {author} {\bibfnamefont {H.}~\bibnamefont {Pichler}}, \bibinfo {author} {\bibfnamefont {M.}~\bibnamefont {Kalinowski}}, \bibinfo {author} {\bibfnamefont {R.}~\bibnamefont {Samajdar}}, \bibinfo {author} {\bibfnamefont {A.}~\bibnamefont {Omran}}, \bibinfo {author} {\bibfnamefont {S.}~\bibnamefont {Sachdev}}, \bibinfo {author} {\bibfnamefont {A.}~\bibnamefont {Vishwanath}}, \bibinfo {author} {\bibfnamefont {M.}~\bibnamefont {Greiner}}, \bibinfo {author} {\bibfnamefont {V.}~\bibnamefont {Vuletić}}, \ and\ \bibinfo {author} {\bibfnamefont
  {M.~D.}\ \bibnamefont {Lukin}},\ }\href {\doibase 10.1126/science.abi8794} {\bibfield  {journal} {\bibinfo  {journal} {Science}\ }\textbf {\bibinfo {volume} {374}},\ \bibinfo {pages} {1242} (\bibinfo {year} {2021})}\BibitemShut {NoStop}%
\bibitem [{\citenamefont {Shtanko}\ \emph {et~al.}(2021)\citenamefont {Shtanko}, \citenamefont {Deshpande}, \citenamefont {Julienne},\ and\ \citenamefont {Gorshkov}}]{shtanko2021_fermion_complexity}%
  \BibitemOpen
  \bibfield  {author} {\bibinfo {author} {\bibfnamefont {O.}~\bibnamefont {Shtanko}}, \bibinfo {author} {\bibfnamefont {A.}~\bibnamefont {Deshpande}}, \bibinfo {author} {\bibfnamefont {P.~S.}\ \bibnamefont {Julienne}}, \ and\ \bibinfo {author} {\bibfnamefont {A.~V.}\ \bibnamefont {Gorshkov}},\ }\href {\doibase 10.1103/PRXQuantum.2.030350} {\bibfield  {journal} {\bibinfo  {journal} {PRX Quantum}\ }\textbf {\bibinfo {volume} {2}},\ \bibinfo {pages} {030350} (\bibinfo {year} {2021})}\BibitemShut {NoStop}%
\bibitem [{\citenamefont {Trivedi}\ and\ \citenamefont {Cirac}(2022)}]{Trivedi2022_transitions}%
  \BibitemOpen
  \bibfield  {author} {\bibinfo {author} {\bibfnamefont {R.}~\bibnamefont {Trivedi}}\ and\ \bibinfo {author} {\bibfnamefont {J.~I.}\ \bibnamefont {Cirac}},\ }\href {\doibase 10.1103/PhysRevLett.129.260405} {\bibfield  {journal} {\bibinfo  {journal} {Phys. Rev. Lett.}\ }\textbf {\bibinfo {volume} {129}},\ \bibinfo {pages} {260405} (\bibinfo {year} {2022})}\BibitemShut {NoStop}%
\bibitem [{\citenamefont {Aharonov}(2000)}]{aharonov2000_percolation}%
  \BibitemOpen
  \bibfield  {author} {\bibinfo {author} {\bibfnamefont {D.}~\bibnamefont {Aharonov}},\ }\href {\doibase 10.1103/PhysRevA.62.062311} {\bibfield  {journal} {\bibinfo  {journal} {Phys. Rev. A}\ }\textbf {\bibinfo {volume} {62}},\ \bibinfo {pages} {062311} (\bibinfo {year} {2000})}\BibitemShut {NoStop}%
\bibitem [{\citenamefont {Aharonov}\ \emph {et~al.}(2023)\citenamefont {Aharonov}, \citenamefont {Gao}, \citenamefont {Landau}, \citenamefont {Liu},\ and\ \citenamefont {Vazirani}}]{Aharonov2023_paulipaths}%
  \BibitemOpen
  \bibfield  {author} {\bibinfo {author} {\bibfnamefont {D.}~\bibnamefont {Aharonov}}, \bibinfo {author} {\bibfnamefont {X.}~\bibnamefont {Gao}}, \bibinfo {author} {\bibfnamefont {Z.}~\bibnamefont {Landau}}, \bibinfo {author} {\bibfnamefont {Y.}~\bibnamefont {Liu}}, \ and\ \bibinfo {author} {\bibfnamefont {U.}~\bibnamefont {Vazirani}},\ }in\ \href {\doibase 10.1145/3564246.3585234} {\emph {\bibinfo {booktitle} {Proceedings of the 55th Annual ACM Symposium on Theory of Computing}}},\ \bibinfo {series and number} {STOC ’23}\ (\bibinfo  {publisher} {ACM},\ \bibinfo {year} {2023})\BibitemShut {NoStop}%
\bibitem [{\citenamefont {Tindall}\ \emph {et~al.}(2023)\citenamefont {Tindall}, \citenamefont {Fishman}, \citenamefont {Stoudenmire},\ and\ \citenamefont {Sels}}]{tindall2023efficient}%
  \BibitemOpen
  \bibfield  {author} {\bibinfo {author} {\bibfnamefont {J.}~\bibnamefont {Tindall}}, \bibinfo {author} {\bibfnamefont {M.}~\bibnamefont {Fishman}}, \bibinfo {author} {\bibfnamefont {M.}~\bibnamefont {Stoudenmire}}, \ and\ \bibinfo {author} {\bibfnamefont {D.}~\bibnamefont {Sels}},\ }\href@noop {} {\bibfield  {journal} {\bibinfo  {journal} {arXiv preprint arXiv:2306.14887}\ } (\bibinfo {year} {2023})}\BibitemShut {NoStop}%
\bibitem [{\citenamefont {Kechedzhi}\ \emph {et~al.}(2024)\citenamefont {Kechedzhi}, \citenamefont {Isakov}, \citenamefont {Mandr{\`a}}, \citenamefont {Villalonga}, \citenamefont {Mi}, \citenamefont {Boixo},\ and\ \citenamefont {Smelyanskiy}}]{kechedzhi2024effective}%
  \BibitemOpen
  \bibfield  {author} {\bibinfo {author} {\bibfnamefont {K.}~\bibnamefont {Kechedzhi}}, \bibinfo {author} {\bibfnamefont {S.}~\bibnamefont {Isakov}}, \bibinfo {author} {\bibfnamefont {S.}~\bibnamefont {Mandr{\`a}}}, \bibinfo {author} {\bibfnamefont {B.}~\bibnamefont {Villalonga}}, \bibinfo {author} {\bibfnamefont {X.}~\bibnamefont {Mi}}, \bibinfo {author} {\bibfnamefont {S.}~\bibnamefont {Boixo}}, \ and\ \bibinfo {author} {\bibfnamefont {V.}~\bibnamefont {Smelyanskiy}},\ }\href@noop {} {\bibfield  {journal} {\bibinfo  {journal} {Future Gener. Comput. Syst.}\ }\textbf {\bibinfo {volume} {153}},\ \bibinfo {pages} {431} (\bibinfo {year} {2024})}\BibitemShut {NoStop}%
\bibitem [{\citenamefont {Shao}\ \emph {et~al.}(2024)\citenamefont {Shao}, \citenamefont {Wei}, \citenamefont {Cheng},\ and\ \citenamefont {Liu}}]{shao2023simulating}%
  \BibitemOpen
  \bibfield  {author} {\bibinfo {author} {\bibfnamefont {Y.}~\bibnamefont {Shao}}, \bibinfo {author} {\bibfnamefont {F.}~\bibnamefont {Wei}}, \bibinfo {author} {\bibfnamefont {S.}~\bibnamefont {Cheng}}, \ and\ \bibinfo {author} {\bibfnamefont {Z.}~\bibnamefont {Liu}},\ }\href {\doibase 10.1103/PhysRevLett.133.120603} {\bibfield  {journal} {\bibinfo  {journal} {Phys. Rev. Lett.}\ }\textbf {\bibinfo {volume} {133}},\ \bibinfo {pages} {120603} (\bibinfo {year} {2024})}\BibitemShut {NoStop}%
\bibitem [{\citenamefont {Fontana}\ \emph {et~al.}(2023)\citenamefont {Fontana}, \citenamefont {Rudolph}, \citenamefont {Duncan}, \citenamefont {Rungger},\ and\ \citenamefont {C{\^\i}rstoiu}}]{fontana2023classical_LOWESA}%
  \BibitemOpen
  \bibfield  {author} {\bibinfo {author} {\bibfnamefont {E.}~\bibnamefont {Fontana}}, \bibinfo {author} {\bibfnamefont {M.~S.}\ \bibnamefont {Rudolph}}, \bibinfo {author} {\bibfnamefont {R.}~\bibnamefont {Duncan}}, \bibinfo {author} {\bibfnamefont {I.}~\bibnamefont {Rungger}}, \ and\ \bibinfo {author} {\bibfnamefont {C.}~\bibnamefont {C{\^\i}rstoiu}},\ }\href@noop {} {\bibfield  {journal} {\bibinfo  {journal} {arXiv preprint arXiv:2306.05400}\ } (\bibinfo {year} {2023})}\BibitemShut {NoStop}%
\bibitem [{\citenamefont {Rudolph}\ \emph {et~al.}(2023)\citenamefont {Rudolph}, \citenamefont {Fontana}, \citenamefont {Holmes},\ and\ \citenamefont {Cincio}}]{rudolph2023classical_LOWESA}%
  \BibitemOpen
  \bibfield  {author} {\bibinfo {author} {\bibfnamefont {M.~S.}\ \bibnamefont {Rudolph}}, \bibinfo {author} {\bibfnamefont {E.}~\bibnamefont {Fontana}}, \bibinfo {author} {\bibfnamefont {Z.}~\bibnamefont {Holmes}}, \ and\ \bibinfo {author} {\bibfnamefont {L.}~\bibnamefont {Cincio}},\ }\href@noop {} {\bibfield  {journal} {\bibinfo  {journal} {arXiv preprint arXiv:2308.09109}\ } (\bibinfo {year} {2023})}\BibitemShut {NoStop}%
\bibitem [{\citenamefont {Gao}\ and\ \citenamefont {Duan}(2018)}]{gao2018efficient_simulation}%
  \BibitemOpen
  \bibfield  {author} {\bibinfo {author} {\bibfnamefont {X.}~\bibnamefont {Gao}}\ and\ \bibinfo {author} {\bibfnamefont {L.}~\bibnamefont {Duan}},\ }\href@noop {} {\bibfield  {journal} {\bibinfo  {journal} {arXiv preprint arXiv:1810.03176}\ } (\bibinfo {year} {2018})}\BibitemShut {NoStop}%
\bibitem [{\citenamefont {Liao}\ \emph {et~al.}(2023)\citenamefont {Liao}, \citenamefont {Wang}, \citenamefont {Zhou}, \citenamefont {Zhang},\ and\ \citenamefont {Xiang}}]{liao2023simulation}%
  \BibitemOpen
  \bibfield  {author} {\bibinfo {author} {\bibfnamefont {H.-J.}\ \bibnamefont {Liao}}, \bibinfo {author} {\bibfnamefont {K.}~\bibnamefont {Wang}}, \bibinfo {author} {\bibfnamefont {Z.-S.}\ \bibnamefont {Zhou}}, \bibinfo {author} {\bibfnamefont {P.}~\bibnamefont {Zhang}}, \ and\ \bibinfo {author} {\bibfnamefont {T.}~\bibnamefont {Xiang}},\ }\href@noop {} {\bibfield  {journal} {\bibinfo  {journal} {arXiv preprint arXiv:2308.03082}\ } (\bibinfo {year} {2023})}\BibitemShut {NoStop}%
\bibitem [{\citenamefont {Gonz\'alez-Garc\'{\i}a}\ \emph {et~al.}(2022)\citenamefont {Gonz\'alez-Garc\'{\i}a}, \citenamefont {Trivedi},\ and\ \citenamefont {Cirac}}]{gonzalez2022_errors}%
  \BibitemOpen
  \bibfield  {author} {\bibinfo {author} {\bibfnamefont {G.}~\bibnamefont {Gonz\'alez-Garc\'{\i}a}}, \bibinfo {author} {\bibfnamefont {R.}~\bibnamefont {Trivedi}}, \ and\ \bibinfo {author} {\bibfnamefont {J.~I.}\ \bibnamefont {Cirac}},\ }\href {\doibase 10.1103/PRXQuantum.3.040326} {\bibfield  {journal} {\bibinfo  {journal} {PRX Quantum}\ }\textbf {\bibinfo {volume} {3}},\ \bibinfo {pages} {040326} (\bibinfo {year} {2022})}\BibitemShut {NoStop}%
\bibitem [{\citenamefont {Schuster}\ \emph {et~al.}(2024)\citenamefont {Schuster}, \citenamefont {Yin}, \citenamefont {Gao},\ and\ \citenamefont {Yao}}]{schuster2024polynomialtime}%
  \BibitemOpen
  \bibfield  {author} {\bibinfo {author} {\bibfnamefont {T.}~\bibnamefont {Schuster}}, \bibinfo {author} {\bibfnamefont {C.}~\bibnamefont {Yin}}, \bibinfo {author} {\bibfnamefont {X.}~\bibnamefont {Gao}}, \ and\ \bibinfo {author} {\bibfnamefont {N.~Y.}\ \bibnamefont {Yao}},\ }\href@noop {} {\bibfield  {journal} {\bibinfo  {journal} {arXiv preprint arXiv:2407.12768}\ } (\bibinfo {year} {2024})}\BibitemShut {NoStop}%
\bibitem [{\citenamefont {Gonz{\'a}lez-Garc{\'\i}a}\ \emph {et~al.}(2024)\citenamefont {Gonz{\'a}lez-Garc{\'\i}a}, \citenamefont {Cirac},\ and\ \citenamefont {Trivedi}}]{gonzalez2024_paulipaths}%
  \BibitemOpen
  \bibfield  {author} {\bibinfo {author} {\bibfnamefont {G.}~\bibnamefont {Gonz{\'a}lez-Garc{\'\i}a}}, \bibinfo {author} {\bibfnamefont {J.~I.}\ \bibnamefont {Cirac}}, \ and\ \bibinfo {author} {\bibfnamefont {R.}~\bibnamefont {Trivedi}},\ }\href@noop {} {\bibfield  {journal} {\bibinfo  {journal} {arXiv preprint arXiv:2407.16068}\ } (\bibinfo {year} {2024})}\BibitemShut {NoStop}%
\bibitem [{\citenamefont {Rajakumar}\ \emph {et~al.}(2025)\citenamefont {Rajakumar}, \citenamefont {Watson},\ and\ \citenamefont {Liu}}]{rajakumar2024_IQP}%
  \BibitemOpen
  \bibfield  {author} {\bibinfo {author} {\bibfnamefont {J.}~\bibnamefont {Rajakumar}}, \bibinfo {author} {\bibfnamefont {J.~D.}\ \bibnamefont {Watson}}, \ and\ \bibinfo {author} {\bibfnamefont {Y.-K.}\ \bibnamefont {Liu}},\ }in\ \href@noop {} {\emph {\bibinfo {booktitle} {Proceedings of the 2025 Annual ACM-SIAM Symposium on Discrete Algorithms (SODA)}}}\ (\bibinfo {organization} {SIAM},\ \bibinfo {year} {2025})\ pp.\ \bibinfo {pages} {1037--1056}\BibitemShut {NoStop}%
\bibitem [{\citenamefont {Mishra}\ \emph {et~al.}(2024)\citenamefont {Mishra}, \citenamefont {Fr\'{\i}as-P\'erez},\ and\ \citenamefont {Trivedi}}]{debmishra2024bounds}%
  \BibitemOpen
  \bibfield  {author} {\bibinfo {author} {\bibfnamefont {S.~D.}\ \bibnamefont {Mishra}}, \bibinfo {author} {\bibfnamefont {M.}~\bibnamefont {Fr\'{\i}as-P\'erez}}, \ and\ \bibinfo {author} {\bibfnamefont {R.}~\bibnamefont {Trivedi}},\ }\href {\doibase 10.1103/PRXQuantum.5.020317} {\bibfield  {journal} {\bibinfo  {journal} {PRX Quantum}\ }\textbf {\bibinfo {volume} {5}},\ \bibinfo {pages} {020317} (\bibinfo {year} {2024})}\BibitemShut {NoStop}%
\bibitem [{\citenamefont {Yan}\ \emph {et~al.}(2022)\citenamefont {Yan}, \citenamefont {Spar}, \citenamefont {Prichard}, \citenamefont {Chi}, \citenamefont {Wei}, \citenamefont {Ibarra-Garc\'{\i}a-Padilla}, \citenamefont {Hazzard},\ and\ \citenamefont {Bakr}}]{Yan2022_fermion_simulation}%
  \BibitemOpen
  \bibfield  {author} {\bibinfo {author} {\bibfnamefont {Z.~Z.}\ \bibnamefont {Yan}}, \bibinfo {author} {\bibfnamefont {B.~M.}\ \bibnamefont {Spar}}, \bibinfo {author} {\bibfnamefont {M.~L.}\ \bibnamefont {Prichard}}, \bibinfo {author} {\bibfnamefont {S.}~\bibnamefont {Chi}}, \bibinfo {author} {\bibfnamefont {H.-T.}\ \bibnamefont {Wei}}, \bibinfo {author} {\bibfnamefont {E.}~\bibnamefont {Ibarra-Garc\'{\i}a-Padilla}}, \bibinfo {author} {\bibfnamefont {K.~R.~A.}\ \bibnamefont {Hazzard}}, \ and\ \bibinfo {author} {\bibfnamefont {W.~S.}\ \bibnamefont {Bakr}},\ }\href {\doibase 10.1103/PhysRevLett.129.123201} {\bibfield  {journal} {\bibinfo  {journal} {Phys. Rev. Lett.}\ }\textbf {\bibinfo {volume} {129}},\ \bibinfo {pages} {123201} (\bibinfo {year} {2022})}\BibitemShut {NoStop}%
\bibitem [{\citenamefont {Spar}\ \emph {et~al.}(2022)\citenamefont {Spar}, \citenamefont {Guardado-Sanchez}, \citenamefont {Chi}, \citenamefont {Yan},\ and\ \citenamefont {Bakr}}]{Spar2022_fermion_simulation}%
  \BibitemOpen
  \bibfield  {author} {\bibinfo {author} {\bibfnamefont {B.~M.}\ \bibnamefont {Spar}}, \bibinfo {author} {\bibfnamefont {E.}~\bibnamefont {Guardado-Sanchez}}, \bibinfo {author} {\bibfnamefont {S.}~\bibnamefont {Chi}}, \bibinfo {author} {\bibfnamefont {Z.~Z.}\ \bibnamefont {Yan}}, \ and\ \bibinfo {author} {\bibfnamefont {W.~S.}\ \bibnamefont {Bakr}},\ }\href {\doibase 10.1103/PhysRevLett.128.223202} {\bibfield  {journal} {\bibinfo  {journal} {Phys. Rev. Lett.}\ }\textbf {\bibinfo {volume} {128}},\ \bibinfo {pages} {223202} (\bibinfo {year} {2022})}\BibitemShut {NoStop}%
\bibitem [{\citenamefont {Norcia}\ \emph {et~al.}(2018)\citenamefont {Norcia}, \citenamefont {Young},\ and\ \citenamefont {Kaufman}}]{Norcia2018_fermion_simulation}%
  \BibitemOpen
  \bibfield  {author} {\bibinfo {author} {\bibfnamefont {M.~A.}\ \bibnamefont {Norcia}}, \bibinfo {author} {\bibfnamefont {A.~W.}\ \bibnamefont {Young}}, \ and\ \bibinfo {author} {\bibfnamefont {A.~M.}\ \bibnamefont {Kaufman}},\ }\href {\doibase 10.1103/PhysRevX.8.041054} {\bibfield  {journal} {\bibinfo  {journal} {Phys. Rev. X}\ }\textbf {\bibinfo {volume} {8}},\ \bibinfo {pages} {041054} (\bibinfo {year} {2018})}\BibitemShut {NoStop}%
\bibitem [{\citenamefont {Gross}\ and\ \citenamefont {Bloch}(2017)}]{bloch_2017_boson_simulation}%
  \BibitemOpen
  \bibfield  {author} {\bibinfo {author} {\bibfnamefont {C.}~\bibnamefont {Gross}}\ and\ \bibinfo {author} {\bibfnamefont {I.}~\bibnamefont {Bloch}},\ }\href {\doibase 10.1126/science.aal3837} {\bibfield  {journal} {\bibinfo  {journal} {Science}\ }\textbf {\bibinfo {volume} {357}},\ \bibinfo {pages} {995} (\bibinfo {year} {2017})}\BibitemShut {NoStop}%
\bibitem [{\citenamefont {Yang}\ \emph {et~al.}(2020)\citenamefont {Yang}, \citenamefont {Sun}, \citenamefont {Ott}, \citenamefont {Wang}, \citenamefont {Zache}, \citenamefont {Halimeh}, \citenamefont {Yuan}, \citenamefont {Hauke},\ and\ \citenamefont {Pan}}]{yang2020simulator_bose_hubbard}%
  \BibitemOpen
  \bibfield  {author} {\bibinfo {author} {\bibfnamefont {B.}~\bibnamefont {Yang}}, \bibinfo {author} {\bibfnamefont {H.}~\bibnamefont {Sun}}, \bibinfo {author} {\bibfnamefont {R.}~\bibnamefont {Ott}}, \bibinfo {author} {\bibfnamefont {H.-Y.}\ \bibnamefont {Wang}}, \bibinfo {author} {\bibfnamefont {T.~V.}\ \bibnamefont {Zache}}, \bibinfo {author} {\bibfnamefont {J.~C.}\ \bibnamefont {Halimeh}}, \bibinfo {author} {\bibfnamefont {Z.-S.}\ \bibnamefont {Yuan}}, \bibinfo {author} {\bibfnamefont {P.}~\bibnamefont {Hauke}}, \ and\ \bibinfo {author} {\bibfnamefont {J.-W.}\ \bibnamefont {Pan}},\ }\href@noop {} {\bibfield  {journal} {\bibinfo  {journal} {Nature}\ }\textbf {\bibinfo {volume} {587}},\ \bibinfo {pages} {392} (\bibinfo {year} {2020})}\BibitemShut {NoStop}%
\bibitem [{\citenamefont {Zhang}\ \emph {et~al.}(2023)\citenamefont {Zhang}, \citenamefont {Kim}, \citenamefont {Mark}, \citenamefont {Choi},\ and\ \citenamefont {Painter}}]{zhang2023_superconducting_simulator}%
  \BibitemOpen
  \bibfield  {author} {\bibinfo {author} {\bibfnamefont {X.}~\bibnamefont {Zhang}}, \bibinfo {author} {\bibfnamefont {E.}~\bibnamefont {Kim}}, \bibinfo {author} {\bibfnamefont {D.~K.}\ \bibnamefont {Mark}}, \bibinfo {author} {\bibfnamefont {S.}~\bibnamefont {Choi}}, \ and\ \bibinfo {author} {\bibfnamefont {O.}~\bibnamefont {Painter}},\ }\href {\doibase 10.1126/science.ade7651} {\bibfield  {journal} {\bibinfo  {journal} {Science}\ }\textbf {\bibinfo {volume} {379}},\ \bibinfo {pages} {278} (\bibinfo {year} {2023})}\BibitemShut {NoStop}%
\bibitem [{\citenamefont {Shi}\ \emph {et~al.}(2024)\citenamefont {Shi}, \citenamefont {Sun}, \citenamefont {Wang}, \citenamefont {Wang}, \citenamefont {Zhang}, \citenamefont {Ma}, \citenamefont {Liu}, \citenamefont {Zhao}, \citenamefont {Song}, \citenamefont {Liang} \emph {et~al.}}]{shi2024simulator_superconducting}%
  \BibitemOpen
  \bibfield  {author} {\bibinfo {author} {\bibfnamefont {Y.-H.}\ \bibnamefont {Shi}}, \bibinfo {author} {\bibfnamefont {Z.-H.}\ \bibnamefont {Sun}}, \bibinfo {author} {\bibfnamefont {Y.-Y.}\ \bibnamefont {Wang}}, \bibinfo {author} {\bibfnamefont {Z.-A.}\ \bibnamefont {Wang}}, \bibinfo {author} {\bibfnamefont {Y.-R.}\ \bibnamefont {Zhang}}, \bibinfo {author} {\bibfnamefont {W.-G.}\ \bibnamefont {Ma}}, \bibinfo {author} {\bibfnamefont {H.-T.}\ \bibnamefont {Liu}}, \bibinfo {author} {\bibfnamefont {K.}~\bibnamefont {Zhao}}, \bibinfo {author} {\bibfnamefont {J.-C.}\ \bibnamefont {Song}}, \bibinfo {author} {\bibfnamefont {G.-H.}\ \bibnamefont {Liang}},  \emph {et~al.},\ }\href@noop {} {\bibfield  {journal} {\bibinfo  {journal} {Nat. Commun.}\ }\textbf {\bibinfo {volume} {15}},\ \bibinfo {pages} {7573} (\bibinfo {year} {2024})}\BibitemShut {NoStop}%
\bibitem [{\citenamefont {Saxena}\ \emph {et~al.}(2023)\citenamefont {Saxena}, \citenamefont {Manna}, \citenamefont {Trivedi},\ and\ \citenamefont {Majumdar}}]{saxena2023realizing}%
  \BibitemOpen
  \bibfield  {author} {\bibinfo {author} {\bibfnamefont {A.}~\bibnamefont {Saxena}}, \bibinfo {author} {\bibfnamefont {A.}~\bibnamefont {Manna}}, \bibinfo {author} {\bibfnamefont {R.}~\bibnamefont {Trivedi}}, \ and\ \bibinfo {author} {\bibfnamefont {A.}~\bibnamefont {Majumdar}},\ }\href@noop {} {\bibfield  {journal} {\bibinfo  {journal} {Nat. Commun.}\ }\textbf {\bibinfo {volume} {14}},\ \bibinfo {pages} {5260} (\bibinfo {year} {2023})}\BibitemShut {NoStop}%
\bibitem [{\citenamefont {Chang}\ \emph {et~al.}(2014)\citenamefont {Chang}, \citenamefont {Vuleti{\'c}},\ and\ \citenamefont {Lukin}}]{chang2014quantum}%
  \BibitemOpen
  \bibfield  {author} {\bibinfo {author} {\bibfnamefont {D.~E.}\ \bibnamefont {Chang}}, \bibinfo {author} {\bibfnamefont {V.}~\bibnamefont {Vuleti{\'c}}}, \ and\ \bibinfo {author} {\bibfnamefont {M.~D.}\ \bibnamefont {Lukin}},\ }\href@noop {} {\bibfield  {journal} {\bibinfo  {journal} {Nat. Photonics}\ }\textbf {\bibinfo {volume} {8}},\ \bibinfo {pages} {685} (\bibinfo {year} {2014})}\BibitemShut {NoStop}%
\bibitem [{\citenamefont {Noh}\ and\ \citenamefont {Angelakis}(2016)}]{noh2016quantum}%
  \BibitemOpen
  \bibfield  {author} {\bibinfo {author} {\bibfnamefont {C.}~\bibnamefont {Noh}}\ and\ \bibinfo {author} {\bibfnamefont {D.~G.}\ \bibnamefont {Angelakis}},\ }\href@noop {} {\bibfield  {journal} {\bibinfo  {journal} {Rep. Prog. Phys.}\ }\textbf {\bibinfo {volume} {80}},\ \bibinfo {pages} {016401} (\bibinfo {year} {2016})}\BibitemShut {NoStop}%
\bibitem [{\citenamefont {Bravyi}\ and\ \citenamefont {K\"{o}nig}(2012)}]{Koenig_2012_simulation_fermions}%
  \BibitemOpen
  \bibfield  {author} {\bibinfo {author} {\bibfnamefont {S.}~\bibnamefont {Bravyi}}\ and\ \bibinfo {author} {\bibfnamefont {R.}~\bibnamefont {K\"{o}nig}},\ }\href@noop {} {\bibfield  {journal} {\bibinfo  {journal} {Quantum Info. Comput.}\ }\textbf {\bibinfo {volume} {12}},\ \bibinfo {pages} {925–943} (\bibinfo {year} {2012})}\BibitemShut {NoStop}%
\bibitem [{\citenamefont {Bartlett}\ \emph {et~al.}(2002)\citenamefont {Bartlett}, \citenamefont {Sanders}, \citenamefont {Braunstein},\ and\ \citenamefont {Nemoto}}]{bartlett2002_gaussian_cv}%
  \BibitemOpen
  \bibfield  {author} {\bibinfo {author} {\bibfnamefont {S.~D.}\ \bibnamefont {Bartlett}}, \bibinfo {author} {\bibfnamefont {B.~C.}\ \bibnamefont {Sanders}}, \bibinfo {author} {\bibfnamefont {S.~L.}\ \bibnamefont {Braunstein}}, \ and\ \bibinfo {author} {\bibfnamefont {K.}~\bibnamefont {Nemoto}},\ }\href {\doibase 10.1103/PhysRevLett.88.097904} {\bibfield  {journal} {\bibinfo  {journal} {Phys. Rev. Lett.}\ }\textbf {\bibinfo {volume} {88}},\ \bibinfo {pages} {097904} (\bibinfo {year} {2002})}\BibitemShut {NoStop}%
\bibitem [{\citenamefont {Terhal}\ and\ \citenamefont {DiVincenzo}(2002)}]{Terhal2002_classical_simulation_fermions}%
  \BibitemOpen
  \bibfield  {author} {\bibinfo {author} {\bibfnamefont {B.~M.}\ \bibnamefont {Terhal}}\ and\ \bibinfo {author} {\bibfnamefont {D.~P.}\ \bibnamefont {DiVincenzo}},\ }\href {\doibase 10.1103/PhysRevA.65.032325} {\bibfield  {journal} {\bibinfo  {journal} {Phys. Rev. A}\ }\textbf {\bibinfo {volume} {65}},\ \bibinfo {pages} {032325} (\bibinfo {year} {2002})}\BibitemShut {NoStop}%
\bibitem [{\citenamefont {Valiant}(2001)}]{valiant2001quantum}%
  \BibitemOpen
  \bibfield  {author} {\bibinfo {author} {\bibfnamefont {L.~G.}\ \bibnamefont {Valiant}},\ }in\ \href@noop {} {\emph {\bibinfo {booktitle} {Proceedings of the thirty-third annual ACM symposium on Theory of computing}}}\ (\bibinfo {year} {2001})\ pp.\ \bibinfo {pages} {114--123}\BibitemShut {NoStop}%
\bibitem [{\citenamefont {Bravyi}\ and\ \citenamefont {Kitaev}(2002)}]{BRAVYI2002_fermionicqc}%
  \BibitemOpen
  \bibfield  {author} {\bibinfo {author} {\bibfnamefont {S.~B.}\ \bibnamefont {Bravyi}}\ and\ \bibinfo {author} {\bibfnamefont {A.~Y.}\ \bibnamefont {Kitaev}},\ }\href {\doibase https://doi.org/10.1006/aphy.2002.6254} {\bibfield  {journal} {\bibinfo  {journal} {Ann. Phys.}\ }\textbf {\bibinfo {volume} {298}},\ \bibinfo {pages} {210} (\bibinfo {year} {2002})}\BibitemShut {NoStop}%
\bibitem [{\citenamefont {Lloyd}\ and\ \citenamefont {Braunstein}(1999)}]{LLoyd_1999_continuousvariables}%
  \BibitemOpen
  \bibfield  {author} {\bibinfo {author} {\bibfnamefont {S.}~\bibnamefont {Lloyd}}\ and\ \bibinfo {author} {\bibfnamefont {S.~L.}\ \bibnamefont {Braunstein}},\ }\href {\doibase 10.1103/PhysRevLett.82.1784} {\bibfield  {journal} {\bibinfo  {journal} {Phys. Rev. Lett.}\ }\textbf {\bibinfo {volume} {82}},\ \bibinfo {pages} {1784} (\bibinfo {year} {1999})}\BibitemShut {NoStop}%
\bibitem [{\citenamefont {de~Melo}\ \emph {et~al.}(2013)\citenamefont {de~Melo}, \citenamefont {Ćwikliński},\ and\ \citenamefont {Terhal}}]{Terhal_2013_noisy_fermionic}%
  \BibitemOpen
  \bibfield  {author} {\bibinfo {author} {\bibfnamefont {F.}~\bibnamefont {de~Melo}}, \bibinfo {author} {\bibfnamefont {P.}~\bibnamefont {Ćwikliński}}, \ and\ \bibinfo {author} {\bibfnamefont {B.~M.}\ \bibnamefont {Terhal}},\ }\href {\doibase 10.1088/1367-2630/15/1/013015} {\bibfield  {journal} {\bibinfo  {journal} {New. J. Phys.}\ }\textbf {\bibinfo {volume} {15}},\ \bibinfo {pages} {013015} (\bibinfo {year} {2013})}\BibitemShut {NoStop}%
\bibitem [{\citenamefont {Oszmaniec}\ \emph {et~al.}(2014)\citenamefont {Oszmaniec}, \citenamefont {Gutt},\ and\ \citenamefont {Ku\ifmmode~\acute{s}\else \'{s}\fi{}}}]{oszmaniec2014classical}%
  \BibitemOpen
  \bibfield  {author} {\bibinfo {author} {\bibfnamefont {M.}~\bibnamefont {Oszmaniec}}, \bibinfo {author} {\bibfnamefont {J.}~\bibnamefont {Gutt}}, \ and\ \bibinfo {author} {\bibfnamefont {M.}~\bibnamefont {Ku\ifmmode~\acute{s}\else \'{s}\fi{}}},\ }\href {\doibase 10.1103/PhysRevA.90.020302} {\bibfield  {journal} {\bibinfo  {journal} {Phys. Rev. A}\ }\textbf {\bibinfo {volume} {90}},\ \bibinfo {pages} {020302} (\bibinfo {year} {2014})}\BibitemShut {NoStop}%
\bibitem [{\citenamefont {Bravyi}(2006)}]{Bravyi_2006_majorana}%
  \BibitemOpen
  \bibfield  {author} {\bibinfo {author} {\bibfnamefont {S.}~\bibnamefont {Bravyi}},\ }\href {\doibase 10.1103/PhysRevA.73.042313} {\bibfield  {journal} {\bibinfo  {journal} {Phys. Rev. A}\ }\textbf {\bibinfo {volume} {73}},\ \bibinfo {pages} {042313} (\bibinfo {year} {2006})}\BibitemShut {NoStop}%
\bibitem [{\citenamefont {Maskara}\ \emph {et~al.}(2022)\citenamefont {Maskara}, \citenamefont {Deshpande}, \citenamefont {Ehrenberg}, \citenamefont {Tran}, \citenamefont {Fefferman},\ and\ \citenamefont {Gorshkov}}]{maskara2022_bosons_phase_diagram}%
  \BibitemOpen
  \bibfield  {author} {\bibinfo {author} {\bibfnamefont {N.}~\bibnamefont {Maskara}}, \bibinfo {author} {\bibfnamefont {A.}~\bibnamefont {Deshpande}}, \bibinfo {author} {\bibfnamefont {A.}~\bibnamefont {Ehrenberg}}, \bibinfo {author} {\bibfnamefont {M.~C.}\ \bibnamefont {Tran}}, \bibinfo {author} {\bibfnamefont {B.}~\bibnamefont {Fefferman}}, \ and\ \bibinfo {author} {\bibfnamefont {A.~V.}\ \bibnamefont {Gorshkov}},\ }\href {\doibase 10.1103/PhysRevLett.129.150604} {\bibfield  {journal} {\bibinfo  {journal} {Phys. Rev. Lett.}\ }\textbf {\bibinfo {volume} {129}},\ \bibinfo {pages} {150604} (\bibinfo {year} {2022})}\BibitemShut {NoStop}%
\bibitem [{\citenamefont {Deshpande}\ \emph {et~al.}(2018)\citenamefont {Deshpande}, \citenamefont {Fefferman}, \citenamefont {Tran}, \citenamefont {Foss-Feig},\ and\ \citenamefont {Gorshkov}}]{Deshpande_2018_bosons_complexity}%
  \BibitemOpen
  \bibfield  {author} {\bibinfo {author} {\bibfnamefont {A.}~\bibnamefont {Deshpande}}, \bibinfo {author} {\bibfnamefont {B.}~\bibnamefont {Fefferman}}, \bibinfo {author} {\bibfnamefont {M.~C.}\ \bibnamefont {Tran}}, \bibinfo {author} {\bibfnamefont {M.}~\bibnamefont {Foss-Feig}}, \ and\ \bibinfo {author} {\bibfnamefont {A.~V.}\ \bibnamefont {Gorshkov}},\ }\href {\doibase 10.1103/PhysRevLett.121.030501} {\bibfield  {journal} {\bibinfo  {journal} {Phys. Rev. Lett.}\ }\textbf {\bibinfo {volume} {121}},\ \bibinfo {pages} {030501} (\bibinfo {year} {2018})}\BibitemShut {NoStop}%
\bibitem [{\citenamefont {Muraleedharan}\ \emph {et~al.}(2019)\citenamefont {Muraleedharan}, \citenamefont {Miyake},\ and\ \citenamefont {Deutsch}}]{Muraleedharan_2019_bosons}%
  \BibitemOpen
  \bibfield  {author} {\bibinfo {author} {\bibfnamefont {G.}~\bibnamefont {Muraleedharan}}, \bibinfo {author} {\bibfnamefont {A.}~\bibnamefont {Miyake}}, \ and\ \bibinfo {author} {\bibfnamefont {I.~H.}\ \bibnamefont {Deutsch}},\ }\href {\doibase 10.1088/1367-2630/ab0610} {\bibfield  {journal} {\bibinfo  {journal} {New. J. of Phys.}\ }\textbf {\bibinfo {volume} {21}},\ \bibinfo {pages} {055003} (\bibinfo {year} {2019})}\BibitemShut {NoStop}%
\bibitem [{\citenamefont {Oh}\ \emph {et~al.}(2023)\citenamefont {Oh}, \citenamefont {Jiang},\ and\ \citenamefont {Fefferman}}]{oh2023_noisy_boson_sampling}%
  \BibitemOpen
  \bibfield  {author} {\bibinfo {author} {\bibfnamefont {C.}~\bibnamefont {Oh}}, \bibinfo {author} {\bibfnamefont {L.}~\bibnamefont {Jiang}}, \ and\ \bibinfo {author} {\bibfnamefont {B.}~\bibnamefont {Fefferman}},\ }\href@noop {} {\bibfield  {journal} {\bibinfo  {journal} {arXiv preprint arXiv:2301.11532}\ } (\bibinfo {year} {2023})}\BibitemShut {NoStop}%
\bibitem [{\citenamefont {Qi}\ \emph {et~al.}(2020)\citenamefont {Qi}, \citenamefont {Brod}, \citenamefont {Quesada},\ and\ \citenamefont {Garc\'{\i}a-Patr\'on}}]{Qi2020_noisy_boson_sampling}%
  \BibitemOpen
  \bibfield  {author} {\bibinfo {author} {\bibfnamefont {H.}~\bibnamefont {Qi}}, \bibinfo {author} {\bibfnamefont {D.~J.}\ \bibnamefont {Brod}}, \bibinfo {author} {\bibfnamefont {N.}~\bibnamefont {Quesada}}, \ and\ \bibinfo {author} {\bibfnamefont {R.}~\bibnamefont {Garc\'{\i}a-Patr\'on}},\ }\href {\doibase 10.1103/PhysRevLett.124.100502} {\bibfield  {journal} {\bibinfo  {journal} {Phys. Rev. Lett.}\ }\textbf {\bibinfo {volume} {124}},\ \bibinfo {pages} {100502} (\bibinfo {year} {2020})}\BibitemShut {NoStop}%
\bibitem [{\citenamefont {Shchesnovich}(2021)}]{Shchesnovich_2021_noisy_boson}%
  \BibitemOpen
  \bibfield  {author} {\bibinfo {author} {\bibfnamefont {V.}~\bibnamefont {Shchesnovich}},\ }\href {\doibase 10.22331/q-2021-03-29-423} {\bibfield  {journal} {\bibinfo  {journal} {{Quantum}}\ }\textbf {\bibinfo {volume} {5}},\ \bibinfo {pages} {423} (\bibinfo {year} {2021})}\BibitemShut {NoStop}%
\bibitem [{\citenamefont {Zhong}\ \emph {et~al.}(2020)\citenamefont {Zhong}, \citenamefont {Wang}, \citenamefont {Deng}, \citenamefont {Chen}, \citenamefont {Peng}, \citenamefont {Luo}, \citenamefont {Qin}, \citenamefont {Wu}, \citenamefont {Ding}, \citenamefont {Hu}, \citenamefont {Hu}, \citenamefont {Yang}, \citenamefont {Zhang}, \citenamefont {Li}, \citenamefont {Li}, \citenamefont {Jiang}, \citenamefont {Gan}, \citenamefont {Yang}, \citenamefont {You}, \citenamefont {Wang}, \citenamefont {Li}, \citenamefont {Liu}, \citenamefont {Lu},\ and\ \citenamefont {Pan}}]{pan_2020_boson_sampling_exp}%
  \BibitemOpen
  \bibfield  {author} {\bibinfo {author} {\bibfnamefont {H.-S.}\ \bibnamefont {Zhong}}, \bibinfo {author} {\bibfnamefont {H.}~\bibnamefont {Wang}}, \bibinfo {author} {\bibfnamefont {Y.-H.}\ \bibnamefont {Deng}}, \bibinfo {author} {\bibfnamefont {M.-C.}\ \bibnamefont {Chen}}, \bibinfo {author} {\bibfnamefont {L.-C.}\ \bibnamefont {Peng}}, \bibinfo {author} {\bibfnamefont {Y.-H.}\ \bibnamefont {Luo}}, \bibinfo {author} {\bibfnamefont {J.}~\bibnamefont {Qin}}, \bibinfo {author} {\bibfnamefont {D.}~\bibnamefont {Wu}}, \bibinfo {author} {\bibfnamefont {X.}~\bibnamefont {Ding}}, \bibinfo {author} {\bibfnamefont {Y.}~\bibnamefont {Hu}}, \bibinfo {author} {\bibfnamefont {P.}~\bibnamefont {Hu}}, \bibinfo {author} {\bibfnamefont {X.-Y.}\ \bibnamefont {Yang}}, \bibinfo {author} {\bibfnamefont {W.-J.}\ \bibnamefont {Zhang}}, \bibinfo {author} {\bibfnamefont {H.}~\bibnamefont {Li}}, \bibinfo {author} {\bibfnamefont {Y.}~\bibnamefont {Li}}, \bibinfo {author} {\bibfnamefont {X.}~\bibnamefont {Jiang}}, \bibinfo {author}
  {\bibfnamefont {L.}~\bibnamefont {Gan}}, \bibinfo {author} {\bibfnamefont {G.}~\bibnamefont {Yang}}, \bibinfo {author} {\bibfnamefont {L.}~\bibnamefont {You}}, \bibinfo {author} {\bibfnamefont {Z.}~\bibnamefont {Wang}}, \bibinfo {author} {\bibfnamefont {L.}~\bibnamefont {Li}}, \bibinfo {author} {\bibfnamefont {N.-L.}\ \bibnamefont {Liu}}, \bibinfo {author} {\bibfnamefont {C.-Y.}\ \bibnamefont {Lu}}, \ and\ \bibinfo {author} {\bibfnamefont {J.-W.}\ \bibnamefont {Pan}},\ }\href {\doibase 10.1126/science.abe8770} {\bibfield  {journal} {\bibinfo  {journal} {Science}\ }\textbf {\bibinfo {volume} {370}},\ \bibinfo {pages} {1460} (\bibinfo {year} {2020})}\BibitemShut {NoStop}%
\bibitem [{\citenamefont {Madsen}\ \emph {et~al.}(2022)\citenamefont {Madsen}, \citenamefont {Laudenbach}, \citenamefont {Askarani}, \citenamefont {Rortais}, \citenamefont {Vincent}, \citenamefont {Bulmer}, \citenamefont {Miatto}, \citenamefont {Neuhaus}, \citenamefont {Helt}, \citenamefont {Collins} \emph {et~al.}}]{madsen2022boson_sampling_supremacy}%
  \BibitemOpen
  \bibfield  {author} {\bibinfo {author} {\bibfnamefont {L.~S.}\ \bibnamefont {Madsen}}, \bibinfo {author} {\bibfnamefont {F.}~\bibnamefont {Laudenbach}}, \bibinfo {author} {\bibfnamefont {M.~F.}\ \bibnamefont {Askarani}}, \bibinfo {author} {\bibfnamefont {F.}~\bibnamefont {Rortais}}, \bibinfo {author} {\bibfnamefont {T.}~\bibnamefont {Vincent}}, \bibinfo {author} {\bibfnamefont {J.~F.}\ \bibnamefont {Bulmer}}, \bibinfo {author} {\bibfnamefont {F.~M.}\ \bibnamefont {Miatto}}, \bibinfo {author} {\bibfnamefont {L.}~\bibnamefont {Neuhaus}}, \bibinfo {author} {\bibfnamefont {L.~G.}\ \bibnamefont {Helt}}, \bibinfo {author} {\bibfnamefont {M.~J.}\ \bibnamefont {Collins}},  \emph {et~al.},\ }\href@noop {} {\bibfield  {journal} {\bibinfo  {journal} {Nature}\ }\textbf {\bibinfo {volume} {606}},\ \bibinfo {pages} {75} (\bibinfo {year} {2022})}\BibitemShut {NoStop}%
\bibitem [{sup()}]{supplemental_material}%
  \BibitemOpen
  \href@noop {} {}\bibinfo {note} {See supplemental material for a detailed proof of the theorems, which includes Refs. [66-76].}\BibitemShut {Stop}%
\bibitem [{\citenamefont {Kuwahara}\ \emph {et~al.}(2024)\citenamefont {Kuwahara}, \citenamefont {Vu},\ and\ \citenamefont {Saito}}]{kuwahara2024bosons_liebrobinson}%
  \BibitemOpen
  \bibfield  {author} {\bibinfo {author} {\bibfnamefont {T.}~\bibnamefont {Kuwahara}}, \bibinfo {author} {\bibfnamefont {T.~V.}\ \bibnamefont {Vu}}, \ and\ \bibinfo {author} {\bibfnamefont {K.}~\bibnamefont {Saito}},\ }\href@noop {} {\bibfield  {journal} {\bibinfo  {journal} {Nat. Commun.}\ }\textbf {\bibinfo {volume} {15}},\ \bibinfo {pages} {2520} (\bibinfo {year} {2024})}\BibitemShut {NoStop}%
\bibitem [{\citenamefont {Knill}(2001)}]{knill2001fermioniclinearopticsmatchgates}%
  \BibitemOpen
  \bibfield  {author} {\bibinfo {author} {\bibfnamefont {E.}~\bibnamefont {Knill}},\ }\href@noop {} {\bibfield  {journal} {\bibinfo  {journal} {arXiv preprint quant-ph/0108033}\ } (\bibinfo {year} {2001})}\BibitemShut {NoStop}%
\bibitem [{\citenamefont {Mari}\ and\ \citenamefont {Eisert}(2012)}]{Mari_2012_wigner}%
  \BibitemOpen
  \bibfield  {author} {\bibinfo {author} {\bibfnamefont {A.}~\bibnamefont {Mari}}\ and\ \bibinfo {author} {\bibfnamefont {J.}~\bibnamefont {Eisert}},\ }\href {\doibase 10.1103/PhysRevLett.109.230503} {\bibfield  {journal} {\bibinfo  {journal} {Phys. Rev. Lett.}\ }\textbf {\bibinfo {volume} {109}},\ \bibinfo {pages} {230503} (\bibinfo {year} {2012})}\BibitemShut {NoStop}%
\bibitem [{\citenamefont {Cormick}\ \emph {et~al.}(2006)\citenamefont {Cormick}, \citenamefont {Galv\~ao}, \citenamefont {Gottesman}, \citenamefont {Paz},\ and\ \citenamefont {Pittenger}}]{cormick2006_wigner_classicality}%
  \BibitemOpen
  \bibfield  {author} {\bibinfo {author} {\bibfnamefont {C.}~\bibnamefont {Cormick}}, \bibinfo {author} {\bibfnamefont {E.~F.}\ \bibnamefont {Galv\~ao}}, \bibinfo {author} {\bibfnamefont {D.}~\bibnamefont {Gottesman}}, \bibinfo {author} {\bibfnamefont {J.~P.}\ \bibnamefont {Paz}}, \ and\ \bibinfo {author} {\bibfnamefont {A.~O.}\ \bibnamefont {Pittenger}},\ }\href {\doibase 10.1103/PhysRevA.73.012301} {\bibfield  {journal} {\bibinfo  {journal} {Phys. Rev. A}\ }\textbf {\bibinfo {volume} {73}},\ \bibinfo {pages} {012301} (\bibinfo {year} {2006})}\BibitemShut {NoStop}%
\bibitem [{\citenamefont {Yuan}\ \emph {et~al.}(2023)\citenamefont {Yuan}, \citenamefont {Seif}, \citenamefont {Lingenfelter}, \citenamefont {Schuster}, \citenamefont {Clerk},\ and\ \citenamefont {Jiang}}]{liangjiang2023universalcontrolbosonicsystems}%
  \BibitemOpen
  \bibfield  {author} {\bibinfo {author} {\bibfnamefont {M.}~\bibnamefont {Yuan}}, \bibinfo {author} {\bibfnamefont {A.}~\bibnamefont {Seif}}, \bibinfo {author} {\bibfnamefont {A.}~\bibnamefont {Lingenfelter}}, \bibinfo {author} {\bibfnamefont {D.~I.}\ \bibnamefont {Schuster}}, \bibinfo {author} {\bibfnamefont {A.~A.}\ \bibnamefont {Clerk}}, \ and\ \bibinfo {author} {\bibfnamefont {L.}~\bibnamefont {Jiang}},\ }\href@noop {} {\bibfield  {journal} {\bibinfo  {journal} {arXiv preprint arXiv:2312.15783}\ } (\bibinfo {year} {2023})}\BibitemShut {NoStop}%
\bibitem [{\citenamefont {Eickbusch}\ \emph {et~al.}(2022)\citenamefont {Eickbusch}, \citenamefont {Sivak}, \citenamefont {Ding}, \citenamefont {Elder}, \citenamefont {Jha}, \citenamefont {Venkatraman}, \citenamefont {Royer}, \citenamefont {Girvin}, \citenamefont {Schoelkopf},\ and\ \citenamefont {Devoret}}]{eickbusch2022fast_gates}%
  \BibitemOpen
  \bibfield  {author} {\bibinfo {author} {\bibfnamefont {A.}~\bibnamefont {Eickbusch}}, \bibinfo {author} {\bibfnamefont {V.}~\bibnamefont {Sivak}}, \bibinfo {author} {\bibfnamefont {A.~Z.}\ \bibnamefont {Ding}}, \bibinfo {author} {\bibfnamefont {S.~S.}\ \bibnamefont {Elder}}, \bibinfo {author} {\bibfnamefont {S.~R.}\ \bibnamefont {Jha}}, \bibinfo {author} {\bibfnamefont {J.}~\bibnamefont {Venkatraman}}, \bibinfo {author} {\bibfnamefont {B.}~\bibnamefont {Royer}}, \bibinfo {author} {\bibfnamefont {S.~M.}\ \bibnamefont {Girvin}}, \bibinfo {author} {\bibfnamefont {R.~J.}\ \bibnamefont {Schoelkopf}}, \ and\ \bibinfo {author} {\bibfnamefont {M.~H.}\ \bibnamefont {Devoret}},\ }\href@noop {} {\bibfield  {journal} {\bibinfo  {journal} {Nat. Phys.}\ }\textbf {\bibinfo {volume} {18}},\ \bibinfo {pages} {1464} (\bibinfo {year} {2022})}\BibitemShut {NoStop}%
\bibitem [{\citenamefont {Ben-Or}\ \emph {et~al.}(2013)\citenamefont {Ben-Or}, \citenamefont {Gottesman},\ and\ \citenamefont {Hassidim}}]{benor2013quantumrefrigerator}%
  \BibitemOpen
  \bibfield  {author} {\bibinfo {author} {\bibfnamefont {M.}~\bibnamefont {Ben-Or}}, \bibinfo {author} {\bibfnamefont {D.}~\bibnamefont {Gottesman}}, \ and\ \bibinfo {author} {\bibfnamefont {A.}~\bibnamefont {Hassidim}},\ }\href@noop {} {\bibfield  {journal} {\bibinfo  {journal} {arXiv preprint arXiv:1301.1995}\ } (\bibinfo {year} {2013})}\BibitemShut {NoStop}%
\bibitem [{\citenamefont {Noh}\ and\ \citenamefont {Chamberland}(2020)}]{noh2020_fault_tolerant_bosons}%
  \BibitemOpen
  \bibfield  {author} {\bibinfo {author} {\bibfnamefont {K.}~\bibnamefont {Noh}}\ and\ \bibinfo {author} {\bibfnamefont {C.}~\bibnamefont {Chamberland}},\ }\href {\doibase 10.1103/PhysRevA.101.012316} {\bibfield  {journal} {\bibinfo  {journal} {Phys. Rev. A}\ }\textbf {\bibinfo {volume} {101}},\ \bibinfo {pages} {012316} (\bibinfo {year} {2020})}\BibitemShut {NoStop}%
\bibitem [{\citenamefont {Matsuura}\ \emph {et~al.}(2024)\citenamefont {Matsuura}, \citenamefont {Menicucci},\ and\ \citenamefont {Yamasaki}}]{matsuura2024fault_tolerant_bosons}%
  \BibitemOpen
  \bibfield  {author} {\bibinfo {author} {\bibfnamefont {T.}~\bibnamefont {Matsuura}}, \bibinfo {author} {\bibfnamefont {N.~C.}\ \bibnamefont {Menicucci}}, \ and\ \bibinfo {author} {\bibfnamefont {H.}~\bibnamefont {Yamasaki}},\ }\href@noop {} {\bibfield  {journal} {\bibinfo  {journal} {arXiv preprint arXiv:2410.12365}\ } (\bibinfo {year} {2024})}\BibitemShut {NoStop}%
\bibitem [{\citenamefont {Aharonov}\ and\ \citenamefont {Ben-Or}(2008)}]{aharonov1999faulttolerantquantum}%
  \BibitemOpen
  \bibfield  {author} {\bibinfo {author} {\bibfnamefont {D.}~\bibnamefont {Aharonov}}\ and\ \bibinfo {author} {\bibfnamefont {M.}~\bibnamefont {Ben-Or}},\ }\href {\doibase 10.1137/S0097539799359385} {\bibfield  {journal} {\bibinfo  {journal} {SIAM J. Comput.}\ }\textbf {\bibinfo {volume} {38}},\ \bibinfo {pages} {1207} (\bibinfo {year} {2008})}\BibitemShut {NoStop}%
\bibitem [{\citenamefont {Moriya}(2006)}]{Moriya_2006_entanglement_fermions}%
  \BibitemOpen
  \bibfield  {author} {\bibinfo {author} {\bibfnamefont {H.}~\bibnamefont {Moriya}},\ }\href {\doibase 10.1088/0305-4470/39/14/017} {\bibfield  {journal} {\bibinfo  {journal} {J. Phys. A-Math. Gen.}\ }\textbf {\bibinfo {volume} {39}},\ \bibinfo {pages} {3753} (\bibinfo {year} {2006})}\BibitemShut {NoStop}%
\bibitem [{\citenamefont {Ba\~nuls}\ \emph {et~al.}(2007)\citenamefont {Ba\~nuls}, \citenamefont {Cirac},\ and\ \citenamefont {Wolf}}]{maricarmen2007_entanglement_fermions}%
  \BibitemOpen
  \bibfield  {author} {\bibinfo {author} {\bibfnamefont {M.-C.}\ \bibnamefont {Ba\~nuls}}, \bibinfo {author} {\bibfnamefont {J.~I.}\ \bibnamefont {Cirac}}, \ and\ \bibinfo {author} {\bibfnamefont {M.~M.}\ \bibnamefont {Wolf}},\ }\href {\doibase 10.1103/PhysRevA.76.022311} {\bibfield  {journal} {\bibinfo  {journal} {Phys. Rev. A}\ }\textbf {\bibinfo {volume} {76}},\ \bibinfo {pages} {022311} (\bibinfo {year} {2007})}\BibitemShut {NoStop}%
\bibitem [{\citenamefont {Liu}\ \emph {et~al.}(2021)\citenamefont {Liu}, \citenamefont {Barr{\'e}}, \citenamefont {van Baren}, \citenamefont {Wilson}, \citenamefont {Taniguchi}, \citenamefont {Watanabe}, \citenamefont {Cui}, \citenamefont {Gabor}, \citenamefont {Heinz}, \citenamefont {Chang} \emph {et~al.}}]{liu2021signatures}%
  \BibitemOpen
  \bibfield  {author} {\bibinfo {author} {\bibfnamefont {E.}~\bibnamefont {Liu}}, \bibinfo {author} {\bibfnamefont {E.}~\bibnamefont {Barr{\'e}}}, \bibinfo {author} {\bibfnamefont {J.}~\bibnamefont {van Baren}}, \bibinfo {author} {\bibfnamefont {M.}~\bibnamefont {Wilson}}, \bibinfo {author} {\bibfnamefont {T.}~\bibnamefont {Taniguchi}}, \bibinfo {author} {\bibfnamefont {K.}~\bibnamefont {Watanabe}}, \bibinfo {author} {\bibfnamefont {Y.-T.}\ \bibnamefont {Cui}}, \bibinfo {author} {\bibfnamefont {N.~M.}\ \bibnamefont {Gabor}}, \bibinfo {author} {\bibfnamefont {T.~F.}\ \bibnamefont {Heinz}}, \bibinfo {author} {\bibfnamefont {Y.-C.}\ \bibnamefont {Chang}},  \emph {et~al.},\ }\href@noop {} {\bibfield  {journal} {\bibinfo  {journal} {Nature}\ }\textbf {\bibinfo {volume} {594}},\ \bibinfo {pages} {46} (\bibinfo {year} {2021})}\BibitemShut {NoStop}%
\bibitem [{\citenamefont {Wang}\ \emph {et~al.}(2021)\citenamefont {Wang}, \citenamefont {Zhu}, \citenamefont {Seyler}, \citenamefont {Rivera}, \citenamefont {Zheng}, \citenamefont {Wang}, \citenamefont {He}, \citenamefont {Taniguchi}, \citenamefont {Watanabe}, \citenamefont {Yan} \emph {et~al.}}]{wang2021moire}%
  \BibitemOpen
  \bibfield  {author} {\bibinfo {author} {\bibfnamefont {X.}~\bibnamefont {Wang}}, \bibinfo {author} {\bibfnamefont {J.}~\bibnamefont {Zhu}}, \bibinfo {author} {\bibfnamefont {K.~L.}\ \bibnamefont {Seyler}}, \bibinfo {author} {\bibfnamefont {P.}~\bibnamefont {Rivera}}, \bibinfo {author} {\bibfnamefont {H.}~\bibnamefont {Zheng}}, \bibinfo {author} {\bibfnamefont {Y.}~\bibnamefont {Wang}}, \bibinfo {author} {\bibfnamefont {M.}~\bibnamefont {He}}, \bibinfo {author} {\bibfnamefont {T.}~\bibnamefont {Taniguchi}}, \bibinfo {author} {\bibfnamefont {K.}~\bibnamefont {Watanabe}}, \bibinfo {author} {\bibfnamefont {J.}~\bibnamefont {Yan}},  \emph {et~al.},\ }\href@noop {} {\bibfield  {journal} {\bibinfo  {journal} {Nature Nanotechnology}\ }\textbf {\bibinfo {volume} {16}},\ \bibinfo {pages} {1208} (\bibinfo {year} {2021})}\BibitemShut {NoStop}%
\bibitem [{\citenamefont {Baek}\ \emph {et~al.}(2021)\citenamefont {Baek}, \citenamefont {Brotons-Gisbert}, \citenamefont {Campbell}, \citenamefont {Vitale}, \citenamefont {Lischner}, \citenamefont {Watanabe}, \citenamefont {Taniguchi},\ and\ \citenamefont {Gerardot}}]{baek2021optical}%
  \BibitemOpen
  \bibfield  {author} {\bibinfo {author} {\bibfnamefont {H.}~\bibnamefont {Baek}}, \bibinfo {author} {\bibfnamefont {M.}~\bibnamefont {Brotons-Gisbert}}, \bibinfo {author} {\bibfnamefont {A.}~\bibnamefont {Campbell}}, \bibinfo {author} {\bibfnamefont {V.}~\bibnamefont {Vitale}}, \bibinfo {author} {\bibfnamefont {J.}~\bibnamefont {Lischner}}, \bibinfo {author} {\bibfnamefont {K.}~\bibnamefont {Watanabe}}, \bibinfo {author} {\bibfnamefont {T.}~\bibnamefont {Taniguchi}}, \ and\ \bibinfo {author} {\bibfnamefont {B.~D.}\ \bibnamefont {Gerardot}},\ }\href@noop {} {\bibfield  {journal} {\bibinfo  {journal} {Nature Nanotechnology}\ }\textbf {\bibinfo {volume} {16}},\ \bibinfo {pages} {1237} (\bibinfo {year} {2021})}\BibitemShut {NoStop}%
\bibitem [{\citenamefont {Kraus}(2009)}]{kraus2009thesis}%
  \BibitemOpen
  \bibfield  {author} {\bibinfo {author} {\bibfnamefont {C.~V.}\ \bibnamefont {Kraus}},\ }\emph {\bibinfo {title} {A quantum information perspective of fermionic quantum many-body systems}},\ \href@noop {} {Ph.D. thesis},\ \bibinfo  {school} {Technische Universit{\"a}t M{\"u}nchen} (\bibinfo {year} {2009})\BibitemShut {NoStop}%
\bibitem [{\citenamefont {Surace}\ and\ \citenamefont {Tagliacozzo}(2022)}]{2022_tagliacozzo_fermionic_gs}%
  \BibitemOpen
  \bibfield  {author} {\bibinfo {author} {\bibfnamefont {J.}~\bibnamefont {Surace}}\ and\ \bibinfo {author} {\bibfnamefont {L.}~\bibnamefont {Tagliacozzo}},\ }\href {\doibase 10.21468/SciPostPhysLectNotes.54} {\bibfield  {journal} {\bibinfo  {journal} {SciPost Phys. Lect. Notes}\ ,\ \bibinfo {pages} {54}} (\bibinfo {year} {2022})}\BibitemShut {NoStop}%
\bibitem [{\citenamefont {Cormen}\ \emph {et~al.}(2022)\citenamefont {Cormen}, \citenamefont {Leiserson}, \citenamefont {Rivest},\ and\ \citenamefont {Stein}}]{cormen2022introduction}%
  \BibitemOpen
  \bibfield  {author} {\bibinfo {author} {\bibfnamefont {T.~H.}\ \bibnamefont {Cormen}}, \bibinfo {author} {\bibfnamefont {C.~E.}\ \bibnamefont {Leiserson}}, \bibinfo {author} {\bibfnamefont {R.~L.}\ \bibnamefont {Rivest}}, \ and\ \bibinfo {author} {\bibfnamefont {C.}~\bibnamefont {Stein}},\ }\href@noop {} {\emph {\bibinfo {title} {Introduction to algorithms}}}\ (\bibinfo  {publisher} {MIT press},\ \bibinfo {year} {2022})\BibitemShut {NoStop}%
\bibitem [{\citenamefont {Fagotti}\ and\ \citenamefont {Calabrese}(2010)}]{Fagotti_2010_entanglement_fermions}%
  \BibitemOpen
  \bibfield  {author} {\bibinfo {author} {\bibfnamefont {M.}~\bibnamefont {Fagotti}}\ and\ \bibinfo {author} {\bibfnamefont {P.}~\bibnamefont {Calabrese}},\ }\href {\doibase 10.1088/1742-5468/2010/04/P04016} {\bibfield  {journal} {\bibinfo  {journal} {J. Stat. Mech.-Theory E.}\ }\textbf {\bibinfo {volume} {2010}},\ \bibinfo {pages} {P04016} (\bibinfo {year} {2010})}\BibitemShut {NoStop}%
\bibitem [{\citenamefont {Hudson}(1974)}]{1974_hudson_theorem}%
  \BibitemOpen
  \bibfield  {author} {\bibinfo {author} {\bibfnamefont {R.}~\bibnamefont {Hudson}},\ }\href {\doibase https://doi.org/10.1016/0034-4877(74)90007-X} {\bibfield  {journal} {\bibinfo  {journal} {Rep. Math. Phys}\ }\textbf {\bibinfo {volume} {6}},\ \bibinfo {pages} {249} (\bibinfo {year} {1974})}\BibitemShut {NoStop}%
\bibitem [{\citenamefont {Walschaers}(2021)}]{walschaers2021_nongaussian_states}%
  \BibitemOpen
  \bibfield  {author} {\bibinfo {author} {\bibfnamefont {M.}~\bibnamefont {Walschaers}},\ }\href {\doibase 10.1103/PRXQuantum.2.030204} {\bibfield  {journal} {\bibinfo  {journal} {PRX Quantum}\ }\textbf {\bibinfo {volume} {2}},\ \bibinfo {pages} {030204} (\bibinfo {year} {2021})}\BibitemShut {NoStop}%
\bibitem [{\citenamefont {Lingenfelter}\ \emph {et~al.}(2021)\citenamefont {Lingenfelter}, \citenamefont {Roberts},\ and\ \citenamefont {Clerk}}]{lingenfelter2021_fock_state_generation}%
  \BibitemOpen
  \bibfield  {author} {\bibinfo {author} {\bibfnamefont {A.}~\bibnamefont {Lingenfelter}}, \bibinfo {author} {\bibfnamefont {D.}~\bibnamefont {Roberts}}, \ and\ \bibinfo {author} {\bibfnamefont {A.~A.}\ \bibnamefont {Clerk}},\ }\href {\doibase 10.1126/sciadv.abj1916} {\bibfield  {journal} {\bibinfo  {journal} {Sci. Adv.}\ }\textbf {\bibinfo {volume} {7}},\ \bibinfo {pages} {eabj1916} (\bibinfo {year} {2021})}\BibitemShut {NoStop}%
\bibitem [{\citenamefont {Boykin}\ \emph {et~al.}(2002)\citenamefont {Boykin}, \citenamefont {Mor}, \citenamefont {Roychowdhury}, \citenamefont {Vatan},\ and\ \citenamefont {Vrijen}}]{boykin2002_cooling1}%
  \BibitemOpen
  \bibfield  {author} {\bibinfo {author} {\bibfnamefont {P.~O.}\ \bibnamefont {Boykin}}, \bibinfo {author} {\bibfnamefont {T.}~\bibnamefont {Mor}}, \bibinfo {author} {\bibfnamefont {V.}~\bibnamefont {Roychowdhury}}, \bibinfo {author} {\bibfnamefont {F.}~\bibnamefont {Vatan}}, \ and\ \bibinfo {author} {\bibfnamefont {R.}~\bibnamefont {Vrijen}},\ }\href {\doibase 10.1073/pnas.241641898} {\bibfield  {journal} {\bibinfo  {journal} {Proceedings of the National Academy of Sciences}\ }\textbf {\bibinfo {volume} {99}},\ \bibinfo {pages} {3388} (\bibinfo {year} {2002})}\BibitemShut {NoStop}%
\bibitem [{\citenamefont {Schulman}\ and\ \citenamefont {Vazirani}(1999)}]{schulman1999_cooling2}%
  \BibitemOpen
  \bibfield  {author} {\bibinfo {author} {\bibfnamefont {L.~J.}\ \bibnamefont {Schulman}}\ and\ \bibinfo {author} {\bibfnamefont {U.~V.}\ \bibnamefont {Vazirani}},\ }in\ \href {\doibase 10.1145/301250.301332} {\emph {\bibinfo {booktitle} {Proceedings of the Thirty-First Annual ACM Symposium on Theory of Computing}}},\ \bibinfo {series and number} {STOC '99}\ (\bibinfo  {publisher} {Association for Computing Machinery},\ \bibinfo {address} {New York, NY, USA},\ \bibinfo {year} {1999})\ p.\ \bibinfo {pages} {322–329}\BibitemShut {NoStop}%
\bibitem [{\citenamefont {Alhambra}\ \emph {et~al.}(2019)\citenamefont {Alhambra}, \citenamefont {Lostaglio},\ and\ \citenamefont {Perry}}]{Alhambra2019heatbathalgorithmic}%
  \BibitemOpen
  \bibfield  {author} {\bibinfo {author} {\bibfnamefont {{\'{A}}.~M.}\ \bibnamefont {Alhambra}}, \bibinfo {author} {\bibfnamefont {M.}~\bibnamefont {Lostaglio}}, \ and\ \bibinfo {author} {\bibfnamefont {C.}~\bibnamefont {Perry}},\ }\href {\doibase 10.22331/q-2019-09-23-188} {\bibfield  {journal} {\bibinfo  {journal} {{Quantum}}\ }\textbf {\bibinfo {volume} {3}},\ \bibinfo {pages} {188} (\bibinfo {year} {2019})}\BibitemShut {NoStop}%
\bibitem [{\citenamefont {Shtanko}\ and\ \citenamefont {Sharma}(2024)}]{shtanko2024complexitylocalquantumcircuits}%
  \BibitemOpen
  \bibfield  {author} {\bibinfo {author} {\bibfnamefont {O.}~\bibnamefont {Shtanko}}\ and\ \bibinfo {author} {\bibfnamefont {K.}~\bibnamefont {Sharma}},\ }\href@noop {} {\bibfield  {journal} {\bibinfo  {journal} {arXiv preprint arXiv:2411.04819}\ } (\bibinfo {year} {2024})}\BibitemShut {NoStop}%
\bibitem [{cod()}]{code_and_data}%
  \BibitemOpen
  \href@noop {} {}\bibinfo {howpublished} {\url{https://github.com/guillegg10/Separability_Wigner-negativity}},\ \bibinfo {note} {{GitHub} repository}\BibitemShut {NoStop}%
\end{thebibliography}%


\begin{thebibliography}{24}%
\makeatletter
\providecommand \@ifxundefined [1]{%
 \@ifx{#1\undefined}
}%
\providecommand \@ifnum [1]{%
 \ifnum #1\expandafter \@firstoftwo
 \else \expandafter \@secondoftwo
 \fi
}%
\providecommand \@ifx [1]{%
 \ifx #1\expandafter \@firstoftwo
 \else \expandafter \@secondoftwo
 \fi
}%
\providecommand \natexlab [1]{#1}%
\providecommand \enquote  [1]{``#1''}%
\providecommand \bibnamefont  [1]{#1}%
\providecommand \bibfnamefont [1]{#1}%
\providecommand \citenamefont [1]{#1}%
\providecommand \href@noop [0]{\@secondoftwo}%
\providecommand \href [0]{\begingroup \@sanitize@url \@href}%
\providecommand \@href[1]{\@@startlink{#1}\@@href}%
\providecommand \@@href[1]{\endgroup#1\@@endlink}%
\providecommand \@sanitize@url [0]{\catcode `\\12\catcode `\$12\catcode `\&12\catcode `\#12\catcode `\^12\catcode `\_12\catcode `\%12\relax}%
\providecommand \@@startlink[1]{}%
\providecommand \@@endlink[0]{}%
\providecommand \url  [0]{\begingroup\@sanitize@url \@url }%
\providecommand \@url [1]{\endgroup\@href {#1}{\urlprefix }}%
\providecommand \urlprefix  [0]{URL }%
\providecommand \Eprint [0]{\href }%
\providecommand \doibase [0]{http://dx.doi.org/}%
\providecommand \selectlanguage [0]{\@gobble}%
\providecommand \bibinfo  [0]{\@secondoftwo}%
\providecommand \bibfield  [0]{\@secondoftwo}%
\providecommand \translation [1]{[#1]}%
\providecommand \BibitemOpen [0]{}%
\providecommand \bibitemStop [0]{}%
\providecommand \bibitemNoStop [0]{.\EOS\space}%
\providecommand \EOS [0]{\spacefactor3000\relax}%
\providecommand \BibitemShut  [1]{\csname bibitem#1\endcsname}%
\let\auto@bib@innerbib\@empty
\bibitem [{\citenamefont {Kraus}(2009)}]{kraus2009thesis}%
  \BibitemOpen
  \bibfield  {author} {\bibinfo {author} {\bibfnamefont {C.~V.}\ \bibnamefont {Kraus}},\ }\emph {\bibinfo {title} {A quantum information perspective of fermionic quantum many-body systems}},\ \href@noop {} {Ph.D. thesis},\ \bibinfo  {school} {Technische Universit{\"a}t M{\"u}nchen} (\bibinfo {year} {2009})\BibitemShut {NoStop}%
\bibitem [{\citenamefont {Surace}\ and\ \citenamefont {Tagliacozzo}(2022)}]{2022_tagliacozzo_fermionic_gs}%
  \BibitemOpen
  \bibfield  {author} {\bibinfo {author} {\bibfnamefont {J.}~\bibnamefont {Surace}}\ and\ \bibinfo {author} {\bibfnamefont {L.}~\bibnamefont {Tagliacozzo}},\ }\href {\doibase 10.21468/SciPostPhysLectNotes.54} {\bibfield  {journal} {\bibinfo  {journal} {SciPost Phys. Lect. Notes}\ ,\ \bibinfo {pages} {54}} (\bibinfo {year} {2022})}\BibitemShut {NoStop}%
\bibitem [{\citenamefont {Cormen}\ \emph {et~al.}(2022)\citenamefont {Cormen}, \citenamefont {Leiserson}, \citenamefont {Rivest},\ and\ \citenamefont {Stein}}]{cormen2022introduction}%
  \BibitemOpen
  \bibfield  {author} {\bibinfo {author} {\bibfnamefont {T.~H.}\ \bibnamefont {Cormen}}, \bibinfo {author} {\bibfnamefont {C.~E.}\ \bibnamefont {Leiserson}}, \bibinfo {author} {\bibfnamefont {R.~L.}\ \bibnamefont {Rivest}}, \ and\ \bibinfo {author} {\bibfnamefont {C.}~\bibnamefont {Stein}},\ }\href@noop {} {\emph {\bibinfo {title} {Introduction to algorithms}}}\ (\bibinfo  {publisher} {MIT press},\ \bibinfo {year} {2022})\BibitemShut {NoStop}%
\bibitem [{\citenamefont {Fagotti}\ and\ \citenamefont {Calabrese}(2010)}]{Fagotti_2010_entanglement_fermions}%
  \BibitemOpen
  \bibfield  {author} {\bibinfo {author} {\bibfnamefont {M.}~\bibnamefont {Fagotti}}\ and\ \bibinfo {author} {\bibfnamefont {P.}~\bibnamefont {Calabrese}},\ }\href {\doibase 10.1088/1742-5468/2010/04/P04016} {\bibfield  {journal} {\bibinfo  {journal} {J. Stat. Mech.-Theory E.}\ }\textbf {\bibinfo {volume} {2010}},\ \bibinfo {pages} {P04016} (\bibinfo {year} {2010})}\BibitemShut {NoStop}%
\bibitem [{\citenamefont {Terhal}\ and\ \citenamefont {DiVincenzo}(2002)}]{Terhal2002_classical_simulation_fermions}%
  \BibitemOpen
  \bibfield  {author} {\bibinfo {author} {\bibfnamefont {B.~M.}\ \bibnamefont {Terhal}}\ and\ \bibinfo {author} {\bibfnamefont {D.~P.}\ \bibnamefont {DiVincenzo}},\ }\href {\doibase 10.1103/PhysRevA.65.032325} {\bibfield  {journal} {\bibinfo  {journal} {Phys. Rev. A}\ }\textbf {\bibinfo {volume} {65}},\ \bibinfo {pages} {032325} (\bibinfo {year} {2002})}\BibitemShut {NoStop}%
\bibitem [{\citenamefont {Knill}(2001)}]{knill2001fermioniclinearopticsmatchgates}%
  \BibitemOpen
  \bibfield  {author} {\bibinfo {author} {\bibfnamefont {E.}~\bibnamefont {Knill}},\ }\href@noop {} {\bibfield  {journal} {\bibinfo  {journal} {arXiv preprint quant-ph/0108033}\ } (\bibinfo {year} {2001})}\BibitemShut {NoStop}%
\bibitem [{\citenamefont {Kuwahara}\ \emph {et~al.}(2024)\citenamefont {Kuwahara}, \citenamefont {Vu},\ and\ \citenamefont {Saito}}]{kuwahara2024bosons_liebrobinson}%
  \BibitemOpen
  \bibfield  {author} {\bibinfo {author} {\bibfnamefont {T.}~\bibnamefont {Kuwahara}}, \bibinfo {author} {\bibfnamefont {T.~V.}\ \bibnamefont {Vu}}, \ and\ \bibinfo {author} {\bibfnamefont {K.}~\bibnamefont {Saito}},\ }\href@noop {} {\bibfield  {journal} {\bibinfo  {journal} {Nat. Commun.}\ }\textbf {\bibinfo {volume} {15}},\ \bibinfo {pages} {2520} (\bibinfo {year} {2024})}\BibitemShut {NoStop}%
\bibitem [{\citenamefont {Noh}\ and\ \citenamefont {Chamberland}(2020)}]{noh2020_fault_tolerant_bosons}%
  \BibitemOpen
  \bibfield  {author} {\bibinfo {author} {\bibfnamefont {K.}~\bibnamefont {Noh}}\ and\ \bibinfo {author} {\bibfnamefont {C.}~\bibnamefont {Chamberland}},\ }\href {\doibase 10.1103/PhysRevA.101.012316} {\bibfield  {journal} {\bibinfo  {journal} {Phys. Rev. A}\ }\textbf {\bibinfo {volume} {101}},\ \bibinfo {pages} {012316} (\bibinfo {year} {2020})}\BibitemShut {NoStop}%
\bibitem [{\citenamefont {Aharonov}\ and\ \citenamefont {Ben-Or}(2008)}]{aharonov1999faulttolerantquantum}%
  \BibitemOpen
  \bibfield  {author} {\bibinfo {author} {\bibfnamefont {D.}~\bibnamefont {Aharonov}}\ and\ \bibinfo {author} {\bibfnamefont {M.}~\bibnamefont {Ben-Or}},\ }\href {\doibase 10.1137/S0097539799359385} {\bibfield  {journal} {\bibinfo  {journal} {SIAM J. Comput.}\ }\textbf {\bibinfo {volume} {38}},\ \bibinfo {pages} {1207} (\bibinfo {year} {2008})}\BibitemShut {NoStop}%
\bibitem [{\citenamefont {Matsuura}\ \emph {et~al.}(2024)\citenamefont {Matsuura}, \citenamefont {Menicucci},\ and\ \citenamefont {Yamasaki}}]{matsuura2024fault_tolerant_bosons}%
  \BibitemOpen
  \bibfield  {author} {\bibinfo {author} {\bibfnamefont {T.}~\bibnamefont {Matsuura}}, \bibinfo {author} {\bibfnamefont {N.~C.}\ \bibnamefont {Menicucci}}, \ and\ \bibinfo {author} {\bibfnamefont {H.}~\bibnamefont {Yamasaki}},\ }\href@noop {} {\bibfield  {journal} {\bibinfo  {journal} {arXiv preprint arXiv:2410.12365}\ } (\bibinfo {year} {2024})}\BibitemShut {NoStop}%
\bibitem [{\citenamefont {Mari}\ and\ \citenamefont {Eisert}(2012)}]{Mari_2012_wigner}%
  \BibitemOpen
  \bibfield  {author} {\bibinfo {author} {\bibfnamefont {A.}~\bibnamefont {Mari}}\ and\ \bibinfo {author} {\bibfnamefont {J.}~\bibnamefont {Eisert}},\ }\href {\doibase 10.1103/PhysRevLett.109.230503} {\bibfield  {journal} {\bibinfo  {journal} {Phys. Rev. Lett.}\ }\textbf {\bibinfo {volume} {109}},\ \bibinfo {pages} {230503} (\bibinfo {year} {2012})}\BibitemShut {NoStop}%
\bibitem [{\citenamefont {Hudson}(1974)}]{1974_hudson_theorem}%
  \BibitemOpen
  \bibfield  {author} {\bibinfo {author} {\bibfnamefont {R.}~\bibnamefont {Hudson}},\ }\href {\doibase https://doi.org/10.1016/0034-4877(74)90007-X} {\bibfield  {journal} {\bibinfo  {journal} {Rep. Math. Phys}\ }\textbf {\bibinfo {volume} {6}},\ \bibinfo {pages} {249} (\bibinfo {year} {1974})}\BibitemShut {NoStop}%
\bibitem [{\citenamefont {Walschaers}(2021)}]{walschaers2021_nongaussian_states}%
  \BibitemOpen
  \bibfield  {author} {\bibinfo {author} {\bibfnamefont {M.}~\bibnamefont {Walschaers}},\ }\href {\doibase 10.1103/PRXQuantum.2.030204} {\bibfield  {journal} {\bibinfo  {journal} {PRX Quantum}\ }\textbf {\bibinfo {volume} {2}},\ \bibinfo {pages} {030204} (\bibinfo {year} {2021})}\BibitemShut {NoStop}%
\bibitem [{\citenamefont {Yuan}\ \emph {et~al.}(2023)\citenamefont {Yuan}, \citenamefont {Seif}, \citenamefont {Lingenfelter}, \citenamefont {Schuster}, \citenamefont {Clerk},\ and\ \citenamefont {Jiang}}]{liangjiang2023universalcontrolbosonicsystems}%
  \BibitemOpen
  \bibfield  {author} {\bibinfo {author} {\bibfnamefont {M.}~\bibnamefont {Yuan}}, \bibinfo {author} {\bibfnamefont {A.}~\bibnamefont {Seif}}, \bibinfo {author} {\bibfnamefont {A.}~\bibnamefont {Lingenfelter}}, \bibinfo {author} {\bibfnamefont {D.~I.}\ \bibnamefont {Schuster}}, \bibinfo {author} {\bibfnamefont {A.~A.}\ \bibnamefont {Clerk}}, \ and\ \bibinfo {author} {\bibfnamefont {L.}~\bibnamefont {Jiang}},\ }\href@noop {} {\bibfield  {journal} {\bibinfo  {journal} {arXiv preprint arXiv:2312.15783}\ } (\bibinfo {year} {2023})}\BibitemShut {NoStop}%
\bibitem [{\citenamefont {Eickbusch}\ \emph {et~al.}(2022)\citenamefont {Eickbusch}, \citenamefont {Sivak}, \citenamefont {Ding}, \citenamefont {Elder}, \citenamefont {Jha}, \citenamefont {Venkatraman}, \citenamefont {Royer}, \citenamefont {Girvin}, \citenamefont {Schoelkopf},\ and\ \citenamefont {Devoret}}]{eickbusch2022fast_gates}%
  \BibitemOpen
  \bibfield  {author} {\bibinfo {author} {\bibfnamefont {A.}~\bibnamefont {Eickbusch}}, \bibinfo {author} {\bibfnamefont {V.}~\bibnamefont {Sivak}}, \bibinfo {author} {\bibfnamefont {A.~Z.}\ \bibnamefont {Ding}}, \bibinfo {author} {\bibfnamefont {S.~S.}\ \bibnamefont {Elder}}, \bibinfo {author} {\bibfnamefont {S.~R.}\ \bibnamefont {Jha}}, \bibinfo {author} {\bibfnamefont {J.}~\bibnamefont {Venkatraman}}, \bibinfo {author} {\bibfnamefont {B.}~\bibnamefont {Royer}}, \bibinfo {author} {\bibfnamefont {S.~M.}\ \bibnamefont {Girvin}}, \bibinfo {author} {\bibfnamefont {R.~J.}\ \bibnamefont {Schoelkopf}}, \ and\ \bibinfo {author} {\bibfnamefont {M.~H.}\ \bibnamefont {Devoret}},\ }\href@noop {} {\bibfield  {journal} {\bibinfo  {journal} {Nat. Phys.}\ }\textbf {\bibinfo {volume} {18}},\ \bibinfo {pages} {1464} (\bibinfo {year} {2022})}\BibitemShut {NoStop}%
\bibitem [{\citenamefont {Lingenfelter}\ \emph {et~al.}(2021)\citenamefont {Lingenfelter}, \citenamefont {Roberts},\ and\ \citenamefont {Clerk}}]{lingenfelter2021_fock_state_generation}%
  \BibitemOpen
  \bibfield  {author} {\bibinfo {author} {\bibfnamefont {A.}~\bibnamefont {Lingenfelter}}, \bibinfo {author} {\bibfnamefont {D.}~\bibnamefont {Roberts}}, \ and\ \bibinfo {author} {\bibfnamefont {A.~A.}\ \bibnamefont {Clerk}},\ }\href {\doibase 10.1126/sciadv.abj1916} {\bibfield  {journal} {\bibinfo  {journal} {Sci. Adv.}\ }\textbf {\bibinfo {volume} {7}},\ \bibinfo {pages} {eabj1916} (\bibinfo {year} {2021})}\BibitemShut {NoStop}%
\bibitem [{\citenamefont {Boykin}\ \emph {et~al.}(2002)\citenamefont {Boykin}, \citenamefont {Mor}, \citenamefont {Roychowdhury}, \citenamefont {Vatan},\ and\ \citenamefont {Vrijen}}]{boykin2002_cooling1}%
  \BibitemOpen
  \bibfield  {author} {\bibinfo {author} {\bibfnamefont {P.~O.}\ \bibnamefont {Boykin}}, \bibinfo {author} {\bibfnamefont {T.}~\bibnamefont {Mor}}, \bibinfo {author} {\bibfnamefont {V.}~\bibnamefont {Roychowdhury}}, \bibinfo {author} {\bibfnamefont {F.}~\bibnamefont {Vatan}}, \ and\ \bibinfo {author} {\bibfnamefont {R.}~\bibnamefont {Vrijen}},\ }\href {\doibase 10.1073/pnas.241641898} {\bibfield  {journal} {\bibinfo  {journal} {Proceedings of the National Academy of Sciences}\ }\textbf {\bibinfo {volume} {99}},\ \bibinfo {pages} {3388} (\bibinfo {year} {2002})}\BibitemShut {NoStop}%
\bibitem [{\citenamefont {Schulman}\ and\ \citenamefont {Vazirani}(1999)}]{schulman1999_cooling2}%
  \BibitemOpen
  \bibfield  {author} {\bibinfo {author} {\bibfnamefont {L.~J.}\ \bibnamefont {Schulman}}\ and\ \bibinfo {author} {\bibfnamefont {U.~V.}\ \bibnamefont {Vazirani}},\ }in\ \href {\doibase 10.1145/301250.301332} {\emph {\bibinfo {booktitle} {Proceedings of the Thirty-First Annual ACM Symposium on Theory of Computing}}},\ \bibinfo {series and number} {STOC '99}\ (\bibinfo  {publisher} {Association for Computing Machinery},\ \bibinfo {address} {New York, NY, USA},\ \bibinfo {year} {1999})\ p.\ \bibinfo {pages} {322–329}\BibitemShut {NoStop}%
\bibitem [{\citenamefont {Alhambra}\ \emph {et~al.}(2019)\citenamefont {Alhambra}, \citenamefont {Lostaglio},\ and\ \citenamefont {Perry}}]{Alhambra2019heatbathalgorithmic}%
  \BibitemOpen
  \bibfield  {author} {\bibinfo {author} {\bibfnamefont {{\'{A}}.~M.}\ \bibnamefont {Alhambra}}, \bibinfo {author} {\bibfnamefont {M.}~\bibnamefont {Lostaglio}}, \ and\ \bibinfo {author} {\bibfnamefont {C.}~\bibnamefont {Perry}},\ }\href {\doibase 10.22331/q-2019-09-23-188} {\bibfield  {journal} {\bibinfo  {journal} {{Quantum}}\ }\textbf {\bibinfo {volume} {3}},\ \bibinfo {pages} {188} (\bibinfo {year} {2019})}\BibitemShut {NoStop}%
\bibitem [{\citenamefont {Ben-Or}\ \emph {et~al.}(2013)\citenamefont {Ben-Or}, \citenamefont {Gottesman},\ and\ \citenamefont {Hassidim}}]{benor2013quantumrefrigerator}%
  \BibitemOpen
  \bibfield  {author} {\bibinfo {author} {\bibfnamefont {M.}~\bibnamefont {Ben-Or}}, \bibinfo {author} {\bibfnamefont {D.}~\bibnamefont {Gottesman}}, \ and\ \bibinfo {author} {\bibfnamefont {A.}~\bibnamefont {Hassidim}},\ }\href@noop {} {\bibfield  {journal} {\bibinfo  {journal} {arXiv preprint arXiv:1301.1995}\ } (\bibinfo {year} {2013})}\BibitemShut {NoStop}%
\bibitem [{\citenamefont {Trivedi}\ and\ \citenamefont {Cirac}(2022)}]{Trivedi2022_transitions}%
  \BibitemOpen
  \bibfield  {author} {\bibinfo {author} {\bibfnamefont {R.}~\bibnamefont {Trivedi}}\ and\ \bibinfo {author} {\bibfnamefont {J.~I.}\ \bibnamefont {Cirac}},\ }\href {\doibase 10.1103/PhysRevLett.129.260405} {\bibfield  {journal} {\bibinfo  {journal} {Phys. Rev. Lett.}\ }\textbf {\bibinfo {volume} {129}},\ \bibinfo {pages} {260405} (\bibinfo {year} {2022})}\BibitemShut {NoStop}%
\bibitem [{\citenamefont {Shtanko}\ and\ \citenamefont {Sharma}(2024)}]{shtanko2024complexitylocalquantumcircuits}%
  \BibitemOpen
  \bibfield  {author} {\bibinfo {author} {\bibfnamefont {O.}~\bibnamefont {Shtanko}}\ and\ \bibinfo {author} {\bibfnamefont {K.}~\bibnamefont {Sharma}},\ }\href@noop {} {\bibfield  {journal} {\bibinfo  {journal} {arXiv preprint arXiv:2411.04819}\ } (\bibinfo {year} {2024})}\BibitemShut {NoStop}%
\bibitem [{\citenamefont {Ba\~nuls}\ \emph {et~al.}(2007)\citenamefont {Ba\~nuls}, \citenamefont {Cirac},\ and\ \citenamefont {Wolf}}]{maricarmen2007_entanglement_fermions}%
  \BibitemOpen
  \bibfield  {author} {\bibinfo {author} {\bibfnamefont {M.-C.}\ \bibnamefont {Ba\~nuls}}, \bibinfo {author} {\bibfnamefont {J.~I.}\ \bibnamefont {Cirac}}, \ and\ \bibinfo {author} {\bibfnamefont {M.~M.}\ \bibnamefont {Wolf}},\ }\href {\doibase 10.1103/PhysRevA.76.022311} {\bibfield  {journal} {\bibinfo  {journal} {Phys. Rev. A}\ }\textbf {\bibinfo {volume} {76}},\ \bibinfo {pages} {022311} (\bibinfo {year} {2007})}\BibitemShut {NoStop}%
\bibitem [{\citenamefont {Moriya}(2006)}]{Moriya_2006_entanglement_fermions}%
  \BibitemOpen
  \bibfield  {author} {\bibinfo {author} {\bibfnamefont {H.}~\bibnamefont {Moriya}},\ }\href {\doibase 10.1088/0305-4470/39/14/017} {\bibfield  {journal} {\bibinfo  {journal} {J. Phys. A-Math. Gen.}\ }\textbf {\bibinfo {volume} {39}},\ \bibinfo {pages} {3753} (\bibinfo {year} {2006})}\BibitemShut {NoStop}%
\end{thebibliography}%

\end{document}


\title{Supplemental material to \\ ``Dynamical complexity of non-Gaussian many-body systems with dissipation" }
\author{Guillermo González-García$^{1, 2}$}
\author{Alexey V. Gorshkov$^{3,4}$}
\author{J.~Ignacio Cirac$^{1, 2}$}
\author{Rahul Trivedi$^{1, 2}$}
\email{rahul.trivedi@mpq.mpg.de}
\address{$^1$Max-Planck-Institut für Quantenoptik, Hans-Kopfermann-Str.~1, 85748 Garching, Germany \\
$^2$Munich Center for Quantum Science and Technology (MCQST), Schellingstr. 4, D-80799 Munich, Germany \\
$^3$Joint Quantum Institute, NIST/University of Maryland, College Park, Maryland 20742, USA.\\ $^4$Joint Center for Quantum Information and Computer Science, NIST/University of Maryland, College Park, Maryland 20742, USA.}

\date{\today}
\maketitle
\raggedbottom
\onecolumngrid

This Supplemental Material is organized as follows: First, in section \ref{supplemental:preliminaries}, we provide the necessary notation and background for the rest of the Supplemental Material. In section \ref{supplemental:theorem_1_proof}, we provide the proof of Theorem 1, showing convex-Gaussianity and simulability for the fermionic model for sufficiently high noise rates. Then, in section \ref{supplemental:theorem_2_proof}, we prove Theorem 2 for the bosonic model, which implies separability and simulability for sufficiently high noise rates. In section \ref{supplemental:spins}, we extend this result to a class of spin models: for $2$-local Hamiltonians and sufficiently high noise rates, the system can be shown to be separable at all times. Finally, in section \ref{supplemental:counterexamples}, we show that an analogue of Theorem 1 cannot exist for bosonic systems, and that an analogue of Theorem 2 cannot exist for fermionic systems.

\section{Notation and preliminaries}\label{supplemental:preliminaries}

In this section, we provide the necessary notation and background for the rest of the Supplemental Material. This includes the notation regarding operators and norms (subsection \ref{subsection:operators}), a brief summary on several properties of bosonic and fermionic systems (subsection \ref{subsection:preliminaries_fermions_bosons}), the Trotter formula that will be used throughout the proofs (subsection \ref{subsection:trotter}), the asymptotic notation that we will employ (subsection \ref{subsection:asymptotic}), and a detailed presentation of the bosonic and fermionic models that we will analyze (subsection \ref{subsection:model}).

\subsection{Operators, superoperators and their norms}\label{subsection:operators}
For a quantum state $\ket{\psi}$, $\norm{\ket{\psi}}$ will denote its usual norm $\norm{\ket{\psi}}^2 = \bra{\psi}\psi\rangle$. For an operator $A$, we will use $\norm{A}_p$ to denote its Schatten-$p$ norm:
\begin{align}
\norm{A}_p = \bigg(\sum_{i} \sigma_i^p(A)\bigg)^{1/p} \text{, where }\sigma_1(A) \geq \sigma_2(A) \geq \sigma_3(A) \dots \text{ are the singular values of }A.
\end{align}
We will often use $\norm{A} = \sigma_1(A) = \norm{A}_\infty$ to denote its operator norm and $\norm{A}_F = \norm{A}_2 =  [\text{Tr}(A^\dagger A)]^{1/2}$ to denote its Frobenius norm. We will often use the Holder's inequality, which states that
\begin{align}
\norm{AB}_1 \leq \norm{A}_p \norm{B}_q \text{ where }\frac{1}{p} + \frac{1}{q} = 1.
\end{align}
In particular, $\norm{AB}_1 \leq \norm{A}\norm{B}_1$. It is also convenient to note the Cauchy-Schwarz inequality for operators: Suppose $\omega$ is a positive semi-definite operator, then
\begin{align}
\abs{\text{Tr}(A^\dagger B \omega)}^2 \leq \text{Tr}(A^\dagger A \omega) \text{Tr}(B^\dagger B \omega).
\end{align}
For super-operators $\mathcal{A}$, we will use $\norm{\mathcal{A}}_\diamond$ to denote its diamond norm. In our analysis, we will often encounter super-operators of the form
\begin{align}
\mathcal{A}(\rho) = \sum_i A_i \rho B_i,
\end{align}
where $A_i$ and $B_i$ are some operators. For such super-operators, it is convenient to note that the Holder's inequality implies that
\begin{align}\label{eq:super_op_bound}
\norm{\mathcal{A}}_\diamond \leq \sum_{i}\norm{A_i}\norm{B_i}.
\end{align}
For instance, given an operator $L$, we will often use $\mathcal{D}_L$ to denote the following superoperator:
\begin{align}
\mathcal{D}_L = L \rho L^\dagger - \frac{1}{2}\{L^\dagger L, \rho \},
\end{align}
where $\{ \ \cdot \ , \ \cdot \ \}$ is the anti-commutator between two operators. $\mathcal{D}_L$ will be called the ``dissipator corresponding to $L$". From Eq.~(\ref{eq:super_op_bound}), we then obtain that
\begin{align}
\norm{\mathcal{D}_L}_\diamond \leq \norm{L}^2 + \norm{L^\dagger L} \leq 2\norm{L}^2.
\end{align}
A super-operator $\mathcal{E}$ is completely positive if and only if it can be expressed as 
\begin{align}
\mathcal{E}(\rho) = \sum_i K_i \rho K_i^\dagger
\end{align}
for some operators $K_i$. It will be called a channel if it is additionally trace preserving which requires $\sum_{i} K_i^\dagger K_i = I$. For any completely-positive trace preserving map $\mathcal{E}$, $\norm{\mathcal{E}}_\diamond \leq 1$.
\subsection{Fermions and Bosons}\label{subsection:preliminaries_fermions_bosons}
The Hilbert space of $m$ fermionic modes is described by the  vacuum state $\ket{\text{vac}}$ and the standard creation ($a_{i}^{\dagger}$) and annihilation ($a_{i}$) operators, with $i \in \{1,2, \cdots m\}$ labeling the fermionic mode. These satisfy the canonical anticommutation relations:
\begin{align}
\{a_{i},a_{j}\}=0 \text{ and }\{a_{i},a_{j}^{\dagger}\}=\delta_{i,j}.
\end{align}
The Hilbert space of the fermionic model is the finite-dimensional vector space given by $\text{span}\{\prod_{k = 1}^m  (a_{k}^\dagger)^{\mu_{k}} \ket{\text{vac}} : \mu_{k} \in \{0, 1\}\}$. It will be convenient to work with the $2m$ Majorana fermion operators defined by 
\begin{align}
c_{i}^1=\frac{1}{\sqrt{2}}\big(a_{i}^{\dagger}+a_{i}\big) \text{ and }  c_{i}^2=\frac{i}{\sqrt{2}}\big(a_{i}^{\dagger}-a_{i}\big).
\end{align}
The Majorana operators are each Hermitian, traceless and satisfy $\{c_{i}^\alpha, c_{i'}^{\alpha'}\} = \delta_{i, i'} \delta_{\sigma, \sigma'}$. We define $\mathcal{C}_{2m}$ as the algebra generated by the $2m$ Majorana operators: An operator $X \in \mathcal{C}_{2m}$ can be expressed as a linear combination of monomials of the form $\prod_{i = 1}^m \prod_{\alpha = 1}^2 (c_{i}^\alpha)^{\mu_{i}^{\alpha}}$, where $\mu_{i}^\alpha \in \{0, 1\}$. The operator $X$ will be even if it is a linear combination of only even degree monomials, and odd if it is a linear combination of odd degree monomials. Furthermore, any Hermitian operator defined on the fermionic Hilbert space is also in $\mathcal{C}_{2m}$ and, as usual, fermionic quantum states are positive semi-definite Hermitian operators in $\mathcal{C}_{2m}$.

Given a fermionic state $\rho$, its correlation matrix elements are defined by $\Gamma_{i, i'}^{\alpha, \alpha'}=i\mathrm{tr}(\rho[c_{i}^{\alpha},c_{i'}^{\alpha'}])/2.$ A fermionic state $\rho$ is called Gaussian if it can be expressed as $\exp(-\beta H) / \text{Tr}(\exp(-\beta H))$ for some Hermitian operator $H$ that is quadratic in the Majorana operators and  $\beta \in \mathbb{R} \cup \{-\infty, \infty\}$. Thus, fermionic Gaussian states are either Gibb's states of Hamiltonians that are quadratic in the Majorana operators, or are projectors on their ground-state subspace. Fermionic Gaussian states are fully characterized by their correlation matrix elements $\Gamma_{i,i'}^{\alpha, \alpha'}$ \cite{kraus2009thesis,2022_tagliacozzo_fermionic_gs}. We will refer to a fermionic state as \emph{convex-Gaussian} if it can be expressed as a convex combination of Gaussian states.

Similar to fermions, the Hilbert space of $m$ bosonic modes will be described by a vacuum state $\ket{\text{vac}}$ and the creation ($a_{i}^\dagger$) and annihilation ($a_{i}$) operators, with $i \in \{1, 2 \dots m\}$ labeling the bosonic mode. These satisfy the canonical commutation relations:
\begin{align}
[a_{i}, a_{i'}] = 0 \text{ and }[a_{i}, a^\dagger_{i'}] = \delta_{i, i'}.
\end{align}
The Hilbert space of the bosonic model is the infinite-dimensional vector space given by $\text{span}\{\prod_{i} (a^\dagger_{i})^{\mu_{i}} \ket{\text{vac}} : \mu_{i} \in \{0, 1, 2 \dots \}\}$. It will be convenient to work with the $2m$ quadrature operators
\begin{align}
{c}_{i}^1=\frac{1}{\sqrt{2}}\big({a}_{i}^{\dagger}+{a}_{i}\big) \text{ and }  {c}_{i}^2=\frac{i}{\sqrt{2}}\big({a}_{i}^{\dagger}-{a}_{i}\big).
\end{align}
The quadrature operators are Hermitian and satisfy $[c_{i}^\alpha, c_{i}^{\alpha'}] = -i \delta_{i, i'} \Omega_{\alpha, \alpha'} $, where $\Omega$ is the $2 \times 2$ symplectic matrix. Similar to a fermionic state, a bosonic state $\rho$ is Gaussian if it can be expressed as $\exp(-\beta H) / \text{Tr}(\exp(-\beta H))$ for some Hermitian operator $H$ which is quadratic or linear in the quadrature operators and for some $\beta \in \mathbb{R} \cup \{-\infty, \infty\}$. A state will be called convex-Gaussian if it can be expressed as a convex combination of Gaussians. A useful property of Gaussian states that we will use in our analysis is given in the lemma below.
\begin{lemma}\label{lemma:closure_gaussian_product}
    Suppose $\rho$ is a (fermionic or bosonic) Gaussian state and $A = \sum_{i, i'}A_{i, i'}^{\alpha, \alpha'}c_i^\alpha c_{i'}^{\alpha'}$ is a quadratic operator, then $\rho' = e^{-A} \rho e^{-A^\dagger} / \textnormal{Tr}(e^{-A} \rho e^{-A^\dagger})$ is also a Gaussian state.
\end{lemma}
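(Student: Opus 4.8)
The plan is to reduce the statement to the case where $\rho$ is a \emph{pure} Gaussian state, prove it there using the description of pure Gaussian states through their annihilation operators, and then recover the general (mixed) case by purification. Throughout, let $c_a$ denote the Majorana operators (fermionic case) or quadrature operators (bosonic case), with $a$ a composite index $(i,\alpha)$, so that $A=\sum_{a,b}A_{ab}c_ac_b$. The one structural fact I will use is that conjugation by $e^{-A}$ preserves the complex span $V:=\mathrm{span}_{\mathbb C}\{c_a\}$ of the mode operators: because $A$ is quadratic, the canonical (anti)commutation relations give $[A,c_a]\in V$ (in the bosonic case, $V$ plus constants), so $\mathrm{ad}_A$ and hence $\mathrm{Ad}_{e^{-A}}=e^{-\mathrm{ad}_A}$ map $V$ into itself; since $\mathrm{Ad}_{e^{A}}$ is an inverse, this action is a bijection of $V$.

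For $\rho=\ket{\psi}\bra{\psi}$ a pure Gaussian state, I would use that $\ket{\psi}$ is the (unique, up to scalar) joint kernel vector of $m$ linearly independent operators $d_1,\dots,d_m$, each lying in $V$ (fermions) or in $V$ plus a constant (bosons), that are mutually ``(anti)commuting'' in the sense that $\{d_j,d_k\}=0$ for fermions and $[d_j,d_k]=0$ for bosons, i.e.\ their linear parts span a maximal isotropic (Lagrangian) subspace for the bilinear form induced by the (anti)commutator. Setting $d_j':=e^{-A}d_je^{A}$, one has $d_j'(e^{-A}\ket{\psi})=e^{-A}d_j\ket{\psi}=0$; moreover the $d_j'$ are again linearly independent elements of $V$ (plus constants), by the bijectivity of $\mathrm{Ad}_{e^{-A}}$ on $V$, and they still satisfy $\{d_j',d_k'\}=e^{-A}\{d_j,d_k\}e^{A}=0$ (resp.\ $[d_j',d_k']=0$) because conjugation is an algebra homomorphism. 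Hence the linear parts of the $d_j'$ again span a Lagrangian subspace, which together with the displacements (bosonic case) determines at most a one-dimensional joint kernel, spanned when nonzero by a pure Gaussian state. Since $e^{-A}$ is invertible (finite dimension for fermions; for bosons the hypothesis that $\mathrm{Tr}(e^{-A}\rho e^{-A^\dagger})$ is finite and positive is exactly the statement that $e^{-A}\ket{\psi}$ is a nonzero normalizable vector), $e^{-A}\ket{\psi}$ is a nonzero element of that kernel, hence proportional to that pure Gaussian state; normalizing gives the claim.

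For general mixed Gaussian $\rho$ (including the $\beta\to\pm\infty$ ``projector'' cases), I would purify it to a pure Gaussian state $\ket{\Psi}$ on the original $m$ modes together with $m$ ancillary modes, using that Gaussian states admit Gaussian purifications. The natural extension $\tilde A$ of $A$ acting only on the first $m$ modes is again quadratic in the combined $4m$ mode operators, and $e^{-A}\rho e^{-A^\dagger}=\mathrm{Tr}_{\mathrm{anc}}\big[e^{-\tilde A}\ket{\Psi}\bra{\Psi}e^{-\tilde A^\dagger}\big]$. The pure case shows $e^{-\tilde A}\ket{\Psi}$ is (proportional to) a pure Gaussian state of the enlarged system, and partial traces of Gaussian states over a subset of modes are again Gaussian: the reduced correlation matrix is the corresponding submatrix, which, by the characterization of Gaussian states through their correlation matrix, is realized by a Gaussian state. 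Normalizing yields that $\rho'$ is Gaussian.

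The main obstacle I anticipate is the non-Hermiticity of $A$: since $e^{-A}$ is in general neither unitary nor positive, the familiar argument ``Gaussian unitaries map Gaussian states to Gaussian states'' does not apply, and in particular $(d_j')^\dagger=e^{A^\dagger}d_j^\dagger e^{-A^\dagger}\ne e^{-A}d_j^\dagger e^{A}$, so one cannot transport the full canonical relations $\{d_j,d_k^\dagger\}=\delta_{jk}$ (resp.\ $[d_j,d_k^\dagger]=\delta_{jk}$) through the conjugation. The point that makes the argument work is that only the relations \emph{among} the $d_j$ need to survive, and these do, being algebraic identities preserved by the (inner) automorphism $X\mapsto e^{-A}Xe^{A}$; $m$ linearly independent mutually (anti)commuting linear operators already span a Lagrangian subspace, and that, plus a displacement in the bosonic case, is enough to pin down the Gaussian state. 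A secondary, bosonic, subtlety is that quadratic operators are unbounded and $e^{-A}\rho e^{-A^\dagger}$ need not be trace class; this possibility is implicitly excluded by the lemma's hypothesis that its trace is finite and nonzero, which is precisely what is needed above to ensure $e^{-\tilde A}\ket{\Psi}$ is a genuine normalizable state.
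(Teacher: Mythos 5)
Your proposal is correct in substance, but it takes a genuinely different route from the paper. The paper's proof is a two-line Lie-algebraic argument: writing $\rho \propto e^{-\beta H}$ with $H$ quadratic, it invokes Baker--Campbell--Hausdorff together with the closure of quadratic (plus, for bosons, linear) forms under commutation to conclude that $e^{-A}e^{-\beta H}e^{-A^\dagger} \propto e^{-\beta H'}$ with $H'$ again quadratic, and then notes positive semi-definiteness. You instead work at the level of \emph{nullifiers}: you characterize a pure Gaussian state as the joint kernel of $m$ independent mutually (anti)commuting linear mode operators, observe that the inner automorphism $X \mapsto e^{-A}Xe^{A}$ preserves the span of the mode operators and all relations among the $d_j$ (correctly sidestepping the fact that it does not preserve Hermitian conjugates, since the relations $\{d_j,d_k^\dagger\}=\delta_{jk}$ are not needed), and then lift to mixed states by Gaussian purification and partial trace. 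What your approach buys is a cleaner treatment of the degenerate $\beta\to\pm\infty$ projector cases and of the bosonic normalizability issue, both of which the paper's BCH argument glosses over; what it costs is reliance on two standard but not entirely trivial facts that you state without proof, namely that a Lagrangian family of nullifiers has at most a one-dimensional normalizable joint kernel which, when nonzero, is necessarily a Gaussian vector, and that Gaussian states admit Gaussian purifications whose partial traces are again Gaussian. At the level of rigor of the paper (which itself does not address BCH convergence for unbounded bosonic quadratics), both proofs are acceptable.
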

\begin{proof}
    This follows from the closure of quadratic and linear operators under commutation, i.e.
    \begin{enumerate}
        \item[(1)] For fermions, the commutator of any two operators of the form $\sum_{i, i'} \sum_{\alpha, \alpha'} x_{i, i'}^{\alpha, \alpha'} c_{i}^\alpha c_{i'}^{\alpha'}$, where $x_{i, i'}^{\alpha, \alpha'}\in \mathbb{C}$, is again of the same form. 
        \item[(2)] For bosons, the commutator of any two operators of the form $\sum_{i, i'} \sum_{\alpha, \alpha'} x_{i, i'}^{\alpha, \alpha'} c_{i}^\alpha c_{i'}^{\alpha'} + \sum_{i, \alpha} y_{i}^\alpha c_{i}^\alpha$, where $x_{i, i'}^{\alpha, \alpha'}, y_{i}^\alpha \in \mathbb{C}$, is again of the same form.
    \end{enumerate}
    Since $\rho$ is a Gaussian state, it is expressible as $\exp(-\beta H) / \text{Tr}(\exp(-\beta H))$, where $H$ is a quadratic form in $c_{i}^\alpha$ with a possible linear term in $c_{i}^\alpha$ for bosons. Consequently, using the Baker-Campbell-Hausdorff formula, we obtain that $e^{-A} \rho e^{-A^\dagger}$ can be written as a linear combination of $A, A^\dagger, H$ and their nested commutators. Consequently, from 1 and 2 above, we obtain that $e^{-A} \rho e^{-A^\dagger} \propto \exp(-\beta H')$ for some $\beta$ and $H'$ that is also a quadratic form in $c_{i}^\alpha$ with a possible linear term for bosons. Furthermore, note that since $\rho$ is positive-semidefinite, so is $e^{-A}\rho e^{-A^\dagger}$ and thus is a valid quantum state. 
\end{proof}

\subsection{Trotter formula}\label{subsection:trotter}
In our analysis below, we will often use first-order Trotterization for time-dependent models. Given a time-dependent Lindbladian $\mathcal{L}(t) =\mathcal{L}^{(1)}(t) + \mathcal{L}^{(2)}(t) + \dots \mathcal{L}^{(M)}(t)$, its first-order Trotterization in the time-interval $[0, t]$, with $T$ Trotter steps each of length $\delta = t / T$, will be given by
\begin{align}
\Phi = \prod_{\tau = T}^{1} \Phi^{(1)}_{\tau \delta, (\tau - 1)\delta} \Phi^{(2)}_{\tau \delta, (\tau - 1)\delta} \dots \Phi^{(M)}_{\tau \delta, (\tau - 1)\delta} \text{ where }  \Phi^{(j)}_{\tau \delta, (\tau - 1)\delta} = \mathcal{T}\exp\bigg(\int_{(\tau - 1)\delta}^{\tau \delta}\mathcal{L}^{(j)}(s) ds\bigg). 
\end{align}
In Lemma \ref{lemma:Trotter_bounded_lind} below, we provide an upper bound on the error between the exact evolution $\mathcal{T}\exp(\int_0^t \mathcal{L}(s) ds)$ and the Trotter formula $\Phi$ that we will use repeatedly in the following sections. 

\begin{lemma}[Trotter error for bounded Lindbladians]\label{lemma:Trotter_bounded_lind}
Suppose for any $s \geq 0$ and $j \in \{1, 2 \dots M\}$, $\smallnorm{\mathcal{L}^{(j)}(s)}_\diamond \leq \ell_j$, then for any $T > 0$,
\begin{align}\label{eq:Trotter_bounded}
\norm{\mathcal{T}\exp\bigg(\int_0^t \mathcal{L}(s) ds\bigg) - \Phi}_\diamond \leq   \frac{t^2}{T} \bigg(\sum_{j = 1}^M \ell_j\bigg)^2.
\end{align}
\end{lemma}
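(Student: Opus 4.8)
The plan is to combine a telescoping over the $T$ Trotter steps with a single‑step error bound of order $\delta^2$, so that the total is at most $T\delta^2(\sum_j\ell_j)^2=\tfrac{t^2}{T}(\sum_j\ell_j)^2$. Throughout I will use that $\mathcal{L}$, each $\mathcal{L}^{(j)}$, and every partial sum $\mathcal{L}^{(k)}+\mathcal{L}^{(k+1)}+\dots+\mathcal{L}^{(M)}$ is a (possibly time‑dependent) Lindbladian — which is the case in all the applications below, where each $\mathcal{L}^{(j)}$ is a Hamiltonian term or a single dissipator — so that every propagator $\mathcal{T}\exp(\int_a^b\mathcal{M}(s)\,ds)$ built from such a generator is CPTP and hence has diamond norm at most $1$. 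I will also use submultiplicativity of $\norm{\cdot}_\diamond$ and the Duhamel identity $\mathcal{U}_{b,a}-\mathcal{V}_{b,a}=\int_a^b \mathcal{U}_{b,s}\big(\mathcal{A}(s)-\mathcal{B}(s)\big)\mathcal{V}_{s,a}\,ds$, valid for $\mathcal{U}_{b,a}=\mathcal{T}\exp(\int_a^b\mathcal{A})$ and $\mathcal{V}_{b,a}=\mathcal{T}\exp(\int_a^b\mathcal{B})$.

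\emph{Step 1 (telescoping over steps).} Writing $\Psi_\tau=\mathcal{T}\exp(\int_{(\tau-1)\delta}^{\tau\delta}\mathcal{L})$ and $\Phi_\tau=\Phi^{(1)}_{\tau\delta,(\tau-1)\delta}\cdots\Phi^{(M)}_{\tau\delta,(\tau-1)\delta}$, the exact evolution is $\Psi_T\cdots\Psi_1$ and $\Phi=\Phi_T\cdots\Phi_1$. The standard telescoping identity writes $\Psi_T\cdots\Psi_1-\Phi_T\cdots\Phi_1=\sum_{\tau=1}^{T}(\Psi_T\cdots\Psi_{\tau+1})(\Psi_\tau-\Phi_\tau)(\Phi_{\tau-1}\cdots\Phi_1)$; since the prefix and suffix are products of CPTP maps, taking diamond norms reduces the claim to the single‑step bound $\norm{\Psi_\tau-\Phi_\tau}_\diamond\le\delta^2(\sum_j\ell_j)^2$, which by time translation may be proved for the step $[0,\delta]$.

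\emph{Step 2 (single‑step error).} Peel off one generator at a time: with $\mathcal{L}^{(>k)}=\sum_{j>k}\mathcal{L}^{(j)}$, $\ell_{>k}=\sum_{j>k}\ell_j$, and $\Theta^{>k}_\delta=\mathcal{T}\exp(\int_0^\delta\mathcal{L}^{(>k)})$, a second telescoping through the operators $\Phi^{(1)}_\delta\cdots\Phi^{(k)}_\delta\,\Theta^{>k}_\delta$ (whose endpoints $k=0$ and $k=M$ are the exact one‑step propagator and the full Trotter step, and whose CPTP prefixes have diamond norm $\le 1$) bounds the single‑step error by $\sum_{k=1}^{M}\norm{\Theta^{>k-1}_\delta-\Phi^{(k)}_\delta\Theta^{>k}_\delta}_\diamond$, where each summand is a two‑term first‑order Trotter error for the pair $(\mathcal{A},\mathcal{B})=(\mathcal{L}^{(k)},\mathcal{L}^{(>k)})$ with $\norm{\mathcal{A}(s)}_\diamond\le a:=\ell_k$, $\norm{\mathcal{B}(s)}_\diamond\le b:=\ell_{>k}$. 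To bound a single two‑term step, apply Duhamel to $\mathcal{T}\exp(\int_0^\delta(\mathcal{A}+\mathcal{B}))$ and to $\mathcal{T}\exp(\int_0^\delta\mathcal{A})\,\mathcal{T}\exp(\int_0^\delta\mathcal{B})$ — both relative to $\mathcal{T}\exp(\int_0^\delta\mathcal{B})$ — and subtract; regrouping the resulting integrand so that every term carries an explicit factor $\delta-s$ or $s$, and estimating those factors using $\norm{\mathcal{A}(s)}_\diamond\le a$, $\norm{\mathcal{B}(s)}_\diamond\le b$ together with contractivity of all the propagators, yields a one‑step two‑term error of at most $a(a+b)\delta^2$. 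With $a=\ell_k$, $b=\ell_{>k}$ this is $\ell_k\big(\sum_{j\ge k}\ell_j\big)\delta^2$, and $\sum_{k}\ell_k\sum_{j\ge k}\ell_j=\tfrac12\big[(\sum_j\ell_j)^2+\sum_j\ell_j^2\big]\le(\sum_j\ell_j)^2$, so the single‑step error is at most $\delta^2(\sum_j\ell_j)^2$. Inserting this into Step 1 gives $\norm{\mathcal{T}\exp(\int_0^t\mathcal{L})-\Phi}_\diamond\le T\delta^2(\sum_j\ell_j)^2=\tfrac{t^2}{T}(\sum_j\ell_j)^2$.

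The main obstacle is the two‑term single‑step estimate: one must keep careful track of the time ordering in the Duhamel identities, and regroup the integrand so that each term manifestly vanishes linearly as $s\to0$ or $s\to\delta$ — otherwise one gets only an $O(\delta)$ rather than an $O(\delta^2)$ bound — and then check that the resulting constant, summed over the $M$ peels, still fits under $(\sum_j\ell_j)^2$. The latter holds with essentially no slack, which is why the lemma's constant is exactly $1$.
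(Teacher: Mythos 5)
The paper states Lemma~\ref{lemma:Trotter_bounded_lind} without proof, so there is no in-paper argument to compare against; judged on its own, your proof is correct and is the standard one for this kind of bound. The two-level telescoping (over Trotter steps, then over the $M$ generators within a step), the reduction of each peel to a two-term first-order error via Duhamel, and the final combinatorial estimate $\sum_k \ell_k \sum_{j\ge k}\ell_j \le (\sum_j \ell_j)^2$ all check out; in fact the two-term step can be bounded by $ab\,\delta^2$ rather than your looser $a(a+b)\delta^2$, so there is even a little slack. The one point worth making explicit is the hypothesis you correctly flag at the outset: the constant $1$ in Eq.~(\ref{eq:Trotter_bounded}) (with no factor $e^{t\sum_j\ell_j}$) genuinely requires that each $\mathcal{L}^{(j)}$ and each tail sum $\sum_{j>k}\mathcal{L}^{(j)}$ generate contractions in diamond norm, i.e.\ are themselves Lindbladians; this is not stated in the lemma but holds in every splitting the paper uses (Hamiltonian commutators plus nonnegative combinations of dissipators), so your proof covers all of the paper's applications.
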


\subsection{Asymptotic notation}\label{subsection:asymptotic}
Throughout the paper, we employ the following asymptotic notation commonly used in complexity theory \cite{cormen2022introduction}:

\begin{table}[htpb]
\begin{tabular}{ |c|c|c| } 
\hline
 Notation & Formal definition & Informal description \\ 
 \hline
 $f(n)=\Omega(g(n))$ & $\exists k>0,  n_0 :\forall n>n_0,  |f(n)| \geq kg(n)$ & $f(n)$ grows at least as fast as $g(n)$ \\   \hline
 $f(n)=O(g(n))$ & $\exists k>0, n_0: \forall n>n_0, |f(n)| \leq kg(n)$ & $f(n)$ grows no faster than $g(n)$  \\ \hline
 $f(n)=\Theta(g(n))$ & $\exists k_1>0, k_2>0, n_0: \forall n>n_0, k_1g(n) \leq f(n) \leq k_2g(n)$ & $f(n)$ and $g(n)$ grow equally fast\\ 
 \hline
\end{tabular}
 \caption{Table of asymptotic notation used in this paper.}
\end{table}

\subsection{Model}\label{subsection:model}
Here, we briefly recap the fermionic and bosonic models introduced in the main text and streamline the notation. We will consider a more general setting than the one described in the main text: specifically, we will allow here for inter-site non-Gaussian interactions. 
We recall that we consider systems with $n$ sites, with each site containing $L$ bosonic or fermionic modes. We will use $m = nL$ to denote the total number of modes in the system. With the $\sigma^\text{th}$ mode at the $i^\text{th}$ site, where $\sigma \in \{1, 2 \dots L\}$ and $i \in \{1, 2 \dots n\}$, we will associate an annihilation operator $a_{i, \sigma}$---it will be notationally convenient for us to group $i, \sigma$ into a single index $v = (i, \sigma)$ and denote the corresponding annihilation operator by $a_{v}$. Furthermore, corresponding to a mode index $v$, we will use $i_v$ to denote the site the mode is at and $\sigma_v$ to be the local index of the mode. Associated with the mode at $v$, we will also define the operators $n_v, c_v^1, c_v^2$ via
\begin{align}
    n_v =a_v^\dagger a_v, c_v^1 = \frac{a_v + a_v^\dagger}{\sqrt{2}} \text{ and }c_v^2 = \frac{a_v - a_v^\dagger}{\sqrt{2}i}.
\end{align}
Here, $n_v$ is an operator measuring the number of particles in mode $v$, and $c_v^1, c_v^2$ are the Majorana operators (for fermions) or the quadrature operators (for bosons).

As in the main text, the Hamiltonian for the fermionic or bosonic problem will be decomposed as 
\begin{align}
H(t) = H_\text{g}(t) + H_\text{ng}(t),
\end{align}
where $H_\text{g}(t)$ is Gaussian given by
\begin{align}\label{eq:gaussian_Hamiltonian}
H_\text{g}(t) = \sum_{v, v'} \sum_{\alpha, \alpha'} J_{v, v'}^{\alpha, \alpha'}(t) c_{v}^{\alpha} c_{v'}^{\alpha'} + \sum_{v, \alpha}\Omega_{v}^\alpha(t) c_{v}^\alpha,
\end{align}
with $\Omega_{v}^\alpha(t) = 0$ for fermions, and $H_\text{ng}(t)$ is non-Gaussian given by
\begin{align}\label{eq:non_gaussian_Hamiltonian}
H_\text{ng}(t) = \sum_{v,  v'} U_{v, v'}(t) n_{v} n_{v'}.
\end{align}

Without loss of generality, we can assume that
\begin{align*}
&\text{For fermions, }J_{v, v'}^{\alpha, \alpha'}(t)\text{ is purely imaginary  and }J_{v, v'}^{\alpha, \alpha'}(t) = -J_{v', v}^{\alpha', \alpha}(t), \\
&\text{For bosons, } J_{v, v'}^{\alpha, \alpha'}(t) \text{ is purely real and } J_{v, v'}^{\alpha, \alpha'}(t) = J_{v', v}^{\alpha', \alpha}(t), \\
&\text{For both fermions and bosons, }U_{v, v'}(t) \text{ is purely real and }U_{v, v'}(t) = U_{v', v}(t).
\end{align*}
As in the main text, we will also define constants $J_C, J_\text{os}, U_C, U_\text{os}, \Omega$:
\begin{enumerate}
    \item[(1)] $J_C$ is a measure of the strength of the Gaussian terms coupling modes at different sites: It is the smallest number such that $\forall t$ and $v = (i, \sigma)$
    \begin{align}\label{eq:definition_JC}
    \sum_{i': i' \neq i, \sigma'} \sum_{\alpha, \alpha'} \smallabs{J^{\alpha, \alpha'}_{i, \sigma; i', \sigma'}(t)} \leq J_C.
    \end{align}
    \item[(2)] $J_\text{os}$ is a measure of the strength of the Gaussian terms coupling modes at the same site: It is the smallest number such that $\forall t$ and $v = (i, \sigma)$
    \begin{align}\label{eq:definition_Jos}
    \sum_{\sigma'} \sum_{\alpha, \alpha'} \smallabs{J^{\alpha, \alpha'}_{i, \sigma; i, \sigma'}(t)} \leq J_\text{os}.
    \end{align}
    \item[(3)] $U_C$ is a measure of the strength of the non-Gaussian terms coupling modes at different sites: It is the smallest number such that $\forall t$ and $v = (i, \sigma)$
    \begin{align}\label{eq:definition_UC}
    \sum_{i' \neq i, \sigma'} \smallabs{U_{i, \sigma; i', \sigma'}(t)} \leq U_C.
    \end{align}
    \item[(4)] $U_\text{os}$ is a measure of the strength of the non-Gaussian terms coupling modes at the same site: It is the smallest number such that $\forall t$ and $v = (i, \sigma)$
    \begin{align}\label{eq:definition_Uos}
    \sum_{i\sigma'} \smallabs{U_{i, \sigma; i, \sigma' }(t)} \leq U_\text{os}.
    \end{align}
    \item[(5)] $\Omega$ is a measure of the on-site displacement: it is the smallest number such that $\forall t$ and $v = (i, \sigma)$
    \begin{align}\label{eq:definition_Omega}
    \sum_{\alpha}\smallabs{\Omega_{i, \sigma}^{\alpha}(t)} \leq \Omega.
    \end{align}
    Note that $\Omega \neq 0$ only for the bosonic model---we do not include a displacement term in the fermionic model.
\end{enumerate}
Note that the inter-site non-Gaussian interactions were not included in the main text, and the results quoted in the main text can be obtained by setting $U_C = 0$. Finally, it will also be convenient to define the parameter $\Lambda$ as
\begin{align}\label{eq:Lambda_Def}
\Lambda = J_C + U_C + J_\text{os} + U_\text{os} + \kappa + \Omega.
\end{align}

While analyzing the bosonic model, it will be more convenient to express $H_\text{g}$ as a sum of particle number conserving and non-conserving terms via
\begin{align}\label{eq:H_g_bosonic_model}
H_\text{g}(t) = \underbrace{\sum_{v, v'} \big(\mathcal{J}_{v, v'}(t) a_v^\dagger a_{v'} + \text{h.c.}\big)}_{H_\text{g}^{\text{hop}}(t)} + \underbrace{\sum_{v, v'} \big(\mathcal{G}_{v, v'}(t) a_v a_{v'} + \text{h.c.}\big)}_{H_\text{g}^{\text{sq}}(t)} + \underbrace{\sum_{v}\big(\mathcal{D}_v(t) a_v + \text{h.c.}\big)}_{H_\text{g}^{\text{disp}}(t)},
\end{align}
where, up to a possibly time-dependent energy shift in $H_\text{g}(t)$, $\mathcal{J}_{v, v'}= (J_{v,v^{\prime}}^{1,1}-iJ_{v,v^{\prime}}^{1,2}-iJ_{v,v^{\prime}}^{2,1}+J_{v,v^{\prime}}^{2,2})/2$, $\mathcal{G}_{v, v'}(t) = (J_{v,v^{\prime}}^{1,1}+iJ_{v,v^{\prime}}^{1,2}+iJ_{v,v^{\prime}}^{2,1}-J_{v,v^{\prime}}^{2,2})/2$ and $\mathcal{D}_v = (\Omega_{\nu}^1(t)-i\Omega_{\nu}^2(t))/\sqrt{2}$. Here, $H_\text{g}^{\text{hop}}(t)$ is a particle hopping term between different bosonic modes and conserves the total particle number $N = \sum_{v} n_v$, $H_\text{g}^{\text{sq}}(t)$ can be considered to be a multi-mode squeezing term in the Hamiltonian and $H_\text{g}^{\text{disp}}(t)$ displaces the individual bosonic modes. Both $H_\text{g}^{\text{sq}}(t)$ and $H_\text{g}^{\text{disp}}(t)$ do not conserve the total particle number $N$. It will also be convenient to define the constant $\mathcal{G}$ as the smallest number such that for all $t$ and $v$
\begin{align}\label{eq:squeezing_strength}
\sum_{v'} \abs{\mathcal{G}_{v, v'}(t)} \leq \mathcal{G}.
\end{align}
Furthermore, it can be noted that $\abs{\mathcal{D}_v(t)} \leq \Omega / \sqrt{2}$.

Finally, the noise in the dynamics of the bosonic and fermionic models will be modeled by the Lindbladian $\mathcal{L}_\text{n}$ given by
\begin{align}\label{eq:lindbladian_noise_supplement}
\mathcal{L}_\text{n} = \sum_{l = 1}^3 \sum_{i, \sigma} \kappa_l \mathcal{D}_{L_{i, \sigma}^{(l)}},
\end{align}
where $\mathcal{D}_L \rho = L \rho L^\dagger - \{L^\dagger L, \rho \} / 2$, $L_{i, \sigma}^{(1)} = a_{i, \sigma}$ (incoherent particle loss), $L^{(2)}_{i, \sigma} = a_{i, \sigma}^\dagger$ (incoherent particle gain), and $L_{i, \sigma}^{(3)} = a_{i, \sigma}^\dagger a_{i, \sigma} = n_{i, \sigma}$ (dephasing). Unless otherwise mentioned, we will assume that all three dissipators act on each mode $\kappa_1, \kappa_2, \kappa_3 > 0$ and will denote the total dissipation rate by $\kappa = \kappa_1 + \kappa_2 + \kappa_3$. We summarize all the parameters of the model in Table \ref{tab:par}.
\begin{table}[htpb]
\begin{tabular}{ |m{4cm}|m{5cm}|m{8cm}| } 
\hline
 Parameter & Defined in & Informal description \\ 
 \hline
 $J^{\alpha, \alpha'}_{v, v'}(t)$ or $J^{\alpha, \alpha'}_{i, \sigma; i', \sigma'}(t)$ & Eq.~(\ref{eq:gaussian_Hamiltonian}) & Gaussian coupling between two fermionic or bosonic modes \\   \hline
 $J_C$ & Eq.~(\ref{eq:definition_JC}) & Maximum total strength of Gaussian coupling between a mode and all other modes at different sites  \\  \hline
 $J_\text{os}$ & Eq.~(\ref{eq:definition_Jos}) & Maximum total strength of Gaussian coupling between a mode and all  other modes at the same site \\  \hline
 $U_{v, v'}(t)$ or $U_{i, \sigma; i', \sigma'}(t)$ & Eq.~(\ref{eq:non_gaussian_Hamiltonian}) & Non-Gaussian interaction between two fermionic or bosonic modes\\  \hline
 $U_C$ & Eq.~(\ref{eq:definition_UC}) & Maximum total strength of non-Gaussian interaction between a mode and all other modes at different sites\\  \hline
 $U_\text{os}$ & Eq.~(\ref{eq:definition_Uos}) & Maximum total strength of non-Gaussian interaction between a mode and all other modes at the same site \\  \hline
 $\Lambda$ & Eq.~(\ref{eq:Lambda_Def}) & Total coupling strength \\  \hline
 $\Omega_v(t)$ or $\Omega_{i, \sigma}(t)$ & Eq.~(\ref{eq:gaussian_Hamiltonian}) & On-site displacement acting on bosonic modes\\ \hline
 $\mathcal{J}_{v, u}(t)$ & Eq.~(\ref{eq:H_g_bosonic_model}) & Gaussian hopping between two bosonic modes\\ \hline
 $\mathcal{G}_{v, u}(t)$ & Eq.~(\ref{eq:H_g_bosonic_model})& Multi-mode squeezing term between two bosonic modes\\ \hline
 $\mathcal{G}$ & Eq.~(\ref{eq:squeezing_strength})& Maximum strength of multi-mode squeezing between one bosonic mode with all other modes \\ \hline
 $\mathcal{D}_v(t)$ & Eq.~(\ref{eq:H_g_bosonic_model})& Single-mode displacement acting on bosonic modes \\ \hline
 $\kappa$ & Eq.~(\ref{eq:lindbladian_noise_supplement}) & Total dissipation rate \\ \hline
 $\kappa_1$ & Eq.~(\ref{eq:lindbladian_noise_supplement})& Dissipation rate for incoherent particle loss\\ \hline
 $\kappa_2$ & Eq.~(\ref{eq:lindbladian_noise_supplement})& Dissipation rate for incoherent particle gain\\ \hline
 $\kappa_3$ & Eq.~(\ref{eq:lindbladian_noise_supplement})& Dissipation rate for dephasing\\ \hline
 $\gamma$ & Assumption \ref{assump:gain_loss} & Defined as $\gamma = \kappa_1 - \kappa_2 - 2\mathcal{G}$ \\ \hline
 $n$ & --- & Number of sites\\ \hline
 $L$ & ---  & Number of modes per site\\ \hline
 $m$ & ---  & Total number of modes $m = nL$ \\
 \hline
\end{tabular}
 \caption{Table of all the coefficients and parameters relevant to the bosonic and fermionic models. \label{tab:par}}
\end{table}
\section{High noise simulability of the fermionic  model (Theorem 1)}\label{supplemental:theorem_1_proof}
In this section, we will present the proof of Theorem 1, which establishes the high-noise simulability of the fermionic model. We will first analyze the Trotterization of the continuous-time model, followed by analyzing each Trotter time-step to establish its convex Gaussianity for high noise and to obtain an explicit algorithm for classically simulating either sampling from or computing local observables in the fermionic state. We begin with a first-order Trotter approximation to $\rho(t)$ with the following splitting of the Lindbladian $\mathcal{L}(t)$ into a Gaussian and non-Gaussian Lindbladian:
\begin{align}
    \mathcal{L}(t) = \underbrace{-i[H_\text{g}(t), \ \cdot \ ] +  \sum_{l \in 
    \{1, 2\}}\sum_{i, \sigma} \kappa_l \mathcal{D}_{L_{i, \sigma}^{(l)}}}_{\mathcal{L}_\text{g}(t)} \underbrace{-i[H_\text{ng}(t), \ \cdot \ ] +  \sum_{i, \sigma}\kappa_3\mathcal{D}_{L_{i, \sigma}^{(3)}}}_{\mathcal{L}_\text{ng}(t)},
\end{align}
We next Trotterize the state $\rho(t)$ into $T$ Trotter steps: The Trotterized state $\sigma_T$ will be given by
\begin{subequations}\label{eq:Trotter_theorem_2_ferm}
\begin{align}
\sigma_T = \bigg(\prod_{\tau = T}^1 \Phi^{\text{ng}}_{\tau \delta, (\tau - 1)\delta} \Phi^{\text{g}}_{\tau \delta, (\tau - 1)\delta}\bigg) \rho(0),
\end{align}
where $\delta = t/ T$ and
\begin{align}
\Phi^{\text{ng}}_{t, t'} = \mathcal{T}\exp\bigg(\int_{t'}^t \mathcal{L}_\text{ng}(s) ds\bigg) \text{ and }\Phi^{\text{g}}_{t, t'} = \mathcal{T}\exp\bigg(\int_{t'}^t \mathcal{L}_\text{g}(s) ds\bigg).
\end{align}
\end{subequations}
We first provide a bound on $\norm{\sigma_T - \rho(t)}_1$.
\begin{lemma}[Trotterization: Fermionic model]\label{lemma:Trotter_fermion_theorem_2}
For all $T > 0$,
\begin{align*}
\norm{\sigma_T - \rho(t)}_1 \leq \frac{4t^2 m^2}{T} \Lambda^2.
\end{align*}
\end{lemma}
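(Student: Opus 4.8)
The plan is to invoke the general Trotter-error bound of Lemma~\ref{lemma:Trotter_bounded_lind} with the two-way splitting $\mathcal{L}(t) = \mathcal{L}_\text{g}(t) + \mathcal{L}_\text{ng}(t)$, so that $M = 2$ and the right-hand side of Eq.~(\ref{eq:Trotter_bounded}) becomes $(t^2/T)(\ell_\text{g} + \ell_\text{ng})^2$, and then to relate the diamond norm of the evolution difference to the trace norm of the state difference via $\norm{\sigma_T - \rho(t)}_1 = \norm{(\Phi - \mathcal{T}\exp(\int_0^t \mathcal{L}(s)ds))\rho(0)}_1 \leq \norm{\Phi - \mathcal{T}\exp(\int_0^t \mathcal{L}(s)ds)}_\diamond$, using $\norm{\rho(0)}_1 = 1$. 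Thus the whole lemma reduces to showing that $\ell_\text{g} + \ell_\text{ng} \leq 2m\Lambda$ uniformly in $s$, since $(2m\Lambda)^2 = 4m^2\Lambda^2$ matches the claimed constant.

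The core of the argument is therefore a collection of elementary diamond-norm estimates on each piece of $\mathcal{L}_\text{g}(s)$ and $\mathcal{L}_\text{ng}(s)$, all built from the same two tools already recorded in the excerpt: (i) for a commutator superoperator, $\norm{-i[H,\cdot\,]}_\diamond \leq 2\norm{H}$, which follows from Eq.~(\ref{eq:super_op_bound}); and (ii) for a single-mode dissipator, $\norm{\mathcal{D}_L}_\diamond \leq 2\norm{L}^2$. For fermions the Majorana operators satisfy $\norm{c_v^\alpha} = 1/\sqrt{2}$ and the loss/gain operators $a_{i,\sigma}, a_{i,\sigma}^\dagger$ have operator norm $1$ (on the relevant finite-dimensional antisymmetric space), so each quadratic monomial $c_v^\alpha c_{v'}^{\alpha'}$ has norm at most $1/2$ and each number operator $n_v = a_v^\dagger a_v$ has norm $1$. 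Summing over all $m$ modes and using the coupling-strength definitions Eqs.~(\ref{eq:definition_JC})--(\ref{eq:definition_Uos}) to bound the row sums of the coefficient matrices, I get $\norm{H_\text{g}(s)} \leq m(J_C + J_\text{os})/2$ and $\norm{H_\text{ng}(s)} \leq m(U_C + U_\text{os})/2$ (the $\Omega$ term is absent for fermions), hence $\ell_\text{g} \leq 2\cdot\tfrac12 m(J_C+J_\text{os}) + 2m\kappa_1 + 2m\kappa_2$ and $\ell_\text{ng} \leq m(U_C + U_\text{os}) + 2m\kappa_3$, where the dissipator contributions come from summing $2\norm{a_{i,\sigma}}^2 = 2$ over the $m$ modes with the rate prefactors. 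Adding these gives $\ell_\text{g} + \ell_\text{ng} \leq m(J_C + J_\text{os} + U_C + U_\text{os}) + 2m\kappa \leq 2m(J_C + J_\text{os} + U_C + U_\text{os} + \kappa) \leq 2m\Lambda$, recalling $\Lambda = J_C + U_C + J_\text{os} + U_\text{os} + \kappa + \Omega$ from Eq.~(\ref{eq:Lambda_Def}).

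The only mildly delicate point — and the place where I would be most careful — is getting the constants to line up so the final answer is exactly $4t^2m^2\Lambda^2/T$ rather than something a small constant factor worse; this forces a slightly lossy step (such as folding the $2m\kappa$ into $2m(\dots+\kappa)$, or absorbing factors of $1/2$ from the Majorana norms) which is harmless since the bound need not be tight. A secondary check is that the bounds must hold for \emph{every} $s$, which is immediate because the definitions of $J_C, J_\text{os}, U_C, U_\text{os}, \kappa$ are already stated as uniform-in-$t$ quantities. With $\ell_\text{g} + \ell_\text{ng} \leq 2m\Lambda$ in hand, Lemma~\ref{lemma:Trotter_bounded_lind} gives $\norm{\Phi - \mathcal{T}\exp(\int_0^t\mathcal{L}(s)ds)}_\diamond \leq (t^2/T)(2m\Lambda)^2 = 4t^2m^2\Lambda^2/T$, and applying this channel difference to the normalized initial state $\rho(0)$ yields the stated trace-norm bound on $\norm{\sigma_T - \rho(t)}_1$.
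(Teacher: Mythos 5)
Your proposal is correct and follows essentially the same route as the paper: bound $\smallnorm{\mathcal{L}_\text{g}(s)}_\diamond$ and $\smallnorm{\mathcal{L}_\text{ng}(s)}_\diamond$ uniformly in $s$ so that the parameter $\ell$ in Lemma~\ref{lemma:Trotter_bounded_lind} can be taken to be $2m\Lambda$, then apply that lemma to the normalized initial state. The paper simply asserts the two diamond-norm bounds ($2m(J_C+J_\text{os}+\kappa_1+\kappa_2)$ and $2m(U_C+U_\text{os}+\kappa_3)$) without the elementary operator-norm accounting you spell out, and your accounting is consistent with (indeed slightly tighter than) those bounds.
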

\begin{proof}
Noting that $\norm{\mathcal{L}_\text{ng}(s)}_\diamond \leq 2m (U_C + U_\text{os} + \kappa_3)$ and $\norm{\mathcal{L}_\text{g}(s)}_\diamond \leq 2m (J_C + J_\text{os} + \kappa_1 + \kappa_2)$, we obtain that the parameter $\ell$ in Lemma \ref{lemma:Trotter_bounded_lind} can be chosen to be $2m \Lambda$. The lemma statement then follows directly from Lemma \ref{lemma:Trotter_bounded_lind}.
\end{proof}
\begin{lemma}[Convex-gaussianity condition for 2-fermionic modes]\label{lemma:convex_gaussian_fermions}Consider a Lindbladian on two fermionic modes given by
\begin{align*}
\mathcal{L}(t) = -i[h(t), \cdot] + \sum_{i \in \{1, 2\}} \kappa_i(t) \mathcal{D}_{n_i},
\end{align*}
where $h(t) = u(t) n_1 n_2$ and $n_i$ is the number operator for the $i^\text{th}$ mode. If $\kappa_i(t) \geq \abs{u(t)}$, then the channel $\mathcal{T}\exp(\int_t^{t + \tau}\mathcal{L}(s)ds)$ generated by the Lindbladian in the time interval $(t, t + \tau)$ maps a convex Gaussian state to another convex Gaussian state.
\end{lemma}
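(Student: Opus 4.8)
The plan is to reduce to Gaussian inputs and then to exploit that every operator appearing in $\mathcal{L}(t)$ — namely $n_1n_2$, $n_1$, $n_2$ — is diagonal in the occupation basis $\{\ket{x}\}_{x\in\{0,1\}^2}$. Write $\Phi_\tau:=\mathcal{T}\exp\!\big(\int_t^{t+\tau}\mathcal{L}(s)\,ds\big)$. Since $\Phi_\tau$ is linear and a convex combination of convex-Gaussian states is again convex-Gaussian, it suffices to show $\Phi_\tau$ maps every Gaussian state to a convex-Gaussian state. Because $n_1n_2,n_1,n_2$ are diagonal, $\mathcal{L}(s)$ acts diagonally on the operator basis $\{\ket{x}\bra{y}\}$: $\mathcal{L}(s)(\ket{x}\bra{y})=\lambda_{xy}(s)\,\ket{x}\bra{y}$ with $\lambda_{xy}(s)=-iu(s)\big((n_1n_2)_x-(n_1n_2)_y\big)-\tfrac{\kappa_1(s)}{2}\big((n_1)_x-(n_1)_y\big)^2-\tfrac{\kappa_2(s)}{2}\big((n_2)_x-(n_2)_y\big)^2$. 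Hence time-ordering is trivial and $\Phi_\tau$ is the Schur (entrywise) multiplication of the density matrix by $\mu_{xy}=\exp\!\big(\int_t^{t+\tau}\lambda_{xy}(s)\,ds\big)$. Setting $\bar u=\int_t^{t+\tau}u$ and $\bar\kappa_i=\int_t^{t+\tau}\kappa_i$, this multiplier factorizes, so $\Phi_\tau=\mathcal{V}\circ\mathrm{Deph}_1\circ\mathrm{Deph}_2$, where $\mathcal{V}(\rho)=V\rho V^\dagger$ with $V=e^{-i\bar u\,n_1n_2}$ and $\mathrm{Deph}_i$ is single-mode dephasing of mode $i$ (multiplying coherences between $n_i=0$ and $n_i=1$ by $e^{-\bar\kappa_i/2}$); the three maps commute.

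Next I would note that each $\mathrm{Deph}_i$ maps Gaussian states to convex-Gaussian states: $\mathrm{Deph}_i(\rho)=\tfrac{1+e^{-\bar\kappa_i/2}}{2}\,\rho+\tfrac{1-e^{-\bar\kappa_i/2}}{2}\,m_i\rho m_i$ with $m_i=(-1)^{n_i}$, and conjugation by $m_i$ sends $c_i^1,c_i^2\mapsto-c_i^1,-c_i^2$ while fixing the Majorana operators of the other mode, i.e. it implements $\mathrm{diag}(-1,-1,1,1)\in SO(4)$ on the Majorana operators and is therefore a Gaussian unitary. So $\mathrm{Deph}_i(\rho)$ is a convex combination of two Gaussian states, and $\mathrm{Deph}_1\circ\mathrm{Deph}_2(\rho)$ a convex combination of (four) Gaussian states.

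The crux is that $\mathcal{V}$ maps Gaussian states to \emph{Gaussian} states, even though $V=e^{-i\bar u\,n_1n_2}$ is not a Gaussian unitary. For two fermionic modes the span of degree-four Majorana monomials is one-dimensional, spanned by $w:=m_1m_2=-4\,c_1^1c_1^2c_2^1c_2^2$; moreover, an even Hermitian operator $X$ is a quadratic form in the Majorana operators — equivalently, a valid Gaussian Hamiltonian — if and only if $\mathrm{Tr}(Xw)=0$, since the even Hermitian operators are spanned by $I$, the six (Hermitian) Majorana bilinears, and $w$, with $w$ trace-orthogonal to the rest. Now $V$ is even and commutes with $w$ (both diagonal in the occupation basis), so conjugation by $V$ preserves parity and, by cyclicity of the trace, preserves $\mathrm{Tr}(\,\cdot\,w)$. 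Writing a Gaussian state as $\rho=e^{-\beta H}/Z$ with $H$ even and quadratic (the cases $\beta=\pm\infty$, i.e. ground-space projectors, are handled identically), we get $\mathcal{V}(\rho)=e^{-\beta\widetilde H}/Z$ with $\widetilde H=VHV^\dagger$ even and $\mathrm{Tr}(\widetilde H w)=\mathrm{Tr}(Hw)=0$, hence $\widetilde H$ again even and quadratic, so $\mathcal{V}(\rho)$ is Gaussian. Combining: for Gaussian $\rho$, $\mathrm{Deph}_1\circ\mathrm{Deph}_2(\rho)=\sum_k p_k\sigma_k$ with each $\sigma_k$ Gaussian, whence $\Phi_\tau(\rho)=\sum_k p_k\,\mathcal{V}(\sigma_k)$ is convex-Gaussian; linearity extends the conclusion to all convex-Gaussian inputs.

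I expect the third step to be the main obstacle and to carry essentially all the content: it hinges on the only non-Gaussian Majorana monomial on two modes, $w=m_1m_2$, being diagonal and therefore commuting with the phase gate $e^{-i\bar u\,n_1n_2}$, a coincidence special to two modes — which is why the non-Gaussian Lindbladian in Theorem 1 must first be Trotterized into two-mode pieces before this lemma is invoked. (One may note that $\kappa_i(t)\ge|u(t)|$ enters only through $\bar\kappa_i\ge|\bar u|$ and is not actually needed for the two-mode statement above; it is recorded because it is the regime relevant to the full model.)
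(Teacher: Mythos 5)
Your two-mode argument is internally correct: the Schur-multiplier factorization $\Phi_\tau=\mathcal{V}\circ\mathrm{Deph}_1\circ\mathrm{Deph}_2$ is exact, conjugation by $(-1)^{n_i}$ is a Gaussian unitary, and on \emph{exactly} two modes the only even Hermitian direction beyond quadratics is $w=m_1m_2$, which commutes with $e^{-i\bar u\,n_1n_2}$, so $\mathcal{V}$ indeed maps two-mode Gaussian states to Gaussian states. The genuine gap is that this proves a strictly weaker statement than the one the lemma must deliver (and than what the paper's proof actually establishes): in the proof of Theorem 1 the lemma is invoked for a pair of modes $(v,u)$ sitting inside an $m$-mode system, acting on a \emph{global} (convex-)Gaussian state that is correlated with the remaining modes. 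There your crux step collapses: the relevant even algebra is no longer $8$-dimensional, conjugation by $e^{-i\bar u\,n_vn_u}$ does not preserve quadraticity of an $m$-mode Hamiltonian (conjugating $c_v^\alpha c_{v'}^{\alpha'}$ with $v'\notin\{v,u\}$ generates quartic terms), and convex-Gaussianity is in fact \emph{not} preserved without noise. A concrete counterexample: start four modes in the Gaussian state $\tfrac12(\ket{0000}+\ket{1100}+\ket{0011}+\ket{1111})$ (a product of two BCS pairs on modes $(1,2)$ and $(3,4)$) and apply $e^{i\pi n_2n_3}$, i.e.\ your $\mathcal{V}$ with $\bar\kappa_i=0$. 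The output $\tfrac12(\ket{0000}+\ket{1100}+\ket{0011}-\ket{1111})$ is pure, yet all its two-point functions coincide with those of the maximally mixed state, so it is not Gaussian and hence not convex-Gaussian. Your closing remark that $\kappa_i(t)\geq\abs{u(t)}$ is "not actually needed" should have been the warning sign: in the setting where the lemma is used the condition is essential (otherwise Theorem 1 would hold at zero noise, making noiseless interacting fermions classically tractable), and your reading that Trotterization "reduces to two-mode pieces" localizes only the generator, not the state on which it acts.

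By contrast, the paper's proof never leaves the many-mode setting: it rewrites $\mathcal{T}\exp(\int\mathcal{L})$ as a nonnegative combination of maps $\rho\mapsto A\rho A^\dagger$ with $A$ an exponential of an operator quadratic in $n_1,n_2$ --- the Gaussian-random-phase average $\mathcal{R}_{t+\tau,t}$ with Kraus operators $\exp(\sqrt{U}e^{-i\pi/4}(zn_1+z^*n_2))$, together with $\mathcal{E}_i=\exp((K_i-\abs{U})\mathcal{N}_{i,l}\mathcal{N}_{i,r})$ and $\mathcal{F}_i$ --- each of which preserves Gaussianity of the \emph{global} state by Lemma \ref{lemma:closure_gaussian_product}, which is dimension-agnostic. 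The hypothesis $K_i\geq\abs{U}$ is exactly what makes the coefficient in $\mathcal{E}_i$ nonnegative so that the output is a genuine convex combination. To repair your proof you would need a decomposition of this stochastic-Kraus type with exponentials of quadratics, at which point you would essentially be reproducing the paper's argument.
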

\begin{proof}
    It will be convenient to define the scalars
    \begin{align}
    U = \int_{t}^{t + \tau} u(s) ds, \quad K_i = \int_{t}^{t + \tau} \kappa_i(s) ds, \quad \text{and }K = K_1 + K_2.
    \end{align}
    We also note that, since both the Hamiltonian and the jump operators are expressible as polynomials of the fermionic number operators $n_1, n_2$, they commute with each other. Therefore, \begin{align}\label{eq:non_gaussian_channel}
    \mathcal{T}\exp\bigg(\int_{t}^{t + \tau}\mathcal{L}(s) ds\bigg) = \exp\big(-iU[n_1 n_2, \ \cdot \ ]\big)\exp\big(K_1 \mathcal{D}_{n_1}\big) \exp\big(K_2 \mathcal{D}_{n_2}\big).
    \end{align}
    We define the channel $\mathcal{R}_{t + \tau, t}$ via
    \begin{align}\label{eq:convex_gaussian_channel}
        \mathcal{R}_{t + \tau, t}(\rho) = \mathbb{E}_z\big(R(z)\rho R^\dagger(z)  \big) \text{ where }R(z) = \exp\big(\sqrt{U}e^{-i\pi/4} (z n_1 + z^* n_2)\big),
    \end{align}
    where $z = (a + ib) / \sqrt{2}$ with $a, b$ being independent standard normal random variables. Note that, due to Lemma \ref{lemma:closure_gaussian_product}, $\mathcal{R}_{t + \tau, t}$ maps an input Gaussian state to a (possibly unnormalized) convex-Gaussian state.
    
        We now explicitly compute $\mathcal{R}_{t + \tau, t}(\rho)$. define $\mathcal{N}_{i, l}$ as the superoperator which left multiplies by $n_i$ (i.e.~$\mathcal{N}_{i, l}(\rho) = n_i \rho$) and $\mathcal{N}_{i, r}$ as the superoperator which right multiplies by $n_i$ (i.e.~$\mathcal{N}_{i, r}(\rho) = \rho n_i$). Then
    \begin{align}
        \mathcal{R}_{t + \tau, t} &= \mathbb{E}_z\big(\exp\big(\sqrt{U}e^{-i\pi / 4}(z \mathcal{N}_{1, l} + z^* \mathcal{N}_{2, l}) + \sqrt{U}^* e^{i\pi / 4}(z^* \mathcal{N}_{1, r} + z \mathcal{N}_{2, r})\big)\big) \nonumber \\
        &=\mathbb{E}_a\bigg(\exp\bigg(\frac{a}{\sqrt{2}}\big(\sqrt{U}e^{-i\pi/4}(\mathcal{N}_{1, l} + \mathcal{N}_{2, l}\big) + \sqrt{U}^* e^{i\pi / 4}(\mathcal{N}_{1, r} + \mathcal{N}_{2, r}\big)\big)\bigg)\bigg) \times \nonumber \\
        &\qquad \qquad \mathbb{E}_b\bigg(\exp\bigg(\frac{ib}{\sqrt{2}}\big(\sqrt{U}e^{-i\pi/4}(\mathcal{N}_{1, l} - \mathcal{N}_{2, l}\big) - \sqrt{U}^* e^{i\pi / 4}(\mathcal{N}_{1, r} - \mathcal{N}_{2, r}\big)\big)\bigg)\bigg) \nonumber \\
        &\numeq{1}\exp\bigg(-i\frac{U}{4}(\mathcal{N}_{1, l} + \mathcal{N}_{2, l})^2 + i\frac{U}{4}(\mathcal{N}_{1, r} + \mathcal{N}_{2, r})^2 + \frac{\abs{U}}{2}(\mathcal{N}_{1, l} + \mathcal{N}_{2, l})(\mathcal{N}_{1, r} + \mathcal{N}_{2, r})\bigg) \times \nonumber \\
        &\qquad \qquad \exp\bigg(i\frac{U}{4}(\mathcal{N}_{1, l} - \mathcal{N}_{2, l})^2 - i\frac{U}{4}\big(\mathcal{N}_{1, l} - \mathcal{N}_{2, r}\big)^2 + \frac{\abs{U}}{2}(\mathcal{N}_{1, l} - \mathcal{N}_{2, l})(\mathcal{N}_{1, r} - \mathcal{N}_{2, r}\big)\bigg) \nonumber \\
        &=\exp\big(-iU(\mathcal{N}_{1, l}\mathcal{N}_{2, l} -\mathcal{N}_{1, r}\mathcal{N}_{2, r} ) + \abs{U}(\mathcal{N}_{1, l}\mathcal{N}_{1, r} + \mathcal{N}_{2, l}\mathcal{N}_{2, r})\big),
    \end{align}
where, in (1), we have used the fact that, for any operator $O$, $\mathbb{E}_{x \in \mathcal{N}(0, 1)}(e^{xO}) = e^{O^2/2}$. Identifying $\mathcal{N}_{1, l}\mathcal{N}_{2, l} - \mathcal{N}_{1, r}\mathcal{N}_{2, r} = [n_1 n_2, \ \cdot \ ]$, we obtain that
\begin{align}\label{eq:hamiltonian_in_terms_of_R}
    \exp(-iU[n_1 n_2, \ \cdot \ ]) = \exp(-\abs{U}(\mathcal{N}_{1, l}\mathcal{N}_{1, r}+ \mathcal{N}_{2, l}\mathcal{N}_{2, r}))\mathcal{R}_{t + \tau, t}.
\end{align}
Using Eq.~(\ref{eq:hamiltonian_in_terms_of_R}) and Eq.~(\ref{eq:non_gaussian_channel}) together with the fact that $\mathcal{D}_{n_i} = \mathcal{N}_{i, l} \mathcal{N}_{i, r} - (\mathcal{N}_{i, l}^2 + \mathcal{N}_{i, r}^2)/2$, we obtain that
\begin{align}\label{eq:lind_cp_decomp}
    \mathcal{T}\exp\bigg(\int_{t}^{t + \tau}\mathcal{L}(s) ds\bigg) = \bigg(\prod_{i \in \{1, 2\}}\underbrace{\exp((K_i - \abs{U})\mathcal{N}_{i, l}\mathcal{N}_{i, r})}_{\mathcal{E}_i}\underbrace{\exp(-(\mathcal{N}_{i, l}^2 + \mathcal{N}_{i, r}^2)/2)}_{\mathcal{F}_i}\bigg)\mathcal{R}_{t + \tau, t}.
\end{align}
We note that $\mathcal{R}_{t + \tau, t}$ and $\mathcal{F}_i$ are completely positive maps that map convex Gaussian states to possibly unnormalized Gaussian states. Furthermore, if $K_i \geq \abs{U}$, which is implied by $\kappa_i(t) \geq \abs{u(t)}$ quoted in the lemma statement, then $\mathcal{E}_i$ also have this property. Consequently, since $\mathcal{T}\exp(\int_{t}^{t + \tau}\mathcal{L}(s) ds)$ is a channel, as long as $K_i \geq \abs{U}$, it maps convex Gaussian states to (normalized) convex Gaussian states.
\end{proof}
\begin{theorem}[High-noise convex Gaussianity and classical simulation of the fermionic model, reproduced from the main text]
For an initial Gaussian state, if $\kappa_3 \geq 2U$, then the state of the fermionic model at time $t$, $\rho(t)$, is convex Gaussian for all $t \geq 0$. Furthermore, $\rho(t)$ can be classically sampled in the Fock state basis to an $\epsilon$ total variation error in $O(m^7 \Lambda^2 t^2 / \epsilon)$ time.
\end{theorem}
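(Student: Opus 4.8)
The plan is to establish convex-Gaussianity of the Trotterized state $\sigma_T$ from Eq.~(\ref{eq:Trotter_theorem_2_ferm}) step by step, pass to the limit $T\to\infty$, and then convert the same decomposition into a sampling algorithm.

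\emph{Convex-Gaussianity.} I would argue that each Trotter factor $\Phi^{\text{ng}}_{\tau\delta,(\tau-1)\delta}\Phi^{\text{g}}_{\tau\delta,(\tau-1)\delta}$ maps convex-Gaussian states to convex-Gaussian states, so that $\sigma_T$, produced from the Gaussian state $\rho(0)$, is convex-Gaussian for every $T$. For the Gaussian factor $\Phi^{\text{g}}$ this is the standard fact that a Lindbladian with quadratic Hamiltonian and linear jump operators maps Gaussian states to Gaussian states, hence preserves their convex hull. For $\Phi^{\text{ng}}$, the point is that $H_{\text{ng}}(s)$ and every dephasing jump operator are functions of the mutually commuting family $\{n_v\}$, so the time-ordered exponential factorizes \emph{exactly}: I would write $\mathcal{L}_{\text{ng}}(s)=\sum_{\{v,v'\}:\,v\neq v'}\mathcal{L}_{v,v'}(s)+\mathcal{L}_{\text{rem}}(s)$, where $\mathcal{L}_{v,v'}(s)$ is the two-mode Lindbladian whose Hamiltonian is the $n_vn_{v'}$ term of $H_{\text{ng}}(s)$ and whose dissipators are $\lambda_{v,v'}(s)\mathcal{D}_{n_v}$ and $\lambda_{v,v'}(s)\mathcal{D}_{n_{v'}}$, while $\mathcal{L}_{\text{rem}}(s)$ collects the (Gaussian) diagonal Hamiltonian terms $-iU_{v,v}(s)[n_v,\ \cdot\ ]$ and the leftover single-mode dephasing $\sum_v(\kappa_3-\sum_{v'\neq v}\lambda_{v,v'}(s))\mathcal{D}_{n_v}$. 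The hypothesis $\kappa_3\geq 2U$ lets one distribute the dephasing on each mode among the pair-Lindbladians containing it so that each $\mathcal{L}_{v,v'}(s)$ satisfies the hypothesis $\kappa_i(s)\geq\abs{u(s)}$ of Lemma~\ref{lemma:convex_gaussian_fermions} while the leftover rates stay nonnegative; hence each pair-channel $\mathcal{T}\exp(\int\mathcal{L}_{v,v'})$ preserves convex-Gaussianity, and $\mathcal{T}\exp(\int\mathcal{L}_{\text{rem}})$ — a product of Gaussian phase rotations and single-mode dephasing channels — does so by the $u\equiv 0$ case of the same lemma. Since all the generators $\mathcal{L}_{v,v'}$ and $\mathcal{L}_{\text{rem}}$ commute, $\Phi^{\text{ng}}_{\tau\delta,(\tau-1)\delta}$ equals these channels composed in any order and therefore preserves convex-Gaussianity. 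Finally, the set of $m$-mode fermionic Gaussian states is the continuous image of the compact set of admissible correlation matrices, hence compact, so its convex hull is compact by Carath\'{e}odory's theorem, in particular closed; since $\norm{\sigma_T-\rho(t)}_1\to 0$ by Lemma~\ref{lemma:Trotter_fermion_theorem_2}, $\rho(t)$ lies in this closed set and is convex-Gaussian.

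\emph{Sampling algorithm.} I would represent the evolving Gaussian state by its $2m\times 2m$ correlation matrix $\Gamma$ and run a sequential Monte Carlo over Trotter trajectories. Fix $T=\lceil 4m^2\Lambda^2 t^2/\epsilon\rceil$ so that $\norm{\sigma_T-\rho(t)}_1\leq\epsilon$; it then suffices to produce a Fock-basis sample of $\sigma_T$. Using the explicit decomposition in the proof of Lemma~\ref{lemma:convex_gaussian_fermions} together with Lemma~\ref{lemma:closure_gaussian_product}, each elementary CPTP factor $\Psi$ above (a Gaussian factor, a pair-channel $\mathcal{T}\exp(\int\mathcal{L}_{v,v'})$, or the remainder channel) can be written as $\Psi(\rho)=\int\mathcal{K}^\xi(\rho)\,d\nu(\xi)$, where $d\nu$ is a product of standard Gaussian and Bernoulli variates, $\mathcal{K}^\xi$ sends a Gaussian $\rho_\Gamma$ to $w^\xi(\Gamma)\,\rho_{g^\xi(\Gamma)}$ for an efficiently computable weight $w^\xi(\Gamma)\geq 0$ and correlation-matrix update $g^\xi(\Gamma)$ (with $\xi$ absent and $w\equiv 1$ for the Gaussian factors), and — because $\Psi$ is trace-preserving on $\rho_\Gamma$ — $\int w^\xi(\Gamma)\,d\nu(\xi)=1$, so $w^\xi(\Gamma)\,d\nu(\xi)$ is a genuine probability measure over $\xi$. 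One then processes the $T$ steps, and within each step the $O(m^2)$ commuting pair-channels and the remainder, in order: at each elementary factor draw $\xi\sim w^\xi(\Gamma)\,d\nu(\xi)$ (the Gaussian components stay Gaussian after reweighting by $w^\xi$, the branching components stay Bernoulli) and update $\Gamma\mapsto g^\xi(\Gamma)$; after the last step, sample the Fock occupations from the final Gaussian state mode by mode via conditional correlation-matrix updates. The output is exactly a Fock sample of $\sigma_T$, hence within total variation distance $\epsilon$ of $\rho(t)$. For the running time, the dominant operation is, for each of the $O(m^2)$ pair-channels, a Gaussian normalization (Pfaffian-type) and a congruence update of $\Gamma$, each $O(m^3)$, giving $O(m^5)$ per Trotter step; with the $O(m^3)$ Gaussian updates and the final $O(m^3)$ Fock sampling, the total is $O(m^5 T)=O(m^7\Lambda^2 t^2/\epsilon)$.

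\emph{Main obstacle.} The routine parts are the Gaussian-covariance bookkeeping and the constant-counting. The delicate step is verifying that the per-factor weights $w^\xi(\Gamma)$ coming from the sub-pieces $\mathcal{R},\mathcal{E}_i,\mathcal{F}_i$ of Lemma~\ref{lemma:convex_gaussian_fermions} are nonnegative, efficiently computable from $\Gamma$, and define samplable measures: this uses that $\mathcal{F}_i(\rho)=e^{-n_i/2}\rho e^{-n_i/2}$ and $\mathcal{E}_i(\rho)=\rho+(e^{K_i-\abs{U}}-1)\,n_i\rho n_i$ both map unnormalized Gaussians to unnormalized Gaussians (via Lemma~\ref{lemma:closure_gaussian_product} and closure of fermionic Gaussians under single-mode projection) and that $\mathcal{R}$ merely tilts the Gaussian prior on $z$ by a Gaussian factor. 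One also needs that, conditioned on the trajectory so far, the weight of each \emph{full} pair-channel integrates to $1$, which is exactly the CPTP property furnished by Lemma~\ref{lemma:convex_gaussian_fermions}; and one must confirm the exact (non-Trotter) factorization of $\Phi^{\text{ng}}$ into commuting pieces, which is legitimate because every operator in $\mathcal{L}_{\text{ng}}$ is a function of $\{n_v\}$, provided $\kappa_3\geq 2U$ supplies enough dephasing to feed every pair-Lindbladian.
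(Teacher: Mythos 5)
Your proposal follows essentially the same route as the paper's proof: the same Gaussian/non-Gaussian Trotter splitting, the same exact factorization of $\Phi^{\text{ng}}$ into commuting two-mode channels with the dephasing rate $\kappa_3$ distributed among the pairs so that each satisfies the hypothesis of Lemma~\ref{lemma:convex_gaussian_fermions}, and the same $O(m^3)$-per-pair covariance-matrix sampling giving $O(m^5 T)=O(m^7\Lambda^2 t^2/\epsilon)$. The only substantive additions are that you make explicit two points the paper leaves implicit --- the closedness of the set of convex-Gaussian states (needed to pass from $\sigma_T$ to $\rho(t)$ in the limit $T\to\infty$) and the commutativity justifying the exact pairwise factorization --- both of which are correct and strengthen the write-up rather than change the argument.
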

\begin{proof} Consider the Trotterized state $\sigma_T$ [Eq.~(\ref{eq:Trotter_theorem_2_ferm})]---note that $\Phi^\text{g}_{\tau \delta, (\tau - 1) \delta}$ is a Gaussian channel and hence trivially preserves convex Gaussianity. We now obtain the condition under which $\Phi^\text{ng}_{\tau \delta, (\tau - 1)\delta}$ also preserves convex Gaussianity using Lemma \ref{lemma:convex_gaussian_fermions}. We first perform the decomposition
\begin{align}
    \Phi^\text{ng}_{\tau \delta, (\tau - 1) \delta} = \prod_{v, u} \mathcal{T}\exp\bigg(\int_{(\tau - 1)\delta}^{\tau \delta}\mathcal{L}_{v, u}(s) ds\bigg) \text{ where }\mathcal{L}_{v, u} = -i[U_{v, u}(t) n_v n_u, \ \cdot \ ] + \kappa_3 \big(p_{v, u}(t) \mathcal{D}_{n_v} + q_{v, u}(t) \mathcal{D}_{n_u}\big),
\end{align}
where we choose
\begin{align}
    p_{v, u}(t) = \frac{1}{2}\frac{\abs{U_{v, u}(t)}}{\sum_{u} \abs{U_{v, u}(t)}} \text{ and }q_{v, u}(t) = \frac{1}{2}\frac{\abs{U_{v, u}(t)}}{\sum_{v} \abs{U_{v, u}(t)}}.
\end{align}
Next, we apply Lemma \ref{lemma:convex_gaussian_fermions}: For $\mathcal{L}_{v, u}(t)$ to generate a channel that is convex-Gaussianity preserving, a sufficient condition is that
\begin{align}
\kappa_3 p_{v, u}(t), \kappa_3 q_{v, u}(t) \geq \abs{U_{v, u}(t)} \text{ or equivalently } \kappa_3 \geq 2\sum_{k'} \abs{U_{k, k'}(t)} \ \text{ for } k \in \{v, u\}.
\end{align}
Since $\sum_{k'}\abs{U_{k, k'}(t)} \leq U_C + U_\text{os}$, this condition is satisfied if $\kappa_3 \geq 2 (U_C + U_\text{os})$. Assuming this to be true, it then follows from Lemma \ref{lemma:convex_gaussian_fermions} that $\Phi^\text{ng}_{\tau \delta, (\tau - 1) \delta}$ maps an input Gaussian state to a convex-Gaussian state---consequently, the Trotterized state $\sigma_T$ is a convex-Gaussian state such that $\norm{\rho(t) - \sigma_T}_1 \leq \epsilon$ when $T =\Theta(t^2 m^2 \Lambda^2 / \epsilon)$.

\emph{Time-complexity of sampling in the Fock state basis}. Since $\sigma_T$ is convex-Gaussian by construction, it can be expressed as $\sigma_T = \int \rho_{\alpha} d\mu(\alpha)$,
where $\rho_\alpha$ is a Gaussian state and $\mu$ is a probability measure. To sample from $\sigma_T$, we can then first sample from $\mu$ to obtain a Gaussian state and then use the standard algorithm for sampling from fermionic Gaussian states. Consider sampling from $\mu(\alpha)$: Suppose the initial state $\rho(0)$ is a Gaussian state. Lemma \ref{lemma:convex_gaussian_fermions} provides an explicit characterization of the convex combination of Gaussian states that result when applying $\mathcal{T}\exp(\int_{(\tau -1)\delta}^{\tau \delta}\mathcal{L}_{v, u}(s) ds)$ on an input Gaussian state. Furthermore, since the covariance matrix of the Gaussian state is a $2m \times 2m$ matrix, the probabilities of each Gaussian state in the convex combination being computable from the result for covariance matrices of products of Gaussian states \cite{2022_tagliacozzo_fermionic_gs,Fagotti_2010_entanglement_fermions} in $O(m^3)$ time---sampling from this convex combination thus requires $O(m^3)$ time. At every time-step, this has to be done for every pair of fermionic modes to apply $\Phi^{\text{ng}}_{\tau \delta, (\tau - 1)\delta}$, thus yielding a total time-complexity of $O(m^5)$. The application of the Gaussian evolution in each time-step can also be done at the level of covariance matrices in $O(m^3)$ time. Thus, the total time of sampling from $\mu$ is given by $O(m^5 \times T) = O(m^7 \Lambda^2 t^2 / \epsilon)$. Finally, having sampled a Gaussian state $\rho_\alpha$ from $\sigma_T$, we can draw a sample in the Fock state basis in $O(m^3)$ time \cite{Terhal2002_classical_simulation_fermions,knill2001fermioniclinearopticsmatchgates}---the total time complexity of the sampling algorithm thus is dominated by the cost of sampling from $\mu$ and is given by $O(m^7 \Lambda^2 t^2 / \epsilon)$.
\end{proof}

\section{High-noise separability of the bosonic model (Theorem 2)}\label{supplemental:theorem_2_proof}
In this section, we will present proof of Theorem 2, which considers the high-noise regime of the bosonic model. Since the bosonic model is infinite-dimensional with unbounded terms in the Hamiltonian, its analysis first requires an analysis of the particle number (as well as its moments) in the model. We do so in the first subsection---then, in the proof of Theorem 2, we first approximate the infinite-dimensional bosonic modes with finite-dimensional qudits and quantify the approximation error. Finally, we analyze the resulting finite-dimensional model and establish high-noise separability in the model.

\subsection{Analyzing particle number moments}
We begin by introducing a physically motivated assumption on the initial state of the model---the initial state will be assumed to be a product state with a ``uniform particle moment density" assumption, similar to that used in Ref.~\cite{kuwahara2024bosons_liebrobinson}.
\begin{assumption} [Uniform particle moment density]\label{assump:uniform_particle_num}
    The initial state $\rho(0)$ is a product state and $\exists C_0, \alpha_0, \beta_0 > 0$ such that $\forall v$ and $k \in \{1, 2, 3 \dots\}$
    \begin{align*}
    \textnormal{Tr}(n_v^k \rho(0))\leq C_0^k k^{\alpha_0 k + \beta_0}. 
    \end{align*}
\end{assumption}
\noindent As shown in Ref.~\cite{kuwahara2024bosons_liebrobinson}, this assumption is satisfied for a wide variety of physically relevant initial states of the bosonic model, notably for the vacuum state, thermal states, as well as coherent states. Furthermore, it implies a bound on the moments of the total particle number $N = \sum_v n_v$ since
\begin{align}\label{eq:particle_number_moments_init_state}
    \text{Tr}(N^k \rho(0)) = \sum_{v_1, v_2 \dots v_k }\text{Tr}(n_{v_1} n_{v_2} \dots n_{v_k} \rho(0))\leq \sum_{v_1, v_2 \dots v_k} \prod_{i = 1}^k \text{Tr}(n_{v_i}^k \rho(0))^{1/k} \leq (C_0 m)^k k^{\alpha_0 k + \beta_0},
\end{align}
where we remind the reader that $m = nL$ is the total number of bosonic modes in the model. This particle number moment bound, in turn, implies that the probability of high-particle-number states being occupied is exponentially suppressed, which we make precise in the following lemma.
\begin{lemma}[Probability of high-particle-number states (Ref.~\cite{kuwahara2024bosons_liebrobinson})]\label{lemma:chernoff}
Suppose $\rho$ is a state which satisfies $\textnormal{Tr}(N^k \rho) \leq (Cm)^k k^{\alpha k + \beta}$ and $\Pi_{\geq d}$ is a projector on the subspace with $\geq d$ particles, then
\begin{align*}
    \textnormal{Tr}(\Pi_{\geq d} \rho) \leq \bigg(\frac{d e^{\alpha/\beta}}{Cme}\bigg)^{\beta/\alpha}e^{-(d/Cm e)^{1/\alpha}}.
\end{align*}
\end{lemma}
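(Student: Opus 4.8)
The plan is to prove this as an operator Chernoff bound, following Ref.~\cite{kuwahara2024bosons_liebrobinson}: apply Markov's inequality to the $k$-th moment of the total number operator $N$ and then optimize the free integer parameter $k$. The first step is the operator Markov inequality: since $N$ commutes with $\Pi_{\ge d}$ and $N^k \ge d^k$ on the range of $\Pi_{\ge d}$, one has $N^k = \Pi_{\ge d} N^k \Pi_{\ge d} + \Pi_{<d} N^k \Pi_{<d} \ge d^k \Pi_{\ge d}$ as positive operators, hence $\textnormal{Tr}(N^k \rho) \ge d^k \textnormal{Tr}(\Pi_{\ge d}\rho)$ for every $k \in \{1,2,3,\dots\}$. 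Combining with the hypothesis $\textnormal{Tr}(N^k\rho)\le (Cm)^k k^{\alpha k + \beta}$ gives, for every positive integer $k$,
\[
\textnormal{Tr}(\Pi_{\ge d}\rho) \le \Big(\frac{Cm}{d}\Big)^k k^{\alpha k + \beta}.
\]

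The second step is to optimize the right-hand side over $k$. Writing $p := (d/(Cme))^{1/\alpha}$, equivalently $Cm/d = (e p^\alpha)^{-1}$, the bound becomes $e^{-k}\,(k/p)^{\alpha k}\, k^{\beta}$. I would then choose $k = \lfloor p\rfloor$, which is a legitimate positive integer as long as $p\ge 1$: this makes $k/p\le 1$ so $(k/p)^{\alpha k}\le 1$, while $k^\beta\le p^\beta$ and $e^{-k}\le e^{-(p-1)}=e^{1-p}$. Therefore $\textnormal{Tr}(\Pi_{\ge d}\rho)\le e\,p^\beta e^{-p}$, and substituting $p^\beta = (d/(Cme))^{\beta/\alpha}$ and $e\,p^\beta = \big(p^\alpha e^{\alpha/\beta}\big)^{\beta/\alpha} = \big(d e^{\alpha/\beta}/(Cme)\big)^{\beta/\alpha}$ and $p = (d/(Cme))^{1/\alpha}$ reproduces the claimed inequality exactly.

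There is no deep obstacle here; the only care needed is in the discrete optimization. The unconstrained minimizer of $(Cm/d)^k k^{\alpha k+\beta}$ is non-integer, and rounding down to $k=\lfloor p\rfloor$ is precisely what produces the otherwise curious prefactor $e^{\alpha/\beta}$ (equivalently, the overall factor $e$) in the statement rather than a tighter exponent. The one genuine caveat is the regime $p<1$, i.e.\ $d < Cme$, where $\lfloor p\rfloor = 0$ and the argument degenerates and the stated right-hand side may even dip below $1$; in that case one simply notes that $\textnormal{Tr}(\Pi_{\ge d}\rho)\le 1$ holds trivially and that this regime is irrelevant for the subsequent applications of the lemma, where $d$ is taken to grow polynomially in $m$ (and in $t$). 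Finally, I would note that the same computation goes through if the moment bound is only available for $k$ up to some cutoff, provided that cutoff exceeds $\lfloor p\rfloor$.
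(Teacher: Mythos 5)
Your proof is correct and follows essentially the same route as the paper's: apply Markov's inequality to the $k$-th moment of $N$ to get $\textnormal{Tr}(\Pi_{\geq d}\rho) \leq k^{\beta}(Cmk^{\alpha}/d)^{k}$, then choose $k$ to be (roughly) the integer part of $(d/(Cme))^{1/\alpha}$, with the floor accounting for the prefactor $e^{\alpha/\beta}$. Your explicit caveat about the degenerate regime $d < Cme$ (where the chosen $k$ vanishes) is a gap the paper's proof shares but does not mention, and is harmless for the applications.
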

\begin{proof}
    Note that, for any $k > 0$, 
    \begin{align}
    d^k \text{Tr}(\Pi_{\geq d} \rho) \leq \text{Tr}(N^k \Pi_{\geq d} \rho)\leq \text{Tr}(N^k \rho) \leq (Cm)^k k^{\alpha k + \beta} \implies \text{Tr}(\Pi_{\geq d}\rho) \leq k^\beta\bigg(\frac{Cmk^\alpha}{d}\bigg)^k.
    \end{align}
    We can now pick $k$ to be the greatest integer smaller than $(d/ Cm e)^{1/\alpha}$---we then have that $(d/Cme)^{1/\alpha}- 1 \leq k \leq (d/Cme)^{1/\alpha}$ and therefore
    \begin{align}
    \text{Tr}(\Pi_{\geq d} \rho) \leq \bigg(\frac{d}{Cme}\bigg)^{\beta/\alpha} e^{-k} \leq \bigg(\frac{d e^{\alpha/\beta}}{Cme}\bigg)^{\beta/\alpha}e^{-(d/Cm e)^{1/\alpha}},
    \end{align}
    which proves the lemma statement.
\end{proof}

While we will assume that the uniform particle moment density condition holds for the initial state, the subsequent dynamics of the bosonic model could possibly violate this condition. In the remainder of this section, we show that under the condition that the total rate of particle loss is higher than the total rate of particle gain (which we make precise below in assumption \ref{assump:gain_loss}), the moments of the total particle number $\textnormal{Tr}(N^k \rho(t))$ satisfy an inequality similar to Eq.~(\ref{eq:particle_number_moments_init_state}), which by Lemma \ref{lemma:chernoff} implies that the probability of higher particle number states being occupied is super-polynomially small in the particle number.

\begin{assumption}\label{assump:gain_loss}
    The parameters $\kappa_1, \kappa_2$ and $\mathcal{G}$ are such that $2\gamma = \kappa_1 - \kappa_2 - 2\mathcal{G} > 0$.
\end{assumption}
Physically, this assumption restricts the rate of 3 processes in the bosonic model that can change its particle number: In the noise terms, incoherent particle loss  can decrease the particle number at a rate $\sim \kappa_1$ and incoherent particle gain  can increase the particle number at a rate $\sim \kappa_2$. Furthermore, in the Hamiltonian, the squeezing term ($H_\text{g}^\text{sq}(t)$ in Eq.~(\ref{eq:H_g_bosonic_model})) can also increase the number of particles in the system at a rate $\sim \mathcal{G}$. Assumption \ref{assump:gain_loss} constrains the model to have  particle loss higher than  particle gain, without which the number of particles can increase arbitrarily with time. We remark that we do not need any assumption on the strength of displacement term ($H_\text{g}^\text{disp}(t)$ in Eq.~(\ref{eq:H_g_bosonic_model}))---we will show in Lemma \ref{lemma:particle_number_bound} that, as long as assumption \ref{assump:gain_loss} is satisfied, no matter how large the displacement term is, the particle number (and its moments) do not grow arbitrarily with time.

  We begin with a two technical lemmas that will be useful in our analysis.
\begin{lemma}\label{lemma:particle_num_inequality}
Suppose $x_k(t)$, for $k \in \{0, 1, 2 \dots \}$, are non-negative functions of time which satisfy the differential inequalities
\begin{align*}
\frac{d}{dt}x_k(t) \leq -\gamma k x_k(t) + \lambda m k \sum_{q = 1}^k 2^q {k \choose q} x_{k - q}(t),
\end{align*}
where $\gamma, \lambda, m > 0$. Furthermore, suppose $x_0(t) = 1$  $\forall t \geq 0$ and $\exists  C_0, \alpha_0, \beta_0 > 0 : x_k(0) \leq (C_0 m)^k k^{\alpha_0 k + \beta_0}$ for all $k \in \{1, 2, 3\dots \}$. Then
\begin{align*}
x_k(t) \leq (C m) e^{\alpha k + \beta}, \text{ where } C = e^{\lambda / \gamma}(C_0 + 2), \alpha = \textnormal{max}(\alpha_0, 1) \text{ and }\beta = \beta_0.
\end{align*}
\end{lemma}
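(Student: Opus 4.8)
The plan is a strong induction on $k$ that converts each differential inequality into an integral inequality via a Gr\"onwall-type (integrating-factor) step and then controls the resulting binomial convolution with one elementary estimate. The induction establishes the claimed bound $x_k(t)\le (Cm)^k k^{\alpha k+\beta}$ for all $k\ge1$ and all $t\ge0$ at once, with $x_0\equiv1$ supplied by hypothesis; since it is proved at all times, the bound for indices $<k$ is available at the intermediate time $s$ inside the integral below.

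First I would multiply $\dot x_k\le -\gamma k\,x_k+\lambda m k\sum_{q=1}^k 2^q\binom kq x_{k-q}$ by $e^{\gamma k t}$ and integrate, giving
\begin{align*}
x_k(t)\le x_k(0)e^{-\gamma k t}+\lambda m k\int_0^t e^{-\gamma k(t-s)}\sum_{q=1}^k 2^q\binom kq x_{k-q}(s)\,ds.
\end{align*}
The boundary term is at most $(C_0 m)^k k^{\alpha_0 k+\beta_0}\le (C_0/C)^k(Cm)^k k^{\alpha k+\beta}\le \tfrac{C_0}{C}(Cm)^k k^{\alpha k+\beta}$, using $\alpha\ge\alpha_0$, $\beta=\beta_0$, $k\ge1$, and $C_0\le C$. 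For the convolution I would use, for each $1\le q\le k$, the bound $x_{k-q}(s)\le (Cm)^{k-q}\,k^{\alpha k+\beta}\,k^{-\alpha q}$: for $q<k$ this is the inductive hypothesis together with $(k-q)^{\alpha(k-q)+\beta}\le k^{\alpha(k-q)}k^{\beta}=k^{\alpha k+\beta}k^{-\alpha q}$ (needing only $k-q\le k$, $\alpha\ge1$, $\beta>0$), and for $q=k$ it is just $x_0=1\le k^{\beta}$. This turns the convolution into an exact binomial sum,
\begin{align*}
\sum_{q=1}^k 2^q\binom kq x_{k-q}(s)\le (Cm)^k k^{\alpha k+\beta}\Big[\big(1+\tfrac{2}{Cm\,k^{\alpha}}\big)^k-1\Big]\le \tfrac{2e}{Cm}(Cm)^k k^{\alpha k+\beta},
\end{align*}
where the last step uses $\alpha\ge1$, $k\ge1$ (so that $k\cdot\tfrac{2}{Cm k^\alpha}\le \tfrac{2}{Cm}$) and $C\ge2$, $m\ge1$ (so that $e^{2/(Cm)}-1\le 2e/(Cm)$).

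Then, carrying out the time integral, $\int_0^t e^{-\gamma k(t-s)}\,ds\le 1/(\gamma k)$ cancels the prefactor $k$, so the integral term is at most $\tfrac{2e\lambda}{\gamma C}(Cm)^k k^{\alpha k+\beta}$. Adding the two contributions gives $x_k(t)\le\big(\tfrac{C_0}{C}+\tfrac{2e\lambda}{\gamma C}\big)(Cm)^k k^{\alpha k+\beta}$, so the induction closes as soon as $C\ge C_0+2e\lambda/\gamma$. Writing $r=\lambda/\gamma$, the choice $C=e^{r}(C_0+2)$ turns this into $C_0(e^{r}-1)+2(e^{r}-er)\ge0$, which holds because $e^{r}\ge1$ for $r\ge0$ and because $e^{r}-er$ attains its minimum value $0$ at $r=1$. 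The base case $k=1$ is the same computation with the convolution collapsing to the single term $2x_0=2$.

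The only delicate point is making the convolution estimate tight enough that the clean constant $C=e^{\lambda/\gamma}(C_0+2)$ suffices. Bounding $x_{k-q}$ crudely by $(Cm)^{k-q}k^{\alpha k+\beta}$ would leave $(1+2/(Cm))^k-1$, which is unbounded in $k$; retaining the factor $k^{-\alpha q}$ --- obtained by writing $(k-q)^{\alpha(k-q)}\le k^{\alpha(k-q)}=k^{\alpha k}k^{-\alpha q}$ --- turns the sum into $(1+2/(Cmk^\alpha))^k-1$, after which $\alpha\ge1$ makes the residual $k$-dependence $k^{1-\alpha}\le1$ harmless. Everything else --- the integrating factor, the elementary $s$-integral, and the inequality $e^{r}\ge er$ --- is routine.
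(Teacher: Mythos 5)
Your proof is correct, and it takes a genuinely different route from the paper's. Both arguments start from the same integrating-factor step, converting the differential inequality into $x_k(t)\le x_k(0)e^{-\gamma kt}+\lambda mk\int_0^te^{-\gamma k(t-s)}\sum_q 2^q\binom kq x_{k-q}(s)\,ds$, but they diverge immediately afterward. The paper unrolls this recursion completely: it iterates down to the initial data $x_{k-\sum_i q_i}(0)$, bounds the nested time integrals $I^{(k)}_{q_1\dots q_p}(t)$ explicitly, and resums the resulting multi-index sums via a multinomial identity, arriving at the closed form $(e^{\lambda/\gamma}m)^k k^{\beta_0}(C_0k^{\alpha_0}e^{-\gamma t}+2k)^k$ before relaxing it to the stated bound. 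You instead close a strong induction on $k$ after a single Gr\"onwall step, using the inductive hypothesis at the intermediate time $s$ and the key refinement $(k-q)^{\alpha(k-q)+\beta}\le k^{\alpha k+\beta}k^{-\alpha q}$ to collapse the convolution into $(1+2/(Cmk^\alpha))^k-1=O(1/(Cm))$; the condition $\alpha\ge1$ then neutralizes the leftover factor of $k$, and the verification $e^{r}(C_0+2)\ge C_0+2er$ reproduces the paper's constant exactly. Your version is shorter and avoids the combinatorial bookkeeping entirely; what it gives up is the explicit time-dependence of the paper's intermediate bound (the $C_0k^{\alpha_0}e^{-\gamma t}$ term showing relaxation of the initial data), which is not needed for the lemma as stated. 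Two minor remarks: you use $m\ge1$ (in $e^{2/(Cm)}-1\le 2e/(Cm)$) whereas the lemma only assumes $m>0$ --- but the paper's own resummation also implicitly uses $m\ge1$, and $m$ is a mode count in every application, so this is harmless; and both you and the paper's proof read the conclusion as $x_k(t)\le(Cm)^kk^{\alpha k+\beta}$, which is clearly the intended statement (the displayed $(Cm)e^{\alpha k+\beta}$ is a typo, as confirmed by how Lemma \ref{lemma:particle_number_bound} invokes this result).
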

\begin{proof}
The differential inequality can be written as an integral inequality:
\begin{align}\label{eq:integral_inequation}
x_k(t) \leq x_k(0)e^{-\gamma k t} + \lambda m k\sum_{q = 1}^k  2^q {k \choose q} \int_0^t x_{k - q}(s) e^{-\gamma k (t - s)}ds.
\end{align}
Recursing Eq.~(\ref{eq:integral_inequation}), we obtain
\begin{align}\label{eq:rec_int_ineq}
&x_k(t) \leq x_k(0) e^{-\gamma k t} + \sum_{p = 1}^k \sum_{q_1 = 1}^k \sum_{q_2 = 1}^{k - q_1}  \dots \sum_{q_p = 1}^{k -\sum_{i = 1}^{p-1}q_i} \frac{2^{\sum_{i=1}^p q_i} \lambda^p k!}{q_1!q_2!\dots q_p!(k - \sum_{i = 1}^{p}q_i)!}\bigg(\prod_{i = 1}^p \bigg(k - \sum_{j=1}^{i - 1}q_j\bigg)\bigg) x_{k - \sum_{i = 1}^p q_i}(0) I_{q_1, q_2 \dots q_p}^{(k)}(t),
\end{align}
where
\begin{align}
 I_{q_1, q_2 \dots q_p}^{(k)}(t) &= \int_0^t \int_0^{s_1} \dots \int_0^{s_{p - 1}} e^{-\gamma k(t - s_1)} e^{-\gamma (k - q_1)(s_1 - s_2)}e^{-\gamma (k - q_1 - q_2)(s_2 - s_3)} \dots e^{-\gamma (k - q_1-q_2 \dots -q_p)s_p} ds_1 ds_2 \dots ds_p\nonumber\\
 &=e^{-k\gamma t} \int_0^t \int_0^{s_1} \int_0^{s_2} \dots \int_0^{s_{p - 1}} e^{\gamma(q_1 s_1 + q_2 s_2 + \dots q_p s_p)} ds_1 ds_2 \dots ds_p.
\end{align}
We note that $I_{q_1, q_2 \dots q_p}^{(k)}(t)$ can be upper bounded:
\begin{align}\label{eq:upper_bound_I}
 I_{q_1, q_2 \dots q_p}^{(k)}(t) &\leq e^{-k\gamma t} \int_{-\infty}^t \int_{-\infty}^t \dots \int_{-\infty}^t  e^{\gamma(q_1 s_1 + q_2 s_2 + \dots q_p s_p)} ds_1 ds_2 \dots ds_p \nonumber \\
 &\leq \frac{1}{\gamma^p q_1 q_2 \dots q_p}e^{-(k - q_1 - q_2 - \dots q_p)\gamma t}\nonumber\\
 & \leq \frac{1}{\gamma^p} e^{-\gamma (t - q_1 - q_2 - \dots q_p)}.
\end{align}
In the calculation done below, it will be useful to note that, given any $f(n)$ where $n \in \{0, 1, 2\dots\}$,
\begin{align}\label{eq:perm_sum}
\sum_{q_1 = 1}^k \sum_{q_2 = 1}^{k - q_1}  \dots \sum_{q_p = 1}^{k - \sum_{i = 1}^{p - 1}q_i} \frac{1}{q_1 !q_2 ! \dots q_p !}f\bigg({\sum_{i = 1}^p q_i}\bigg) &\numeq{1}\sum_{q=p}^k \sum_{q_1 = 1}^q \sum_{q_2 = 1}^{q - q_1 }\dots \sum_{q_{p} = 1}^{q - \sum_{i = 1}^{p - q}q_i}\frac{1}{q_1 !q_2 ! \dots q_p!}f(q) \nonumber\\
&\numeq{2}\sum_{q=p}^k f(q) \sum_{\substack{q_1, q_2 \dots q_p \geq 1 \\ q_1 + q_2 + \dots q_p = q}} \frac{1}{q_1 ! q_2 ! \dots q_p !} \nonumber \\
&\numleq{3}\sum_{q=p}^k \frac{p^q f(q)}{q!},
\end{align}
where in (1) we have introduced the index $q = q_1 +q_2 +\dots q_p$, which ranges from $p$ to $k$, and re-expressed the summation over $q_1, q_2 \dots q_p$ as first a sum over $q$, and then a sum over $q_1, q_2 \dots q_p$ subject to the contraint $q_1 + q_2 +\dots q_p = q$. In (2), we have simply noted the fact that the summation over $q_1 \in \{1, 2 \dots q\}, q_2\in\{1, 2 \dots q-q_1\}\dots q_{p -1}\in\{1, 2 \dots q-(q_1+q_2+q_{p-1})\}$ is identical to summation over $q_1, q_2 \dots q_p \in\{1, 2\dots q\}$ with the additional constraint that $q_1 + q_2 + \dots q_p\leq q$. Finally, (3) is obtained by identifying the summation as a multinomial sum. 

Returning to Eq.~(\ref{eq:rec_int_ineq}), we obtain that
\begin{align}
x_k(t) &\numleq{1} x_k(0) e^{-\gamma k t} +  \sum_{p = 1}^k \sum_{q_1 = 1}^k \sum_{q_2 = 1}^{k}  \dots \sum_{q_p = 1}^{k} \frac{k! 2^{q_1 + q_2 \dots q_p}}{q_1 ! q_2 ! \dots q_p ! (k - \sum_{i = 1}^p q_i)! p!} \bigg(\frac{\lambda m k}{\gamma}\bigg)^p x_{k - \sum_{i = 1}^p q_i}(0)  e^{-\gamma t (k - \sum_{i = 1}^p q_i)}\nonumber \\
&\numleq{2} x_k(0) e^{-\gamma k t} + \sum_{p = 1}^k \sum_{q = p}^k (2p)^q {k \choose q}\bigg(\frac{\lambda m k}{\gamma}\bigg)^p x_{k - q}(0)  e^{-\gamma t (k - q)} \nonumber \\
&\leq (C_0 m)^k k^{\alpha_0 k + \beta_0} e^{-\gamma k t} + \sum_{q = 1}^k \sum_{p = 1}^q (2k)^q {k \choose q}\bigg(\frac{\lambda m k}{\gamma}\bigg)^p (C_0 m)^{k - q} (k - q)^{\alpha_0 (k - q) + \beta_0} e^{-\gamma t (k - q)} \nonumber \\
&\leq (C_0 m)^k k^{\alpha_0 k + \beta_0} e^{-\gamma k t} + m^k e^{k\lambda / \gamma} \sum_{q = 1}^k {k \choose q} (2k)^q C_0^{k - q} k^{\alpha_0(k - q) + \beta_0} e^{-\gamma t (k - q)} \nonumber\\
&\leq (e^{\lambda / \gamma} m)^k k^{\beta_0} \sum_{q = 1}^k {k \choose q} (C_0 k^{\alpha_0}e^{-\gamma t})^{k - q} (2k)^q = (e^{\lambda / \gamma} m)^k k^{\beta_0} (C_0 k^{\alpha_0}e^{-\gamma t} + 2k)^k,
\end{align}
where, in (1), we have used Eq.~(\ref{eq:upper_bound_I}) and in (2) we have used Eq.~(\ref{eq:perm_sum}). Finally, using $C_0 k^{\alpha_0}e^{-\gamma t} + 2k \leq k^{\max(\alpha_0, 1)}(C_0 + 2)$, the lemma statement follows.
\end{proof}
\begin{lemma}\label{lemma:a_moment_comm}
For any $k > 0, v$,
\begin{align*}
[a_v, N^k] = a_v (N^k - (N - I)^k) \text{ and } [a_v, N^k] = ((N + I)^k - N^k) a_v.
\end{align*}
Furthermore, for any $k>0, v$,
\begin{align*}
a_v^\dagger N^k a_v \preceq n_v N^k.
\end{align*}
\end{lemma}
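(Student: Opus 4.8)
The plan is to reduce all three claims to the single elementary commutation relation $[a_v, N] = a_v$, and then, for the operator inequality, to argue eigenvalue-by-eigenvalue in the Fock basis. Since $N = \sum_u n_u$ and $[a_v, n_u] = \delta_{v,u}\, a_v$, only the $u=v$ term contributes, giving $[a_v, N] = a_v$, equivalently $a_v N = (N+I)a_v$ and $N a_v = a_v(N-I)$. Thus $a_v$ shifts $N$ upward when moved to the left past a factor of $N$, and downward when moved to the right.

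A one-line induction on $k$ then upgrades this to the two ``shift'' relations $a_v N^k = (N+I)^k a_v$ and $N^k a_v = a_v (N-I)^k$: assuming $a_v N^k = (N+I)^k a_v$, multiply on the right by $N$ and apply $a_v N = (N+I) a_v$ once more to get $a_v N^{k+1} = (N+I)^{k+1} a_v$, and symmetrically for the other relation. The two commutator identities in the statement follow immediately: using the second shift relation, $[a_v, N^k] = a_v N^k - N^k a_v = a_v N^k - a_v (N-I)^k = a_v\big(N^k - (N-I)^k\big)$; using the first, $[a_v, N^k] = a_v N^k - N^k a_v = (N+I)^k a_v - N^k a_v = \big((N+I)^k - N^k\big) a_v$.

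For the inequality $a_v^\dagger N^k a_v \preceq n_v N^k$, I would apply the second shift relation to rewrite $a_v^\dagger N^k a_v = a_v^\dagger a_v (N-I)^k = n_v (N-I)^k$, the product being well defined on the dense subspace of finite-particle states since $n_v$ and $N$ commute. It then suffices to show $n_v (N-I)^k \preceq n_v N^k$. Since $n_v$ and $N$ are simultaneously diagonalized by the Fock basis, fix a common eigenvector with $n_v$-eigenvalue $n$ and $N$-eigenvalue $\nu$, both non-negative integers with $\nu \geq n$. If $n = 0$, both sides evaluate to $0$; if $n \geq 1$, then $\nu \geq 1$, so $0 \leq \nu - 1 \leq \nu$ and hence $n(\nu-1)^k \leq n\nu^k$. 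This gives the inequality on each joint eigenspace, hence everywhere.

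There is no serious obstacle here; the only point needing (mild) care is the bookkeeping with unbounded operators. One should read $\preceq$ as a quadratic-form inequality on the common domain of finite-particle-number states, and note that the sole potentially negative eigenvalue of $(N-I)^k$, namely $(-1)^k$ at $\nu = 0$, is always multiplied by $n_v = 0$ (if $N = 0$ then every $n_v = 0$), so the $n=0$ branch of the eigenvalue argument is genuinely harmless. Everything else is direct manipulation of the canonical commutation relations above.
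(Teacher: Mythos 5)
Your proof is correct and follows essentially the same route as the paper's: both derive the shift relations ($N^k a_v = a_v(N-I)^k$ and $a_v N^k = (N+I)^k a_v$) and then compare $n_v(N-I)^k$ with $n_v N^k$ term-by-term in the Fock basis, where the $N=0$ sector is harmless because $n_v$ vanishes there. The only cosmetic difference is that the paper obtains the shift relations by differentiating the identity $[a_v, e^{zN}] = a_v(e^{zN}-e^{z(N-I)})$ at $z=0$ rather than by induction on $k$.
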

\begin{proof}
We begin by noting that, for any $z$, it follows from $e^{zN} a_v e^{-zN}=e^{-z} a_v$ that
\begin{align}
[a_v, e^{z N}] = a_v e^{z N} - e^{z N} a_v = a_v (e^{zN} - e^{z(N - I)}) = (e^{z(N + I)}-e^{zN}) a_v.
\end{align}
We thus obtain that
\begin{align}
[a_v, N^k] = \frac{d^k}{dz^k} [a_v, e^{zN}] \bigg |_{z = 0} = a_v \big(N^k - (N- I)^k \big) = \big((N+ I)^k - N^k ) a_v.
\end{align}
Furthermore, for any state $\ket{\psi} = \sum_{\vec{n}} \psi_{\vec{n}} \ket{\vec{n}}$, where $\psi_{\vec{n}}$ is the amplitude of $\ket{\psi}$ on the basis state $\ket{\vec{n}} = \ket{n_1, n_2 \dots n_m}$,
\begin{align}
\bra{\psi} a_v^\dagger N^k a_v\ket{\psi} = \sum_{\vec{n}} \abs{\psi_{\vec{n}}}^2 n_v (\norm{\vec{n}}_1 - 1)^k \leq  \sum_{\vec{n}} \abs{\psi_{\vec{n}}}^2 n_v \norm{\vec{n}}_1^k = \bra{\psi} n_v N^k \ket{\psi},
\end{align}
from which it follows that $a_v^\dagger N^k a_v \preceq n_v N^k $.
\end{proof}
\noindent In the next lemma, we derive an upper bound on $\textnormal{Tr}(N^k \rho(t))$, which will be central to analyzing the Hilbert space truncation and Trotter bounds in the subsequent subsections.
\begin{lemma}[Upper bounding particle number moments]\label{lemma:particle_number_bound}
    Consider a bosonic model satisfying assumption \ref{assump:gain_loss} with the bosonic modes in an initial state $\rho(0)$ satisfying assumption \ref{assump:uniform_particle_num}, then, $\forall t\geq 0$,
    \begin{align*}
    \textnormal{Tr}(N^k \rho(t)) \leq (Cm)^k k^{\alpha k + \beta},
    \end{align*}
    where $C = e^{1 + 4\Omega^2/\gamma^2 + 2\mathcal{G}/\gamma + 4(\kappa_1 + \kappa_2)/\gamma}(C_0 + 2), \alpha = \max(\alpha_0, 1)$ and $\beta = \beta_0$ with $C_0, \alpha_0, \beta_0$ defined in assumption \ref{assump:gain_loss}.
\end{lemma}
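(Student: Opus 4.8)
The plan is to reduce the statement to a single scalar differential inequality and then invoke Lemma~\ref{lemma:particle_num_inequality}. I would set $x_k(t) = \text{Tr}(N^k \rho(t))$. The boundary data is immediate: $x_0(t) = 1$ for all $t$, and combining Assumption~\ref{assump:uniform_particle_num} with Eq.~(\ref{eq:particle_number_moments_init_state}) gives $x_k(0) \le (C_0 m)^k k^{\alpha_0 k + \beta_0}$. So the whole task is to prove that, for all $k$ and $t$, $\tfrac{d}{dt} x_k(t) \le -\gamma k\, x_k(t) + \lambda m k \sum_{q=1}^{k} 2^q \binom{k}{q} x_{k-q}(t)$ with $\gamma$ as in Assumption~\ref{assump:gain_loss} and $\lambda = \gamma + 4\Omega^2/\gamma + 2\mathcal{G} + 4(\kappa_1+\kappa_2)$; Lemma~\ref{lemma:particle_num_inequality} with $\alpha = \max(\alpha_0,1)$, $\beta = \beta_0$ then delivers exactly the claimed $C = e^{\lambda/\gamma}(C_0+2)$. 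Since $N$ and the Hamiltonian are unbounded, the identity $\tfrac{d}{dt} x_k = \text{Tr}(\mathcal{L}^\dagger(t)[N^k]\rho(t))$ should first be made rigorous by truncating each mode to $\le d$ particles (so all operators involved are bounded), deriving the inequality with $d$-independent constants, and letting $d \to \infty$; I suppress this.

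Next I would compute $\mathcal{L}^\dagger(t)[N^k]$ term by term. The particle-number--conserving parts contribute nothing: $[H_\text{g}^{\text{hop}}(t), N^k] = [H_\text{ng}(t), N^k] = 0$, and $\text{Tr}(N^k \mathcal{D}_{n_v}[\rho]) = \text{Tr}((n_v N^k n_v - n_v^2 N^k)\rho) = 0$ since $n_v$ commutes with $N$. For particle loss, the commutation identities of Lemma~\ref{lemma:a_moment_comm} give $a_v^\dagger N^k a_v = n_v(N-I)^k$, so $\sum_v \mathcal{D}_{a_v}^\dagger[N^k] = N\big((N-I)^k - N^k\big) \preceq -k N^k + k^2 N^{k-1}$, contributing at most $-\kappa_1 k\, x_k + \kappa_1 k^2 x_{k-1}$. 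For particle gain, likewise $a_v N^k a_v^\dagger = (n_v+I)(N+I)^k$, so $\sum_v \mathcal{D}_{a_v^\dagger}^\dagger[N^k] = (N+mI)\big((N+I)^k - N^k\big) \preceq k(N+I)^k + m k (N+I)^{k-1}$, contributing at most $+\kappa_2 k\, x_k + \kappa_2 k\sum_{q\ge1}\binom{k}{q}x_{k-q} + \kappa_2 m k\sum_{q\ge1}\binom{k-1}{q-1}x_{k-q}$; the extra factor $m$ here comes from $\sum_v(n_v+I) = N+mI$. All of these ``lower-order'' contributions are crudely dominated by $(\kappa_1+\kappa_2)\,m k\sum_{q=1}^k 2^q\binom{k}{q}x_{k-q}$ after relabelling binomial coefficients.

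The squeezing term is the crux. Writing $H_\text{g}^{\text{sq}}(t) = X_0(t) + X_0^\dagger(t)$ with $X_0(t) = \sum_{v,v'}\mathcal{G}_{v,v'}(t) a_v a_{v'}$, and using $N^k a_v a_{v'} = a_v a_{v'}(N-2I)^k$, one checks that $\text{Tr}\big(i[H_\text{g}^{\text{sq}}(t), N^k]\rho\big) = -2\,\text{Im}\,\text{Tr}\big(X_0(t) M \rho\big) \le 2\,\big|\text{Tr}(X_0(t) M \rho)\big|$ with $M = N^k - (N-2I)^k$. Since $X_0$ annihilates the subspaces of $0$ and $1$ particles, $X_0 M = X_0 M_+$ with $M_+ := M\,\Pi_{\ge 2} \succeq 0$ and $M_+ \preceq 2k N^{k-1}$, so the operator Cauchy--Schwarz inequality gives $|\text{Tr}(X_0 M\rho)|^2 \le \text{Tr}(X_0 M_+ X_0^\dagger \rho)\,\text{Tr}(M_+\rho)$; moving $M_+$ to the right of $X_0^\dagger$ (it shifts $N \to N+2I$) and bounding $X_0 X_0^\dagger$ via $\sum_{v,v'}|\mathcal{G}_{v,v'}| \le m\mathcal{G}$ together with $a_v^\dagger a_{v'}^\dagger a_{v'} a_v \preceq (n_v+I)(n_{v'}+I)$ reduces matters to moments of $N$. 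The delicate point is that this estimate must not leak into $x_{k+1}$ --- a careless Cauchy--Schwarz gives only a $\sqrt{x_{k-1}x_{k+1}}$-type bound, which by log-convexity of the moments is $\ge x_k$ and hence useless --- so one has to exploit the Gaussian character of the squeezing flow (via the positivity structure $(c_v^\alpha)^2 \preceq 2n_v + I$, i.e.\ $N + m/2 = \tfrac12\sum_{v,\alpha}(c_v^\alpha)^2$), exactly as in Ref.~\cite{kuwahara2024bosons_liebrobinson}, to obtain the clean bound $\text{Tr}\big(i[H_\text{g}^{\text{sq}}(t),N^k]\rho\big) \le 2\mathcal{G}k\, x_k + (\text{l.o.t.})$. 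The coefficient $2\mathcal{G}$ is precisely why Assumption~\ref{assump:gain_loss} reads $\kappa_1 - \kappa_2 - 2\mathcal{G} > 0$: combining loss, gain and squeezing, the coefficient of $x_k$ is at most $(-\kappa_1 + \kappa_2 + 2\mathcal{G})k = -2\gamma k$, so we have acquired a slack of $-2\gamma k\, x_k$ while needing only $-\gamma k\, x_k$. Finally, the displacement term $-i[H_\text{g}^{\text{disp}}(t),\cdot]$ produces only off-diagonal contributions $2\,\text{Re}\sum_v i\mathcal{D}_v(t)\,\text{Tr}\big(a_v(N^k-(N-I)^k)\rho\big)$; bounding each $|\text{Tr}(a_v N^{k-q}\rho)|$ by operator Cauchy--Schwarz (using $[a_v,N^{k-q}]$ from Lemma~\ref{lemma:a_moment_comm}) and $|\mathcal{D}_v(t)| \le \Omega/\sqrt2$ yields terms of the shape $\Omega m \sqrt{x_k}\,(\text{l.o.t.})^{1/2}$, and the weighted AM--GM inequality $\sqrt{ab}\le\tfrac{\epsilon}{2}a+\tfrac{1}{2\epsilon}b$ with $\epsilon$ tuned so that the $x_k$-coefficient they generate is at most $\gamma k$ turns these into $\le \gamma k\, x_k + (4\Omega^2/\gamma)\,m k\sum_{q\ge1}2^q\binom{k}{q}x_{k-q}$ --- the source of the $\Omega^2/\gamma$ in the final constant. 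Summing the four contributions gives the differential inequality above, with the coefficient of $x_k$ at most $(-2\gamma+\gamma)k = -\gamma k$ and the remainder contained in $\lambda m k\sum_{q=1}^k 2^q\binom{k}{q}x_{k-q}$, and Lemma~\ref{lemma:particle_num_inequality} closes the argument.

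The main obstacle is the squeezing estimate: one must obtain $\text{Tr}(i[H_\text{g}^{\text{sq}},N^k]\rho) \le 2\mathcal{G}k\, x_k + (\text{terms of strictly lower moment index})$ rather than a $\sqrt{x_{k-1}x_{k+1}}$-type bound, which falls outside the scope of Lemma~\ref{lemma:particle_num_inequality}; resolving this --- and, secondarily, carrying out the $d\to\infty$ truncation rigorously --- is where the real work lies, with everything else reducing to bookkeeping with binomial sums and Lemma~\ref{lemma:a_moment_comm}.
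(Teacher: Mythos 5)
Your overall strategy is exactly the paper's: set $x_k(t)=\mathrm{Tr}(N^k\rho(t))$, derive a differential inequality of the form required by Lemma~\ref{lemma:particle_num_inequality}, and read off $C=e^{\lambda/\gamma}(C_0+2)$. Your treatment of the number-conserving terms, of the loss/gain dissipators via Lemma~\ref{lemma:a_moment_comm}, and of the displacement term (Cauchy--Schwarz with a $\gamma$-weighted AM--GM, producing the slack term $\tfrac{\gamma}{2}k\,x_k$ and the $\Omega^2/\gamma$ contribution to $\lambda$) all match the paper, up to an immaterial factor-of-two rescaling of $(\gamma,\lambda)$.

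However, there is a genuine gap at the squeezing term, and you have in effect conceded it: you correctly diagnose that a na\"ive Cauchy--Schwarz yields a $\sqrt{x_{k-1}x_{k+1}}$-type bound that cannot be fed into Lemma~\ref{lemma:particle_num_inequality}, but the specific inequality you then write down, $\smallabs{\mathrm{Tr}(X_0M\rho)}^2\le \mathrm{Tr}(X_0M_+X_0^\dagger\rho)\,\mathrm{Tr}(M_+\rho)$, is precisely of that failing type (the factor $X_0M_+X_0^\dagger$ carries moments of order $k+1$), and the promised rescue via ``the Gaussian character of the squeezing flow'' is never carried out. The paper's resolution is considerably more elementary and requires no Wick-type structure: expand the commutator termwise as $[N^k,a_va_u]=-2\sum_{l\ge0}\binom{k}{2l+1}a_vN^{k-2l-1}a_u$ and apply Cauchy--Schwarz plus AM--GM to each cross term in the symmetrized form $\smallabs{\expect{a_vN^{j}a_u}}\le\tfrac12\big(\expect{a_vN^{j}a_v^\dagger}+\expect{a_u^\dagger N^{j}a_u}\big)$ with $j=k-2l-1\le k-1$. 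Commuting $N^{j}$ through $a_v^\dagger$ (resp.\ $a_u$) via Lemma~\ref{lemma:a_moment_comm} raises the degree only back to $j+1\le k$, never to $k+1$; summing with $\sum_u\smallabs{\mathcal{G}_{v,u}(t)}\le\mathcal{G}$ then yields the leading term $2\mathcal{G}k\expect{N^k}_t$ plus lower-moment remainders of the required $\lambda mk\sum_q 2^q\binom{k}{q}x_{k-q}$ shape. This is the one step you identify as ``where the real work lies,'' and it is missing from your argument; the rest is correct bookkeeping.
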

\begin{proof}
    We will use the Heisenberg equations of motion for the operator $N^k$. Note that $[N^k, H_\text{ng}(t)] = 0, [N^k, H_\text{g}^\text{hop}(t)] = 0$ and $\mathcal{D}_{n_v}^\dagger(N^k) = 0$ (where $H_\text{g}^\text{hop}(t)$ is defined in Eq.~(\ref{eq:H_g_bosonic_model})). Using notation $\expect{O}_t = \text{Tr}(O \rho(t))$, we then have that
    \begin{align}\label{eq:particle_number_decomp}
    \frac{d}{dt}\expect{N^k}_t = \underbrace{  \sum_v \big(\kappa_1 \expect{\mathcal{D}_{a_{v}}^\dagger(N^k)}_t + \kappa_2\expect{\mathcal{D}_{a_{v}^\dagger}^\dagger(N^k)}_t \big)}_{\expect{\mathcal{L}_\text{n}^\dagger(N^k)}_t} - i\expect{[N^k, H_\text{g}^\text{sq}(t)]}_t  - i\expect{[N^k, H_\text{g}^\text{disp}(t)]}_t.
    \end{align}
    Consider first $\mathcal{D}_{a_v}^\dagger(N^k), \mathcal{D}_{a_v^\dagger}^\dagger(N^k)$---using Lemma~\ref{lemma:a_moment_comm}, we obtain that
\begin{align}
&\sum_v \mathcal{D}_{a_v}^\dagger(N^k) = - \sum_v a_v^\dagger [a_v, N^k] = -N \big(N^k - (N - I)^k\big) = -kN^k + \sum_{l \geq 1} (-1)^{l} {k \choose l + 1} N^{k - l}, \nonumber\\
&\sum_v \mathcal{D}_{a_v^\dagger}^\dagger(N^k) = \sum_v [a_v, N^k] a_v^\dagger = \big((N + I)^k - N^k \big) (N + I) = kN^k + \sum_{l \geq 1} {k + 1 \choose l + 1} N^{k - l}.
\end{align}
Therefore,
\begin{align}\label{eq:derivative_noise}
\expect{\mathcal{L}_\text{n}^\dagger(N^k)}_t &= -(\kappa_{1} - \kappa_2) k \expect{N^k}_t + \sum_{l = 1}^{k} \bigg(\kappa_1 (-1)^l {k \choose l + 1} + \kappa_2 {k + 1\choose l + 1} \bigg) \expect{N^{k - l}}_t\nonumber\\
&\numleq{1} -(\kappa_1 - \kappa_2) k \expect{N^k}_t  + \sum_{l = 1}^k \big(\kappa_1 k + \kappa_2 (k + 1)\big) {k \choose l} \expect{N^{k - l}}_t \nonumber \\
&\leq  - (\kappa_1 - \kappa_2) k \expect{N^k}_t + 2 (\kappa_1 + \kappa_2) km\sum_{l = 1}^k 2^l {k \choose l} \expect{N^{k - l}}_t, 
\end{align}
where we implicitly set ${k \choose l} = 0$ if $l < 0$ or $l > k$ and in (1) we have used the fact that ${k \choose l + 1} \leq k {k \choose l}, {k + 1 \choose l + 1} \leq (k + 1) {k \choose l}$. 

Next, consider $[N^k, H_\text{g}^{\text{sq}}(t)] = \sum_{v, u} \mathcal{G}_{v, u}(t) [N^k, a_v a_u]  - \text{h.c.}$ --- we begin by noting that from Lemma \ref{lemma:a_moment_comm}
\begin{align}\label{eq:comm_squeezing}
[N^k, a_v a_u] = -[a_v, N^k] a_u - a_v [a_u, N^k] = -2\sum_{l\geq 0}  {k \choose 2l + 1} a_v N^{k - 2l - 1} a_u ,
\end{align}
and therefore
\begin{align}
\abs{\expect{[N^k, H_\text{g}^{\text{sq}}(t)]}_t} &\leq 2\sum_{v, u} \smallabs{\mathcal{G}_{v, u}(t)} \smallabs{\expect{[N^k, a_v a_u]}_t} \nonumber\\
&\numleq{1} 2\sum_{v, u} \sum_{l\geq 0} \abs{\mathcal{G}_{v, u}(t)} {k \choose 2l + 1}\smallabs{\expect{a_v N^{k - 2l - 1}a_u}_t} \nonumber \\
&\numleq{2} \sum_{v, u} \sum_{l\geq 0} \abs{\mathcal{G}_{v, u}(t)} {k \choose 2l + 1}\big({\smallabs{\expect{a_v N^{k - 2l - 1}a_v^\dagger}_t} +  \smallabs{\expect{\text{Tr}( a_u^\dagger N^{k - 2l - 1}a_u}_t}}\big) \nonumber \\
&\leq \mathcal{G}\sum_u \sum_{l \geq 0}  {k \choose 2l + 1} \big(\expect{a_u N^{k - 2l - 1}a_u^\dagger}_t +  \expect{a_u^\dagger N^{k - 2l - 1}a_u }_t\big)\nonumber \\
&\numleq{3} \mathcal{G} \sum_u \sum_{l \geq 0} {k \choose 2l + 1}\big(\expect{(N + I)^{k - 2l - 1} a_u a_u^\dagger}_t +  \expect{(N - I)^{k - 2l - 1} a_u^\dagger a_u}_t\big) \nonumber \\
&\leq 2\mathcal{G}  \sum_{l, p \geq 0} {k \choose 2l + 1} {k - 2l - 1 \choose 2p} \expect{N^{k - 2l - 2p}}_t + 2\mathcal{G} m \sum_{l, p \geq 0} {k \choose 2l + 1} {k- 2l - 1 \choose p} \expect{N^{k - 2l - p - 1}}_t.
\end{align}
where, in (1), we have used Eq.~(\ref{eq:comm_squeezing}), in (2) we have used the fact that, for any two operators $A, B$, $\expect{AB} \leq \sqrt{\expect{AA^\dagger}\expect{B^\dagger B}}\leq (\expect{AA^\dagger} + \expect{B^\dagger B})/2$ and in (3) we have used Lemma \ref{lemma:a_moment_comm}. We can thus conclude that
\begin{align}
\abs{\expect{[N^k, H_\text{g}^\text{sq}(t)]}_t} \leq 2\mathcal{G} k \expect{N^k}_t + 2\mathcal{G} \sum_{q \geq 1} f^{(k)}_q \expect{N^{k - q}}_t, 
\end{align}
where
\begin{align}
f_q^{(k)} = \begin{cases}
m \sum_{l\geq 0} {k \choose 2l + 1} {k - (2l + 1) \choose q - (2l + 1)} & \text{ if } q \in \{1, 3, 5 \dots \}, \\
m \sum_{l\geq 0} {k \choose 2l + 1} {k - (2l + 1) \choose q - (2l + 1)} + \sum_{l \geq 0} {k \choose 2l + 1} { k - (2l + 1) \choose q - 2l } & \text{ if } q\in \{2, 4, 6 \dots \}.
\end{cases}
\end{align}
The expression for $f_q^{(k)}$ can be further simplified by noting that
\begin{align}
\sum_{l\geq 0} {k \choose 2l + 1} {k - (2l + 1) \choose q - (2l + 1)} &= \sum_{l \geq 0} \frac{k!}{(2l + 1)! (k - q)! (q - (2l + 1))! } ={k \choose q} \sum_{l \geq 0} {q \choose 2l + 1} = 2^{q - 1}{k \choose q},
\end{align}
and
\begin{align}
\sum_{l\geq 0} {k \choose 2l + 1} {k - (2l + 1) \choose q - 2l} &= \sum_{l \geq 0} \frac{k!}{(2l + 1)! (k - q - 1)! (q - 2l)! } ={k \choose q + 1} \sum_{l \geq 0} {q + 1 \choose 2l + 1} = 2^{q}{k \choose q + 1}.
\end{align}
We then obtain that
\begin{align}
f_q^{(k)} = \begin{cases}
2^{q - 1}m {k \choose q} & \text{ if } q \in \{1, 3, 5 \dots \}, \\
2^{q - 1}m {k \choose q}  + 2^q {k \choose q + 1} & \text{ if } q \in \{2, 4, 6 \dots\}.
\end{cases}
\end{align}
Again, we note that, since ${k \choose q + 1} \leq k {k \choose q}$, it follows that $f_q^{(k)} \leq 2^{q - 1}(m + 2k) {k \choose q} \leq mk 2^q {k \choose q}$, and thus we obtain that
\begin{align}\label{eq:derivative_squeezing}
    \abs{\expect{[N^k, H_\text{g}^\text{sq}(t)]}_t} \leq 2\mathcal{G}k \expect{N^k}_t + mk\mathcal{G} \sum_{q \geq 1} {k \choose q}2^q \expect{N^{k - q}}_t.
\end{align}

Finally, we consider $[N^k, H_\text{g}^\text{disp}(t)] = \sum_v \mathcal{D}_v(t) [N^k, a_v] - \text{h.c.}$---we begin by noting that, from Lemma \ref{lemma:a_moment_comm},
\begin{align}
[N^k, a_v] = ((N + I)^k - N^k) = \sum_{q \geq 1}  {k \choose q} N^{k - q} a_v,
\end{align} 
and therefore
\begin{align}
\abs{\expect{[N^k, H_2(t)]}_t} &\leq 2\Omega \sum_v \sum_{q \geq 1} {k \choose q} \abs{\expect{ N^{k - q} a_v}_t} \nonumber\\
&\numleq{1}  \sum_v \sum_{q\geq 1} {k \choose q} \sqrt{\gamma \expect{a_v^\dagger N^{k - q} a_v}_t\times  \frac{4\Omega^2}{\gamma}\expect{N^{k - q}}_t} \nonumber \\
&\leq  \sum_v \sum_{q \geq 1} {k \choose q} \bigg(\frac{\gamma}{2} \expect{a_v^\dagger N^{k - q} a_v}_t + \frac{2\Omega^2}{\gamma}\expect{N^{k - q} }_t\bigg) \nonumber \\
&\numleq{2} \sum_v \sum_{q \geq 1} {k \choose q} \bigg(\frac{\gamma}{2} \expect{N^{k - q} n_v}_t + \frac{2\Omega^2}{\gamma}\expect{N^{k - q} }_t\bigg) \nonumber \\
&\leq \frac{\gamma}{2} k  \expect{N^k}_t + \sum_{q \geq 1} \bigg( \frac{\gamma}{2} {k \choose q + 1} + \frac{2m\Omega^2}{\gamma}{k \choose q} \bigg) \expect{N^{k - q}}_t \nonumber\\
&\numleq{3}\frac{\gamma}{2} k \expect{N^k}_t + \sum_{q\geq 1} \bigg( \frac{\gamma}{2}k + \frac{2m \Omega^2}{\gamma}\bigg){k \choose q} \expect{N^{k - q}}_t,
\end{align}
where, in (1), we have again used that $\expect{AB}\leq \sqrt{\expect{AA^\dagger}\expect{B^\dagger B}}$ and introduced the parameter $\gamma = \kappa_1 - \kappa_2 -2\mathcal{G}$ from assumption \ref{assump:gain_loss}, in (2) we have used Lemma \ref{lemma:a_moment_comm} to obtain that $\expect{a_v^\dagger N^{k - q} a_v} \leq \expect{N^{k - q} n_v}$, and in (3) we have used the fact that ${k \choose q + 1} \leq k {k \choose q}$. Setting $k, m \leq km$, we obtain that
\begin{align}\label{eq:derivative_displacement}
    \abs{\expect{[N^k, H_\text{g}^\text{disp}(t)]}_t} \leq \frac{\gamma}{2} k\expect{N^k}_t + mk\bigg(\frac{\gamma}{2} + \frac{2\Omega^2}{\gamma}\bigg)\sum_{q\geq 1}{k \choose q} \expect{N^{k - q}}_t.
\end{align}
Combining Eq.~(\ref{eq:particle_number_decomp}) with Eqs.~(\ref{eq:derivative_noise}, \ref{eq:derivative_squeezing}, \ref{eq:derivative_displacement}), we obtain that
\begin{align}
    \frac{d}{dt}\expect{N^k}_t \leq -\frac{\gamma }{2}k \expect{N^k}_t + \lambda k m \sum_{q\geq 1}\bigg(\frac{k}{q}\bigg) \expect{N^{k - q}}_t,
\end{align}
where $\lambda = \gamma/2 +2\Omega^2 / \gamma + \mathcal{G}+ 2(\kappa_1 + \kappa_2)$. Then, solving this inequality using Lemma \ref{lemma:particle_num_inequality}, we obtain the lemma statement.
\end{proof}
\noindent Combining this lemma with Lemma \ref{lemma:chernoff}, we straightforwardly obtain the following lemma upper bounding the probability of large number of excitations at any time in the bosonic model.
\begin{lemma}\label{lemma:chernoff_model}
Suppose $\Pi_{\geq d}$ is a projector on the subspace with $\geq d$ particles and the bosonic model satisfies assumptions \ref{assump:uniform_particle_num} and \ref{assump:gain_loss}, then for any $t \geq 0$,
\begin{align*}
    \textnormal{Tr}(\Pi_{\geq d}\rho(t)) \leq e \bigg(\frac{d}{d_0 m}\bigg)^{k_0} \exp\bigg(-\bigg(\frac{d}{d_0 m}\bigg)^{1/\alpha}\bigg),
\end{align*}
where $d_0 = eC$, $k_0 = \beta / \alpha$ with $C, \alpha, \beta$ being defined in Lemma \ref{lemma:particle_number_bound}.
\end{lemma}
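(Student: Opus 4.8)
The plan is to read this off directly from Lemmas \ref{lemma:particle_number_bound} and \ref{lemma:chernoff}: the former controls all moments of the total number operator $N$ along the trajectory $\rho(t)$, and the latter converts such a moment bound into a Chernoff-type tail bound. So the proof will consist of three short steps.

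First, I would invoke Lemma \ref{lemma:particle_number_bound}. Since the bosonic model is assumed to satisfy Assumptions \ref{assump:uniform_particle_num} and \ref{assump:gain_loss}, that lemma gives, for every $t \ge 0$ and every $k \in \{1, 2, 3, \dots\}$, the bound $\textnormal{Tr}(N^k \rho(t)) \le (Cm)^k k^{\alpha k + \beta}$ with $C = e^{1 + 4\Omega^2/\gamma^2 + 2\mathcal{G}/\gamma + 4(\kappa_1 + \kappa_2)/\gamma}(C_0 + 2)$, $\alpha = \max(\alpha_0, 1)$, and $\beta = \beta_0$.

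Second, I would observe that this is exactly the hypothesis of Lemma \ref{lemma:chernoff} with these same constants $C, \alpha, \beta$. Applying that lemma to the state $\rho(t)$ at each fixed $t$ yields
\[
\textnormal{Tr}(\Pi_{\ge d}\rho(t)) \le \Big(\frac{d\, e^{\alpha/\beta}}{Cme}\Big)^{\beta/\alpha} e^{-(d/Cme)^{1/\alpha}}.
\]
Third, I would rewrite this in the stated normalization. Setting $d_0 = eC$ gives $Cme = d_0 m$, so the exponential factor becomes $\exp\!\big(-(d/d_0 m)^{1/\alpha}\big)$; and setting $k_0 = \beta/\alpha$, the prefactor becomes $(d/d_0 m)^{k_0}\,(e^{\alpha/\beta})^{\beta/\alpha} = e\,(d/d_0 m)^{k_0}$, where I have used the identity $(e^{\alpha/\beta})^{\beta/\alpha} = e$. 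Combining the two factors gives precisely the claimed inequality. The only thing requiring care is the bookkeeping of constants passed from Lemma \ref{lemma:particle_number_bound} into Lemma \ref{lemma:chernoff} together with the exponent identity producing the leading factor $e$; there is no substantive obstacle, as the statement is essentially a corollary.
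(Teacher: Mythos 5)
Your proposal is correct and matches the paper exactly: the paper presents Lemma \ref{lemma:chernoff_model} as an immediate corollary obtained by feeding the moment bound of Lemma \ref{lemma:particle_number_bound} into the tail bound of Lemma \ref{lemma:chernoff}, which is precisely your argument, and your constant bookkeeping (including the identity $(e^{\alpha/\beta})^{\beta/\alpha}=e$ producing the leading factor $e$) is accurate.
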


\subsection{Proof of Theorem 2 (bosons)}
\noindent The proof of Theorem 2 has three main parts:
\begin{enumerate}
    \item[(1)] Truncation of the Hilbert space of the bosonic model to a finite-dimensional space and an analysis of the truncation error (Lemma \ref{lemma:truncation_error_bosonic}).
    \item[(2)] First-order Trotterization of the truncated finite-dimensional model (Lemma \ref{lemma:Trotter_bound_bosons}).
    \item[(3)] Analysis of each Trotter step to establish high-noise separability (Lemma \ref{lemma:separability_bosons}).
\end{enumerate}

\emph{Truncation of the bosonic model}. Suppose we want to truncate the local Hilbert space of each bosonic mode to $d$ levels---we will denote by $\mathcal{H}_{\leq d}$ the Hilbert space of the bosonic model with each bosonic mode truncated to at most $d$ particles. For the $v^\text{th}$ bosonic mode, we will define the projectors $\Pi_{v, d}, \Pi_{v, \leq d}$, and $\Pi_{v, > d}$ via
\begin{align}
\Pi_{v, d} = \ket{d}\!\bra{d}, \Pi_{v, \leq d} = \sum_{j = 0}^{d}\Pi_{v, j} \text{, and }\Pi_{v, >d} = \sum_{j = d + 1}^\infty \Pi_{v, j}.
\end{align}
We will define the projector $\Pi_{\leq d} = \otimes_{v} \Pi_{v, \leq d}$, which will be the projector onto $\mathcal{H}_{\leq d}$. The truncated model will be described by a Lindbladian $\mathcal{L}_{\leq d}(t)$ while
\begin{subequations}
    \begin{align}
        \mathcal{L}_{\leq d} = -i[H_{\leq d},\ \cdot \ ] +  \sum_{l = 1}^3 \sum_v \kappa_l \mathcal{D}_{L_{v, \leq d}^{(l)}},
    \end{align}
where
\begin{align}
    &H_{\leq d}(t) = \Pi_{\leq d} H(t) \Pi_{\leq d}, \nonumber\\
    &L_{v, \leq d}^{(1)} = a_{v, \leq d} = \Pi_{v, \leq d} a_v \Pi_{v, \leq d},\nonumber\\
    &L_{v, \leq d}^{(2)} = a_{v, \leq d}^\dagger=  \Pi_{v, \leq d} a_v^\dagger \Pi_{v, \leq d}, \nonumber\\
    &L_{v, \leq d}^{(3)} = n_{v, \leq d} =  \Pi_{v, \leq d} n_v \Pi_{v, \leq d}.
\end{align}
\end{subequations}
It will be convenient to define super-operators $\mathcal{P}_{\leq d}$ and $\mathcal{Q}_{\leq d}$ via
\begin{align}
    \mathcal{P}_{\leq d}(\rho) = \Pi_{\leq d} \rho \Pi_{\leq d} \text{ and }\mathcal{Q}_{\leq d} = \text{id} - \mathcal{P}_{\leq d}.
\end{align}
The super-operator $\mathcal{P}_{\leq d}$ projects an input density matrix onto  $\mathcal{H}_{\leq d}$. We first present a lemma that quantifies the error between the state $\rho(t)$ at time $t$ and the state obtained from the truncated evolution: $\rho_{\leq d}(t) = \mathcal{T}\exp(\int_0^t \mathcal{L}_{\leq d}(\tau)d\tau)(\mathcal{P}_{\leq d} \rho(0))$.
\begin{lemma}\label{lemma:adiabatic_elim}
    For any $d > 0$, it follows that
    \begin{align}
    &\norm{\rho(t) - \mathcal{T}\exp\bigg(\int_0^t \mathcal{L}_{\leq d}(\tau)d\tau\bigg)(\mathcal{P}_{\leq d} \rho(0))}_1 \nonumber\\
    &\qquad \qquad \leq \norm{\mathcal{Q}_{\leq d}\rho(t)}_1 + (d + 1)\sum_v \int_0^t \norm{\Pi_{v, d}\rho(s)}_1 ds + \int_0^t \norm{\mathcal{P}_{\leq d}\mathcal{L}(s) \mathcal{Q}_{\leq d}}_{\diamond} \norm{\mathcal{Q}_{\leq d}\rho(s)}_1 ds.\nonumber
    \end{align}
\end{lemma}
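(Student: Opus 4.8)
The plan is a Duhamel (interpolation) argument comparing the exact evolution with the truncated one, where the only real content is identifying precisely where $\mathcal{L}_{\leq d}$ fails to reproduce $\mathcal{P}_{\leq d}\mathcal{L}$ on operators supported in $\mathcal{H}_{\leq d}$. First I would record two structural facts. (i) Because $H_{\leq d}(t)=\Pi_{\leq d}H(t)\Pi_{\leq d}$ and each $L_{v,\leq d}^{(l)}$ is sandwiched between projectors onto $\mathcal{H}_{\leq d}$, the superoperator $\mathcal{L}_{\leq d}(t)$ maps operators supported on $\mathcal{H}_{\leq d}$ to operators supported on $\mathcal{H}_{\leq d}$; hence $\rho_{\leq d}(t):=\mathcal{T}\exp(\int_0^t\mathcal{L}_{\leq d}(\tau)d\tau)(\mathcal{P}_{\leq d}\rho(0))$ satisfies $\mathcal{P}_{\leq d}\rho_{\leq d}(t)=\rho_{\leq d}(t)$, so the triangle inequality gives $\smallnorm{\rho(t)-\rho_{\leq d}(t)}_1\leq\smallnorm{\mathcal{Q}_{\leq d}\rho(t)}_1+\smallnorm{\mathcal{P}_{\leq d}\rho(t)-\rho_{\leq d}(t)}_1$, isolating the first term of the claimed bound. (ii) $\mathcal{L}_{\leq d}(t)$ is in Lindblad form, so it generates a CPTP propagator $\Phi_{\leq d}(t,s)$ with $\smallnorm{\Phi_{\leq d}(t,s)}_\diamond\leq 1$.

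Next I would bound $\smallnorm{\mathcal{P}_{\leq d}\rho(t)-\rho_{\leq d}(t)}_1$ via the telescoping identity
\[
\mathcal{P}_{\leq d}\rho(t)-\rho_{\leq d}(t)=\int_0^t\partial_s\big[\Phi_{\leq d}(t,s)\,\mathcal{P}_{\leq d}\rho(s)\big]ds=\int_0^t\Phi_{\leq d}(t,s)\big(\mathcal{P}_{\leq d}\mathcal{L}(s)-\mathcal{L}_{\leq d}(s)\mathcal{P}_{\leq d}\big)\rho(s)\,ds,
\]
which follows from $\partial_s\Phi_{\leq d}(t,s)=-\Phi_{\leq d}(t,s)\mathcal{L}_{\leq d}(s)$, $\dot\rho(s)=\mathcal{L}(s)\rho(s)$, and the fact that the values of $\Phi_{\leq d}(t,s)\mathcal{P}_{\leq d}\rho(s)$ at $s=t$ and $s=0$ are $\mathcal{P}_{\leq d}\rho(t)$ and $\rho_{\leq d}(t)$ respectively. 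Splitting $\rho(s)=\mathcal{P}_{\leq d}\rho(s)+\mathcal{Q}_{\leq d}\rho(s)$ and using $\smallnorm{\Phi_{\leq d}(t,s)}_\diamond\leq 1$ together with $\mathcal{Q}_{\leq d}^2=\mathcal{Q}_{\leq d}$ gives
\[
\smallnorm{\mathcal{P}_{\leq d}\rho(t)-\rho_{\leq d}(t)}_1\leq\int_0^t\smallnorm{(\mathcal{P}_{\leq d}\mathcal{L}(s)-\mathcal{L}_{\leq d}(s))\mathcal{P}_{\leq d}\rho(s)}_1\,ds+\int_0^t\smallnorm{\mathcal{P}_{\leq d}\mathcal{L}(s)\mathcal{Q}_{\leq d}}_\diamond\,\smallnorm{\mathcal{Q}_{\leq d}\rho(s)}_1\,ds,
\]
which already produces the third term of the claimed bound.

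It then remains to bound the first integrand, and this is where essentially all the work sits. Acting on any $\sigma$ supported on $\mathcal{H}_{\leq d}$, I would verify term by term that $\mathcal{P}_{\leq d}\mathcal{L}(s)\sigma=\mathcal{L}_{\leq d}(s)\sigma$ for the Hamiltonian part (since $\Pi_{\leq d}H(s)\sigma\Pi_{\leq d}=H_{\leq d}(s)\sigma$ when $\sigma=\Pi_{\leq d}\sigma\Pi_{\leq d}$, and likewise for the right multiplication), for dephasing (since $n_v$ commutes with $\Pi_{\leq d}$, so $\mathcal{P}_{\leq d}\mathcal{D}_{n_v}(\sigma)=\mathcal{D}_{n_{v,\leq d}}(\sigma)$), and for incoherent loss (one checks $a_{v,\leq d}^\dagger a_{v,\leq d}=n_{v,\leq d}$, so both the jump and the damping parts of $\mathcal{D}_{a_v}$ survive projection unchanged). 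The incoherent-gain dissipator is the only one with a boundary defect, because $a_v^\dagger$ maps the $d$-particle sector of a mode out of $\mathcal{H}_{\leq d}$; a short computation gives $a_{v,\leq d}a_{v,\leq d}^\dagger=n_{v,\leq d}+\Pi_{v,\leq d}-(d+1)\Pi_{v,d}$, so that $\big(\mathcal{P}_{\leq d}\mathcal{L}(s)-\mathcal{L}_{\leq d}(s)\big)\sigma=-\tfrac{\kappa_2(d+1)}{2}\sum_v\{\Pi_{v,d},\sigma\}$. Taking $\sigma=\mathcal{P}_{\leq d}\rho(s)$, bounding $\smallnorm{\{\Pi_{v,d},\sigma\}}_1\leq 2\smallnorm{\Pi_{v,d}\sigma}_1$ and then $\smallnorm{\Pi_{v,d}\mathcal{P}_{\leq d}\rho(s)}_1\leq\smallnorm{\Pi_{v,d}\rho(s)}_1$ by Hölder (using that $\Pi_{v,d}$ commutes with $\Pi_{\leq d}$), the first integrand is at most $\kappa_2(d+1)\sum_v\smallnorm{\Pi_{v,d}\rho(s)}_1$, i.e.\ the claimed second term up to the overall constant $\kappa_2$ (which can be absorbed).

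I expect the main obstacle to be exactly this bookkeeping step: showing that the Hamiltonian, loss, and dephasing parts contribute no defect and getting the gain-dissipator defect operator $n_{v,\leq d}+\Pi_{v,\leq d}-(d+1)\Pi_{v,d}$ right. A secondary point is to make sure every manipulation above (the Duhamel identity, the object $\mathcal{P}_{\leq d}\mathcal{L}(s)\rho(s)$) is legitimate despite $\mathcal{L}(s)$ containing unbounded operators; this is harmless here because $\mathcal{P}_{\leq d}$ reduces everything to finite-rank operators on $\mathcal{H}_{\leq d}$, and $\Phi_{\leq d}$ is generated by a bounded Lindbladian on that finite-dimensional space.
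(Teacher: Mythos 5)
Your proposal is correct and follows essentially the same route as the paper: integrate the equation of motion for $\mathcal{P}_{\leq d}\rho(t)$ against the contractive truncated propagator (your Duhamel identity is exactly the paper's integrated form), and identify the only defect of $\mathcal{P}_{\leq d}\mathcal{L}\mathcal{P}_{\leq d}$ versus $\mathcal{L}_{\leq d}$ as the boundary term $-\tfrac{d+1}{2}\{\Pi_{v,d},\cdot\}$ coming from the incoherent-gain dissipator, with the Hamiltonian, loss, and dephasing parts contributing nothing. Your observation that this defect carries a prefactor $\kappa_2$ (which the lemma statement omits but which is harmlessly absorbed into the constants of the subsequent Lemma \ref{lemma:truncation_error_bosonic}) is accurate.
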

\begin{proof}
Using $\mathcal{P}_{\leq d} + \mathcal{Q}_{\leq d} = \text{id}$ together with the master equation ($d\rho(t) / dt = \mathcal{L}(t) \rho(t)$), we obtain that
\begin{subequations}
\begin{align}
&\frac{d}{dt}\mathcal{P}_{\leq d} \rho(t) = \mathcal{P}_{\leq d} \mathcal{L}(t) \mathcal{P}_{\leq d} \rho(t) + \mathcal{P}_{\leq d} \mathcal{L}(t) \mathcal{Q}_{\leq d} \rho(t),\label{eq:adiabatic_eq_P}\\
&\frac{d}{dt}\mathcal{Q}_{\leq d} \rho(t) = \mathcal{Q}_{\leq d} \mathcal{L}(t)  \mathcal{P}_{\leq d} \rho(t) + \mathcal{Q}_{\leq d} \mathcal{L}(t) \mathcal{Q}_{\leq d} \rho(t).
\end{align}
\end{subequations}
Furthermore, we note that, for any operator $X$ that is supported on the truncated subspace $\mathcal{H}_d$ (i.e. $X = \Pi_{\leq d} X \Pi_{\leq d}$), and defining $H_{\leq d}(t) = \Pi_{\leq d} H(t) \Pi_{\leq d}$, we have
\begin{align}
\mathcal{P}_{\leq d} \mathcal{L} \mathcal{P}_{\leq d} (X) &= -i [H_{\leq d}(t), X] +\Pi_{\leq d} \mathcal{L}_\text{n}  \Pi_{\leq d} (X),
\end{align}
\begin{subequations}
\begin{align}
\mathcal{P}_{\leq d} \mathcal{D}_{a_v} \mathcal{P}_{\leq d}(X) &= \Pi_{v, \leq d}a_v \Pi_{v, \leq d} X \Pi_{v, \leq d} a_v^\dagger \Pi_{v, \leq d} - \frac{1}{2} \big(\Pi_{v, \leq d} n_v \Pi_{v, \leq d} X + X \Pi_{v, \leq d} n_v \Pi_{v, \leq d}\big) \nonumber\\
&= a_{v, \leq d} X a_{v, \leq d}^\dagger - \frac{1}{2}\big(a_{v, \leq d}^\dagger a_{v, \leq d} X + X a_{v, \leq d}^\dagger a_{v, \leq d}\big) \nonumber \\
&=\mathcal{D}_{a_{v, \leq d}}(X), \\
\mathcal{P}_{\leq d} \mathcal{D}_{a_v^\dagger} \mathcal{P}_{\leq d}(X) &= \Pi_{v, \leq d}a_v^\dagger \Pi_{v, \leq d} X \Pi_{v, \leq d} a_v \Pi_{v, \leq d} - \frac{1}{2} \big(\Pi_{v, \leq d} a_v a_v^\dagger \Pi_{v, \leq d} X + X \Pi_{v, \leq d} a_v a_v^\dagger \Pi_{v, \leq d}\big) \nonumber\\
&= a_{v, \leq d}^\dagger X a_{v, \leq d}^{ \ } - \frac{1}{2}\big(a_{v, \leq d}^{ \ } a_{v, \leq d}^\dagger X + X a_{v, \leq d}^{ \ } a_{v, \leq d}^\dagger\big) - \frac{d + 1}{2}\big(\Pi_{v, d} X + X \Pi_{v, d}\big) \nonumber \\
&=\mathcal{D}_{a_{v, \leq d}^\dagger}(X) - \frac{d + 1}{2}\big(\Pi_{v, d} X + X \Pi_{v, d}\big), \\
\mathcal{P}_{\leq d} \mathcal{D}_{n_v} \mathcal{P}_{\leq d}(X) &= \Pi_{v, \leq d}n_v \Pi_{v, \leq d} X \Pi_{v, \leq d} n_v^2 \Pi_{v, \leq d} - \frac{1}{2} \big(\Pi_{v, \leq d} n_v^2 \Pi_{v, \leq d} X + X \Pi_{v, \leq d} n_v^2 \Pi_{v, \leq d}\big) \nonumber\\
&= n_{v, \leq d} X n_{v, \leq d} - \frac{1}{2}\big(n_{v, \leq d}^2 X + X n_{v, \leq d}^2 \big) \nonumber \\
&=\mathcal{D}_{n_{v, \leq d}}(X).
\end{align}
\end{subequations}
Defining $\mathcal{L}_{\text{n}, \leq d} = \sum_v \big(\kappa_1 \mathcal{D}_{a_{v, \leq {d}}} + \kappa_2 \mathcal{D}_{a_{v, \leq d}^\dagger} + \kappa_3 \mathcal{D}_{n_{v, \leq d}}\big)$, we then obtain that, $\forall X \in \mathcal{H}_{\leq d}$,
\begin{align}
\mathcal{P}_{\leq d} \mathcal{L}(t) \mathcal{P}_{\leq d} (X) = \mathcal{L}_{\leq d}(t)(X) - \frac{d + 1}{2}\sum_v \big( \Pi_{v, d} X + X\Pi_{v, d}\big).
\end{align}
Consequently, from Eq.~(\ref{eq:adiabatic_eq_P}), we obtain that
\begin{align}
    \frac{d}{dt}\mathcal{P}_{\leq d}\rho(t) = \mathcal{L}_{\leq d}(t) \mathcal{P}_{\leq d} \rho(t)+ \mathcal{P}_{\leq d}\mathcal{L}(t) \mathcal{Q}_{\leq d}\rho(t)- \frac{d+ 1}{2}\sum_v \bigg(\Pi_{v, d} \mathcal{P}_{\leq d}\rho(t) + (\mathcal{P}_{\leq d} \rho(t))\Pi_{v, d}\bigg),
\end{align}
which can be integrated to obtain
\begin{align}
\mathcal{P}_{\leq d} \rho(t) = \mathcal{E}_{\leq d}(t, 0) \mathcal{P}_{\leq d}\rho(0)+\int_0^t \mathcal{E}_{\leq d}(t, s)\bigg( \mathcal{P}_{\leq d} \mathcal{L}(s) \mathcal{Q}_{\leq d} \rho(s)  - \frac{d + 1}{2} \sum_v \big( \Pi_{v, d} \mathcal{P}_{\leq d}\rho(s) + (\mathcal{P}_{\leq d}\rho(s)) \Pi_{v, d}\big)\bigg) ds,
\end{align}
where $\mathcal{E}_{\leq d}(t, s) = \mathcal{T}\exp(\int_s^t \mathcal{L}_{\leq d}(\tau) d\tau)$. From here, it immediately follows that
\begin{align}
&\norm{\mathcal{E}_{\leq d}(t, 0)\mathcal{P}_{\leq d}\rho(0) - \rho(t)}_1 \\
&\qquad \leq \norm{\rho_{\leq d}(t) - \mathcal{P}_{\leq d}\rho(t)}_1 + \norm{\mathcal{Q}_{\leq d}\rho(t)}_1 \nonumber \\
&\qquad \leq (d + 1) \sum_v \int_0^t \norm{\Pi_{v, d}\mathcal{P}_{\leq d}\rho(s)}_1 ds + \int_0^t \norm{\mathcal{P}_{\leq d}\mathcal{L}(s) \mathcal{Q}_{\leq d} \rho(s)}_1 ds + \norm{\mathcal{Q}_{\leq d}\rho(t)}_1 \nonumber\\
&\qquad \numleq{1} (d + 1) \sum_v \int_0^t \norm{\Pi_{v, d}\rho(s)}_1 ds + \int_0^t \norm{\mathcal{P}_{\leq d}\mathcal{L}(s) \mathcal{Q}_{\leq d} \rho(s)}_1 ds + \norm{\mathcal{Q}_{\leq d}\rho(t)}_1.
\end{align}
where, in (1), we have used the fact that $\Pi_{v, d}\Pi_{\leq d} = \Pi_{\leq d}\Pi_{v, d}$ to set $\norm{\Pi_{v, d}\mathcal{P}_{\leq d}\rho(s)}_1 = \norm{\Pi_{\leq d}\Pi_{v, d}\rho(s)\Pi_{\leq d}}_1 \leq \norm{\Pi_{\leq d}} \norm{\Pi_{v, d}\rho(s)}_1 \norm{\Pi_{\leq d}} = \norm{\Pi_{v, d}\rho(s)}_1$,
\end{proof}
\noindent Finally, combining Lemma \ref{lemma:adiabatic_elim} with Lemmas \ref{lemma:chernoff} and \ref{lemma:particle_number_bound}, we obtain the next lemma quantifying the truncation error as a function of $d$.
\begin{lemma}\label{lemma:truncation_error_bosonic}
    For any $d \geq 1$, it follows that 
    \begin{align*}
    \norm{\rho(t) - \mathcal{T}\exp\bigg(\int_0^t \mathcal{L}_{\leq d}(s) ds\bigg) \mathcal{P}_{\leq d}\rho(0)}_1 \leq O\big(m^{1-k_0/2}d^{2 + k_0/2} t (J_C + J_\textnormal{os} + U_C + U_\textnormal{os} + \kappa) e^{-\frac{1}{2}({d}/{d_0 m})^{1/\alpha}}\big),
    \end{align*}
    where $d_0, k_0, \alpha$ are the constants in Lemma \ref{lemma:chernoff_model}.
\end{lemma}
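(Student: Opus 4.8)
The plan is to invoke Lemma~\ref{lemma:adiabatic_elim}, which already expresses the truncation error as a sum of three terms, and to bound each of them using the particle-number concentration estimate of Lemma~\ref{lemma:chernoff_model} (which itself rests on Lemma~\ref{lemma:particle_number_bound} and Assumptions~\ref{assump:uniform_particle_num} and \ref{assump:gain_loss}). The common mechanism is that the true state $\rho(s)$ carries only super-polynomially small weight on high-occupation sectors, $\text{Tr}(\Pi_{\geq d}\rho(s)) \leq e\,(d/d_0m)^{k_0}\exp(-(d/d_0m)^{1/\alpha})$, and since in every term this weight will be accessed through a Cauchy--Schwarz / H\"older step, each term inherits an $\exp(-\tfrac12(d/d_0m)^{1/\alpha})$ suppression -- the factor $\tfrac12$ being exactly the source of the exponent in the claimed bound.

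For the term $\norm{\mathcal{Q}_{\leq d}\rho(t)}_1$ I would write $\mathcal{Q}_{\leq d}\rho(t) = \Pi_{\leq d}\rho(t)\Pi_{>d} + \Pi_{>d}\rho(t)\Pi_{\leq d} + \Pi_{>d}\rho(t)\Pi_{>d}$ with $\Pi_{>d} = I - \Pi_{\leq d}$, bound the off-diagonal blocks by $\norm{\Pi_{\leq d}\sqrt{\rho(t)}}_2\norm{\sqrt{\rho(t)}\Pi_{>d}}_2 \leq \sqrt{\text{Tr}(\Pi_{>d}\rho(t))}$ and the last block by $\text{Tr}(\Pi_{>d}\rho(t))$, and then use that any Fock state with some mode occupied by more than $d$ bosons has at least $d+1$ total particles, so $\Pi_{>d}\preceq\Pi_{\geq d+1}$, and apply Lemma~\ref{lemma:chernoff_model}; this gives $O\big((d/d_0m)^{k_0/2}\exp(-\tfrac12(d/d_0m)^{1/\alpha})\big)$. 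For the term $(d+1)\sum_v\int_0^t\norm{\Pi_{v,d}\rho(s)}_1\,ds$ I would use $\norm{\Pi_{v,d}\rho(s)}_1 \leq \sqrt{\text{Tr}(\Pi_{v,d}\rho(s))}$ together with $\Pi_{v,d}\preceq\Pi_{\geq d}$ and Lemma~\ref{lemma:chernoff_model}; the sum over the $m$ modes, the prefactor $d+1$ and the time integral then produce $O\big(d^{1+k_0/2}m^{1-k_0/2}\,t\,\exp(-\tfrac12(d/d_0m)^{1/\alpha})\big)$.

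The remaining -- and dominant -- term is $\int_0^t\norm{\mathcal{P}_{\leq d}\mathcal{L}(s)\mathcal{Q}_{\leq d}}_\diamond\norm{\mathcal{Q}_{\leq d}\rho(s)}_1\,ds$, where $\norm{\mathcal{Q}_{\leq d}\rho(s)}_1$ is controlled exactly as above. The delicate factor is $\norm{\mathcal{P}_{\leq d}\mathcal{L}(s)\mathcal{Q}_{\leq d}}_\diamond$: although $\mathcal{L}(s)$ contains unbounded operators, $\mathcal{P}_{\leq d}\mathcal{L}(s)\mathcal{Q}_{\leq d}(\sigma) = \Pi_{\leq d}\mathcal{L}(s)(\mathcal{Q}_{\leq d}\sigma)\Pi_{\leq d}$ depends only on the restriction of $\sigma$ to the Fock sectors lying within a constant number of particles per mode of the truncation boundary, because every monomial in $H(s)$ and every jump operator shifts the occupation of each mode by at most two; on that restricted support the operators appearing in $\mathcal{L}(s)$ (namely $n_v n_u$, $n_v^2$, $a_v^\dagger a_u$, $a_v a_u$ and $a_v$) have operator norm $O(d^2)$, so applying the two-sided bound Eq.~(\ref{eq:super_op_bound}) and summing over mode pairs with the coupling bounds Eqs.~(\ref{eq:definition_JC})--(\ref{eq:definition_Uos}) and the noise rates yields $\norm{\mathcal{P}_{\leq d}\mathcal{L}(s)\mathcal{Q}_{\leq d}}_\diamond = O\big(md^2(J_C + J_\text{os} + U_C + U_\text{os} + \kappa)\big)$. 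Multiplying by $\norm{\mathcal{Q}_{\leq d}\rho(s)}_1 = O\big(d^{k_0/2}m^{-k_0/2}\exp(-\tfrac12(d/d_0m)^{1/\alpha})\big)$ and integrating over $[0,t]$ gives $O\big(d^{2+k_0/2}m^{1-k_0/2}\,t\,(J_C+J_\text{os}+U_C+U_\text{os}+\kappa)\exp(-\tfrac12(d/d_0m)^{1/\alpha})\big)$, which dominates the other two contributions; collecting them (and absorbing $\exp(-\tfrac12((d+1)/d_0m)^{1/\alpha})$ into $\exp(-\tfrac12(d/d_0m)^{1/\alpha})$ up to a constant) proves the lemma. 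I expect the one genuinely nontrivial step to be this diamond-norm estimate: one has to make precise that compressing the unbounded generator between $\Pi_{\leq d}$ and $\mathcal{Q}_{\leq d}$ leaves an effectively $O(d^2)$-bounded, finite-range super-operator before H\"older-type inequalities can be applied.
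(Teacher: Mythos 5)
Your proposal is correct and follows essentially the same route as the paper: invoke Lemma~\ref{lemma:adiabatic_elim}, control $\norm{\mathcal{Q}_{\leq d}\rho(s)}_1$ and $\norm{\Pi_{v,d}\rho(s)}_1$ via Cauchy--Schwarz and Lemma~\ref{lemma:chernoff_model} (giving the $\exp(-\tfrac12(d/d_0m)^{1/\alpha})$ factor), and bound $\smallnorm{\mathcal{P}_{\leq d}\mathcal{L}(s)\mathcal{Q}_{\leq d}}_\diamond = O(md^2\Lambda)$ by noting that the one-sided projection renders every term of the generator effectively $O(d^2)$-bounded. The paper phrases this last step slightly more directly, via $\smallnorm{\mathcal{P}_{\leq d}\mathcal{L}(s)\mathcal{Q}_{\leq d}}_\diamond \leq 2\smallnorm{\mathcal{P}_{\leq d}\mathcal{L}(s)}_\diamond$ and explicit bounds such as $\smallnorm{\Pi_{\leq d}a_v a_u}\leq d+2$, but the content is the same.
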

\begin{proof}
We bound each term in Lemma \ref{lemma:adiabatic_elim}. We first note that
\begin{align}\label{eq:Q_bound}
    \norm{\mathcal{Q}_{\leq d}\rho(t)}_1 &= \norm{\Pi_{> d} \rho(t) + \Pi_{\leq d}\rho(t)\Pi_{>d}}_1 \nonumber \\
    &\leq \norm{\Pi_{>d}\rho(t)}_1 + \norm{\Pi_{\leq d}\rho(t) \Pi_{>d}}_1 \nonumber \\
    &\numleq{1} \sqrt{\text{Tr}(\Pi_{>d}\rho(t))} + \sqrt{\text{Tr}(\Pi_{\leq d}\rho(t)) \text{Tr}(\Pi_{>d}\rho(t))} \nonumber \\
    &\leq 2\sqrt{\text{Tr}(\Pi_{\geq d}\rho(t))} \leq 2\sqrt{e}\bigg(\frac{d}{d_0 m}\bigg)^{k_0/2} \exp\bigg(-\frac{1}{2}\bigg(\frac{d}{d_0 m}\bigg)^{1/\alpha}\bigg),
\end{align}
where, in (1), we have used the Holder's inequality to conclude that $\norm{A\rho(t)B}_1 \leq \sqrt{\text{Tr}({A^\dagger A \rho(t)}) \text{Tr}({B^\dagger B \rho(t)})} $. Furthermore, 
\begin{align}\label{eq:Pi_bound}
    \norm{\Pi_{v, d}\rho(t)} \leq \sqrt{\text{Tr}(\Pi_{v, d}\rho(t))} \leq \sqrt{\text{Tr}(\Pi_{\geq d}\rho(t))} \leq \sqrt{e}\bigg(\frac{d}{d_0 m}\bigg)^{k_0/2} \exp\bigg(-\frac{1}{2}\bigg(\frac{d}{d_0 m}\bigg)^{1/\alpha}\bigg).
\end{align}
Finally, we consider upper-bounding $\smallnorm{\mathcal{P}_{\leq d}\mathcal{L}(s) \mathcal{Q}_{\leq d}}_1 \leq \smallnorm{\mathcal{P}_{\leq d} \mathcal{L}(s) \mathcal{P}_{\leq d}}_\diamond + \smallnorm{\mathcal{P}_{\leq d}\mathcal{L}(s)}_\diamond \leq 2 \norm{\mathcal{P}_{\leq d}\mathcal{L}(s)}_\diamond$, where we have used the fact that $\norm{\mathcal{P}_{\leq d}}_\diamond \leq 1$. Next, we note that, for a Hamiltonian $H$ and jump operator $L$,
\begin{align}
    \norm{\mathcal{P}_{\leq d}[H, \cdot]}_\diamond \leq 2\smallnorm{\Pi_{\leq d}H}  \text{ and }\norm{\mathcal{P}_{\leq d}\mathcal{D}_L}_\diamond \leq \smallnorm{\Pi_{\leq d}L}^2 + \smallnorm{\Pi_{\leq d}L^\dagger L}.
\end{align}
Furthermore, since
\begin{align}
    \smallnorm{\Pi_{\leq d} a_v}, \smallnorm{\Pi_{\leq d}a_v^\dagger} \leq \sqrt{d + 1}, \smallnorm{\Pi_{\leq d}a_v^\dagger a_u} \leq d \text{ and } \smallnorm{\Pi_{\leq d}a_v a_u}, \smallnorm{\Pi_{\leq d}a_v^\dagger a_u^\dagger} \leq d + 2,
\end{align}
we obtain
\begin{align}\label{eq:upper_bound_L_H}
    \smallnorm{\mathcal{P}_{\leq d}[\ \cdot \ , H(t)]}_\diamond &\leq 2\smallnorm{\Pi_{\leq d} H_\text{g}^\text{hop}(t)}_\diamond + 2\smallnorm{\Pi_{\leq d} H_\text{g}^\text{sq}(t)}_\diamond +2\smallnorm{\Pi_{\leq d} H_\text{g}^\text{disp}(t)}_\diamond  + 2\smallnorm{\Pi_{\leq d}H_\text{g}^\text{ng}(t)} \nonumber \\
    &\leq 4d\sum_{v, u}\smallabs{\mathcal{J}_{v, u}} + 4(d + 2) \sum_{v, u}\abs{\mathcal{G}_{v, u}} + 4 \sqrt{d + 1}\sum_v \abs{\mathcal{D}_{v, u}(t)} + 4d^2 \sum_{v, u}\abs{U_{v, u}}\nonumber\\
    &\leq 4(d + 1) m (J_\text{os} + J_C) + 2\sqrt{2(d + 1)} \Omega +  4d^2 (U_C + U_\text{os})\nonumber \\
    &\leq 8m\big(d (J_\text{os} + J_C) + \sqrt{d}\abs{\Omega} + d^2(U_C + U_\text{os}\big).
\end{align}
where we have used the decomposition of $H_\text{g}(t)$ in Eq.~(\ref{eq:H_g_bosonic_model}). Furthermore, 
\begin{align}\label{eq:upper_bound_L_N}
    \norm{\mathcal{P}_{\leq d}\mathcal{L}_\text{n}} &\leq \sum_v \bigg(\kappa_1 \smallnorm{\mathcal{P}_{\leq d}\mathcal{D}_{a_v}}_\diamond  + \kappa_2 \smallnorm{\mathcal{P}_{\leq d}\mathcal{D}_{a_v^\dagger}}_\diamond  + \kappa_3\smallnorm{\mathcal{P}_{\leq d}\mathcal{D}_{a_v^\dagger a_v}}_\diamond \bigg)\nonumber\\
    & \leq  m \big(\kappa_1 (2d + 1) + \kappa_2 (2d + 2) + \kappa_3d^2\big) \nonumber \\
    &\leq 8 m \big((\kappa_1 + \kappa_2)d + \kappa_3 d^2 \big).
\end{align}
Combining Eqs.~(\ref{eq:upper_bound_L_H}) and (\ref{eq:upper_bound_L_N}), we obtain that
\begin{align}\label{eq:PLQ_bound}
    \smallnorm{\mathcal{P}_{\leq d} \mathcal{L}(s) \mathcal{Q}_{\leq d}}_\diamond \leq 2 \smallnorm{\mathcal{P}_{\leq d} \mathcal{L}(s)}_\diamond &\leq 16m \big((J_\text{os} + J_C +  (\kappa_1 + \kappa_2))d + \Omega \sqrt{d} + (U_C + U_\text{os} + \kappa_3)d^2\big) \nonumber \\
    &\leq 16 m d^2 \big(J_\text{os} + J_C + \Omega + U_\text{os} + U_C + \kappa\big).
\end{align}
Finally, combining Eqs.~(\ref{eq:Q_bound}, \ref{eq:Pi_bound},  \ref{eq:PLQ_bound}) together with Lemmas \ref{lemma:adiabatic_elim} and \ref{lemma:chernoff_model}, we obtain the lemma.
\end{proof}

\emph{Trotterization of the truncated model}. We will perform a first-order Trotterization of the state $\rho_{\leq d}(t) = \mathcal{T}\exp(\int_0^t \mathcal{L}_{\leq d}(s) ds) \mathcal{P}_{\leq d}(\rho(0))$. We will split the Hamiltonian $H_{\leq d}(t)$ into a sum of inter-site terms $H_{\leq d}^\text{C}(t)$ and a sum of on-site terms $H_{\leq d}^\text{os}(t)$:
\begin{align}
H_{\leq d}(t) = \underbrace{\sum_{i< j} \sum_{\sigma, \sigma'} h^\text{C}_{i, \sigma; j, \sigma'}(t)}_{H^\text{C}_{\leq d}(t)} + \underbrace{\sum_{i} \sum_{\sigma, \sigma'} h^\text{os}_{i; \sigma, \sigma'}(t)}_{H^\text{os}_{\leq d}(t)}, 
\end{align}
where, in $h_{i, \sigma; j, \sigma'}^\text{C}(t)$ we include all the terms that mediate an interaction between $(i, \sigma)$ and $(j, \sigma')$:
\begin{align}
h_{i, \sigma; j, \sigma'}^\text{C}(t) &= \Pi_{\leq d}\bigg(U_{i, \sigma; j, \sigma'}(t)  n_{i, \sigma} n_{j, \sigma'} + \sum_{\alpha, \alpha'}J_{i, \sigma; j, \sigma'}^{\alpha, \alpha'}(t) c_{i, \sigma}^\alpha c_{j, \sigma'}^{\alpha'}\bigg) \Pi_{\leq d} + (i, \sigma) \leftrightarrow (j, \sigma') \nonumber \\
&= 2 U_{i, \sigma; j, \sigma'}(t) n_{i, \sigma; \leq d} n_{j, \sigma'; \leq d} + 2 \sum_{\alpha, \alpha'} J_{i, \sigma; j', \sigma'}(t) c_{i, \sigma; \leq d}^{\alpha} c_{j, \sigma'; \leq d}^{\alpha'},
\end{align}
where $c_{i, \sigma; \leq d}^{1} = (a_{i, \sigma; \leq d} + a^\dagger_{i, \sigma; \leq d}) / \sqrt{2}$, $c_{i, \sigma; \leq d}^{2} = (a_{i, \sigma; \leq d} - a^\dagger_{i, \sigma; \leq d}) / \sqrt{2}i$. In $h_{i;\sigma, \sigma'}^\text{os}(t)$, we include all the terms (Gaussian or non-Gaussian) that act between modes $(i, \sigma)$ and $(i, \sigma')$:
\begin{align}
    h_{i; \sigma, \sigma'}^\text{os}(t) &=  \Pi_{\leq d}\bigg(U_{i, \sigma; i, \sigma'}(t) n_{i, \sigma} n_{i, \sigma'} + \sum_{\alpha, \alpha'}J^{\alpha, \alpha'}_{i, \sigma; i, \sigma'}(t) c_{i, \sigma}^\alpha c_{i, \sigma'}^{\alpha'}\bigg)\Pi_{\leq d} \nonumber \\
    &= U_{i, \sigma; i, \sigma'}(t) n_{i, \sigma; \leq d}n_{i, \sigma'; \leq d} + \sum_{\alpha, \alpha'}J^{\alpha, \alpha'}_{i, \sigma; i', \sigma'}(t) \Pi_{i, \sigma; \leq d} \Pi_{i, \sigma'; \leq d} c_{i, \sigma}^\alpha c_{i', \sigma'}^{\alpha'}\Pi_{i, \sigma; \leq d} \Pi_{i, \sigma'; \leq d}.
\end{align}
Furthermore, we will also decompose the dissipation $\mathcal{L}_{\text{n}, \leq d}$:
\begin{align}\label{eq:noise_lindbladian_decomp}
\mathcal{L}_{\text{n}, \leq d} = \sum_{i < j}\sum_{\sigma, \sigma'} \mathcal{L}_{i,\sigma; j, \sigma'}^\text{n}(t) \text{ where }\mathcal{L}_{i, \sigma; j, \sigma'}^\text{n}(t) =  \sum_{l = 1}^3  \kappa_l \big(p_{i, \sigma;j, \sigma'}^{(l)}(t)\mathcal{D}_{L_{i, \sigma, \leq d}^{(l)}} + q_{i, \sigma; j, \sigma'}^{(l)}(t)\mathcal{D}_{L_{j, \sigma', \leq d}^{(l)}}\big),
\end{align}
where we will choose $p_{i, \sigma;i', \sigma'}^{(l)}(t), q_{i, \sigma;i', \sigma'}^{(l)}(t) \geq 0$ later. For this decomposition of $\mathcal{L}_{n, \leq d}$ to be consistent, we must also have
\begin{align}\label{eq:pq_norm}
\forall k, \sigma: \ \sum_{k' > k} \sum_{\sigma'} p_{k, \sigma; k', \sigma'}^{(l)}(t) + \sum_{k' < k} \sum_{\sigma'} q_{k', \sigma'; k, \sigma}^{(l)}(t) = 1 .
\end{align}
Now, the state of the truncated model at time $t$, $\rho_{\leq d}(t)$, will be approximated by the state $\sigma_{T, \leq d}$, where $T$ is the number of Trotter steps and
\begin{subequations}\label{eq:Trotter_theorem_1_ferm}
\begin{align}
    \sigma_{T, \leq d} = \prod_{\tau = T}^1 \bigg(\prod_{i < j} \prod_{\sigma, \sigma'} \Phi^{i, \sigma; j, \sigma'}_{\tau\delta, (\tau - 1) \delta }\bigg) \mathcal{U}^{\text{os}}_{\tau \delta, (\tau - 1)\delta} \rho_{\leq d}(0),
\end{align}
where $\delta = t/T$,
\begin{align}
\Phi_{t, t'}^{i, \sigma; j, \sigma'} = \mathcal{T}\exp\bigg(\int_{t'}^{t}\mathcal{L}^\text{C}_{i, \sigma; j, \sigma'}(s) ds\bigg) \text{ where }\mathcal{L}^\text{C}_{i, \sigma; j, \sigma'} (s)= -i[h^\text{C}_{i, \sigma; j, \sigma'}(s), \ \cdot\ ] + \mathcal{L}^\text{n}_{i, \sigma; j, \sigma'}(s),
\end{align}
and
\begin{align}
\mathcal{U}^\text{os}_{\tau \delta, (\tau - 1)\delta} = U^\text{os}_{\tau \delta, (\tau - 1)\delta} (\cdot) U^{\text{os}\dagger}_{\tau \delta, (\tau - 1)\delta} \text{ where }U^\text{os}_{\tau \delta, (\tau - 1)\delta} = \mathcal{T}\exp\bigg(-i\int_{(\tau - 1) \delta }^{\tau \delta } H^\text{os}(s) ds\bigg).
\end{align}
\end{subequations}
The next lemma provides an upper bound on the Trotter error $\norm{\rho_{\leq d}(t) -  \sigma_{T, \leq d}}_1$.
\begin{lemma}\label{lemma:Trotter_bound_bosons}
For any $T > 0$ and $d \geq 1$:
\begin{align*}
\norm{\sigma_{T, \leq d} - \rho_{\leq d}(t)}_1 \leq \frac{16t^2 m^2d^4 \Lambda^2}{T}.
\end{align*}
\end{lemma}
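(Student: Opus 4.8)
The plan is to apply Lemma~\ref{lemma:Trotter_bounded_lind} directly, with $\mathcal{L}_{\leq d}(t)$ split into exactly the pieces appearing in the Trotter product~(\ref{eq:Trotter_theorem_1_ferm}): the $\binom{n}{2}L^2$ inter-site generators $\mathcal{L}^{\mathrm C}_{i,\sigma;j,\sigma'}(t)=-i[h^{\mathrm C}_{i,\sigma;j,\sigma'}(t),\,\cdot\,]+\mathcal{L}^{\mathrm n}_{i,\sigma;j,\sigma'}(t)$ together with the single on-site generator $\mathcal{L}^{\mathrm{os}}(t)=-i[H^{\mathrm{os}}_{\leq d}(t),\,\cdot\,]$. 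First I would verify that this splitting is exact, i.e. that the pieces sum to $\mathcal{L}_{\leq d}(t)$: this holds because $H_{\leq d}(t)=H^{\mathrm C}_{\leq d}(t)+H^{\mathrm{os}}_{\leq d}(t)$ with $H^{\mathrm C}_{\leq d}(t)=\sum_{i<j,\sigma,\sigma'}h^{\mathrm C}_{i,\sigma;j,\sigma'}(t)$ (the factor $2$ in $h^{\mathrm C}$ accounting for both orderings of a pair, and the displacement term $H_{\mathrm g}^{\mathrm{disp}}$ of Eq.~(\ref{eq:H_g_bosonic_model}) absorbed into $H^{\mathrm{os}}_{\leq d}$), while $\mathcal{L}_{\mathrm n,\leq d}=\sum_{i<j,\sigma,\sigma'}\mathcal{L}^{\mathrm n}_{i,\sigma;j,\sigma'}$ by Eq.~(\ref{eq:noise_lindbladian_decomp}). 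Since $\sigma_{T,\leq d}$ is then precisely the first-order Trotterization of $\rho_{\leq d}(t)$ for this splitting and $\norm{\mathcal{P}_{\leq d}\rho(0)}_1\leq 1$, we get $\norm{\sigma_{T,\leq d}-\rho_{\leq d}(t)}_1\leq\norm{\mathcal{P}_{\leq d}\rho(0)}_1$ times the diamond-norm Trotter error of Lemma~\ref{lemma:Trotter_bounded_lind}, so it remains only to bound the quantity $\sum_j\ell_j$ for this splitting.

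Second, I would bound each $\ell_j$ using $\norm{[H,\,\cdot\,]}_\diamond\leq 2\norm{H}$, $\norm{\mathcal{D}_L}_\diamond\leq 2\norm{L}^2$ [from Eq.~(\ref{eq:super_op_bound})], and the truncation bounds $\norm{n_{v,\leq d}}\leq d$, $\norm{\Pi_{\leq d}a_v}\leq\sqrt{d+1}$, hence $\norm{c^\alpha_{v,\leq d}}\leq\sqrt{2(d+1)}$. Since $h^{\mathrm C}_{i,\sigma;j,\sigma'}$ is a number--number term with coefficient $2U_{i,\sigma;j,\sigma'}$ plus quadrature--quadrature terms with coefficients $2J^{\alpha,\alpha'}_{i,\sigma;j,\sigma'}$, and $\mathcal{L}^{\mathrm n}_{i,\sigma;j,\sigma'}$ combines the three dissipators on the two modes with weights $\kappa_l p^{(l)},\kappa_l q^{(l)}$, all the factors $d$, $d+1$, $\sqrt{d+1}$ collapse into $d^2$ (using $d\geq1$), yielding
\[
\norm{\mathcal{L}^{\mathrm C}_{i,\sigma;j,\sigma'}}_\diamond \;=\; O\!\Big(d^2\Big(\abs{U_{i,\sigma;j,\sigma'}}+\sum_{\alpha,\alpha'}\abs{J^{\alpha,\alpha'}_{i,\sigma;j,\sigma'}}+\sum_l\kappa_l\big(p^{(l)}_{i,\sigma;j,\sigma'}+q^{(l)}_{i,\sigma;j,\sigma'}\big)\Big)\Big).
\]
Likewise, decomposing $H^{\mathrm{os}}_{\leq d}$ into its on-site number--number, on-site quadratic, and displacement parts and summing the on-site coefficient bounds $\sum_{\sigma'}\abs{U_{i,\sigma;i,\sigma'}}\leq U_{\mathrm{os}}$, $\sum_{\sigma'}\sum_{\alpha,\alpha'}\abs{J^{\alpha,\alpha'}_{i,\sigma;i,\sigma'}}\leq J_{\mathrm{os}}$, $\sum_\alpha\abs{\Omega^\alpha_{i,\sigma}}\leq\Omega$ over the $m$ modes gives $\norm{\mathcal{L}^{\mathrm{os}}}_\diamond=O\!\big(md^2(U_{\mathrm{os}}+J_{\mathrm{os}}+\Omega)\big)$.

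Third, I would sum over all blocks; this is where the geometry of the couplings enters and the potential $\sim m^2$ number of site pairs must be kept from appearing. For the Hamiltonian inter-site part, each mode $(i,\sigma)$ satisfies $\sum_{j\neq i,\sigma'}\abs{U_{i,\sigma;j,\sigma'}}\leq U_C$ and $\sum_{j\neq i,\sigma'}\sum_{\alpha,\alpha'}\abs{J^{\alpha,\alpha'}_{i,\sigma;j,\sigma'}}\leq J_C$ [Eqs.~(\ref{eq:definition_JC})--(\ref{eq:definition_UC})], and since each unordered pair is counted once in $\sum_{i<j,\sigma,\sigma'}$ while each mode appears in $m$ of those terms, $\sum_{i<j,\sigma,\sigma'}(\cdots)\leq\tfrac12 m(U_C+J_C)$. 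For the dissipative part, summing the consistency condition~(\ref{eq:pq_norm}) over all $(k,\sigma)$ telescopes the weights to give exactly $\sum_{i<j,\sigma,\sigma'}\sum_l\kappa_l\big(p^{(l)}_{i,\sigma;j,\sigma'}+q^{(l)}_{i,\sigma;j,\sigma'}\big)=m\kappa$. Combining with the on-site bound gives $\sum_j\ell_j=O(md^2\Lambda)$ with $\Lambda$ as in Eq.~(\ref{eq:Lambda_Def}); tracking the numerical constants this is at most $4md^2\Lambda$, and Lemma~\ref{lemma:Trotter_bounded_lind} then yields $\norm{\sigma_{T,\leq d}-\rho_{\leq d}(t)}_1\leq\tfrac{t^2}{T}\big(\sum_j\ell_j\big)^2\leq\tfrac{16t^2m^2d^4\Lambda^2}{T}$.

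The conceptual content is light; the main obstacle is the bookkeeping, with two points to watch. First, the dissipation must be spread over the inter-site blocks via the weights $p^{(l)},q^{(l)}$ so that the normalization~(\ref{eq:pq_norm}) forces $\sum_j\ell_j$ to scale like $m\kappa$ rather than like (number of site pairs)$\times\kappa$; without this the Trotter error would not be quadratic in $m$. Second, one must get the power of $d$ right: a degree-two monomial in $a,a^\dagger$ truncated at level $d$ has norm $O(d)$, which the dissipator and commutator estimates square to $O(d^2)$, matching the $O(d^2)$ already carried by the $n_vn_{v'}$ terms, and then the leftover $(d+1)$, $\sqrt{d+1}$, $\sqrt{2(d+1)}$ factors must be absorbed into $d^2$ via $d\geq1$ so that the final numerical constant lands on the stated $16$.
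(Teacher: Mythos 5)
Your proposal is correct and follows essentially the same route as the paper's proof: apply Lemma \ref{lemma:Trotter_bounded_lind} to the splitting of $\mathcal{L}_{\leq d}$ into the inter-site blocks $\mathcal{L}^{\mathrm C}_{i,\sigma;j,\sigma'}$ and the on-site Hamiltonian piece, bound each block's diamond norm by $O(d^2)$ times its coefficients using the truncated operator norms, and use the definitions of $J_C, J_{\mathrm{os}}, U_C, U_{\mathrm{os}}$ together with the normalization (\ref{eq:pq_norm}) to collapse the sum to $\sum_j\ell_j\leq 4m d^2\Lambda$, giving $16t^2m^2d^4\Lambda^2/T$. Your explicit handling of the displacement term (absorbed into the on-site piece and contributing $\Omega\leq\Lambda$) is a point the paper glosses over, but it changes nothing in the bound.
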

\begin{proof}
    This lemma follows from an application of Lemma \ref{lemma:Trotter_bounded_lind}: We note that
    \begin{align}
    &\smallnorm{\mathcal{L}^{\text{C}}_{i, \sigma; j, \sigma'}(s)}_{\diamond} \leq 2\smallnorm{h^\text{C}_{i, \sigma; j, \sigma'}(s)} +  2 \sum_{l = 1}^3 \kappa_l \big(p_{i, \sigma;j, \sigma'}^{(l)}(s) \smallnorm{L_{i, \sigma, \leq d}^{(l)}}^2 + q_{i, \sigma;j, \sigma'}^{(l)}(s)\smallnorm{L_{j, \sigma', \leq d}^{(l)}}^2 \big) \nonumber\\
    &\qquad \qquad \qquad \numleq{1} \bigg(4\smallabs{U_{i, \sigma; j, \sigma'}(s)} + 8\sum_{\alpha, \alpha'}\smallabs{J^{\alpha, \alpha'}_{i, \sigma; j, \sigma'}(s)} + 2 \sum_{l = 1}^3 \kappa_l (p_{i, \sigma;j, \sigma'}^{(l)}(s) + q_{i, \sigma;j, \sigma'}^{(l)}(s))\bigg)d^2,\\
    &\smallnorm{[\ \cdot\ , H^\text{os}(t)]}_\diamond \leq 2 \sum_{i, \sigma, \sigma'}\smallnorm{h^{\text{os}}_{i; \sigma, \sigma'}(t)} \nonumber \\
    &\qquad \qquad \qquad \ \numleq{2} \bigg(2\sum_{i, \sigma,  \sigma'}\abs{U_{i, \sigma; i, \sigma'}(s)} + 4\sum_{i, \sigma, \sigma'}\sum_{\alpha, \alpha'} \smallabs{J^{\alpha, \alpha'}_{i, \sigma; i, \sigma'}(s)}\bigg)d^2,
\end{align}
where, in (1) and (2), we have used the fact that, for the truncated bosonic model, $\smallnorm{c_{i, \sigma, \leq d}^\alpha} \leq \sqrt{2d} \leq \sqrt{2}d, \smallnorm{n_{i, \sigma, \leq d}} \leq d, \smallnorm{L_{i, \sigma, \leq d}^{(1)}}, \smallnorm{L_{i, \sigma, \leq d}^{(2)}} \leq \sqrt{d} \leq d$ and $\smallnorm{L_{i, \sigma, \leq d}^{(3)}} \leq d$. We can now estimate the parameter $\ell$ from Lemma \ref{lemma:Trotter_bounded_lind}: $\ell$ would be an upper bound on
\begin{align}
    &\smallnorm{[\ \cdot \ , H^\text{os}(s)]}_\diamond + \sum_{j, i < j} \sum_{\sigma,\sigma'}\smallnorm{\mathcal{L}^\text{C}_{i, \sigma; j, \sigma'}(s)}_\diamond \nonumber\\
    & \qquad\leq \bigg(4 \sum_{i, j} \sum_{\sigma, \sigma'} \bigg(\abs{U_{i, \sigma; j, \sigma'}(s)} + \sum_{\alpha, \alpha'} \smallabs{J^{\alpha, \alpha'}_{i, \sigma; j, \sigma'}}\bigg) + 2 \sum_{i, j: i < j} \sum_{\sigma, \sigma'}\sum_{l = 1}^3 \kappa_l (p_{i, \sigma;j, \sigma'}^{(l)}(s) + q_{i, \sigma;j, \sigma'}^{(l)}(s))\bigg)d^2 \nonumber \\
    & \qquad \leq \underbrace{4 (U_C + U_\text{os} + J_C + J_\text{os} + \kappa) m d^4}_{\ell}.
\end{align}
Thus, from Lemma \ref{lemma:Trotter_bounded_lind}, we obtain that $\norm{\sigma_{T, \leq d} - \rho_{\leq d}(t)}_1 \leq 16t^2 m^2d^2 (U_C + U_\text{os} + J_C + J_\text{os} + \kappa)^2 / T$.
\end{proof}

\begin{lemma}[Separability condition for the bosonic model]\label{lemma:separability_bosons} 
Consider the following Lindbladian $\mathcal{L}_{\leq d}(t)$ on two bosonic modes truncated to $d \geq 1$ particles each:
\begin{align*}
\mathcal{L}_{\leq d}(t) = -i[h_{\textnormal{g}, \leq d}(t) + h_{\textnormal{ng}, \leq d}(t), \cdot] + \sum_{i\in \{1, 2\}} \sum_{l = 1}^3\kappa_{i}^{(l)}(t)\mathcal{D}_{L_{i, \leq d}^{(l)}}, 
\end{align*}
where
\begin{align*}
 h_{\textnormal{g}, \leq d}(t) = \sum_{\alpha, \beta \in \{1, 2\}} g_{\alpha, \beta}(t) c_{1, \leq d}^\alpha c_{2, \leq d}^{\beta}, h_{\textnormal{ng}, \leq d}(t) =  u(t) n_{1, \leq d}n_{2, \leq d}, \ L_{i}^{(1)} = a_{i, \leq d}, L_{i}^{(2)} = a_{i, \leq d}^\dagger \text{ and } L_{i}^{(3)} = n_{i, \leq d},
\end{align*}
where $c_{i, \leq d}^1 = (a_{\leq d} + a_{\leq d}^\dagger)/\sqrt{2}$, $c_{i, \leq d}^2 = (a_{\leq d} -a_{\leq d}^\dagger)/\sqrt{2} i$ and $g_{\alpha, \beta}(t), u(t)$ are real and $\kappa_i^{(l)}(t) \geq 0$. If
\begin{enumerate}
    \item[(C1)] $\kappa_i^{(1)}(t), \kappa_i^{(2)}(t) \geq \sum_{\alpha, \beta}\abs{g_{\alpha, \beta}(t)} $ and
    \item[(C2)] $\kappa_i^{(3)}(t) \geq \abs{u(t)}$,
\end{enumerate}
then there is a completely-positive map $\mathcal{M}_{t + \tau, t}$ which maps separable states to separable states and
\begin{align*}
\norm{\mathcal{M}_{t + \tau, t} - \mathcal{T}\exp\bigg(\int_{t}^{t+\tau}\mathcal{L}_{\leq d}(s) ds\bigg)}_\diamond \leq 8d^4\bigg(\sum_{\alpha, \beta}\int_{t}^{t + \tau} \abs{g_{\alpha, \beta}(s)}ds + \int_{t}^{t + \tau}\abs{u(s)}ds + \sum_{l = 1}^3 \sum_{i \in\{1, 2\}} \int_{t}^{t + \tau} \kappa_i^{(l)}(s) ds\bigg)^2.
\end{align*}
\end{lemma}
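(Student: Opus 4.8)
\emph{Overall strategy.} The plan is to split the generator as $\mathcal{L}_{\leq d}(s) = \mathcal{L}_{\mathrm g}(s) + \mathcal{L}_{\mathrm{ng}}(s)$, where $\mathcal{L}_{\mathrm{ng}}(s) = -i[h_{\mathrm{ng},\leq d}(s),\cdot] + \sum_{i}\kappa_i^{(3)}(s)\mathcal{D}_{n_{i,\leq d}}$ collects the non-Gaussian Hamiltonian with the two dephasing dissipators, and $\mathcal{L}_{\mathrm g}(s) = -i[h_{\mathrm{g},\leq d}(s),\cdot] + \sum_{i}\bigl(\kappa_i^{(1)}(s)\mathcal{D}_{a_{i,\leq d}}+\kappa_i^{(2)}(s)\mathcal{D}_{a_{i,\leq d}^\dagger}\bigr)$ collects the Gaussian coupling with the gain/loss dissipators. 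I would then take $\mathcal{M}_{t+\tau,t} := \mathcal{T}\exp\bigl(\int_t^{t+\tau}\mathcal{L}_{\mathrm g}(s)ds\bigr)\circ\mathcal{T}\exp\bigl(\int_t^{t+\tau}\mathcal{L}_{\mathrm{ng}}(s)ds\bigr)$, i.e.\ the first-order Trotter splitting of the true channel between these two pieces. Using $\smallnorm{c^\alpha_{i,\leq d}}\le\sqrt{2}d$, $\smallnorm{n_{i,\leq d}}\le d$, $\smallnorm{a_{i,\leq d}},\smallnorm{a_{i,\leq d}^\dagger}\le\sqrt d$ gives $\smallnorm{\mathcal{L}_{\mathrm g}(s)}_\diamond,\smallnorm{\mathcal{L}_{\mathrm{ng}}(s)}_\diamond\le 4d^2(\cdots)$, and a first-order product-formula estimate (Lemma~\ref{lemma:Trotter_bounded_lind}, or the Duhamel variant that tracks the time-integrated norms) then bounds $\smallnorm{\mathcal{M}_{t+\tau,t}-\mathcal{T}\exp(\int\mathcal{L}_{\leq d})}_\diamond$ by the claimed $8d^4(\cdots)^2$. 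It then suffices to show that \emph{each} of the two factors is a channel mapping separable states to separable states, since $\mathcal{M}_{t+\tau,t}$ is their composition.

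\emph{The non-Gaussian factor.} This is the bosonic counterpart of Lemma~\ref{lemma:convex_gaussian_fermions}. Since $n_{1,\leq d},n_{2,\leq d}$ are commuting Hermitian operators and the whole generator is a polynomial in them, $\mathcal{T}\exp(\int\mathcal{L}_{\mathrm{ng}}) = e^{-iU[n_{1,\leq d}n_{2,\leq d},\cdot]}e^{K_1\mathcal{D}_{n_{1,\leq d}}}e^{K_2\mathcal{D}_{n_{2,\leq d}}}$ with $U=\int u$, $K_i=\int\kappa_i^{(3)}$; the identical computation to Lemma~\ref{lemma:convex_gaussian_fermions} then factorizes this as $\prod_i\bigl(e^{(K_i-|U|)\mathcal{N}_{i,l}\mathcal{N}_{i,r}}e^{-K_i(\mathcal{N}_{i,l}^2+\mathcal{N}_{i,r}^2)/2}\bigr)\mathcal{R}$, with $\mathcal{R}(\rho)=\mathbb{E}_z\bigl[R(z)\rho R^\dagger(z)\bigr]$, $R(z)=e^{\sqrt U e^{-i\pi/4}z\,n_{1,\leq d}}\otimes e^{\sqrt U e^{-i\pi/4}z^*n_{2,\leq d}}$. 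Each factor is a single-mode completely-positive map ($e^{-K_i(\mathcal{N}_{i,l}^2+\mathcal{N}_{i,r}^2)/2}$ has Kraus operator $e^{-K_i n_{i,\leq d}^2/2}$, and $e^{(K_i-|U|)\mathcal{N}_{i,l}\mathcal{N}_{i,r}}=\sum_k\tfrac{(K_i-|U|)^k}{k!}\,n_{i,\leq d}^k(\cdot)n_{i,\leq d}^k$ is CP once $K_i\ge|U|$, which (C2) guarantees because $\kappa_i^{(3)}(s)\ge|u(s)|$ pointwise forces $K_i\ge\int|u|\ge|U|$), and $\mathcal{R}$ is a mixture of product operators; hence the composition, being a Lindblad channel, takes separable states to separable states.

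\emph{The Gaussian factor.} This is the new ingredient. From the cross Hamiltonian $h_{\mathrm{g},\leq d}=\sum_{\alpha,\beta}g_{\alpha\beta}c_{1,\leq d}^\alpha c_{2,\leq d}^\beta$ I would extract the continuous-time analogue of the random-local-operator map: setting $\mathfrak r_{\alpha\beta}(s):=-ig_{\alpha\beta}(s)[c_{1,\leq d}^\alpha c_{2,\leq d}^\beta,\cdot]+|g_{\alpha\beta}(s)|\bigl(c_{1,\leq d}^\alpha(\cdot)c_{1,\leq d}^\alpha+c_{2,\leq d}^\beta(\cdot)c_{2,\leq d}^\beta\bigr)$, the computation of Lemma~\ref{lemma:convex_gaussian_fermions} — which uses only that $c_{1,\leq d}^\alpha,c_{2,\leq d}^\beta$ are commuting Hermitian operators — shows that $\mathcal{T}\exp(\int\mathfrak r_{\alpha\beta})$ equals $\mathbb{E}_z\bigl[R(z)(\cdot)R^\dagger(z)\bigr]$ with $R(z)=e^{\sqrt{|G|}e^{\mp i\pi/4}z\,c_{1,\leq d}^\alpha}\otimes e^{\sqrt{|G|}e^{\mp i\pi/4}z^*c_{2,\leq d}^\beta}$ (a limit of such in case $g_{\alpha\beta}$ changes sign), hence is completely-positive and separability-preserving. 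I would then write $\mathcal{L}_{\mathrm g}(s)=\sum_{\alpha,\beta}\mathfrak r_{\alpha\beta}(s)+\mathcal{L}'(s)+\mathcal{L}''(s)$, where the remainder is purely single-mode: $\mathcal{L}'(s)+\mathcal{L}''(s)=\sum_i\bigl[\kappa_i^{(1)}(s)\mathcal{D}_{a_{i,\leq d}}+\kappa_i^{(2)}(s)\mathcal{D}_{a_{i,\leq d}^\dagger}-\sum_\alpha w_\alpha^{(i)}(s)\,c_{i,\leq d}^\alpha(\cdot)c_{i,\leq d}^\alpha\bigr]$ with $w_\alpha^{(1)}(s)=\sum_\beta|g_{\alpha\beta}(s)|$ (and symmetrically for mode $2$). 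In the $\{c^1,c^2\}$ basis the jump coefficient matrix of this remainder, per mode, is $\Gamma_i-\mathrm{diag}(w_1^{(i)},w_2^{(i)})$, where $\Gamma_i$ has eigenvalues $\{\kappa_i^{(1)},\kappa_i^{(2)}\}$; condition (C1), $\min(\kappa_i^{(1)},\kappa_i^{(2)})\ge\sum_{\alpha\beta}|g_{\alpha\beta}|\ge\max(w_1^{(i)},w_2^{(i)})$, is exactly what makes this matrix positive semidefinite. Pairing it with the matching Lindblad drift gives a genuine single-mode Lindbladian $\mathcal{L}'$, and the leftover drift mismatch $\mathcal{L}''(s)=-\tfrac12\sum_i\{\sum_\alpha w_\alpha^{(i)}(s)(c_{i,\leq d}^\alpha)^2,\cdot\}$ generates a single-mode CP ``damping'' flow $\rho\mapsto V_i\rho V_i^\dagger$. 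Since $\mathcal{L}_{\mathrm g}$ is thus a sum of generators each of whose time-ordered exponential is completely-positive and separability-preserving, Lie--Trotter together with the closedness of the set of separability-preserving maps under composition and limits shows $\mathcal{T}\exp(\int\mathcal{L}_{\mathrm g})$ is such a map; being a Lindblad channel it maps separable states to separable states.

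\emph{Main obstacle.} The crux is the Gaussian factor: one must verify that the residual ``cost'' $|g_{\alpha\beta}|\,c_i^\alpha(\cdot)c_i^\alpha$ imposed on each mode by the random-local-operator map can be absorbed into $\mathcal{D}_{a_{i,\leq d}},\mathcal{D}_{a_{i,\leq d}^\dagger}$ while keeping the residual generator completely positive, and this absorption is governed precisely by positivity of $\Gamma_i-\mathrm{diag}(w_1^{(i)},w_2^{(i)})$, i.e.\ by (C1). A minor nuisance is that the truncated operators $a_{i,\leq d}$ do not obey the canonical commutation relations; to avoid issues I would carry out all the Lindblad bookkeeping directly in terms of the Hermitian quadratures $c_{i,\leq d}^\alpha$ (for which $a_{i,\leq d}=(c_{i,\leq d}^1+ic_{i,\leq d}^2)/\sqrt2$ still holds by definition), so that the needed identities — e.g.\ $a_{i,\leq d}(\cdot)a_{i,\leq d}^\dagger+a_{i,\leq d}^\dagger(\cdot)a_{i,\leq d}=c_{i,\leq d}^1(\cdot)c_{i,\leq d}^1+c_{i,\leq d}^2(\cdot)c_{i,\leq d}^2$ and the corresponding equality of dissipation operators — go through unchanged.
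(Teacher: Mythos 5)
Your proposal is correct, but it takes a genuinely different route from the paper's. The paper does not Trotterize inside this lemma: it builds $\mathcal{M}_{t+\tau,t}$ as an \emph{additive} first-order approximant $\mathcal{M}=\mathcal{R}+\tilde{\mathcal{R}}+\mathcal{V}+\tilde{\mathcal{V}}$, where $\mathcal{R},\tilde{\mathcal{R}}$ are mixtures of product-operator conjugations with discrete phases $z\in\{\pm1,\pm i\}$ and damping prefactors $Q_i$ that absorb the anticommutator drift, and $\mathcal{V},\tilde{\mathcal{V}}$ are the residual local jump terms; conditions (C1) and (C2) enter as positive semidefiniteness of explicit $2\times2$ matrices $F^{(i)}$ and of $K_i^{(3)}-\abs{U}$, and the error is a sum of second-order Taylor remainders of every ingredient. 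You instead take $\mathcal{M}$ to be the one-step Trotter product of the Gaussian and non-Gaussian halves and show that \emph{each exact factor} is separability preserving: the non-Gaussian factor by the commuting-algebra factorization of Lemma \ref{lemma:convex_gaussian_fermions}, which indeed transfers verbatim since only commutativity of $n_{1,\leq d},n_{2,\leq d}$ is used (and you correctly restore the $K_i$ in the exponent of $\mathcal{F}_i$ that the paper drops); the Gaussian factor by extracting the pairwise random-local-operator generators and absorbing their local cost $\sum_\alpha w^{(i)}_\alpha c^\alpha_{i,\leq d}(\cdot)c^\alpha_{i,\leq d}$ into the gain/loss dissipators via positivity of $\Gamma_i-\mathrm{diag}(w^{(i)}_1,w^{(i)}_2)$, followed by a Lie--Trotter limit, which is legitimate here because all generators are bounded on the truncated space and the separability-preserving CP maps form a set closed under composition and diamond-norm limits. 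In both proofs (C1)/(C2) play the identical role of making the residual single-mode generator completely positive, but your version yields a slightly stronger conclusion ($\mathcal{M}$ is a genuine trace-preserving channel, and the only approximation is the Gaussian/non-Gaussian split) at the price of the extra limiting argument. Two small repairs: (i) a crude single-step application of Lemma \ref{lemma:Trotter_bounded_lind} gives $16d^4(\cdots)^2$ rather than $8d^4(\cdots)^2$, so you need the sharper one-step estimate $2\int\smallnorm{\mathcal{L}_{\mathrm g}}_\diamond\int\smallnorm{\mathcal{L}_{\mathrm{ng}}}_\diamond\le\tfrac12\big(\int\smallnorm{\mathcal{L}_{\mathrm g}}_\diamond+\int\smallnorm{\mathcal{L}_{\mathrm{ng}}}_\diamond\big)^2$ (or accept a larger constant, which is immaterial downstream); (ii) when $g_{\alpha\beta}$ changes sign on $[t,t+\tau]$, instead of a limiting argument simply note that the four elementary superoperators entering your pairwise generator mutually commute at all times, so its time-ordered exponential equals the Gaussian-average map composed with the manifestly CP single-mode factor $\exp\big((\int\abs{g_{\alpha\beta}}-\abs{\int g_{\alpha\beta}})(c^\alpha_{1,\leq d}(\cdot)c^\alpha_{1,\leq d}+c^\beta_{2,\leq d}(\cdot)c^\beta_{2,\leq d})\big)$.
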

\begin{proof}
It will be notationally convenient to define the scalars
\begin{align}
    &G_{\alpha, \beta} = \int_{t}^{t + \tau} g_{\alpha, \beta}(s)ds,   K_i^{(l)} = \int_{t}^{t + \tau}\kappa_i^{(l)}(s) ds, U = \int_{t}^{t + \tau} u(s) ds  \text{ and} \nonumber\\
    &G_0 = \sum_{\alpha, \beta}\int_{t}^{t + \tau}\abs{g_{\alpha, \beta}(s)}ds, U_0 = \int_{t}^{t + \tau}\abs{u(s)}ds,  K_i = \sum_{l = 1}^3 K_i^{(l)}, K^{(l)} = \sum_{i \in \{1, 2\}}K_i^{(l)}, K = \sum_{l = 1}^3 K^{(l)}.
\end{align}
We will define the completely positive map $\mathcal{R}_{t + \tau, t}$ via 
\begin{subequations}\label{eq:def_R}
\begin{align}
\mathcal{R}_{t + \tau, t}(\rho) = \mathbb{E}_z\big[R_1(z) R_2(z) \rho R_2^\dagger(z) R_1^\dagger(z)\big],
\end{align}
with
\begin{align}
R_1(z) = Q_1 + e^{-i\pi / 4}\sum_{\alpha, \beta}z_{\alpha, \beta}\sqrt{G_{\alpha, \beta}} c_{1, \leq d}^\alpha, R_2(z) = Q_2 + e^{-i\pi/4} \sum_{\alpha, \beta} z_{\alpha, \beta}^*\sqrt{G_{\alpha, \beta}} c_{2, \leq d}^\beta,
\end{align}
\end{subequations}
where $z_{\alpha, \beta}$ are drawn independently and uniformly at random from the set $\{\pm 1, \pm i\}$ and $Q_i = \exp(-({K^{(1)}_i} a_{i, \leq d}^\dagger a_{i, \leq d} + {K^{(2)}_i} a_{i, \leq d} a_{i, \leq d}^\dagger)/2) $. It can be noted that, by construction, $\mathcal{R}_{t + \tau, t}$ maps a separable input state to a 
separable (but possibly unnormalized) output state. Explicitly evaluating the expectation value in Eq.~(\ref{eq:def_R}), we obtain that
\begin{align}\label{eq:expansion_R}
\mathcal{R}_{t + \tau, t}(\rho) &= Q_1 Q_2 \rho Q_2^\dagger Q_1^\dagger -i \sum_{\alpha, \beta} G_{\alpha, \beta} \big(c_{1, \leq d}^{\alpha} c_{2, \leq d}^\beta \rho Q_2^\dagger Q_1^\dagger -  Q_1 Q_2 \rho  c_{1, \leq d}^\alpha c_{2, \leq d}^\beta \big) + \nonumber\\
&\qquad\sum_{\alpha, \beta}\abs{G_{\alpha, \beta}} \big(c_{1, \leq d}^\alpha Q_2 \rho Q_2^\dagger c_{1, \leq d}^\alpha  +  Q_1 c_{2, \leq d}^\beta \rho c_{2, \leq d}^\beta Q_1^\dagger\big) + \nonumber\\
&\qquad \sum_{\alpha, \alpha', \beta, \beta'} \bigg(G_{\alpha, \beta}G_{\alpha', \beta'} c_{1, \leq d}^\alpha c_{2, \leq d}^\beta \rho c_{2, \leq d}^{\beta'}c_{1, \leq d}^{\alpha'} + \abs{G_{\alpha, \beta}}\abs{G_{\alpha', \beta'} } c_{1, \leq d}^\alpha c_{2, \leq d}^{\beta'} \rho c_{2, \leq d}^{\beta} c_{1, \leq d}^{\alpha'}\bigg) \nonumber \\
&= \rho - i\bigg[\int_t^{t + \delta} h_{\text{g}, \leq d}(s)ds, \rho\bigg] + \sum_{\alpha, \beta}\abs{G_{\alpha, \beta}} \big(c_{1, \leq d}^\alpha  \rho  c_{1, \leq d}^\alpha  +   c_{2, \leq d}^\beta \rho c_{2, \leq d}^\beta \big) -\nonumber\\
&\qquad \frac{1}{2} \sum_{i \in \{1, 2\}}\{K_i^{(1)} a_{i, \leq d}^\dagger a_{i, \leq d} + K_i^{(2)} a_{i, \leq d} a_{i, \leq d}^\dagger, \rho \} + \Delta_{t + \tau, t}(\rho),
\end{align}
where, using the fact that 
\begin{align}
&\norm{Q_i} \leq 1, \nonumber \\
    &\smallnorm{Q_i - I} \leq \frac{1}{2} (K_i^{(1)}\norm{a_{1, \leq d}}^2 + K_i^{(2)} \norm{a_{2, \leq d}}^2) \leq \frac{d}{2}(K_i^{(1)} + K_i^{(2)}),\nonumber\\
    &\norm{Q_i - \bigg(I - \frac{1}{2}({K_i^{(1)}}a_{i, \leq d}^\dagger a_{i, \leq d} + {K_i^{(2)}}a_{i, \leq d} a_{i, \leq d}^\dagger)\bigg)} \leq \frac{1}{8} (K_i^{(1)} \norm{a_{i, \leq d}}^2 + K_i^{(2)} \norm{a_{i, \leq d}}^2)^2 \leq \frac{d^2}{8}(K_i^{(1)} + K_i^{(2)})^2, \nonumber\\
    &\norm{c_{i, \leq d}^\alpha} \leq \sqrt{2}\norm{a_{i, \leq d}} \leq \sqrt{2d},
\end{align}
it follows that
\begin{align}
\norm{\Delta_{t+\tau, t}}_\diamond \leq \frac{d^2}{2}(K^{(1)} + K^{(2)})^2 + 4d^2 G (K^{(1)} + K^{(2)}) + 8d^2 G^2.
\end{align}
Similarly, we also define the completely positive map $\tilde{\mathcal{R}}_{t + \tau, t}$:
\begin{subequations}\label{eq:def_R_tilde}
\begin{align}
\tilde{\mathcal{R}}_{t + \tau, t}(\rho) = \mathbb{E}_y\big[\tilde{R}_1(y) \tilde{R}_2(y) \rho \tilde{R}_2^\dagger(y) \tilde{R}_1^\dagger (y)\big],
\end{align}
with
\begin{align}
\tilde{R}_1(y) = \tilde{Q}_1 + y e^{-i\pi /4}\sqrt{U} n_{1, \leq d} \text{ and } \tilde{R}_2(y) = \tilde{Q}_2 + y^* e^{-i\pi / 4}\sqrt{U} n_{2, \leq d},
\end{align}   
\end{subequations}
where $y$ is drawn randomly from $\{-1, 1, i, -i\}$ and $\tilde{Q}_i = \exp(-\frac{K_i^{(3)}}{2}n_{i, \leq d}^2)$. Similar to $\mathcal{R}_{t + \tau, t}$, $\tilde{\mathcal{R}}_{t + \tau, t}$ also maps a separable input state to a separable but possibly unnormalized output state. By explicitly evaluating the expectation in Eq.~(\ref{eq:def_R_tilde}), we find that
\begin{align}\label{eq:expansion_R_tilde}
\tilde{\mathcal{R}}_{t + \tau, t}(\rho) &= \tilde{Q}_1 \tilde{Q}_2 \rho \tilde{Q}_2^\dagger \tilde{Q}_1^\dagger -i U(n_{1, \leq d} n_{2, \leq d} \rho \tilde{Q}_1^\dagger \tilde{Q}_2^\dagger - \tilde{Q}_1 \tilde{Q}_2 \rho n_{1, \leq d} n_{2, \leq d}) + \nonumber\\
&\qquad \abs{U}\big(n_{1, \leq d} \tilde{Q}_2 \rho \tilde{Q}_2^\dagger n_{1, \leq d} + \tilde{Q}_1 n_{2, \leq d} \rho n_{2, \leq d} \tilde{Q}_1^\dagger\big) + \abs{U}^2 n_{1, \leq d} n_{2, \leq d} \rho n_{2, \leq d} n_{1, \leq d} \nonumber\\
&=\rho - i\bigg[\int_{t}^{t + \delta} h_{\text{ng}, \leq d}(s) ds, \rho\bigg] + \abs{U}\big(n_{1, \leq d} \rho n_{1, \leq d} + n_{2, \leq d} \rho n_{2, \leq d}\big) - \frac{1}{2}K_3 \{n_{1, \leq d} + n_{2, \leq d}, \rho\} + \tilde{\Delta}_{t+\tau, t}(\rho),
\end{align}
where, using the fact that $\smallnorm{\tilde{Q}_i} \leq 1, \smallnorm{\tilde{Q}_i - I} \leq K_i^{(3)} \norm{n_{i, \leq d}}^2/ 2\leq K_i^{(3)}d^2/2 $, $\smallnorm{\tilde{Q}_i - (I - \frac{K_i^{(3)}}{2}n_i^2)} \leq (K_i^{(3)}\norm{n_{i, \leq d}}^2)^2/8 \leq (K_i^{(3)})^2 d^4 / 8$ and $\norm{n_{i, \leq d}} \leq d$, it follows that
\begin{align}
\smallnorm{\tilde{\Delta}_{t + \tau, t}}_\diamond \leq \frac{1}{2} (K^{(3)})^2 d^4 + {U}_0^2 d^4 + 2 {U}_0 K^{(3)} d^4.
\end{align}
Finally, we consider the channel generated by the fermionic Lindbladian in the time interval $(t, t + \tau)$: Performing a first-order Taylor expansion, we obtain that
\begin{align}\label{eq:expansion_channel}
    \mathcal{T}\exp\bigg(\int_{t}^{t + \tau} \mathcal{L}(s) ds\bigg) \rho = \rho + \int_{t}^{t + \tau}\mathcal{L}(s) \rho ds + \Delta^\mathcal{E}_{t + \tau, t}(\rho),
\end{align}
where $\smallnorm{\Delta^{\mathcal{E}}_{t + \tau, t}}_\diamond \leq 8d^4(G + U + K)^2$. From Eqs.~(\ref{eq:expansion_R}, \ref{eq:expansion_R_tilde}, \ref{eq:expansion_channel}), we then obtain that
\begin{align}
    \mathcal{T}\exp\bigg(\int_{t}^{t + \tau} \mathcal{L}(s) ds\bigg) = \mathcal{M}_{t + \tau, t} + E_{t + \tau, t},
\end{align}
where
\begin{align}\label{eq:ferm_thm_1_sep_channel_decomp}
    \mathcal{M}_{t + \tau, t} &= \mathcal{R}_{t + \tau, t} + \tilde{\mathcal{R}}_{t + \tau, t} +\nonumber\\
    &\qquad  \underbrace{\sum_{i \in \{1, 2\}} \big(K_i^{(1)} a_{i, \leq d} \cdot a_{i, \leq d}^\dagger + K_i^{(2)} a_{i, \leq d}^\dagger \cdot a_{i, \leq d}\big) - \sum_{\alpha, \beta}\abs{G_{\alpha, \beta}}\big(c_{1, \leq d}^\alpha \cdot  c_{1, \leq d}^\alpha + c_{2, \leq d}^\beta \cdot c_{2, \leq d}^\beta\big)}_{\mathcal{V}_{t + \tau, t}} + \nonumber\\
    &\qquad   \underbrace{\sum_{i \in \{1, 2\}}(K_i^{(3)} - \abs{U}) n_{i, \leq d} \rho n_{i, \leq d}}_{\tilde{\mathcal{V}}_{t + \tau, t}}, \\
    E_{t + \tau, t} &=  {\Delta^\mathcal{E}_{t + \tau, t} - \Delta_{t + \tau, t} - \tilde{\Delta}_{t + \tau, t}}.
\end{align}
We note that $\mathcal{M}_{t+\tau, t}$ is a channel that preserves separability as long as $\mathcal{V}_{t + \tau, t}$ and $\tilde{\mathcal{V}}_{t + \tau, t}$ are completely positive. The complete positivity of $\tilde{\mathcal{V}}_{t + \tau, t}$ is ensured by requiring $K_i^{(3)} \geq \abs{U}$ which is implied by the condition C2 quoted in the lemma statement. To ensure that $\mathcal{V}_{t + \tau, t}$ is completely positive, we note that it can be re-written as
\begin{align}
\mathcal{V}_{t + \tau, t} = \sum_{i \in \{1, 2\}}\bigg( F^{(i)}_{0, 0} a_{i, \leq d} \cdot a_{i, \leq d}^\dagger + F^{(i)}_{0, 1}a_{i, \leq d} \cdot a_{i, \leq d} + F^{(i)}_{1, 0}a_{i, \leq d}^\dagger \cdot a_{i, \leq d} + F^{(i)}_{1, 1} a_{i, \leq d}^\dagger \cdot a_{i, \leq d}^\dagger\bigg),
\end{align}
where
\begin{align}
F^{(1)} = \begin{bmatrix}K_1^{(1)} - \frac{1}{2}G & \frac{1}{2}\sum_{\beta}\big(\abs{G_{2, \beta}} - \abs{G_{1, \beta}}\big) \\
\frac{1}{2}\sum_{\beta} \big(\abs{G_{2, \beta}} - \abs{G_{1, \beta}}\big) & K_1^{(2)} - \frac{1}{2}G
\end{bmatrix}, F^{(2)} =\begin{bmatrix}K_2^{(1)} - \frac{1}{2}G & \frac{1}{2}\sum_{\alpha}\big(\abs{G_{\alpha, 2}} - \abs{G_{\alpha, 1}}\big) \\
\frac{1}{2}\sum_{\beta} \big(\abs{G_{\alpha, 2}} - \abs{G_{\alpha, 1}}\big) & K_2^{(2)} - \frac{1}{2}G
\end{bmatrix}.
\end{align}
As long as $F^{(1)}, F^{(2)}$ are positive-semidefinite, it would follow that $\mathcal{V}_{t + \tau, t}$ is completely positive. Now, it is easy to see that a sufficient condition for $F^{(i)} \succeq 0$ is that $K_i^{(1)}, K_i^{(2)} \geq G$, which is implied by the condition C1 quoted in the lemma statement. Finally, the error term $E_{t + \tau, t}$ can be bounded by
\begin{align}
\smallnorm{E_{t + \tau, t}}_\diamond &\leq \smallnorm{\Delta^\mathcal{E}_{t + \tau, t}}_\diamond + \smallnorm{\Delta_{t + \tau, t}}_\diamond + \smallnorm{\tilde{\Delta}_{t + \tau, t}}_\diamond \leq 8d^4(G + U_0 + K)^2,
\end{align}
which establishes the error bound in the lemma statement.
\end{proof}
\begin{theorem}[High-noise seperability and classical simulation of bosonic model; reproduced from the main text]
Suppose $\rho(t)$ is the state obtained after evolving the bosonic system for time $t$ with an initial product state, then for $\min(\kappa_1, \kappa_2) \geq 2J$ the state $\rho(t)$ is separable. Furthermore, there is a randomized classical algorithm that can sample within $\epsilon$ total variation error of $\rho(t)$ in $O({\Lambda^2 t^2 m^{4L + 8}}{\epsilon^{-1}} \textnormal{polylog}\big({m\Lambda t}/{\epsilon}))$ time.
\end{theorem}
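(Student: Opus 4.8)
The plan is to assemble the three estimates already established---the Hilbert-space truncation of Lemma~\ref{lemma:truncation_error_bosonic}, the first-order Trotterization of the truncated model of Lemma~\ref{lemma:Trotter_bound_bosons}, and the two-mode separability of Lemma~\ref{lemma:separability_bosons}---tuning the truncation dimension $d$ and the Trotter number $T$ so that the total error is $\epsilon$, and then to read the sampling algorithm off the explicit local structure of the maps in these lemmas.

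\emph{Separability.} Start from the Trotterized state $\sigma_{T,\le d}$ of Eq.~(\ref{eq:Trotter_theorem_1_ferm}), applied to $\mathcal{P}_{\le d}\rho(0)$, which is a (possibly subnormalized) product state and hence separable with respect to the site partition. The on-site unitaries $\mathcal{U}^\text{os}$ factorize over sites and so preserve site-separability trivially. For each inter-site two-mode channel $\Phi^{i,\sigma;j,\sigma'}$ I would fix the noise split of Eq.~(\ref{eq:noise_lindbladian_decomp}) by taking $p^{(l)}_{i,\sigma;j,\sigma'}(t),q^{(l)}_{i,\sigma;j,\sigma'}(t)$ proportional to the magnitude of the corresponding inter-site coupling, so that conditions (C1) and (C2) of Lemma~\ref{lemma:separability_bosons} hold for every pair once $\min(\kappa_1,\kappa_2)\ge 2J$ (and, in the general model with $U_C\neq 0$, once an additional $\kappa_3\ge O(U_C)$ holds---vacuous in the main-text regime $U_C=0$). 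Lemma~\ref{lemma:separability_bosons} then replaces each $\Phi^{i,\sigma;j,\sigma'}$ by a completely positive, separability-preserving map $\mathcal{M}^{i,\sigma;j,\sigma'}$ with $\norm{\mathcal{M}^{i,\sigma;j,\sigma'}-\Phi^{i,\sigma;j,\sigma'}}_\diamond=O(d^4(\Lambda\delta)^2)$, so the composite state $\tilde\sigma_{T,\le d}$ after all replacements is separable. By the triangle inequality, $\norm{\rho(t)-\tilde\sigma_{T,\le d}}_1$ is at most the truncation error of Lemma~\ref{lemma:truncation_error_bosonic} plus the Trotter error of Lemma~\ref{lemma:Trotter_bound_bosons} plus the accumulated replacement error, the last of which, on summing $O(d^4(\Lambda\delta)^2)$ over the $O(m^2)$ pairs and $T$ steps, is $O(d^4 m^2\Lambda^2 t^2/T)$. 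Choosing $d=\Theta(m\,\mathrm{polylog}(m\Lambda t/\epsilon))$---so that the truncation error, which decays like $\exp(-\tfrac{1}{2}(d/d_0 m)^{1/\alpha})$, is at most $\epsilon/3$---and then $T=\Theta(t^2 m^2 d^4\Lambda^2/\epsilon)$ places $\tilde\sigma_{T,\le d}$ within $\epsilon$ of $\rho(t)$ in trace norm. Since the set of separable states is trace-norm closed, letting $\epsilon\to 0$ yields separability of $\rho(t)$.

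\emph{Classical algorithm.} The key point is that every elementary map composing $\tilde\sigma_{T,\le d}$ is either site-local or a product of single-mode operators on the two sites it connects: the $\mathcal{U}^\text{os}$ are tensor products over sites of $d^L\times d^L$ unitaries, and---from the explicit decomposition in the proof of Lemma~\ref{lemma:separability_bosons}---each $\mathcal{M}^{i,\sigma;j,\sigma'}$ is a positive combination $\sum_\mu\mathcal{N}_\mu$ of $O(1)$ completely positive maps, each of which conjugates by an operator on mode $(i,\sigma)$ tensored with one on mode $(j,\sigma')$ (the $z$- and $y$-labelled Kraus terms of $\mathcal{R}$ and $\tilde{\mathcal{R}}$, and the single-mode terms of $\mathcal{V}$ and $\tilde{\mathcal{V}}$ coming from the positive-semidefinite $F^{(i)}$). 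Conditioned on a branch $\mu$ at each such point, the state remains a product $\bigotimes_{i=1}^n\rho_i$ of $n$ per-site density matrices of dimension $d^L$. The algorithm maintains these $n$ matrices and a running scalar normalization; at each of the $T$ Trotter steps it applies the site-local unitaries and, for each of the $O(m^2)$ inter-site pairs, samples the $O(1)$-way branch with probability proportional to the trace of the resulting product state (a product of per-site traces, hence computable from the two sites involved alone), updating those two $d^L\times d^L$ matrices; at the end it samples a Fock configuration site by site from the terminal product state. A careful accounting of the per-step cost---dominated by manipulating the $d^L$-dimensional on-site blocks (exponentiating the on-site generators and performing the $O(m^2)$ single-mode updates)---together with the above choices of $d$ and $T$ yields the stated $O(\Lambda^2 t^2 m^{4L+8}\epsilon^{-1}\,\mathrm{polylog}(m\Lambda t/\epsilon))$ runtime.

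\emph{Main obstacle.} Since the separability statement is entirely contained in the three lemmas, I expect the delicate part to be the correctness of the sampling. Because each $\mathcal{M}^{i,\sigma;j,\sigma'}$ is only approximately trace preserving, the ``renormalize-as-you-go'' procedure above does not manifestly reproduce the target distribution; one must show that each step of it reproduces the exact channel up to a trace-norm error equal to that map's diamond-norm defect, and that these defects sum to $O(\epsilon)$---the same bound already used for the replacement error---so the terminal mixture is $O(\epsilon)$-close to $\rho(t)$ in trace norm and the output distribution is $O(\epsilon)$-close in total variation. (Alternatively, one may complete each $\mathcal{M}^{i,\sigma;j,\sigma'}$ to an exactly trace-preserving, separability-preserving map at negligible extra error and sample without bias.) A secondary, purely bookkeeping point is checking that a single noise split satisfies (C1) and (C2) for all pairs simultaneously and that the accumulated replacement error scales as $1/T$, so it is controlled by the same $T$ that handles the first-order Trotter error.
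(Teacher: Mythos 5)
Your proposal is correct and follows essentially the same route as the paper: truncation (Lemma \ref{lemma:truncation_error_bosonic}) plus Trotterization (Lemma \ref{lemma:Trotter_bound_bosons}) plus pairwise replacement of each inter-site channel by the separability-preserving map of Lemma \ref{lemma:separability_bosons} with the noise split chosen proportional to the coupling strengths, the same choices $d=\Theta(m\,\mathrm{polylog})$ and $T=\Theta(\Lambda^2 t^2 m^6\epsilon^{-1}\mathrm{polylog})$, and the same branch-sampling algorithm on per-site $d^L$-dimensional blocks. The normalization issue you flag as the main obstacle is exactly what the paper handles by passing to $\tilde\phi_T=\phi_T/\mathrm{Tr}(\phi_T)$ and showing this costs only an extra $O(\epsilon)$ in trace norm.
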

\begin{proof}
To prove Theorem 2, we will start with truncated first-order Trotter approximation of $\rho(t)$, i.e.\ with  $\sigma_{T, \leq d}$ given in Eq.~(\ref{eq:Trotter_theorem_1_ferm}). From Lemmas \ref{lemma:truncation_error_bosonic} and \ref{lemma:Trotter_bound_bosons}, we obtain that
\begin{align}\label{eq:truncated_Trotter_bound}
\norm{\rho(t) - \sigma_{T, \leq d}}_1 \leq O\bigg(\frac{\Lambda^2 t^2 m^2d^2 }{T}\bigg) + O\big( \Lambda t m^{1-k_0/2}d^{2 + k_0/2}   e^{-\frac{1}{2}({d}/{d_0 m})^{1/\alpha}}\big),
\end{align}
where $\Lambda = U_C + U_\textnormal{os} + J_C + J_\textnormal{os} + \kappa$. Next, we use Lemma \ref{lemma:separability_bosons} to further approximate $\sigma_{T, \leq d}$ with a separable state $\phi_T$. However, the noise rates at each step in the Trotterization need to be sufficiently high to meet the necessary conditions for separability [(C1) and (C2) provided in Lemma \ref{lemma:separability_bosons}]---to ensure this, we make a choice of the parameters $p^{(l)}_{i, \sigma; j, \sigma'}(t), q^{(l)}_{i, \sigma; j, \sigma'}(t)$ in Eq.~(\ref{eq:noise_lindbladian_decomp}) that we so far left unspecified: 
\begin{subequations}
\begin{align}
    &p^{(1)}_{i, \sigma; j, \sigma'}(t) = p^{(2)}_{i, \sigma; j, \sigma'}(t) = \frac{\sum_{\alpha, \alpha'}\smallabs{J^{\alpha, \alpha'}_{i, \sigma; j, \sigma'}(t)}}{\sum_{k \neq i} \sum_{\nu}\sum_{\alpha, \alpha'}\smallabs{J^{\alpha, \alpha'}_{i, \sigma; k, \nu}(t)}}, p^{(3)}_{i, \sigma; j, \sigma'}(t) = \frac{\smallabs{U_{i, \sigma; j, \sigma'}(t)}}{\sum_{k \neq i} \sum_\nu \smallabs{U_{i, \sigma; k, \nu}(t)}}, \nonumber \\
    &q^{(1)}_{i, \sigma; j, \sigma'}(t) = q^{(2)}_{i, \sigma; j, \sigma'}(t) = \frac{\sum_{\alpha, \alpha'}\smallabs{J^{\alpha, \alpha'}_{i, \sigma; j, \sigma'}(t)}}{\sum_{k \neq i} \sum_{\nu}\sum_{\alpha, \alpha'}\smallabs{J^{\alpha, \alpha'}_{j, \sigma'; k, \nu}(t)}}, q^{(3)}_{i, \sigma; j, \sigma'}(t) = \frac{\smallabs{U_{i, \sigma; j, \sigma'}(t)}}{\sum_{k \neq i} \sum_\nu \smallabs{U_{j, \sigma'; k, \nu}(t)}},
\end{align}
and it can be checked that they satisfy the normalization condition in Eq.~(\ref{eq:pq_norm}). Considering now the channels $\Phi^{i, \sigma; j, \sigma'}_{\tau \delta, (\tau - 1)\delta}$ from Eq.~(\ref{eq:Trotter_theorem_1_ferm}) in Trotterized state $\sigma_{T, \leq d}$---to apply Lemma \ref{lemma:separability_bosons} to these channels, we need
\begin{enumerate}
    \item[(1)] Imposing condition C1: For $l \in \{1, 2\}$
    \begin{align}
    &\kappa_l p^{(l)}_{i, \sigma; j, \sigma'}(t), \kappa_l q^{(l)}_{i, \sigma; j, \sigma'}(t) \geq 2\sum_{\alpha, \alpha'} \smallabs{J^{\alpha, \alpha'}_{i, \sigma; j, \sigma'}(t)} \text{ or equivalently} \\
    &\kappa_l \geq 2\sum_{k' \neq k} \sum_{\nu'} \sum_{\alpha, \alpha'}\smallabs{J^{\alpha, \alpha'}_{k, \nu; k', \nu'}(t)} \text{ for }(k, \nu) \in \{(i, \sigma), (j, \sigma')\}.
\end{align}
This condition can clearly be satisfied if $\kappa_1, \kappa_2 \geq 2J_C$ since $\sum_{k' \neq k}\sum_{\nu'} \sum_{\alpha, \alpha'}\smallabs{J^{\alpha, \alpha'}_{k, \nu; k', \nu'}(t)} \leq J_C$.
\item[(2)] Imposing condition C2: 
\begin{align}
    &\kappa_3 p^{(3)}_{i, \sigma; j, \sigma'}(t), \kappa_3 q^{(3)}_{i, \sigma; j, \sigma'}(t) \geq  2\smallabs{U_{i, \sigma; j, \sigma'}(t)} \text{ or equivalently} \\
    &\kappa_3 \geq 2\sum_{k' \neq k} \sum_{\nu'} \smallabs{U_{k, \nu; k', \nu'}(t)} \text{ for }(k, \nu) \in \{(i, \sigma), (j, \sigma')\}.
\end{align}
This condition can clearly be satisfied if $\kappa_3 \geq 2U_C$ since $\sum_{k' \neq k}\sum_{\nu'} \smallabs{U_{k, \nu; k', \nu'}(t)} \leq U_C$.
\end{enumerate}
Now, assuming $\kappa_1, \kappa_2 \geq 2J_C$ and $\kappa_3 \geq 2U_C$, we can then approximate $\sigma_{T, \leq d}$ by ${\phi}_T$ given by
\begin{align}
    {\phi}_T = \prod_{\tau = N}^1 \bigg(\prod_{i < j} \prod_{\sigma, \sigma'} \mathcal{M}^{i, \sigma; j, \sigma'}_{\tau \delta, (\tau - 1) \delta} \bigg) \mathcal{U}^\text{os}_{\tau \delta, (\tau - 1) \delta}\rho(0),
\end{align}
where $\mathcal{M}^{i, \sigma; j, \sigma'}_{\tau \delta, (\tau - 1) \delta}$ is the separability preserving completely-positive map corresponding to $\Phi^{i, \sigma; j,\sigma'}_{\tau \delta, (\tau - 1) \delta}$ from Lemma \ref{lemma:separability_bosons}, which also satisfies
\begin{align}
    \smallnorm{\Phi^{i, \sigma; j,\sigma'}_{\tau \delta, (\tau - 1) \delta} -\mathcal{M}^{i, \sigma; j, \sigma'}_{\tau \delta, (\tau - 1) \delta} }_\diamond \leq \varepsilon^{i, \sigma; j, \sigma'}_\tau, 
\end{align}
where
\begin{align}
    \varepsilon^{i, \sigma; j, \sigma'}_\tau = 8d^4\bigg(\int_{(\tau - 1) \delta}^{\tau \delta}\bigg(\sum_{\alpha, \alpha'}\smallabs{J^{\alpha, \alpha'}_{i, \sigma; j, \sigma'}(s)}ds +  \smallabs{U_{i, \sigma; j, \sigma'}(s)} ds + \sum_{l = 1}^3  \kappa_l \big(p_{i, \sigma; j, \sigma'}^{(l)}(s) + q_{i, \sigma; j, \sigma'}^{(l)}(s)\big)\bigg) ds \bigg)^2.
\end{align}
\end{subequations}
We note that $\varepsilon$ is defined by
\begin{align}
    \varepsilon &= \sum_{\tau = 1}^T\sum_{i, j: i < j} \sum_{\sigma, \sigma'}\varepsilon^{i, \sigma; j, \sigma'}_\tau \nonumber\\
    &\leq 8d^4\sum_{\tau = 1}^T \bigg(\int_{(\tau - 1)\delta}^{\tau \delta}\sum_{i, j; i<j}\bigg(\sum_{\alpha, \alpha'}\smallabs{J^{\alpha, \alpha'}_{i, \sigma; j, \sigma'}(s)}ds +  \smallabs{U_{i, \sigma; j, \sigma'}(s)} ds + \sum_{l = 1}^3  \kappa_l \big(p_{i, \sigma; j, \sigma'}^{(l)}(s) + q_{i, \sigma; j, \sigma'}^{(l)}(s)\big)\bigg) ds\bigg)^2 \nonumber \\
    &\leq  8d^4 \big(J_C + U_C + 2\kappa\big)^2 \frac{t^2 m^2}{T} \leq 32 d^4 \frac{\Lambda^2  t^2 m^2}{T}.
\end{align}
Furthermore, we also note from the triangle inequality that
\begin{align}
\smallnorm{\mathcal{M}^{i, \sigma; j, \sigma'}_{\tau \delta, (\tau - 1)\delta}}_\diamond \leq \smallnorm{\Phi_{\tau \delta, (\tau - 1) \delta}^{i, \sigma; j, \sigma'}}_\diamond + \smallnorm{ \mathcal{M}^{i, \sigma; j, \sigma'}_{\tau \delta, (\tau - 1) \delta} - \Phi^{i, \sigma; j,\sigma'}_{\tau \delta, (\tau - 1) \delta}}_\diamond \leq  1 + \varepsilon^{i, \sigma; j, \sigma'}_\tau \leq \exp(\varepsilon^{i, \sigma; j, \sigma'}_\tau ).
\end{align}
We can now bound the error between $\phi_T$ and $\sigma_{T, \leq d}$ using telescoping to obtain
\begin{align}\label{eq:separable_appx_fermion}
    \norm{\phi_T - \sigma_{T, \leq d}}_1 \leq e^{\varepsilon} \varepsilon \leq e^{O(\Lambda^2 t^2 m^2 d^4  / T^2)}O\bigg(\frac{\Lambda^2 t^2 m^2 d^4 }{T}\bigg).
\end{align}
Finally, using Eq.~(\ref{eq:truncated_Trotter_bound}), we obtain that
\begin{align}
\norm{\rho(t) - \phi_T}_1 \leq e^{O( \Lambda^2 t^2 m^2 d^4 / T)}O\bigg(\frac{ \Lambda^2 t^2 m^2 d^4}{T}\bigg) + O\bigg(\frac{\Lambda^2 t^2 m^2d^4 }{T}\bigg) + O\big( \Lambda t m^{1-k_0/2}d^{2 + k_0/2}   e^{-\frac{1}{2}({d}/{d_0 m})^{1/\alpha}}\big).
\end{align}
Thus, choosing 
\begin{align}
d = \Theta\bigg(m\ \textnormal{polylog}\bigg(\frac{m\Lambda t}{\epsilon}\bigg)\bigg), T = \Theta\bigg(\frac{\Lambda^2 t^2 m^6}{\epsilon} \textnormal{polylog}\bigg(\frac{m\Lambda t}{\epsilon}\bigg)\bigg)
\end{align}
ensures that $\norm{\rho(t) - \phi_T}_1 \leq \epsilon$. Finally, we note that $\phi_N$ by itself is guaranteed to be positive semi-definite but not normalized. We will instead consider $\tilde{\phi}_T = \phi_T / \text{Tr}(\phi_T)$---note that, if $\norm{\phi_T - \rho(t)}_1 \leq \epsilon < 1$, 
\begin{align}
\smallnorm{\tilde{\phi}_T - \rho(t)}_1 \leq \frac{1}{\text{Tr}(\phi_N)}\smallnorm{\phi_T - \rho(t)}_1 + \abs{\frac{\text{Tr}(\phi_T) - 1}{\text{Tr}(\phi_T)}}\norm{\rho(t)}_1 \numleq{1} \frac{2\epsilon}{1 - \epsilon} \leq O(\epsilon),
\end{align}
where in (1) we have used that $\abs{\text{Tr}(\phi_T) - 1} = \abs{\text{Tr}(\phi_T) - \text{Tr}(\rho(t))} \leq \norm{\phi_T - \rho(t)}_1\leq \epsilon$.

\emph{Time-complexity of sampling in the Fock state basis}. We now consider the cost of sampling from the state $\tilde{\phi}_T$. By construction, $\tilde{\phi}_T$ is a separable state and hence can be expressed as 
\begin{align}
\tilde{\phi}_T = \sum_{\alpha} p_\alpha \bigg(\bigotimes_i \rho_i^{(\alpha)}\bigg),
\end{align}
where $p_\alpha$ is a probability distribution over $\alpha$ and $\rho_i^{(\alpha)}$ is a state supported on the modes at the $i^\text{th}$ site. To either sample from or compute a local observable in $\tilde{\phi}_T$, we first sample from $p_\alpha$ and obtain a product state $\otimes_{i}{\rho_i^{(\alpha)}}$ from the mixed state ensemble $\tilde{\phi}_T$. Given the initial state $\rho_{\leq d}(0)$ as a product state, we sequentially apply $\mathcal{M}^{i, \sigma; j, \sigma'}_{\tau \delta, (\tau - 1) \delta}$, normalize the result and sample from the resulting separable state to obtain another product state---since the input state is a product state, each application of $\mathcal{M}^{i, \sigma; j, \sigma'}_{\tau \delta, (\tau - 1) \delta}$, normalization and the subsequent sampling involves only the $2L$ truncated bosonic modes at sites $i$ and $j$ and can be classically done in $O(d^{4L}) \leq O(m^{4L}\text{polylog}(m\Lambda t/\varepsilon))$ time. Additionally, the application of the on-site unitaries ($\mathcal{U}^\text{os}_{\tau \delta, (\tau - 1)\delta}$) will map a product state between the different sites to another product state, and it can be applied classically in $O(m d^{3L}) \leq O(m^{3L + 1}\text{polylog}(m\Lambda t/\varepsilon))$ time. Counting the time needed to apply, in this manner, all $\mathcal{M}^{i, \sigma; j, \sigma'}_{\tau \delta, (\tau - 1) \delta}$ and $\mathcal{U}^\text{os}_{\tau \delta, (\tau - 1)\delta}$, the total classical run-time for drawing one product state from $\tilde{\phi}_T$ is thus $O(T m^2 \times m^{4L}  \text{polylog}(m\Lambda t/\varepsilon)) \leq O(\Lambda^2 t^2 m^{4L + 8}\epsilon^{-1} \text{polylog}(m\Lambda t/\varepsilon))$. Having drawn a product state $\otimes_i \rho_i^{(\alpha)}$ from the separable state $\tilde{\phi}_T$, we can now consider the task of drawing a sample in the Fock state basis: Given each $\rho_i^{(\alpha)}$ as a $d^L \times d^L$ matrix, drawing a sample from $\otimes_i \rho_i^{(\alpha)}$ on the Fock state basis requires computational time $O(n d^{L}) \leq O(m^{L + 1} \textnormal{polylog}(m\Lambda t/\epsilon))$. Thus, the total time complexity of drawing a single sample from $\tilde{\phi}_T$ is dominated by the cost of sampling from $p_\alpha$ and is $O({\Lambda^2 t^2 m^{4L + 8}}{\epsilon^{-1}} \textnormal{polylog}\big({m\Lambda t}/{\epsilon}))$.
\end{proof}

\section{High-noise separability in spin models} \label{supplemental:spins}
In this section, we analyze a spin model evolving under a 2-local Hamiltonian in the presence of noise, which closely follows the analysis of the bosonic model in the previous section. We only provide a derivation of the counterpart of Lemma \ref{lemma:separability_bosons} for the spin model, which outlines a sufficient condition for separability preservation for two qudits. Combining this lemma with standard first-order Trotterization can allow us to show that even in the many-body regime, a sufficiently high noise maps a separable state to another separable state.
\begin{lemma}[Separability condition for spin models]\label{lemma:separability_spins}
Consider a Lindbladian on two $d-$level qudits  given by
\begin{align*}
\mathcal{L}(t) = -i[h(t), \cdot] + \kappa(t)\sum_{i \in \{1, 2\}}\sum_{k} \mathcal{D}_{L_{i, k}},
\end{align*}
where $\kappa(t) \geq 0$. Here $h(t)$ is a two-qudit Hamiltonian which we express as
\begin{align*}
h(t) = \sum_{\alpha}s_{\alpha}(t) O_{1, \alpha}\otimes O_{2, \alpha},
\end{align*}
where we can assume $s_\alpha(t) \geq 0$, $O_{i, \alpha}$ are Hermitian operators on the $i^\text{th}$ qudit with $\norm{O_{i, \alpha}}_F \leq 1$. Furthermore, for each $i \in \{1, 2\}$, the jump operators $L_{i, k}$ satisfy $\sum_{k}\norm{L_{i, k}}^2 \leq 1$ and have a full Kraus rank and $\exists \lambda_0 > 0$ such that for any single qudit operator $A$
\begin{align*}
\sum_{k}\smallabs{\textnormal{Tr}(L_{i, k}^\dagger A)}^2 \geq \lambda_0 \norm{A}_F^2.
\end{align*}
Then if $\kappa(t) \geq \sum_{\alpha}s_\alpha(t) / \lambda_0$, there is a completely positive map $\mathcal{M}_{t + \tau, t}$ which maps separable states to separable states and
\begin{align*}
\norm{\mathcal{M}_{t + \tau, t} - \mathcal{T}\exp\bigg(\int_t^{t + \tau}\mathcal{L}(s) ds\bigg)}_\diamond \leq 4 \bigg(\int_{t}^{t + \tau} \kappa(t') dt' + \sum_\alpha \int_{t}^{t + \tau}s_\alpha(t') dt'\bigg)^2.
\end{align*}
\end{lemma}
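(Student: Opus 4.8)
\emph{Strategy}. The plan is to replay the construction from the proof of Lemma~\ref{lemma:separability_bosons}. I will build an explicit completely-positive map $\mathcal{R}_{t+\tau,t}$ as an average of Kraus operators that factorize across the two qudits --- so that $\mathcal{R}_{t+\tau,t}$ automatically maps separable states to separable states --- and tune it so that, to first order in the integrated couplings, it already reproduces $-i[\int_t^{t+\tau}h(s)ds,\,\cdot\,]$ and the ``no-jump'' part of the dissipator. Then $\mathcal{M}_{t+\tau,t}$ will be $\mathcal{R}_{t+\tau,t}$ plus a correction $\mathcal{V}_{t+\tau,t}$ that adds back the genuine noise jumps and subtracts the spurious single-site dissipative terms generated by $\mathcal{R}_{t+\tau,t}$; the whole point will be that $\mathcal{V}_{t+\tau,t}$ is completely positive exactly when $\kappa$ is large enough. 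I use $s_\alpha(t)\ge 0$ (allowed, absorbing signs into $O_{1,\alpha}$) and abbreviate $S_\alpha=\int_t^{t+\tau}s_\alpha(s)ds\ge 0$, $S_0=\sum_\alpha S_\alpha$, $K=\int_t^{t+\tau}\kappa(s)ds$, $\tilde h=\int_t^{t+\tau}h(s)ds=\sum_\alpha S_\alpha\,O_{1,\alpha}\otimes O_{2,\alpha}$.

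\emph{Construction and first-order expansion}. I set $Q_i=\exp(-\tfrac{K}{2}\sum_k L_{i,k}^\dagger L_{i,k})$ and define
\begin{align*}
R_1(z) = Q_1 + e^{-i\pi/4}\sum_\alpha z_\alpha \sqrt{S_\alpha}\,O_{1,\alpha},\qquad R_2(z) = Q_2 + e^{-i\pi/4}\sum_\alpha z_\alpha^*\,\sqrt{S_\alpha}\,O_{2,\alpha},
\end{align*}
with the $z_\alpha$ drawn independently and uniformly from $\{\pm 1,\pm i\}$, and $\mathcal{R}_{t+\tau,t}(\rho)=\mathbb{E}_z[R_1(z)R_2(z)\rho R_2^\dagger(z)R_1^\dagger(z)]$. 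Since $R_1(z)$ is supported on qudit $1$ and $R_2(z)$ on qudit $2$, every term in the average carries a product state to a product state, so $\mathcal{R}_{t+\tau,t}$ is completely positive and separability preserving. Using $\mathbb{E}[z_\alpha]=\mathbb{E}[z_\alpha^2]=0$ and $\mathbb{E}[z_\alpha z_\beta^*]=\delta_{\alpha\beta}$ (so that the cross term reproduces $\mathbb{E}[R_1^{\text{lin}}R_2^{\text{lin}}]=-i\tilde h$), and expanding $Q_i=I-\tfrac{K}{2}\sum_k L_{i,k}^\dagger L_{i,k}+O(K^2)$, a short computation gives
\begin{align*}
\mathcal{R}_{t+\tau,t}(\rho) = \rho - i[\tilde h,\rho] - \frac{K}{2}\sum_{i\in\{1,2\}}\sum_k\{L_{i,k}^\dagger L_{i,k},\rho\} + \sum_\alpha S_\alpha\sum_{i\in\{1,2\}}O_{i,\alpha}\,\rho\,O_{i,\alpha} + \Delta_{t+\tau,t}(\rho),
\end{align*}
where $\Delta_{t+\tau,t}$ collects the quartic-in-$z$ term and the $O(K^2),O(KS_0)$ remainders; since $\norm{O_{i,\alpha}}\le\norm{O_{i,\alpha}}_F\le 1$ and $\sum_k\norm{L_{i,k}}^2\le 1$, Eq.~(\ref{eq:super_op_bound}) gives $\smallnorm{\Delta_{t+\tau,t}}_\diamond=O((S_0+K)^2)$. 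A first-order Taylor expansion of the exact channel gives $\mathcal{T}\exp(\int_t^{t+\tau}\mathcal{L})(\rho)=\rho-i[\tilde h,\rho]-\tfrac{K}{2}\sum_{i,k}\{L_{i,k}^\dagger L_{i,k},\rho\}+K\sum_{i,k}L_{i,k}\rho L_{i,k}^\dagger+\Delta^{\mathcal{E}}_{t+\tau,t}(\rho)$, with $\smallnorm{\Delta^{\mathcal{E}}_{t+\tau,t}}_\diamond=O((S_0+K)^2)$ since $\smallnorm{\mathcal{L}(s)}_\diamond\le 2\sum_\alpha s_\alpha(s)+4\kappa(s)$ (using $\smallnorm{\mathcal{D}_L}_\diamond\le 2\norm{L}^2$). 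Subtracting, $\mathcal{T}\exp(\int_t^{t+\tau}\mathcal{L})=\mathcal{M}_{t+\tau,t}+\Delta^{\mathcal{E}}_{t+\tau,t}-\Delta_{t+\tau,t}$ with $\mathcal{M}_{t+\tau,t}=\mathcal{R}_{t+\tau,t}+\mathcal{V}_{t+\tau,t}$ and
\begin{align*}
\mathcal{V}_{t+\tau,t}(\rho) = \sum_{i\in\{1,2\}}\bigg(K\sum_k L_{i,k}\,\rho\,L_{i,k}^\dagger - \sum_\alpha S_\alpha\,O_{i,\alpha}\,\rho\,O_{i,\alpha}\bigg),
\end{align*}
so $\smallnorm{\mathcal{M}_{t+\tau,t}-\mathcal{T}\exp(\int_t^{t+\tau}\mathcal{L})}_\diamond\le\smallnorm{\Delta^{\mathcal{E}}_{t+\tau,t}}_\diamond+\smallnorm{\Delta_{t+\tau,t}}_\diamond$, which, after collecting the second-order remainders, gives the bound in the statement (all operator norms being $\le 1$, nothing depends on $d$).

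\emph{Complete positivity of the correction, and the role of the noise condition}. Since $\mathcal{R}_{t+\tau,t}$ already preserves separability and each summand of $\mathcal{V}_{t+\tau,t}$ acts on a single qudit (tensored with the identity on the other), it is enough to show that every single-qudit map $\rho\mapsto K\sum_k L_{i,k}\rho L_{i,k}^\dagger-\sum_\alpha S_\alpha\,O_{i,\alpha}\rho\,O_{i,\alpha}$ is completely positive; then $\mathcal{M}_{t+\tau,t}$ is completely positive and maps separable states to (unnormalized) separable states. A map $\rho\mapsto\sum_j M_j\rho M_j^\dagger-\sum_l N_l\rho N_l^\dagger$ is completely positive iff its Choi operator $\sum_j\mathrm{vec}(M_j)\mathrm{vec}(M_j)^\dagger-\sum_l\mathrm{vec}(N_l)\mathrm{vec}(N_l)^\dagger$ is positive semidefinite, with $\mathrm{vec}$ the column-stacking isometry obeying $\mathrm{vec}(A)^\dagger\mathrm{vec}(B)=\mathrm{Tr}(A^\dagger B)$. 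Here this reads $K\sum_k\mathrm{vec}(L_{i,k})\mathrm{vec}(L_{i,k})^\dagger\succeq\sum_\alpha S_\alpha\,\mathrm{vec}(O_{i,\alpha})\mathrm{vec}(O_{i,\alpha})^\dagger$. On the right, $S_\alpha\,\mathrm{vec}(O_{i,\alpha})\mathrm{vec}(O_{i,\alpha})^\dagger\preceq S_\alpha\norm{O_{i,\alpha}}_F^2\,I\preceq S_\alpha I$, so the right side is $\preceq S_0 I$. On the left, the hypothesis $\sum_k\smallabs{\mathrm{Tr}(L_{i,k}^\dagger A)}^2\ge\lambda_0\norm{A}_F^2$ rewrites, via $\mathrm{Tr}(L_{i,k}^\dagger A)=\mathrm{vec}(L_{i,k})^\dagger\mathrm{vec}(A)$ and the surjectivity of $A\mapsto\mathrm{vec}(A)$, as $\sum_k\mathrm{vec}(L_{i,k})\mathrm{vec}(L_{i,k})^\dagger\succeq\lambda_0 I$, so the left side is $\succeq K\lambda_0 I$. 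The pointwise hypothesis $\kappa(s)\ge\sum_\alpha s_\alpha(s)/\lambda_0$ integrates to $K\lambda_0\ge S_0$, and $(K\lambda_0 I-S_0 I)+(S_0 I-\sum_\alpha S_\alpha\mathrm{vec}(O_{i,\alpha})\mathrm{vec}(O_{i,\alpha})^\dagger)$ is a sum of two positive-semidefinite operators, hence positive semidefinite; this establishes the Choi condition, so $\mathcal{V}_{t+\tau,t}$ and therefore $\mathcal{M}_{t+\tau,t}$ are completely positive and separability preserving.

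\emph{Main obstacle}. The conceptual heart of the argument is this complete-positivity check for $\mathcal{V}_{t+\tau,t}$: after vectorizing one must see that the spurious dissipative terms $\sum_\alpha S_\alpha O_{i,\alpha}\rho O_{i,\alpha}$ created by $\mathcal{R}_{t+\tau,t}$ are uniformly dominated by $S_0 I$ while the genuine noise jumps are bounded below by $K\lambda_0 I$, so that the nondegeneracy constant $\lambda_0$ of the dissipator is precisely the quantity that must beat the Hamiltonian strength --- this is what forces the threshold $\kappa\ge\sum_\alpha s_\alpha/\lambda_0$. The rest --- evaluating the $\{\pm1,\pm i\}$-moments in the first-order expansion of $\mathcal{R}_{t+\tau,t}$ and bounding the second-order remainders --- is routine, the key point being that $\norm{O_{i,\alpha}}\le 1$ and $\sum_k\norm{L_{i,k}}^2\le 1$ keep all operator norms uniformly bounded so that no qudit-dimension factor enters the error estimate.
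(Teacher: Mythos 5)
Your proposal is correct and follows essentially the same route as the paper's proof: the same randomized Kraus construction $\mathcal{R}_{t+\tau,t}$ with phases drawn from $\{\pm 1,\pm i\}$ and no-jump factors $Q_i$, the same completely positive correction (the paper's $\mathcal{G}_{t+\tau,t}$ equals your $\mathcal{V}_{t+\tau,t}$), and the same Choi-operator positivity argument in which $\lambda_0$ lower-bounds the jump contribution and $\norm{O_{i,\alpha}}_F\le 1$ upper-bounds the spurious terms. The only cosmetic difference is that you phrase the Choi condition as a vectorized operator inequality rather than evaluating $\bra{\psi}\Phi_{\mathcal{G}^{(i)}}\ket{\psi}$ directly, and you leave the second-order remainder as $O((S_0+K)^2)$ rather than tracking the explicit constant $4$ as the paper does.
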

\begin{proof}
    It will be convenient to introduce the scalars
    \begin{align}
    S_\alpha = \int_{t}^{t + \tau} s_\alpha(t') dt', S = \sum_\alpha S_\alpha \text{ and }K = \int_t^{t + \tau}\kappa(t') dt'.
    \end{align}
    We will also define
    \begin{align}
    q_i^\text{eff}(t) = \frac{\kappa(t)}{2} \sum_{k} L_{i, k}^\dagger L_{i, k}, q^\text{eff} = q_1^\text{eff}(t) \otimes I + I \otimes q_2^\text{eff}(t) \text{ and }Q_i = \exp\bigg(-\int_{t}^{t + \tau}{q_i^\text{eff}}(t') dt'\bigg).
    \end{align}
    Consider the following completely positive map
    \begin{subequations}\label{eq:sep_spins}
    \begin{align}
    \mathcal{R}_{t + \tau, t}\rho = \mathbb{E}_z\big((R_1(z)\otimes R_2(z)) \rho (R_1^\dagger(z) \otimes R_2^\dagger(z))\big), 
    \end{align}
    where
    \begin{align}
    R_1(z) = Q_1 + e^{-i\pi/4}\sum_{\alpha} z_\alpha \sqrt{S_\alpha} O_{1, \alpha} \text{ and }R_2(z) = Q_2 + e^{-i\pi/4}\sum_{\alpha}z_{\alpha}^* \sqrt{S_\alpha}O_{2, \alpha},
    \end{align}
    \end{subequations}
    where $z_\alpha$ are drawn uniformly and independently from the set $\{\pm 1, \pm i\}$. We note that $\mathcal{R}_{t + \tau, t}$ is separability preserving i.e.~maps a separable state to another separable state. Explicitly evaluating the expectation value in Eq.~(\ref{eq:sep_spins}), we obtain
    \begin{align}
        \mathcal{R}_{t + \tau, t}(\rho) &= (Q_1 \otimes Q_2) \rho (Q_1^\dagger \otimes Q_2^\dagger) -i \sum_{\alpha} S_\alpha \big((O_{1,\alpha}\otimes O_{2, \alpha}) \rho (Q_1^\dagger \otimes Q_2^\dagger) - (Q_1 \otimes Q_2) \rho (O_{1, \alpha}\otimes O_{2, \alpha})\big) + \nonumber \\
        &\qquad \qquad \sum_{\alpha}S_\alpha \big((O_{1, \alpha} \otimes Q_2) \rho (O_{1, \alpha}\otimes Q_2^\dagger) + (Q_1\otimes O_{2, \alpha}) \rho (Q_1^\dagger \otimes O_{2, \alpha})\big) + \nonumber \\
        &\qquad \qquad \sum_{\alpha, \alpha'}S_\alpha S_{\alpha'}\big((O_{1, \alpha} \otimes O_{2, \alpha}) \rho (O_{1,\alpha'}\otimes O_{2, \alpha'}) + (O_{1, \alpha}\otimes O_{2, \alpha'}) \rho (O_{1, \alpha}\otimes O_{2, \alpha'})\big),
    \end{align}
    where using $\smallnorm{Q_i} \leq 1, \smallnorm{Q_i - I} \leq K/2, \smallnorm{Q_i - (I - q_i^{\text{eff}})} \leq K^2/8$, we obtain that
    \begin{align}
    \mathcal{R}_{t + \tau,t}(\rho)&=\rho - i\int_t^{t + \tau}[h(t'), \rho]dt' - \int_{t}^{t + \tau}\{q^\text{eff}(t'), \rho\} dt' + \nonumber\\
    &\qquad \qquad \sum_{\alpha}S_\alpha (O_{1, \alpha}\otimes I) \rho (O_{1, \alpha}\otimes I) + (I\otimes O_{2, \alpha}) \rho (I\otimes O_{2, \alpha})) +  \Delta_{t + \tau, t}(\rho) \nonumber\\
    &=\rho + \int_{t}^{t + \tau}\mathcal{L}(t') \rho dt' - \mathcal{G}_{t + \tau, t}\rho  + \Delta^{(R)}_{t + \tau, t}(\rho), 
    \end{align}
    where $\mathcal{G}_{t + \tau, t}$ is a superoperator given by
    \begin{align}\label{eq:superoperator_G}
        &\mathcal{G}_{t+ \tau, t} = \mathcal{G}^{(1)}_{t + \tau, t} \otimes \textnormal{id} + \textnormal{id}\otimes \mathcal{G}^{(2)}_{t + \tau, t}, \text{ where } \nonumber
        \\ &\mathcal{G}^{(i)}_{t + \tau, t}(\rho) = K \sum_{k}L_{i, k} \rho  L_{i, k}^\dagger  - \sum_{\alpha}S_\alpha O_{i, \alpha} \rho O_{i, \alpha} ,
    \end{align}
    and $\Delta_{t+\tau, t}$ is a super-operator with
    \begin{align}
        \smallnorm{\Delta^{(R)}_{t + \tau, t}}_\diamond \leq 2K^2 + 4SK + 2S^2 = 2(S + K)^2.
    \end{align}
    Furthermore, the channel generated by the Lindbladian can be expanded to the first order to obtain
    \begin{align}
        \mathcal{T}\exp\bigg(\int_t^{t + \tau}\mathcal{L}(s) ds\bigg) = \rho + \int_{t}^{t + \tau}\mathcal{L}(t') \rho dt' + \Delta^{(L)}_{t + \tau, t},
    \end{align}
    where, since $\smallnorm{\mathcal{L}(t)}_\diamond \leq 2(\sum_{\alpha} s_\alpha(t) + \kappa(t))$, $\Delta^{(L)}_{t + \tau, t}$ is a superoperator with
    \begin{align}
     \smallnorm{\Delta^{(L)}_{t + \tau, t}}_\diamond \leq 2\big(S + K\big)^2.
    \end{align}
    Consequently, we have that
    \begin{align}
    \norm{\mathcal{T}\exp\bigg(\int_0^t \mathcal{L}(s) ds \bigg) - \big(\mathcal{R}_{t + \tau, t} + \mathcal{G}_{t + \tau, t}\big)}_\diamond \leq 4(S + K)^2.
    \end{align}
    We note that $\mathcal{R}_{t + \tau, t}$ is separability preserving by construction. Furthermore, $\mathcal{G}_{t + \tau, t}$ is a sum of super-operators acting individually on the two qudits---consequently, $\mathcal{G}_{t + \tau, t}$ will be separability preserving as long as it is completely positive. To find a sufficient condition for complete positivity of $\mathcal{G}^{(i)}_{t + \tau, t}$, we will impose that its Choi state, $\Phi_{\mathcal{G}^{(i)}}$, is positive semi-definite. From Eq.~(\ref{eq:superoperator_G}), we obtain that
    \begin{align}
        \Phi_{\mathcal{G}^{(i)}} = \sum_{j, j' = 1}^d \bigg(K\sum_{k}  L_{i, k}\ket{j}\!\bra{j'}L_{i,k}^\dagger - \sum_{\alpha} S_\alpha O_{i, \alpha} \ket{j}\!\bra{j'}O_{i, \alpha}\bigg) \otimes \ket{j}\!\bra{j'}.
    \end{align}
    Now, suppose $\ket{\psi} = \sum_{j, j'}\psi_{j, j'}\ket{j, j'} \in \mathbb{C}^d\otimes \mathbb{C}^d $ is a two-qudit state and $\Psi = \sum_{j, j'}\psi_{j, j'}\ket{j}\!\bra{j'}$ is its corresponding matrix, then
    \begin{align}
        \bra{\psi}\Phi_{\mathcal{G}^{(i)}}\ket{\psi} &= K\sum_{k}\smallabs{\text{Tr}(L_{i, k}^\dagger \Psi)}^2 - \sum_{\alpha} S_\alpha \smallabs{\text{Tr}(O_{i, \alpha}\Psi)}^2 \nonumber \\
        &\numgeq{1} K \lambda_0\norm{\Psi}_F^2 - \sum_{\alpha} S_\alpha \norm{O_{i, \alpha}}_F^2 \norm{\Psi}_F^2 \nonumber \\
        &= (K \lambda_0 - S) \norm{\ket{\psi}}^2,
    \end{align}
    where in (1) we have used the fact that $\text{Tr}(A^\dagger B)^2 \leq \text{Tr}(A^\dagger A) \text{Tr}(B^\dagger B)$ and also the condition $\sum_k \smallabs{\text{Tr}(L_{i, k}^\dagger \Psi)}^2 \geq \lambda_0 \norm{\Psi}_F^2$ from the lemma statement. Therefore, if $K \geq S/\lambda_0$, which is implied by the condition $\kappa(t) \geq \sum_{\alpha}s_\alpha(t) / \lambda_0$, then $\mathcal{G}^{(i)}$ are completely positive. This in turn implies that the super-operator $\mathcal{M}_{t + \tau, t} = \mathcal{R}_{t + \tau,t} + \mathcal{G}_{t + \tau, t}$ is both completely positive and separability preserving, which proves the lemma.
\end{proof}

Similar to the case of the fermionic and bosonic models, this lemma can be combined with first-order Trotterization in the many-body setting to show high-noise separability for a broad class of noise models. In particular, we could consider noisy dynamics described by the master equation
\begin{align}
\frac{d}{dt}\rho(t) = -i[H, \rho(t)] + \kappa \sum_{i, k} \mathcal{D}_{L_{i, k}}
\end{align}
with $L_{i, k}$ satisfying the conditions in Lemma \ref{lemma:separability_spins} and
\begin{align}
    H(t) = \sum_{i} h_i(t) + \sum_{i < j}\sum_{\alpha} s_{\alpha}^{i, j}(t)( O_{i, \alpha} \otimes O_{j, \alpha}),
\end{align}
where $O_{i, \alpha}$ would be a Hermitian operator acting on the $i^\text{th}$ qudit chosen to be normalized such that $\norm{O_{i, \alpha}}_F = 1$. Introducing the ``inter-site interaction-strength" parameter $J$ as the smallest number satisfying
\begin{align}
\sum_{j > i} \smallabs{s_\alpha^{i, j}(t)} + \sum_{j < i} \smallabs{s^{j, i}_{\alpha}(t)} \leq J \text{ for all }i, t \geq 0,
\end{align}
we can then establish using Lemma \ref{lemma:separability_spins} that if $\kappa \geq J / \lambda_0$, then an initial separable state of the spins always evolves into a separable state.

\section{Counter-examples}\label{supplemental:counterexamples}
In this section, we consider the question of whether a counterpart of Theorem 1 can be obtained for the bosonic model, and if a counterpart for Theorem 2 can be established for the fermionic model.
\subsection{High-noise regime for the bosonic model is not convex-Gaussian at all times}

We will provide evidence that there is no counterpart of Theorem 1 for bosonic systems. That is, even with noise rates larger than the non-Gaussianity $(\kappa \gg U)$, one can still obtain states which are not convex-Gaussian. In section \ref{subsubsection:high_dephasing noise}, we provide numerical evidence that, with a single bosonic mode and dephasing noise greater than the non-Gaussianity $(\kappa_3 \geq U)$, states with negative Wigner function can be reached, which automatically implies lack of convex Gaussianity. In section \ref{subsubsection:high_incoherent_particle_regime}, we provide a stronger argument in the absence of dephasing noise: for noise models containing only incoherent particle loss and gain $(\kappa_3  =0)$, one can perform high-fidelity arbitrary gates even if the non-Gaussianity is much smaller than the noise rate, $U \ll \kappa$, provided that the Gaussian couplings $J,\Omega$ can be made sufficiently large, enabling the implementation of gates with effective error rates below the fault tolerance threshold \cite{noh2020_fault_tolerant_bosons,aharonov1999faulttolerantquantum,matsuura2024fault_tolerant_bosons}. As a consequence, not only is the state not guaranteed to remain convex-Gaussian, but the classical simulation of local observables is provably $\mathrm{BQP}$-hard.

\subsubsection{High dephasing noise regime}\label{subsubsection:high_dephasing noise}
Here, we provide a simple example with a single bosonic mode, where dephasing noise, no-matter how high, is unable to make the state convex Gaussian. Our analysis is centered on the Wigner function, which, for a single bosonic mode state $\rho$, is defined as 
\begin{align}
W(x,p)=\frac{1}{\pi} \int_{-\infty}^{\infty} \bra{x-y}\rho \ket{x+y}e^{2ipy}dy.
\end{align}
\noindent Since  quantum states with positive Wigner functions can be efficiently simulated \cite{Mari_2012_wigner}, the negativity of the Wigner function is regarded as a necessary resource for quantum advantage. Furthermore, a negative Wigner function rules out convex-Gaussianity, since all pure Gaussian states have nonnegative Wigner functions \cite{1974_hudson_theorem,walschaers2021_nongaussian_states}, and as a consequence convex Gaussian states do as well.

Specifically, we consider an initial state $\rho(0)=\ket{\alpha}\bra{\alpha}$, where $\ket{\alpha}$ represents the single-mode coherent state 
$\ket{\alpha}=e^{\alpha(a^{\dagger}-a)}\ket{\text{vac}}$, with $\alpha$ a real number. Then, the state is evolved under the Hamiltonian $H=Un^2$ and dephasing noise of rate $\kappa_3 = \kappa$, which yields the master equation
\begin{align}\label{eq:evolution_wigner}
\frac{d}{dt}\rho(t)=\mathcal{L}\rho(t)=-iU[n^2,\rho(t)]+\kappa \left(n\rho(t)n-\frac{1}{2} \{n^2,\rho(t)\}\right).
\end{align}
\begin{figure}
    \includegraphics[width=\textwidth]{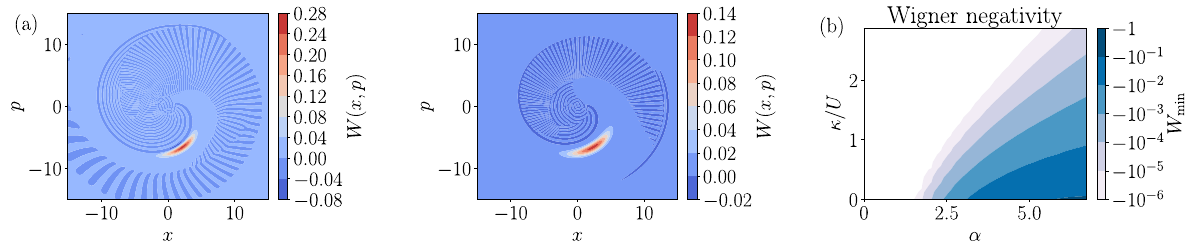}
    \caption{Representation of the Wigner function for the state $\rho(t)$ obtained by evolving Eq.~(\ref{eq:evolution_wigner}) with initial state $\rho(0)=\ket{\alpha}\bra{\alpha}$, with parameters $U=0.05$ and $t=0.5$. (a) The Wigner function $W(x,p)$ is represented in phase space for $\alpha=5$ and error rate $\kappa=0.1U$ (left) and $\kappa=2U$ (right). (b) Representation of the minimum value of the Wigner function $W_{\min}=\min_{x,p}W(x,p)$. One can clearly observe that, for a fixed value of $\kappa/U$, the Wigner function becomes more negative by increasing $\alpha$, which suggest that the Gaussian resources can effectively boost the non-Gaussianity of the system. Hence, even for large values of $\kappa$, a negative Wigner state might be reached, by increasing $\alpha$.
    }\label{fig:Wigner}
\end{figure}

\noindent In this setting, we numerically compute the minimum value of the Wigner function, $W_{\min}=\min_{x,p}W(x,p)$, and represent it in Fig. \ref{fig:Wigner}. One can appreciate that, even when $\kappa \geq U$, using a sufficiently large $\alpha$ results in a state with a negative Wigner function. Consequently, we do not expect an analogue of Theorem 1 to hold for bosons: even for a high dephasing noise rate, with a sufficiently large Gaussian displacement, an initially Gaussian state can evolve into Wigner negative (not convex-Gaussian) states at short times. The time-scale at which the state becomes Wigner negative is determined by both the value of $U$, as well as the displacement. For the one-mode problem considered above, Fig.~\ref{fig:Wigner_timescale} shows the relative wigner-negativity $-W_\text{min}/W_\text{max} = - \min_{x,p}W(x, p) / \max_{x,p} W(x, p)$ as a function of time. We find that while the state eventually becomes Wigner non-negative, even in the regime $\kappa_3 \gg U$, there is an intermediate temporal region that depends on $U$ and $\alpha$ where the state is Wigner negative. For a fixed $\alpha$, the time $t^*$ at which the state is maximally Wigner negative scales as $1 / U$, consistent with the fact that $U$ determines the strength of the process generating the non-Gaussianity or Wigner negativity in the dynamics.
\begin{figure}
    \includegraphics[width=\textwidth]{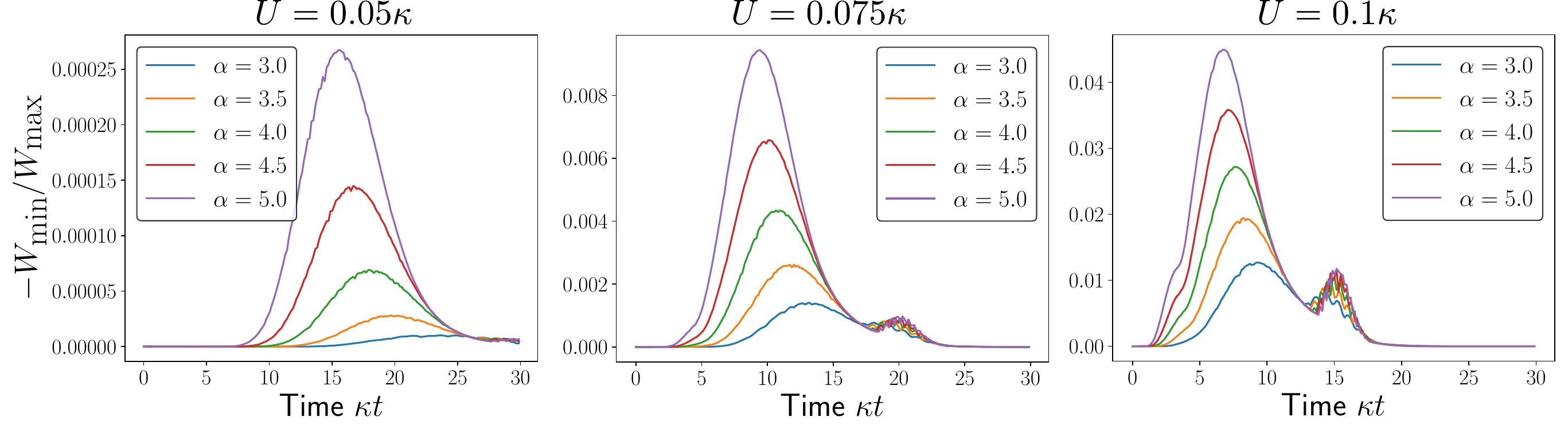}
    \caption{The relative Wigner negativity, quantified by $-W_\text{min}/W_\text{max} = - \min_{x,p}W(x, p) / \max_{x,p} W(x, p)$, for a single bosonic mode evolving under the Hamiltonian $H = Un^2$ as well as dephasing noise at rate $\kappa$. The initial state of the bosonic mode is the coherent state $\ket{\alpha}$. As expected, the relative Wigner negativity decreases as $U$ decreases, and at a fixed $\alpha$, the maximum Wigner negativity is attained at time-scales of $1/U$.
    }\label{fig:Wigner_timescale}
\end{figure}

We remark that this does not necessarily imply simulation hardness: the question of whether there is a threshold error $\kappa_{\text{th}}(U)$ depending on $U$ but not on $J,\Omega$ above which the classical simulation becomes tractable remains open.

\subsubsection{High incoherent particle loss and gain regime}\label{subsubsection:high_incoherent_particle_regime}
Here we analyze the complexity of classically simulating the bosonic system in the absence of dephasing noise ($\kappa_3=0$) and find that the problem does not become easy above a noise threshold depending exclusively on the non-Gaussian interaction strength, thus showing that no counterpart of Theorem 1 can exist for bosons in the absence of dephasing noise. Specifically, we show that, even when the non-Gaussian interaction strength is much smaller than the noise strength, $U \ll \kappa$, one can perform arbitrarily fast gates. Due to the threshold theorem, this allows for the implementation of fault-tolerant schemes \cite{noh2020_fault_tolerant_bosons,matsuura2024fault_tolerant_bosons}. To show this, it is enough to consider systems with only one bosonic mode per site ($L=1$) and only onsite non-Gaussianity ($U_C=0$).

We will start with a technical lemma: we will show that, for a system with $m$ modes evolving under the noise model in Eq.~(\ref{eq:lindbladian_noise_supplement}), the error induced by the noise can be upper bounded by $O(m \kappa t)$.

\begin{lemma}[Error bound between noisy and noiseless evolution]\label{lemma:error_bound}
Consider a bosonic system with $m$ modes evolving under the master equation
\begin{align*}
\frac{d}{dt}\rho(t)=\mathcal{L}\rho(t)=-i[H(t),\rho(t)]+\sum_{l=1}^2 \sum_{v=1}^m \kappa_l \mathcal{D}_{L^{(l)}_v},
\end{align*}
\noindent where $\mathcal{D}_L \rho = L \rho L^\dagger - \{L^\dagger L, \rho \} / 2$, $L_{v}^{(1)} = a_{v}, L^{(2)}_{v} = a_{v}^\dagger$, and $\kappa_1+ \kappa_2=1$. Assume that, for some integer $d$, $\rho(0)$ lies in the subspace $\mathcal{H}_{\leq d}$ of the Hilbert space spanned by the first $d+1$ levels  $\{\ket{0},\dots, \ket{d}\}$, and that the Hamiltonian $H(t)$ contains no couplings between the state $\ket{d}$ and any state $\ket{k}$ with $k>d$. Then, denoting by $\mathcal{U}(\cdot)=\mathcal{T}\exp \left(-i\int_0^t [H(s),\cdot]ds\right)$ the time-evolution in the noiseless case ($\kappa=0$), the error induced by the dissipation can be bounded as
\begin{align*}
\left\|\mathcal{T}\exp \left(\int_0^t \mathcal{L}(s)ds\right)\rho(0)-\mathcal{U}\rho(0)\right\|_1 \leq 2m\kappa t(d+1).
\end{align*}
\end{lemma}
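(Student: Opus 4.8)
\emph{Proof proposal.} The plan is to compare the full noisy propagator with the noiseless one through a Duhamel (variation-of-parameters) identity, and then to control the resulting error term by observing that the \emph{noiseless} trajectory never leaves the truncated subspace $\mathcal{H}_{\leq d}$ --- which is exactly where the hypothesis on $H(t)$ enters. Split $\mathcal{L}(s) = \mathcal{L}_0(s) + \mathcal{L}_\text{n}(s)$ with $\mathcal{L}_0(s) = -i[H(s),\ \cdot\ ]$ the Hamiltonian part and $\mathcal{L}_\text{n}(s) = \sum_{l = 1}^2 \sum_{v = 1}^m \kappa_l \mathcal{D}_{L_v^{(l)}}$ the dissipative part. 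Let $\Phi_{t, s} = \mathcal{T}\exp(\int_s^t \mathcal{L}(\tau)d\tau)$ denote the noisy propagator and $\mathcal{U}_{t, s} = \mathcal{T}\exp(\int_s^t \mathcal{L}_0(\tau) d\tau)$ the noiseless one (so $\mathcal{U} = \mathcal{U}_{t, 0}$). Since $\tfrac{d}{ds}\Phi_{t, s} = -\Phi_{t, s}\mathcal{L}(s)$ and $\tfrac{d}{ds}\mathcal{U}_{s, 0} = \mathcal{L}_0(s)\mathcal{U}_{s, 0}$, we get $\tfrac{d}{ds}(\Phi_{t, s}\mathcal{U}_{s, 0}) = -\Phi_{t, s}\mathcal{L}_\text{n}(s)\mathcal{U}_{s, 0}$; integrating over $s \in [0, t]$ yields
\[
\Phi_{t, 0} - \mathcal{U}_{t, 0} = \int_0^t \Phi_{t, s}\,\mathcal{L}_\text{n}(s)\,\mathcal{U}_{s, 0}\,ds .
\]
Applying both sides to $\rho(0)$, taking the trace norm, and using that $\Phi_{t, s}$ is completely positive and trace preserving (hence trace-norm contractive), we obtain
\[
\norm{\Phi_{t, 0}\rho(0) - \mathcal{U}\rho(0)}_1 \leq \int_0^t \norm{\mathcal{L}_\text{n}(s)\,\rho_\text{nf}(s)}_1\,ds, \qquad \rho_\text{nf}(s) := \mathcal{U}_{s, 0}\rho(0) .
\]

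The key observation is that $\rho_\text{nf}(s)$ remains supported on $\mathcal{H}_{\leq d}$ for all $s$: since $\rho(0)$ is supported there and $H(s)$ has no matrix element connecting $\ket{d}$ to any higher-lying Fock state, the noiseless propagator $\mathcal{U}_{s, 0}$ (which acts by conjugation with a unitary) leaves $\mathcal{H}_{\leq d}$ invariant, so $\rho_\text{nf}(s) = \Pi_{\leq d}\rho_\text{nf}(s)\Pi_{\leq d}$, where $\Pi_{\leq d}$ is the projector onto $\mathcal{H}_{\leq d}$. This is precisely why a cruder estimate --- e.g.\ bounding $\norm{\mathcal{L}_\text{n}}_\diamond$ directly, or invoking the Trotter bound of Lemma \ref{lemma:Trotter_bounded_lind} --- fails: the gain dissipator $\mathcal{D}_{a_v^\dagger}$ is unbounded on the full Fock space and the \emph{noisy} state does leak above level $d$; anchoring the source term of the Duhamel integral to the truncation-supported noiseless state is what keeps the integrand finite.

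For any $\rho = \Pi_{\leq d}\rho\Pi_{\leq d}$, Hölder's inequality together with the elementary estimates $\norm{a_v \Pi_{\leq d}} = \sqrt{d}$, $\norm{a_v^\dagger \Pi_{\leq d}} = \sqrt{d + 1}$, $\norm{a_v^\dagger a_v \Pi_{\leq d}} = d$ and $\norm{a_v a_v^\dagger \Pi_{\leq d}} = d + 1$ give $\norm{\mathcal{D}_{a_v}\rho}_1 \leq 2d$ and $\norm{\mathcal{D}_{a_v^\dagger}\rho}_1 \leq 2(d + 1)$; summing over the $m$ modes and using $\kappa_1 + \kappa_2 = \kappa$,
\[
\norm{\mathcal{L}_\text{n}(s)\,\rho_\text{nf}(s)}_1 \leq \sum_{v = 1}^m \big(2d\,\kappa_1 + 2(d + 1)\kappa_2\big) \leq 2m(d + 1)(\kappa_1 + \kappa_2) = 2m\kappa(d + 1) .
\]
Inserting this into the integral bound and integrating over $s \in [0, t]$ yields $\norm{\Phi_{t, 0}\rho(0) - \mathcal{U}\rho(0)}_1 \leq 2m\kappa t (d + 1)$, which is the claim.

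The main obstacle is conceptual rather than computational: one must arrange the interpolation so that $\mathcal{L}_\text{n}$ acts on the noiseless (and hence truncation-preserving) trajectory rather than on the noisy state; the straightforward expansion with $\mathcal{L}_\text{n}$ applied to $\rho(s)$ would involve an unbounded operator acting on a state with support at arbitrarily high Fock levels, giving no finite --- let alone linear-in-$t$ --- bound. Once this is set up, the rest is a routine application of Hölder's inequality and the trace-norm contractivity of CPTP maps.
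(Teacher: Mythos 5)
Your proof is correct and reaches the stated bound $2m\kappa t(d+1)$ exactly, but it takes a genuinely different route from the paper. The paper uses a quantum-trajectory (no-jump) decomposition: it writes $\rho(t) = \sigma + \mathcal{N}(\rho(0))$, where $\sigma$ is the unnormalized no-jump branch generated by the non-Hermitian effective Hamiltonian $H_{\mathrm{eff}} = H - \tfrac{i}{2}\sum_v(\kappa_1 a_v^\dagger a_v + \kappa_2 a_v a_v^\dagger)$; it then lower-bounds $\mathrm{tr}(\sigma)$ by $e^{-m\kappa t(d+1)}$ via the minimum singular value of the Trotterized no-jump propagator (giving $\|\mathcal{N}(\rho(0))\|_1 \leq 1 - e^{-m\kappa t(d+1)}$), and separately bounds $\|\sigma - \mathcal{U}\rho(0)\|_1 \leq m\kappa t(d+1)$ by differentiating the propagator with respect to $\kappa$; the two contributions sum to the claimed bound. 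Your Duhamel identity $\Phi_{t,0}-\mathcal{U}_{t,0} = \int_0^t \Phi_{t,s}\mathcal{L}_{\mathrm{n}}(s)\mathcal{U}_{s,0}\,ds$, combined with the trace-norm contractivity of the CPTP map $\Phi_{t,s}$, collapses these two steps into one and avoids the singular-value and $\partial/\partial\kappa$ manipulations entirely. Both arguments hinge on the same structural fact --- the reference trajectory (your noiseless one; the paper's no-jump one, since the anti-Hermitian part of $H_{\mathrm{eff}}$ is number-diagonal) never leaves $\mathcal{H}_{\leq d}$, which is what tames the unbounded gain dissipator --- and you correctly flag this as the crux. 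Your operator-norm estimates on $\mathcal{D}_{a_v}$ and $\mathcal{D}_{a_v^\dagger}$ restricted to $\mathcal{H}_{\leq d}$ are right, and your reading of $\kappa = \kappa_1+\kappa_2$ (rather than the lemma's apparently typographical ``$\kappa_1+\kappa_2=1$'') is the one consistent with the stated conclusion. If anything, your version is the cleaner proof; the paper's decomposition has the separate virtue of isolating the ``probability that at least one jump occurs,'' which is the quantity of conceptual interest in the fault-tolerance discussion that follows.
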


\begin{proof}

Let us consider the following effective Hamiltonian:
\begin{align}\label{eq:effective_H}
H_{\mathrm{eff}}=H-i\frac{1}{2}\sum_{v=1}^m\left(\kappa_1 a_v^{\dagger}a_v+\kappa_2 a_va_v^{\dagger}\right).
\end{align}
Then, the time evolution may be written as 
\begin{align}\label{eq:expression_rho(t)}
\rho(t)=\mathcal{T}\exp \left(\int_0^t \mathcal{L}(s)ds\right)\rho(0)=\underbrace{\mathcal{T}\exp\left(-i\int_0^t [H_{\mathrm{eff}}(s),\cdot]ds\right)\rho(0)}_{\sigma}+\mathcal{N}(\rho(0))=\sigma+\mathcal{N}(\rho(0)),
\end{align}
\noindent where $\sigma$ is the (unnormalized) state obtained by evolving under the effective Hamiltonian, and $\mathcal{N}(\rho)$ is a completely positive channel that is not trace preserving. Naturally, $\mathrm{tr}(\sigma)+\mathrm{tr}(\mathcal{N}(\rho))=1$. The state $\sigma$ can be understood as the output when no errors occur, while $\mathcal{N}(\rho)$ captures the output with one or more errors.

For the $v^{\mathrm{th}}$ bosonic mode, we define the projector that truncates to at most $d$ particles as $\Pi_{v,\leq d}=\sum_{j=0}^d \ket{j}\bra{j}$. Then, $\Pi_{\leq d} = \otimes_v \Pi_{v,\leq d}$ is the projector onto $\mathcal{H}_{\leq d}$. Note that, since $H$ does not contain couplings to higher levels, neither does $H_{\mathrm{eff}}$. As a consequence, the dynamics of $\sigma$ are constrained to the first $d+1$ levels, and can be truncated.

Let us denote the truncated Hamiltonians by $\tilde{H}=\Pi_d H \Pi_d$,$\tilde{H}_{\mathrm{dis}}=\Pi_d H_{\mathrm{dis}} \Pi_d$. Using the definition of $\tilde{H}_{\mathrm{dis}}$ in Eq.~(\ref{eq:effective_H}), the operator norm of the truncated effective Hamiltonian $\tilde{H}_{\mathrm{dis}}$ can then be bounded as
\begin{align}\label{eq:norm_Hdis}
\|\tilde{H}_{\mathrm{dis}}\| \leq \frac{m}{2} \left(\kappa_1 d+\kappa_2 (d+1)\right) \leq \frac{m}{2}\left(\kappa_1+\kappa_2\right)d \leq \frac{m\kappa}{2}(d+1).
\end{align}
Let us now write $\sigma$ in a more convenient form as $\sigma=\lim_{N \rightarrow \infty}(O_N\rho(0)O_N^{\dagger})$, where
\begin{align}\label{eq:expression_O_N}
O_N=\prod_{k=1}^N \left(e^{-i\tilde{H}(kt/N)t/N}e^{-i\tilde{H}_{\mathrm{dis}}t/N}\right).
\end{align}
\noindent This expression can be derived, for example, from standard Trotterization techniques. We would now like to bound $\mathrm{tr}(\sigma)$, which can be understood as the probability of no errors occurring during the computation. Naturally, the unitary parts of the evolution in Eq.~(\ref{eq:expression_O_N}) are trace preserving, and we only need to bound the imaginary time evolution induced by the Hamiltonian $\tilde{H}_{\mathrm{dis}}$. Note also that, using Eq.~(\ref{eq:norm_Hdis}), the minimum singular value in each step can be bounded as 
\begin{align}\label{eq:min_value_exp}
\sigma_{\min}(e^{-\tilde{H}_{\mathrm{dis}}t/N}) \geq \exp\left(-\frac{m\kappa}{2} \frac{t}{N} \left(\kappa_1(d+1)+\kappa_2(d+1)\right)\right) \geq \exp\left(-\frac{m\kappa}{2}\frac{t}{N}(d+1)\right).
\end{align}
\noindent As a consequence, using Eq.~(\ref{eq:expression_O_N}) and Eq.~(\ref{eq:min_value_exp}), it can be easily checked that
\begin{align}
\sigma_{\min}(O_N^{\dagger}O_N) \geq \left[\sigma_{\min}(e^{-\tilde{H}_{\mathrm{dis}}t/N})\right]^{2N} \geq \exp\left[-m \kappa t(d+1)\right].
\end{align}
\noindent This allows us to bound the trace as 
\begin{align}
\mathrm{tr}(\sigma) = \lim_{N \rightarrow \infty} \mathrm{tr}(O_N^{\dagger}O_N \rho(0)) \geq \lim_{N \rightarrow \infty} \sigma_{\min}(O_N^{\dagger}O_N) \mathrm{tr}(\rho(0)) \geq \exp\left[-m \kappa t(d+1)\right].
\end{align}
\noindent As a consequence, since the total evolution of the system must be trace preserving, it immediately follows that $\mathrm{tr}(\mathcal{N}(\rho(0)) \leq 1-e^{-m \kappa td^2}$.

Now, let us bound the distance between $\sigma$ and the state obtained under ideal (noiseless) evolution, $\|\sigma-\mathcal{U}\rho(0)\|_1$. We use the fact that 
\begin{align}\label{eq:derivative_Hdis}
\frac{\partial}{\partial \kappa} e^{-\tilde{H}_{\mathrm{dis}}\frac{t}{N}}=-\frac{t}{N}\frac{\tilde{H}_{\mathrm{dis}}}{\kappa}e^{-\tilde{H}_{\mathrm{dis}}\frac{t}{N}} \quad \mathrm{and} \quad \|e^{-\tilde{H}_{\mathrm{dis}}\frac{t}{N}}\|, \|e^{-i\tilde{H}(\kappa t/N)\frac{t}{N}}\| \leq 1.
\end{align}
Using Eq.~(\ref{eq:norm_Hdis}), one can bound 
\begin{align}\label{eq:derivative_Hdis2}
\left\|\frac{\partial}{\partial \kappa} e^{-\tilde{H}_{\mathrm{dis}}\frac{t}{N}}\right\|=\left\|\frac{t}{N}\frac{\tilde{H}_{\mathrm{dis}}}{\kappa}e^{-\tilde{H}_{\mathrm{dis}}\frac{t}{N}} \right\| \leq \frac{t}{N \kappa} \|\tilde{H}_{\mathrm{dis}}\| \leq \frac{mt(d+1)}{2N}.
\end{align}
\noindent Furthermore, using the definition of $O_N$ in Eq.~(\ref{eq:expression_O_N}), the norm bound in Eq.~(\ref{eq:derivative_Hdis2}), and the fact that $\|e^{-\tilde{H}_{\mathrm{dis}}\frac{t}{N}}\|, \|e^{-i\tilde{H}(\kappa t/N)\frac{t}{N}}\| \leq 1$, one can bound
\begin{align}
\left\|\frac{\partial}{\partial \kappa}\sigma\right\| \leq2\left\|\frac{\partial}{\partial \kappa}O_N\right\| \leq \frac{2t}{\kappa} \smallnorm{\tilde{H}_{\mathrm{dis}}} \leq mt(d+1).
\end{align}
\noindent This directly yields a bound on the desired distance:
\begin{align}\label{eq:derivative_sigma}
\|\sigma-\mathcal{U}\rho(0)\|_1 = \left\| \int_0^\kappa \frac{\partial}{\partial \kappa} \sigma d\kappa \right\|_1 \leq \int_0^k \left\| \frac{\partial}{\partial \kappa} \sigma \right\|_1 d\kappa \leq m\kappa t(d+1).
\end{align}
Finally, this, together with Eqs.~(\ref{eq:expression_rho(t)}, \ref{eq:derivative_sigma}), implies that
\begin{align}
\left\|\mathcal{T}\exp \left(\int_0^t \mathcal{L}(s)ds\right)\rho(0)-\mathcal{U}\rho(0)\right\|_1 \leq \|\rho(t)-\sigma\|_1 + \|\mathcal{N}(\rho(0))\|_1 \leq m\kappa t(d+1)+1-e^{-m \kappa t (d+1)} \leq 2m\kappa t(d+1),
\end{align}
\noindent which proves the lemma.

\end{proof}

We will now show how one can use a single bosonic mode Hamiltonian to apply arbitrary single qubit gates with high fidelity, even when the nonlinearity is much smaller than the noise strength. We note that a similar result is already shown in Refs.~\cite{liangjiang2023universalcontrolbosonicsystems,eickbusch2022fast_gates}. To do this, we will consider a single bosonic mode with $\kappa \gg U$, where both the error rate and non-Gaussianity $U$ are fixed, and study the effective error rate in the asymptotic limit of large Gaussian strength.

\begin{lemma}[Single-qubit gates, from Ref.~\cite{liangjiang2023universalcontrolbosonicsystems}]\label{lemma:single_qubit}
Consider a single bosonic mode under the master equation
\begin{align*}
\frac{d}{dt}\rho=\mathcal{L}\rho=-i[H(t),\rho]+\sum_{l=1}^2\kappa_l \mathcal{D}_{{L}^{(l)}} \rho,
\end{align*}
\noindent with Hamiltonian
\begin{align*}
H(t)=U(t)a^{\dagger 2}a^2 + \left(\Lambda_1(t)a^{\dagger}+\Lambda_2(t)a^{\dagger 2}+\mathrm{h.c.}\right)+\Delta(t)a^{\dagger}a,
\end{align*}
\noindent where $\mathcal{D}_L \rho = L \rho L^\dagger - \{L^\dagger L, \rho \} / 2$, $L^{(1)} = a, L^{(2)} = a^\dagger$ and $L^{(3)} = a^\dagger a = n$, and $\kappa_1 + \kappa_2 = \kappa$, and the strength of the Gaussian terms is bounded by $P$, $|\Lambda_1(t)|,|\Lambda_2(t)|,|\Delta(t)| \leq P$. Then, for any single-qubit quantum unitary operation $\mathcal{U}$ (i.e.~$\mathcal{U}(\cdot)=U(\cdot)U^{\dagger}$ for some single-qubit gate $U$), the Lindbladian $\mathcal{L}(t)$ can approximate $\mathcal{U}$, $\|(\mathcal{T} e^{\int_0^t\mathcal{L}(s)ds}-\mathcal{U})\rho_0\|_{1} \leq \tilde{O}(\kappa (U^2P)^{-1/3})$, in time $t=O((U^2P)^{-1/3})$, with $\rho_0$ a single-qubit state.
\end{lemma}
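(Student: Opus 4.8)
The plan is to decouple the task into a \emph{noiseless control} problem and a \emph{noise} problem, and to dispatch the latter with Lemma \ref{lemma:error_bound}. Fix the target single-qubit unitary $\mathcal{U}$ and let $\mathcal{U}_0 = \mathcal{T}\exp(-i\int_0^t [H(s),\cdot]\,ds)$ denote the dissipation-free evolution under the (to-be-chosen) Hamiltonian $H(s)$. By the triangle inequality,
\begin{align}
\norm{\big(\mathcal{T}e^{\int_0^t \mathcal{L}(s)ds} - \mathcal{U}\big)\rho_0}_1 \le \norm{\big(\mathcal{T}e^{\int_0^t \mathcal{L}(s)ds} - \mathcal{U}_0\big)\rho_0}_1 + \norm{(\mathcal{U}_0 - \mathcal{U})\rho_0}_1 .
\end{align}
The second term is a pure Hamiltonian-design question, and the first will be bounded by the noise strength. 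So the job is to exhibit controls $\Lambda_1(t),\Lambda_2(t),\Delta(t)$, each of modulus $\le P$, and a total time $t = O((U^2P)^{-1/3})$ such that: (i) $\mathcal{U}_0$ reproduces $\mathcal{U}$ on the encoded qubit up to error $\tilde O(\kappa(U^2P)^{-1/3})$, and (ii) along the noiseless trajectory from $\rho_0$, leakage above a cutoff level $d$ is negligible, with $d$ constant (or, tolerating the tilde, $\mathrm{polylog}$).

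For (i)–(ii) I would reconstruct the control scheme of Ref.~\cite{liangjiang2023universalcontrolbosonicsystems}. Encode the qubit in a fixed low-dimensional subspace and pass to a time-dependent displaced frame $a \mapsto a + \alpha(t)$; the static Kerr term $U\,a^{\dagger 2}a^2$ then produces, order by order in $|\alpha|$, a c-number phase ($\sim U|\alpha|^4$), an effective displacement ($\sim U|\alpha|^3$), an effective squeezing-plus-detuning ($\sim U|\alpha|^2$), and a genuinely cubic nonlinearity ($\sim U|\alpha|$), with the original quartic ($\sim U$) left over. Taking $|\alpha| \sim (P/U)^{1/3}$ — the largest amplitude for which the induced displacement $U|\alpha|^3$ fits the $\Lambda_1$-budget $P$, and for which the induced quadratic part $U|\alpha|^2 = U^{1/3}P^{2/3}\le P$ fits the $\Lambda_2,\Delta$-budget whenever $U\le P$ — one cancels the displacement and quadratic pieces with the Gaussian controls and is left, in the displaced frame, with a cubic term of effective strength $\Theta((U^2P)^{1/3})$ (the residual quartic has strength $U \ll (U^2P)^{1/3}$ and is harmless). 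The remaining freedom in the trajectory $\alpha(t)$ and in the leftover Gaussian controls generates, by the Lie-algebraic controllability argument of Ref.~\cite{liangjiang2023universalcontrolbosonicsystems}, the full unitary group on the encoded subspace; tracking the governing time scale shows an arbitrary $\mathcal{U}$ can be synthesized in time $T = O((U^2P)^{-1/3})$ with population above some constant level $d$ made arbitrarily small. Truncating $H(s) \mapsto \Pi_{\le d}H(s)\Pi_{\le d}$ then perturbs $\mathcal{U}_0$ only by this negligible amount and yields a Hamiltonian with no coupling between $\ket{d}$ and higher levels, as Lemma \ref{lemma:error_bound} requires.

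With the truncated $H(s)$ in hand, the first term above is bounded by Lemma \ref{lemma:error_bound} with $m=1$ and this constant cutoff $d$: over time $T = O((U^2P)^{-1/3})$ the noise-induced error is at most $2\kappa T(d+1) = \tilde O(\kappa(U^2P)^{-1/3})$. Adding the control error gives the stated bound. The main obstacle is the quantitative control-theoretic step — not merely asserting the $\Theta((U^2P)^{1/3})$ effective rate, but verifying that the $\Lambda_1,\Lambda_2,\Delta$ budget is never exceeded along the entire trajectory, that the higher-order terms of the displaced-frame expansion remain subleading, and that the algebra generated by the available effective generators fills out the whole unitary group on the $(d+1)$-dimensional subspace fast enough to realize an \emph{arbitrary} $\mathcal{U}$ within $O((U^2P)^{-1/3})$ and with bounded leakage. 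This is precisely what is imported from Ref.~\cite{liangjiang2023universalcontrolbosonicsystems} (with the displaced-frame fast-gate mechanism also appearing in Ref.~\cite{eickbusch2022fast_gates}); the triangle inequality and the application of Lemma \ref{lemma:error_bound} are then routine.
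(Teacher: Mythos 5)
Your proposal follows essentially the same route as the paper: split off the noiseless control problem, realize the nonlinearity boost by working in a frame displaced by $\alpha\sim(P/U)^{1/3}$ with the Gaussian controls cancelling the induced linear and quadratic terms, and bound the noise contribution over the total time $O((U^2P)^{-1/3})$ via Lemma \ref{lemma:error_bound}. The paper simply makes the controllability step concrete — explicit $S$, $T$ (Kerr blockade plus detuning) and $\sqrt{X}$ (displaced-frame Hamiltonian $2U\alpha\,a^\dagger(n-1)+\mathrm{h.c.}$, which is \emph{exactly} blockaded, with the noise-induced drift $\propto\kappa\alpha$ compensated in $\Lambda_1$) followed by Solovay--Kitaev — rather than invoking the generic Lie-algebraic argument and approximate leakage control you defer to Ref.~\cite{liangjiang2023universalcontrolbosonicsystems}.
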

\begin{proof}

We will show that, by tuning the parameters in the Hamiltonian $H(t)$, one can generate $T$, $S$, and $\sqrt{X}$ gates, which is sufficient for arbitrary single-qubit rotations. 
For implementing a $T$ gate or an $S$ gate, one simply has to set $\Omega(t)=0$ and $\Delta(t)=P$. This yields the Hamiltonian $H=U (a^{\dagger 2}a^2)/2 + P a^{\dagger}a$. Since there are no couplings between the states $\ket{0},\ket{1}$ and the rest, we can restrict ourselves to the subspace spanned by $\{\ket{0},\ket{1}\}$. Denoting the projector onto this subspace $\Pi_1=\ket{0}\bra{0}+\ket{1}\bra{1}$, the projection of $H_{\alpha}$ on the blockaded subspace yields $\Pi_1H \Pi_1=P\ket{1}\bra{1}$. Then, evolving under the Hamiltonian for time $t=3\pi/(2P)$ yields an $S$ gate, while evolving  for time $t=3\pi/(4P)$ yields a $T$ gate. Therefore, applying the error bound in Lemma \ref{lemma:error_bound}, $T$ gates and $S$ gates can be implemented with precision $O(\kappa/P)$ in time $t=O(1/P).$

Now, let us consider the problem of applying a $\sqrt{X}$ gate. This can be done by using the construction from Refs.~\cite{liangjiang2023universalcontrolbosonicsystems,lingenfelter2021_fock_state_generation}, by going to a displaced frame. We will first show that, considering the Hamiltonian in a displaced frame, one can implement a fast $\sqrt{X}$ gate. Then, we will show that one can go to the displaced frame by applying fast pulses at the beginning and end of the computation, hence enabling the application of a high-fidelity fast $\sqrt{X}$ gate in the laboratory frame, even in the presence of errors.

First, let us consider the Hamiltonian in a frame displaced by $\alpha(t)$, $a \rightarrow a+\alpha(t)$. We write the Hamiltonian in the displaced frame as $H_{\alpha(t)}$ and the noise as $\mathcal{D}_{L^{(l)};\alpha(t)}$. Note that the noise in the displaced frame can be written as 
\begin{equation}\label{eq:displaced_noise}
\sum_{l=1}^2\kappa_l \mathcal{D}_{L^{(l)},\alpha(t)}(\cdot)=\sum_{l=1}^2\kappa_l \mathcal{D}_{L^{(l)}}+\frac{i}{2} (\kappa_1 - \kappa_2)\left[i(\alpha(t)a^{\dagger}-\alpha^*(t)a),(\cdot)\right].
\end{equation}
Furthermore, the displaced Hamiltonian $H_{\alpha(t)}$ may be written as
\begin{align}
H_{\alpha(t)}=U(t)a^{\dagger 2}a^2+\tilde{\Delta}(t)a^{\dagger}a+(\tilde{\Lambda}_1(t)a^{\dagger}+\tilde{\Lambda}_2(t)a^{\dagger 2}+\tilde{\Lambda}_3(t)a^{\dagger 2}a+\mathrm{h.c.}),
\end{align}
\noindent where 
\begin{align}\label{eq:displacement_eqs}
    &\tilde{\Delta}(t)=\Delta(t)+4U(t)|\alpha(t)|^2,  \nonumber \\
    &\tilde{\Lambda}_2(t)=\Lambda_2(t)+2U(t) \alpha(t)^2, \nonumber \\
    &\tilde{\Lambda}_1(t)=\Lambda_1(t)+ \alpha
    \Delta(t)+2\alpha(t)^*\Lambda_2(t)+2U(t)|\alpha(t)|^2\alpha(t)-\frac{1}{2}i  \alpha(t)(\kappa_1 - \kappa_2), \nonumber \\
    &\tilde{\Lambda}_3(t)=2U(t)\alpha(t).  
\end{align}
\noindent where the noise term from Eq~(\ref{eq:displaced_noise}) has already been absorbed into the Hamiltonian. The master equation in the displaced frame is then 
\begin{align}
\frac{d}{dt}\rho_{\alpha(t)}=\mathcal{L}_{\alpha(t)}\rho_{\alpha(t)}=-i[H_{\alpha(t)},\rho_{\alpha(t)}]+ \sum_{l=1}^2\kappa_l \mathcal{D}_{L^{(l)}}.
\end{align}
\noindent By suitably choosing the parameters so that $\tilde{\Delta}(t)=\tilde{\Lambda}_1(t)=\tilde{\Lambda}_2(t)=0$, the Hamiltonian becomes $H_{\alpha(t)}=2U(t)\alpha(t) a^{\dagger}(n-1)+ \mathrm{h.c.}$. 

Crucially, one can notice that the Hamiltonian $H_{\alpha(t)}$ is blockaded, since it does not contain couplings to state $\ket{2}$. Therefore, in the noiseless case, the dynamics will be restricted to the qubit subspace spanned by $\{\ket{0},\ket{1}\}$. Denoting the projector onto this subspace by $\Pi_1=\ket{0}\bra{0}+\ket{1}\bra{1}$, the projection of $H_{\alpha(t)}$ onto the blockaded subspace yields $\Pi_1 H_{\alpha(t)}\Pi_1=-2U \alpha(t) X$. Let us pick a constant $\alpha(t)=\alpha_F$. It is then clear that evolving under the Hamiltonian $H_{\alpha_F}$ for a time $t=\pi/(8U \alpha_F)$ produces a $\sqrt{X}$ gate.

We will pick the displacement to be $\alpha_F=\Theta((P/U)^{1/3})$, since it is the largest displacement that can simultaneously fulfill Eq.~(\ref{eq:displacement_eqs}) and the restriction that $|\Lambda_1(t)|,|\Lambda_2(t)|,|\Delta(t)| \leq P$.

Naturally, one is interested in performing operations in the laboratory frame, which means that at the beginning ($t=0$) and end ($t=t_F$) of the computation the displacement is $\alpha(0)=\alpha(t_F)=0$. This can be achieved by simply applying a displacement term in the beginning and end of the computation. Hence, the computation can be performed in three steps. First, a displacement term is applied for a time $t_{\mathrm{dis}}$ to go from $\alpha(0)=0$ to $\alpha(t_{\mathrm{dis}})=\alpha_F$. Then, the gate is performed in the frame displaced by $\alpha_F$, which takes time $t_{\mathrm{gate}}=\pi/(8U \alpha_F)$. Finally, the displacement is taken to $0$ again, which takes time $t_{\mathrm{dis}}$. Therefore, the total computation time for a $\sqrt{X}$ gate is $t_{\sqrt{X}}=2t_{\mathrm{dis}}+t_{\mathrm{gate}}$.

In order to achieve the desired displacement $\alpha_F$, one can apply the Hamiltonian
\begin{align}
H(t)=i(P-\alpha_F \kappa/2)(a^{\dagger}-a)+\frac{i }{2}(P-\alpha_F)t(\kappa_1 - \kappa_2)(a^{\dagger}-a),
\end{align}
\noindent where the first term takes the system to the frame displaced by $\alpha(t)=(P-\alpha_F \kappa/2)$, and the second term corrects the contributions of the noise. The choice of parameters ensures that $|\Lambda_1| \leq P$ at all times. Specifically, in the displaced frame, the system evolves under the master equation
\begin{align}
\frac{d}{dt}\rho_{\alpha(t)}= \sum_{l=1}^3\kappa_l \mathcal{D}_{L^{(l)}} \quad \mathrm {for} \quad t \leq t_{\mathrm{dis}},
\end{align}
\noindent with $\alpha(0)=0$ and $\alpha(t_{\mathrm{dis}})=\alpha_F$, and  $t_{\mathrm{dis}}=\alpha_F/(P-\alpha_F)=O(\alpha_F/P)=O((P^2U)^{-1/3})$. Let us denote the total evolution time by $t_{\sqrt{X}}=2t_{\mathrm{dis}}+t_{\mathrm{gate}}$. Note that $t_{\mathrm{gate}}=O((U\alpha_F)^{-1})=O((PU^2)^{-1/3})$, while $t_{\mathrm{dis}}=O(\alpha_F/P)=O((P^2U)^{-1/3})$. In the large $P$ limit, it is clear that $t_{\mathrm{dis}} \ll t_{\mathrm{gate}}$, and the total time scales as $t_{\sqrt{X}}=t_{\mathrm{gate}}+2t_{\mathrm{dis}}=O(t_{\mathrm{gate}})=O((PU^2)^{-1/3})$. From the Solovay-Kitaev theorem, it follows that any single-qubit rotation can be approximated to precision $\varepsilon$ in time $t=O(t_{\sqrt{X}}\log^c(1/\varepsilon))$ for some constant $c<2$. 
We can now bound the total contribution of the error: straightforward application of Lemma \ref{lemma:error_bound} shows that the error after time $t=\tilde{O}(t_{\sqrt{X}})=\tilde{O}((U^2P)^{-1/3})$ is
\begin{align}
\|(\mathcal{T}e^{\int_0^t \mathcal{L}(s)ds}-\mathcal{U})\rho_0\|_{1} \leq\tilde{O} \left(\frac{\kappa}{(U^2P)^{1/3}}\right),
\end{align}
\noindent where $\tilde{O}$ hides polylogarithmic factors. This proves the lemma.
\end{proof}

So far we have shown that one can make arbitrary single-qubit gates with high fidelity even if the noise is much larger than the non-Gaussianity, $\kappa \gg U$, as long as one can increment the strength of the Gaussian terms, $P \gg \kappa^3/U^2$. We will now show how to implement entangling gates, which is enough to obtain a universal gate-set.

\begin{lemma} [Two-qubit gates]\label{lemma:boson_2q_gates}
Consider two bosonic modes evolving under the master equation 
\begin{align}
\frac{d}{dt}\rho(t)=-i[H(t),\rho(t)]+\sum_{l=1}^2 \sum_{v=1}^2  \kappa_l \mathcal{D}_{L^{(l)}_v},
\end{align}
where $H(t)$ is the Hamiltonian $H(t)=H_1(t)+H_2(t)+ig(t)[a_1a_2^{\dagger}-a_1^{\dagger}a_2]$, with
\begin{align}
H_i(t)=U_i(t)a_i^{\dagger 2}a_i^2+\left(\Lambda_{i,1}(t)a_i^{\dagger} +\Lambda_{i,2}(t)a_i^{\dagger 2} + \mathrm{h.c.}\right)+\Delta_i(t)a_i^{\dagger}a_i,
\end{align}
\noindent and $\mathcal{D}_L \rho = L \rho L^\dagger - \{L^\dagger L, \rho \} / 2$, $L_{v}^{(1)} = a_{v}, L^{(2)}_{v} = a_{v}^\dagger$, and $\kappa_1 + \kappa_2 = \kappa$. 
Assume that the strength of the Gaussian terms is bounded by  $P$ ($|\Lambda_{i,1}(t)|,|\Lambda_{i,2}(t)|,|\Delta_i(t)|,|g(t)| \leq P$).

Then, for any two-qubit quantum unitary operation $\mathcal{U}$ (i.e.~$\mathcal{U}(\cdot)=U(\cdot)U^{\dagger}$ for some single-qubit gate $U$), the Lindbladian $\mathcal{L}(t)$ can implement a time evolution that approximates $\mathcal{U}$, $\| (\mathcal{T} e^{\int_0^t\mathcal{L}(s)ds}-\mathcal{U})\rho_0\|_{1} \leq \tilde{O}(\kappa (U^2P)^{-1/3})$, for a time $t= O((U^2P)^{-1/3})$, with $\rho_0$ a two-qubit state.

\end{lemma}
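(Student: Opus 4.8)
The plan is to reduce Lemma~\ref{lemma:boson_2q_gates} to two ingredients already (essentially) available: fast high-fidelity single-qubit gates, supplied by Lemma~\ref{lemma:single_qubit}, and one fast high-fidelity entangling two-qubit gate. This suffices because every two-qubit unitary $\mathcal{U}$ admits an exact Cartan (KAK) decomposition into a bounded number of single-qubit layers interleaved with a bounded number of copies of any fixed perfect entangler. Concretely, setting $g(t)\equiv 0$ and idling one mode reduces the two-mode Lindbladian to the single-mode setting of Lemma~\ref{lemma:single_qubit}, so on each mode we realize $T$, $S$, $\sqrt{X}$ exactly and hence, by Solovay--Kitaev over $\{T,S,\sqrt X\}$, any single-qubit rotation to precision $\epsilon'$ in time $\tilde{O}((U^2P)^{-1/3})$ using $\tilde{O}(\log^{c}(1/\epsilon'))$ elementary gates. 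It then remains to build one entangling gate (say a controlled-phase or an iSWAP-type gate) on the Fock-encoded qubit subspace $\mathrm{span}\{\ket{0},\ket{1}\}^{\otimes 2}$ in time $O((U^2P)^{-1/3})$ with noise-induced error $\tilde{O}(\kappa(U^2P)^{-1/3})$.

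For the entangling gate I would mirror the displaced-frame construction of Lemma~\ref{lemma:single_qubit}, now keeping the beam-splitter $g(t)$ active. One moves to a frame displaced on each mode by $\alpha_i(t)$ with $|\alpha_i|=\Theta((P/U)^{1/3})$ -- the largest displacement compatible with $|\Lambda_{i,1}|,|\Lambda_{i,2}|,|\Delta_i|\le P$, exactly as in the single-mode case -- and uses the local Gaussian controls $\Lambda_{i,1},\Lambda_{i,2},\Delta_i$ to cancel the linear, squeezing, and detuning remnants generated both by displacing the self-Kerr terms $U_i a_i^{\dagger2}a_i^2$ and by displacing the beam-splitter (which produces extra single-mode displacements of size $\sim g|\alpha_i|$, kept $\le P$). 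What survives is, on each mode, the blockaded cubic term $\sim U_i\alpha_i\,a_i^\dagger(n_i-1)+\mathrm{h.c.}$ of Lemma~\ref{lemma:single_qubit} together with $g(t)(a_1 a_2^\dagger+a_1^\dagger a_2)$; choosing $g$ of the same order as the boosted rate $U\alpha\sim(U^2P)^{1/3}$ (and in any case $\le P$) and a suitable joint pulse shape for $\alpha_i(t),g(t)$, the induced evolution restricted to the joint qubit subspace implements an entangling two-qubit unitary in time $t=\Theta(1/(U\alpha))=\Theta((U^2P)^{-1/3})$. As in Lemma~\ref{lemma:single_qubit}, the displacements are ramped from and back to $0$ by fast pulses, adding only $O((P^2U)^{-1/3})\ll t$ time. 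The noise-induced error of this constant-depth circuit is then controlled by Lemma~\ref{lemma:error_bound} (and its single-mode use inside Lemma~\ref{lemma:single_qubit}) with $m=2$ and a constant truncation level $d=O(1)$, giving $O(\kappa t(d+1))=\tilde{O}(\kappa(U^2P)^{-1/3})$ per gate; approximating the single-qubit layers of the KAK decomposition to precision $\epsilon'\sim\kappa(U^2P)^{-1/3}$ via Solovay--Kitaev costs only a polylogarithmic time overhead and leaves the total error at $\tilde{O}(\kappa(U^2P)^{-1/3})$.

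The main obstacle is the second ingredient. Unlike the single-mode case, where $\mathrm{span}\{\ket{0},\ket{1}\}$ is \emph{exactly} invariant under the effective blockaded Hamiltonian, here the beam-splitter couples $\ket{1}_1\ket{1}_2$ to the leakage states $\ket{0}_1\ket{2}_2$ and $\ket{2}_1\ket{0}_2$, which are in turn coupled upward by the boosted cubic terms. One therefore has to establish that the displacement boost genuinely raises the effective two-qubit interaction to rate $\Theta((U^2P)^{1/3})$ rather than the bare $\Theta(U)$ (so that the gate is actually fast), \emph{and} that this transient leakage can be rendered negligible by the pulse design -- for instance by choosing the beam-splitter area so that the leakage population returns, or by a refocusing sequence -- which is also what legitimizes the constant truncation $d=O(1)$ used for the error bound. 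This is the technical heart of the lemma and follows the fast-gate constructions of Refs.~\cite{liangjiang2023universalcontrolbosonicsystems,eickbusch2022fast_gates}; everything else -- the KAK reduction, the Solovay--Kitaev accounting, and the invocations of Lemmas~\ref{lemma:error_bound} and~\ref{lemma:single_qubit} -- is routine.
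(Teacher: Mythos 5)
Your overall architecture (single-qubit gates from Lemma~\ref{lemma:single_qubit}, one entangling gate, Solovay--Kitaev, error accounting via Lemma~\ref{lemma:error_bound}) matches the paper, but the construction of the entangling gate has a genuine gap that you yourself flag as ``the technical heart of the lemma'' and then leave unresolved. Keeping the beam-splitter $g(t)$ active while boosting the cubic terms in the \emph{lab} modes $a_1,a_2$ couples $\ket{1,1}$ to $\ket{0,2}$ and $\ket{2,0}$, and the boosted terms $\sim U\alpha\, a_i^\dagger(n_i-1)$ then couple further upward; you offer only a gesture at refocusing or pulse-area tricks to control this leakage. This also undercuts your error estimate: Lemma~\ref{lemma:error_bound} requires, as a hypothesis, that $H(t)$ contain \emph{no} couplings out of the truncated subspace, so invoking it with $d=O(1)$ is not legitimate for your scheme as described. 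Without a proof that leakage is negligible, neither the gate time $\Theta((U^2P)^{-1/3})$ nor the error bound $\tilde O(\kappa(U^2P)^{-1/3})$ is established.

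The idea you are missing is the paper's collective-mode trick, which dissolves the leakage problem rather than fighting it. One first applies the beam-splitter at full strength $P$ for time $O(1/P)$ to rotate into the collective modes $b_1=(a_1+a_2)/\sqrt 2$, $b_2=(a_1-a_2)/\sqrt 2$; in this basis the two-qubit computational states map into Fock states $\ket{0},\ket{1},\ket{2}$ of $b_1$ (tensored with states of $b_2$). The entangling gate is then realized as a \emph{single-mode} gate on $b_1$ alone --- a phase $i$ on $\ket{1}_{b_1}$ --- using exactly the displaced-frame blockade of Lemma~\ref{lemma:single_qubit}: the displaced Hamiltonian $U_1(b_1^\dagger)^2b_1^2+2U_1[\alpha\, b_1^\dagger(b_1^\dagger b_1-2)+\mathrm{h.c.}]$ has no matrix element between $\ket{2}_{b_1}$ and $\ket{3}_{b_1}$, so the span of $\{\ket{0},\ket{1},\ket{2}\}_{b_1}$ is \emph{exactly} invariant, the blockade hypothesis of Lemma~\ref{lemma:error_bound} holds with $d=2$, and no refocusing or leakage analysis is needed. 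Undoing the mode mixing returns to the lab frame, and the gate is manifestly entangling on the original encoding. Your route might be salvageable with a careful adiabatic-elimination or echo argument, but as written it does not prove the lemma, whereas the collective-mode construction closes it with the tools already in hand.
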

\begin{proof}
In Lemma \ref{lemma:single_qubit} it is shown how to use $H(t)$ to generate arbitrary single-qubit gates on either of the modes with arbitrarily high fidelity. Hence, it is only necessary to show how to apply an entangling two-qubit gate in order to have a universal gate-set. 

To do this, let us consider the collective modes $b_1=(a_1+a_2)/\sqrt{2}$ and $b_2=(a_1-a_2)/\sqrt{2}$. We will denote by $\ket{j,k}_{a_1,a_2}=(j!\, k!)^{-1/2}(a_1^{\dagger})^j(a_2^{\dagger})^k \ket{0,0}$ the Fock states in the original basis, and $\ket{j,k}_{n_1,n_2}=(j!\, k!)^{-1/2}(b_1^{\dagger})^j(b_2^{\dagger})^k \ket{0,0}$. One obtains that
\begin{align}\label{eq:mixing_states}
& \ket{0,0}_{a_1,a_2}=\ket{0,0}_{b_1,b_2}, \nonumber \\
&\ket{0,1}_{a_1,a_2}=\frac{1}{\sqrt{2}}\left(\ket{1,0}_{b_1,b_2}-\ket{0,1}_{b_1,b_2}\right), \nonumber \\
&\ket{1,0}_{a_1,a_2}=\frac{1}{\sqrt{2}}\left(\ket{1,0}_{b_1,b_2}+\ket{0,1}_{b_1,b_2}\right), \nonumber\\
&\ket{1,1}_{a_1,a_2}=\frac{1}{2}\left(\ket{2,0}_{b_1,b_2}-\ket{0,2}_{b_1,b_2}\right). 
\end{align}
\noindent Let us now denote by $U$ the single-mode unitary that maps $U\ket{0}_{b_1}=\ket{0}_{b_1}$, $U\ket{1}_{b_1}=i\ket{1}_{b_1}$, $U\ket{2}_{b_1}=\ket{2}_{b_1}$. Using Eq.~(\ref{eq:mixing_states}), it can be seen that $U$ will act in the $a_1,a_2$ basis as 
\begin{align}
& U\ket{0,0}_{a_1,a_2}=\ket{0,0}_{a_1,a_2}, \nonumber \\
&U\ket{0,1}_{a_1,a_2}=\frac{1}{2}\left((1+i)\ket{0,1}_{a_1,a_2}-(1-i)\ket{1,0}_{a_1,a_2}\right), \nonumber \\
&U\ket{1,0}_{a_1,a_2}=\frac{1}{2}\left(-(1-i)\ket{0,1}_{a_1,a_2}+(1+i)\ket{1,0}_{a_1,a_2}\right),  \nonumber \\
&U\ket{1,1}_{a_1,a_2}=\ket{1,1}_{a_1,a_2}. 
\end{align}

Hence, $U$ is clearly an entangling gate between modes $a_1$ and $a_2$. Furthermore, $U$ can be implemented in a fast manner using the same technique as in Lemma \ref{lemma:single_qubit}. Let us detail the procedure. First, one can evolve the system under the term $P(a_1^{\dagger}a_2+\mathrm{h.c})$, which induces the mixing of the modes $a_1 \rightarrow b_1$ and $a_2 \rightarrow b_2$ in time $t=O(1/P)$. The Hamiltonian in the new basis, $H_b(t)$, may be written us
\begin{align}\label{eq:Hamiltonian_hb}
H_b(t)=U_1b_1^{\dagger 2}b_1^2 + (\Lambda_{1,1}b_1^{\dagger}+\Lambda_{1,2}b_1^{\dagger 2}+\mathrm{h.c.})+\Delta_1b_1^{\dagger}b_1,
\end{align}
\noindent where we have chosen $\Delta_2=\Lambda_{2,1}=\Lambda_{2,2}=U_2=0$. One can note that the technique in Lemma \ref{lemma:single_qubit} can be readily applied to the Hamiltonian $H_b$ in Eq.~(\ref{eq:Hamiltonian_hb}). That is, one can go to a frame in which $b_1$ is displaced by $\alpha(t)$, $b_1 \rightarrow b_1+\alpha(t)$. As shown in the proof of Lemma \ref{lemma:single_qubit}, a suitable choice of the parameters leads to the displaced Hamiltonian 
$H_{b,\alpha(t)}=U_1(t)(b_1^{\dagger})^2b_1^2+2U_1 [\alpha(t) b_1^{\dagger}(b_1^{\dagger}b_1-2)+\mathrm{h.c.}]$. Note that this Hamiltonian contains no couplings between $\ket{2}_{b_1}$ and $\ket{3}_{b_1}$, and hence the subspace spanned by $\{\ket{0}_{b_1},\ket{1}_{b_1},\ket{2}_{b_1}\}$ is blockaded. Furthermore, one can rewrite
\begin{align}
H_{b,\alpha(t)}=U_1(t)(b_1^{\dagger})^2b_1^2+2U_1(t)\mathrm{Re}(\alpha(t))H_{\alpha,R}+2U_1(t)\mathrm{Im}(\alpha(t))H_{\alpha,I},
\end{align}
\noindent with $H_{\alpha(t),R}=b_1^{\dagger}(b_1^{\dagger}b_1-2)+(b_1^{\dagger}b_1-2)b_1$ and $H_{\alpha(t),I}=i(b_1^{\dagger}(b_1^{\dagger}b_1-2)-(b_1^{\dagger}b_1-2)b_1)$. One can compute the commutator $[H_{\alpha(t),R},H_{\alpha(t),I}]=2i(3 b_1^{\dagger 2}b_1^2-6b_1^{\dagger}b_1+4)$, which is diagonal, and can clearly implement the gate $U$, which consists only of a phase rotation of the state $\ket{1}_{b_1}$. In fact, the analysis in Ref.~\cite{liangjiang2023universalcontrolbosonicsystems} shows that one can generate arbitrary unitaries in the subspace spanned by $\{\ket{0}_{b_1},\ket{1}_{b_1},\ket{2}_{b_1}\}$. 

Following the analysis in Lemma \ref{lemma:single_qubit}, the gate $U$ can then be implemented in time $t_U=O((U^2P)^{-1/3})$. This is also the dominant source of error, since applying the displacement takes time $t_{\mathrm{dis}}=O((P^2U)^{-1/3})$, and going to the collective mode $b_1$ takes time $t_b=O(1/P)$, and therefore $t_U \gg t_{\mathrm{dis}},t_b$. Hence, the total error will be $O(\kappa t_U)=O(\kappa(U^2P)^{1/3})$.

Therefore, we have shown how to implement an entangling gate. Together with the implementation of arbitrary single-qubit gates and the Solovay-Kitaev theorem (which introduces and additional polylogarithmic factor), this proves that any 2-qubit gate $\mathcal{U}$ can be implemented by evolving the Lindbladian $\mathcal{L}(t)$ up to precision
\begin{align}
\|(\mathcal{T}e^{\int_0^t \mathcal{L}(s)ds}-\mathcal{U})\rho_0\|_{1} \leq \tilde{O} \left(\frac{\kappa}{(U^2P)^{1/3}}\right),
\end{align}
\noindent where $\rho_0$ is a two-qubit state, and the error bound follows directly from Lemma \ref{lemma:error_bound}.

\end{proof}

So far, we have shown how the bosonic Hamiltonian can be used to generate high-fidelity universal gates. Specifically, we have seen that 2-qubit gates can be implemented in time $t=\tilde{O}((PU^2)^{-1/3})$, where $U$ is the non-Gaussian strength, and $P$ refers to the maximum absolute value allowed for the Gaussian terms of the Hamiltonian. Let us now consider a system with $nL$ bosonic modes as described in section \ref{subsection:model} and study the asymptotic scaling with $n$. In this case, $|J_{i,j}^{\alpha,\alpha^{\prime}}(t)|,|\Omega_i^{\alpha}(t)| \leq P$. Let us assume that $J,\Omega=\Theta(P)$ (this is the case, for example, for geometrically local Hamiltonians). Then, provided that the Gaussian couplings $J,\Omega=O(1)$ can be arbitrarily large constants, it follows from Lemma \ref{lemma:boson_2q_gates} that any gate can be implemented to an arbitrarily small (but independent of $n$) precision.  Since this allows for the implementation of gates with an effective gate error below that of the threshold theorem \cite{noh2020_fault_tolerant_bosons,matsuura2024fault_tolerant_bosons,aharonov1999faulttolerantquantum}, this implies that fault-tolerant circuits can be implemented. We remark that, in addition to high-fidelity gates, fault-tolerant constructions usually require the ability to implement $\mathrm{RESTART}$ operations to provide fresh qubits \cite{aharonov1999faulttolerantquantum}. For spin systems in the presence of non-unital noise, cooling algorithms \cite{boykin2002_cooling1,schulman1999_cooling2,Alhambra2019heatbathalgorithmic} in conjunction with the noise channel can be leveraged to implement such an operation \cite{benor2013quantumrefrigerator,Trivedi2022_transitions,shtanko2024complexitylocalquantumcircuits}. In our case, a similar scheme would be needed; however, we leave a careful analysis of the construction for future work.

\subsection{High-noise regime of the fermionic model is not separable at all times}
For the bosonic model, Theorem 2 establishes that, in the presence of a sufficiently incoherent high particle loss or incoherent particle gain, the state of the bosonic model is separable at all times. In this subsection, we show that such a result cannot be true for the fermionic model. We show this for two notions of separability for fermions \cite{maricarmen2007_entanglement_fermions,Moriya_2006_entanglement_fermions}---the first notion  holds for all observables, and the second weaker notion holds for parity-conserving, or even, observables. 

Throughout this section, it will be enough for us to consider separability in the bi-partite setting. We will consider $m$ fermionic modes which are divided into two subgroups of modes, $A$ with modes $\{1, 2, \dots, m_A\}$ and $B$ with modes $\{m_A + 1, m_A + 2, \dots, m\}$. Recall that an operator on the fermionic Hilbert space is an element of the algebra generated by $\{c_v^1, c_v^2\}_{v \in \{1, 2 \dots m\}}$ or alternatively by $\{a_v, a_v^\dagger\}_{v\in\{1, 2 \dots m\}}$. An operator acting on sub-system $A$ will be an element of the algebra generated by $\{a_v, a_v^\dagger\}_{v \in \{1, 2 \dots m_A\}}$ and, similarly, an operator acting on the sub-system $B$ will be an element of the algebra generated by $\{a_v, a_v^\dagger\}_{v \in \{m_A + 1, m_A + 2\dots m_B\}}$. The parity operator of a set $S \subseteq \{1, 2 \dots m\}$ of fermionic modes is $P_A = \exp({i\pi \sum_{i \in S}a_i^\dagger a_i})$. Operators that conserve the parity operator are called \emph{even operators}. Physically relevant states of the fermionic modes are restricted to be even operators---note, however, that a physical operator that is even on all the fermionic modes is not necessarily even on a subset of these fermionic modes.

\begin{definition}\label{def:ferm_sep_all_obs}
    A state $\rho$ of $m$ fermionic modes will be called a \textbf{product state with respect to all observables} on the bi-partition $A | B$ if there exist states $\rho_A$ for the modes in $A$ and $\rho_B$ for the modes in $B$ such that, for all operators $O_A$ supported on $A$ and $O_B$ supported on $B$,
    \begin{align*}
    \textnormal{Tr}(O_A O_B \rho) = \textnormal{Tr}(O_A \rho_A) \textnormal{Tr}(O_B \rho_B).
    \end{align*}
    A state $\rho$ of the $m$ fermionic modes will be called a \textbf{separable state with respect to all observables} on the bi-partition $A | B$ if it can be expressed as a convex-combination of such product states.
\end{definition}
\noindent We remark that it was shown in Ref.~\cite{Moriya_2006_entanglement_fermions} that if $\rho$ is an even operator, which is also a product state as per definition \ref{def:ferm_sep_all_obs}, then $\rho_A$ and $\rho_B$ are also both even operators.

\begin{definition}\label{def:ferm_sep_even_obs}
    A state $\rho$ of $m$ fermionic modes will be called a \textbf{product state with respect to even observables} on the bi-partition $A | B$ if there exist states $\rho_A$ for the modes in $A$ and $\rho_B$ forthe  modes in $B$ such that, for all even operators $O_A^{(+)}$ supported on $A$ and $O_B^{(+)}$ supported on $B$,
    \begin{align*}
    \textnormal{Tr}(O_A^{(+)} O_B^{(+)} \rho) = \textnormal{Tr}(O_A^{(+)} \rho_A) \textnormal{Tr}(O_B^{(+)} \rho_B).
    \end{align*}
    A state $\rho$ of the $m$ fermionic modes will be called a \textbf{separable state with respect to even observables} on the bi-partition $A | B$ if it can be expressed as a convex-combination of such product states.
\end{definition}
\noindent As discussed in Ref. \cite{maricarmen2007_entanglement_fermions} , while separability with respect to all observables implies separability with respect to even observables, the converse is not necessarily true. This can be seen explicitly in a simple 2-mode example---consider the state $\ket{\psi} = (a_1^\dagger + a_2^\dagger) \ket{\text{vac}}/ \sqrt{2}$. This state is not separable as per definition \ref{def:ferm_sep_all_obs}---to see this, one can use the 2-mode separability criteria from Refs.~\cite{Moriya_2006_entanglement_fermions,maricarmen2007_entanglement_fermions} , which we also provide in Lemma \ref{lemma:sep_criteria_2_mode}. However, this state is separable as per definition \ref{def:ferm_sep_even_obs}---to see this, we note that, for any even observable $O_1^{(+)}$ on the first fermionic mode and $O_2^{(+)}$ on the second fermionic mode,
\begin{align}
\bra{\psi}O_1^{(+)}O_2^{(+)}\ket{\psi} = \frac{1}{2} \big(\bra{\phi_1}O_1^{(+)}\ket{\phi_1} \bra{\phi_2}O_2^{(+)}\ket{\phi_2} + \bra{\theta_1}O_1^{(+)}\ket{\theta_1} \bra{\theta_2}O_2^{(+)}\ket{\theta_2}\big),
\end{align}
where $\ket{\phi_1} = a_1^\dagger\ket{\text{vac}}, \ket{\phi_2} = \ket{\text{vac}}$, $\ket{\theta_1} = \ket{\text{vac}}$ and $\ket{\theta_2} = a_2^\dagger \ket{\text{vac}}$.

\subsubsection{Non-separability for any observable}
Here, we provide a simple 2-mode example which shows that, unlike the bosonic model, no matter how high the rate of particle loss and incoherent particle gain is in the fermionic model, the dynamics of the fermionic model is not separability preserving with respect to all observables (definition \ref{def:ferm_sep_all_obs}). To establish this result, we first review the 2-mode seperability criteria from Refs.~\cite{maricarmen2007_entanglement_fermions,Moriya_2006_entanglement_fermions}.
\begin{lemma}[2-mode separability, Refs.~\cite{maricarmen2007_entanglement_fermions,Moriya_2006_entanglement_fermions}]\label{lemma:sep_criteria_2_mode}
A 2-mode density matrix $\rho$ is separable with respect to all observables (definition \ref{def:ferm_sep_all_obs}) if and only if it is diagonal in the computational basis. 
\end{lemma}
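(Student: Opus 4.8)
The plan is to prove the two implications separately, working directly with the $4\times 4$ matrix of $\rho$ in the Fock basis $\ket{k_1 k_2}=(a_1^\dagger)^{k_1}(a_2^\dagger)^{k_2}\ket{\mathrm{vac}}$, $k_1,k_2\in\{0,1\}$. The starting observation is that a physical fermionic state is an even operator, so $\langle k_1 k_2|\rho|l_1 l_2\rangle=0$ whenever $k_1+k_2\not\equiv l_1+l_2\pmod{2}$. Hence $\rho$ is automatically block-diagonal with respect to total parity, and the only matrix elements that can obstruct diagonality are the two intra-sector off-diagonals $\langle 00|\rho|11\rangle$ and $\langle 01|\rho|10\rangle$ (together with their complex conjugates). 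So the whole problem reduces to controlling these two entries.

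For the \emph{if} direction, I would write a diagonal $\rho=\sum_{k_1,k_2}p_{k_1 k_2}\ket{k_1 k_2}\bra{k_1 k_2}$ and check that each rank-one term $\ket{k_1 k_2}\bra{k_1 k_2}$ is a product state in the sense of Definition~\ref{def:ferm_sep_all_obs} with $\rho_A=\ket{k_1}\bra{k_1}$ and $\rho_B=\ket{k_2}\bra{k_2}$. By bilinearity it suffices to verify $\mathrm{Tr}(O_A O_B\ket{k_1 k_2}\bra{k_1 k_2})=\mathrm{Tr}(O_A\ket{k_1}\bra{k_1})\,\mathrm{Tr}(O_B\ket{k_2}\bra{k_2})$ on the basis $\{I,a_v,a_v^\dagger,n_v\}$ of each single-mode operator algebra; both sides vanish unless $O_A,O_B$ are both diagonal (odd single-mode operators have vanishing expectation in a Fock state, and any occupation-changing operator maps $\ket{k_1 k_2}$ to an orthogonal state), and when both are diagonal the identity is immediate. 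Thus a diagonal $\rho$ is a convex combination of product states, i.e.\ separable.

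For the \emph{only if} direction, write $\rho=\sum_\alpha q_\alpha\sigma^{(\alpha)}$ with $q_\alpha>0$ and each $\sigma^{(\alpha)}$ a product state with single-mode components $\sigma_A^{(\alpha)},\sigma_B^{(\alpha)}$, which are themselves Hermitian density matrices; it is enough to show every $\sigma^{(\alpha)}$ already has $\langle 00|\sigma^{(\alpha)}|11\rangle=\langle 01|\sigma^{(\alpha)}|10\rangle=0$. Using $\ket{11}\bra{00}=a_1^\dagger a_2^\dagger$ and the factorization property with $O_A=a_1^\dagger$, $O_B=a_2^\dagger$, I get $\langle 00|\sigma^{(\alpha)}|11\rangle=\mathrm{Tr}(a_1^\dagger\sigma_A^{(\alpha)})\,\mathrm{Tr}(a_2^\dagger\sigma_B^{(\alpha)})$. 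The crucial step is the anticommutation sign: $\ket{00}\bra{11}=(a_1^\dagger a_2^\dagger)^\dagger=a_2 a_1=-a_1 a_2$, so Hermiticity of $\sigma^{(\alpha)}$ gives $\overline{\langle 00|\sigma^{(\alpha)}|11\rangle}=\langle 11|\sigma^{(\alpha)}|00\rangle=-\mathrm{Tr}(a_1\sigma_A^{(\alpha)})\,\mathrm{Tr}(a_2\sigma_B^{(\alpha)})$, while Hermiticity of the components gives $\mathrm{Tr}(a_v^\dagger\tau)=\overline{\mathrm{Tr}(a_v\tau)}$, hence $\langle 00|\sigma^{(\alpha)}|11\rangle=\overline{\mathrm{Tr}(a_1\sigma_A^{(\alpha)})\,\mathrm{Tr}(a_2\sigma_B^{(\alpha)})}$. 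Combining the last two displays yields $\langle 00|\sigma^{(\alpha)}|11\rangle=-\langle 00|\sigma^{(\alpha)}|11\rangle$, so it vanishes; the identical computation with $a_2^\dagger$ replaced by $a_2$ (now using $\ket{10}\bra{01}=a_1^\dagger a_2$ and $\ket{01}\bra{10}=a_2^\dagger a_1=-a_1 a_2^\dagger$) gives $\langle 01|\sigma^{(\alpha)}|10\rangle=0$. Averaging over $\alpha$ and invoking the evenness of $\rho$ to kill the cross-parity entries, $\rho$ is diagonal.

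The delicate point I expect to be the crux is the sign bookkeeping when rewriting $\ket{k_1 k_2}\bra{l_1 l_2}$ as a product $O_A O_B$ of a mode-$1$ and a mode-$2$ operator: it is precisely the minus sign from exchanging two \emph{odd} operators that, confronted with the Hermiticity of the product state, forces the intra-sector off-diagonals to zero, so the argument collapses if one is cavalier about operator ordering. A secondary point to state explicitly at the outset is that separability alone does \emph{not} eliminate cross-parity matrix elements such as $\langle 00|\rho|10\rangle$ --- for instance $\ket{+}\bra{+}_1\otimes\ket{0}\bra{0}_2$ is a perfectly good product state per Definition~\ref{def:ferm_sep_all_obs} yet is not diagonal --- so the physicality (evenness) of $\rho$ is genuinely used and must be assumed.
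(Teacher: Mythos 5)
The paper does not actually prove this lemma: it is imported as a black box from the cited references and used directly in Proposition \ref{prop:non_sep_fermion_all_obs}, so there is no in-paper argument to compare against and your proposal has to stand on its own. It does: I checked it and it is correct. The ``if'' direction is the routine bilinearity check that each Fock projector $\ket{k_1k_2}\!\bra{k_1k_2}$ factorizes on the sixteen basis products $O_AO_B$ with $O_A,O_B\in\{I,a_v,a_v^\dagger,n_v\}$ (both sides vanish whenever either factor is odd). The ``only if'' direction correctly isolates the real content: writing $\ket{11}\!\bra{00}=a_1^\dagger a_2^\dagger$ and $\ket{00}\!\bra{11}=a_2a_1=-a_1a_2$, the factorization property together with Hermiticity of $\sigma^{(\alpha)}$, $\sigma_A^{(\alpha)}$, $\sigma_B^{(\alpha)}$ forces $\langle 00|\sigma^{(\alpha)}|11\rangle=-\langle 00|\sigma^{(\alpha)}|11\rangle=0$, and the analogous computation kills $\langle 01|\sigma^{(\alpha)}|10\rangle$; the sign bookkeeping is right. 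Two remarks. First, your route is a bit sharper than the one the paper's surrounding discussion hints at: the paper notes (after Definition \ref{def:ferm_sep_all_obs}) that an even product state has even marginals, which would make $\textnormal{Tr}(a_v^\dagger\sigma_{A/B}^{(\alpha)})$ vanish immediately, but the individual terms in a convex decomposition of an even $\rho$ need not themselves be even, so your direct sign argument, which assumes nothing about the parity of the $\sigma^{(\alpha)}$, is the cleaner way to close that gap. Second, you are right that evenness of $\rho$ itself must be invoked to eliminate the cross-parity entries --- your $\ket{+}\!\bra{+}\otimes\ket{0}\!\bra{0}$ example shows the statement fails for non-physical density matrices --- and this hypothesis is implicit in the paper's convention that fermionic states are even operators; stating it explicitly, as you do, is the right call.
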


\begin{proposition} \label{prop:non_sep_fermion_all_obs}
    Consider a system with $m = 2$ fermionic modes, with the sub-system $A$ with mode 1 and sub-system $B$ with mode 2 whose density matrix $\rho(t)$ satisfies the Lindblad master equation 
    \begin{align*}
    \frac{d}{dt}\rho(t) = -i[J(a_1^\dagger a_2 + a_2^\dagger a_1), \rho(t)] + \sum_{j = 1}^4 \big(\kappa_1\mathcal{D}_{a_j}+\kappa_2\mathcal{D}_{a_j^\dagger}\big) \rho(t), 
    \end{align*}
    Then, $\forall \kappa_1, \kappa_2 > 0$, $\exists \rho(0)$ which is separable with respect to all observables (Definition \ref{def:ferm_sep_even_obs}) such that $\rho(t)$ cannot be separable for all $t \geq 0$.
\end{proposition}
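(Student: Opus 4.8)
The plan is to reduce the statement to the two--mode separability criterion of Lemma~\ref{lemma:sep_criteria_2_mode}, which says that a two--mode fermionic state is separable with respect to all observables if and only if it is diagonal in the computational basis $\{\ket{00},\ket{01},\ket{10},\ket{11}\}$. Thus it suffices to exhibit an initial \emph{diagonal} state whose evolution under $\mathcal{L}$ develops an off--diagonal matrix element at some $t>0$. I would take the pure product state $\rho(0)=\ket{0}_1\bra{0}_1\otimes\ket{1}_2\bra{1}_2=\ket{01}\bra{01}$, which is manifestly separable (it is diagonal, and even). To witness non--diagonality I would track the single quantity $\expect{a_1^\dagger a_2}_t=\mathrm{Tr}(a_1^\dagger a_2\,\rho(t))$: since $a_1^\dagger a_2$ raises $n_1$ by one and lowers $n_2$ by one, it has vanishing diagonal in the computational basis, so $\mathrm{Tr}(a_1^\dagger a_2\,\rho)=0$ whenever $\rho$ is diagonal. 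Hence it is enough to show this expectation becomes nonzero for some small $t>0$.

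Next I would write the adjoint (Heisenberg) equation $\tfrac{d}{dt}\expect{a_1^\dagger a_2}_t=\expect{\mathcal{L}^\dagger(a_1^\dagger a_2)}_t$ and compute the right-hand side using the canonical anticommutation relations. The coherent part gives $i[J(a_1^\dagger a_2+a_2^\dagger a_1),\,a_1^\dagger a_2]=iJ(n_2-n_1)$, a short computation in which only the $a_2^\dagger a_1$ piece contributes. Each of the four dissipators $\mathcal{D}_{a_1},\mathcal{D}_{a_2},\mathcal{D}_{a_1^\dagger},\mathcal{D}_{a_2^\dagger}$ acts on $a_1^\dagger a_2$ only as a decay term: the ``jump'' pieces $a_j^\dagger(a_1^\dagger a_2)a_j$ and $a_j(a_1^\dagger a_2)a_j^\dagger$ all vanish (because $a_1^\dagger a_1^\dagger=0$ kills the $j=1$ terms and $a_2a_2=0$ the $j=2$ terms), while the anticommutator pieces each contribute $-\tfrac12 a_1^\dagger a_2$. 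Collecting everything,
\begin{align*}
\mathcal{L}^\dagger(a_1^\dagger a_2)=iJ(n_2-n_1)-(\kappa_1+\kappa_2)\,a_1^\dagger a_2,
\end{align*}
so that $\tfrac{d}{dt}\expect{a_1^\dagger a_2}_t=iJ\big(\expect{n_2}_t-\expect{n_1}_t\big)-(\kappa_1+\kappa_2)\expect{a_1^\dagger a_2}_t$.

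Finally I would evaluate at $t=0$: for $\rho(0)=\ket{01}\bra{01}$ one has $\expect{n_1}_0=0$, $\expect{n_2}_0=1$, and $\expect{a_1^\dagger a_2}_0=0$, hence $\tfrac{d}{dt}\expect{a_1^\dagger a_2}_t\big|_{t=0}=iJ\neq0$ (for $J\neq0$, which is the nontrivial case). By continuity $\expect{a_1^\dagger a_2}_t\neq0$ for all sufficiently small $t>0$, so $\rho(t)$ is not diagonal in the computational basis and therefore, by Lemma~\ref{lemma:sep_criteria_2_mode}, not separable with respect to all observables; this holds for every $\kappa_1,\kappa_2>0$, which is the assertion. (One can in fact solve the resulting scalar ODE and check the coherence stays nonzero for all $t>0$, but that is not needed.) There is no conceptually hard step here; the only thing requiring care is the bookkeeping of fermionic signs in computing $\mathcal{L}^\dagger(a_1^\dagger a_2)$. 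The content of that computation is precisely that the linear loss/gain dissipators, being built from number--changing single--mode operators, cannot manufacture a single--particle coherence out of a population imbalance, whereas the hopping term does so at first order in time.
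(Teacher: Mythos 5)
Your proposal is correct and takes essentially the same route as the paper: both reduce to the diagonality criterion of Lemma~\ref{lemma:sep_criteria_2_mode} and show that the hopping term generates the single-particle coherence $\bra{1,0}\rho(t)\ket{0,1}=\expect{a_2^\dagger a_1}_t$ at rate $J$ to first order in $t$, which no loss/gain rate can prevent. The only cosmetic differences are that you work in the Heisenberg picture on the observable $a_1^\dagger a_2$ (your adjoint-Lindbladian computation checks out) with initial state $\ket{0,1}$, whereas the paper expands $\rho(t)=\rho(0)+t\mathcal{L}\rho(0)+O(t^2)$ directly with the mirrored initial state $a_1^\dagger\ket{\text{vac}}$.
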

\begin{proof}
    Throughout this proof, ``separability" refers to separability with respect to all observables (Definition \ref{def:ferm_sep_all_obs}). Consider the initial state $\rho(0) = a_1^\dagger \ket{\text{vac}}\!\bra{\text{vac}}a_1$. Note that $\rho(0)$ is trivially separable---we now establish that, for a small time $t$,   
    $\rho(t) = e^{\mathcal{L}t}(\rho(0))$ is not seperable to $O(t^2)$ for any $\kappa$, which is enough to contradict separability of $\rho(t)$ at all times $t$. Now,
    \begin{align}
    \rho(t) &= \rho(0) + t\mathcal{L}\rho(0) + O(\varepsilon^2),
    \end{align}
    which, in the computational basis (i.e.~$\ket{0, 0} = \ket{\text{vac}}, \ket{1, 0} = a_1^\dagger \ket{\text{vac}}, \ket{0, 1} = a_2^\dagger \ket{\text{vac}}, \ket{1, 1} = a_1^\dagger a_2^\dagger \ket{\text{vac}}$), satisfies $\abs{\bra{1, 0}\rho(\varepsilon)\ket{0, 1}} = Jt + O(t^2)$.
    From the separability criteria in Lemma \ref{lemma:sep_criteria_2_mode}, it then follows that there cannot exist a separable state $\sigma(t)$ such that $\norm{\rho(t) - \sigma(t)}_1 \leq O(t^2)$, no matter what the rate $\kappa$ is. 
\end{proof}

\noindent While the analysis above indicates that there would be times when the state $\rho(t)$ would be entangled, we can also analyze the time-scale at which this entanglement is developed. In Fig.~\ref{fig:fermionic_allobs_timescale}, we simulate the two-fermionic-mode model from Proposition \ref{prop:non_sep_fermion_all_obs} for $\kappa_1 =\kappa_2 =\kappa$ and numerically compute the entanglement measure $E(\rho) = \sum_{b, b': b \neq b'} \abs{\rho_{b, b'}}$ (i.e., the 1-norm of a vector formed with the off-diagonal elements of $\rho$) as a function of $t$. Note that from Lemma \ref{lemma:sep_criteria_2_mode}, this measure quantified how non-separable the two-mode state is when considering all observables. We observe that this measure becomes $0$ at long times, however it is largest at time $t^* \sim 1/\kappa$ --- thus consistent with the analysis of Proposition \ref{prop:non_sep_fermion_all_obs}, even at large $\kappa$, the fermionic state becomes entangled at short times and, unlike the bosonic model, does not exhibit a threshold behavior of transitioning to an always separable state at sufficiently large $\kappa$.

\begin{figure}
    \includegraphics[width=0.75\textwidth]{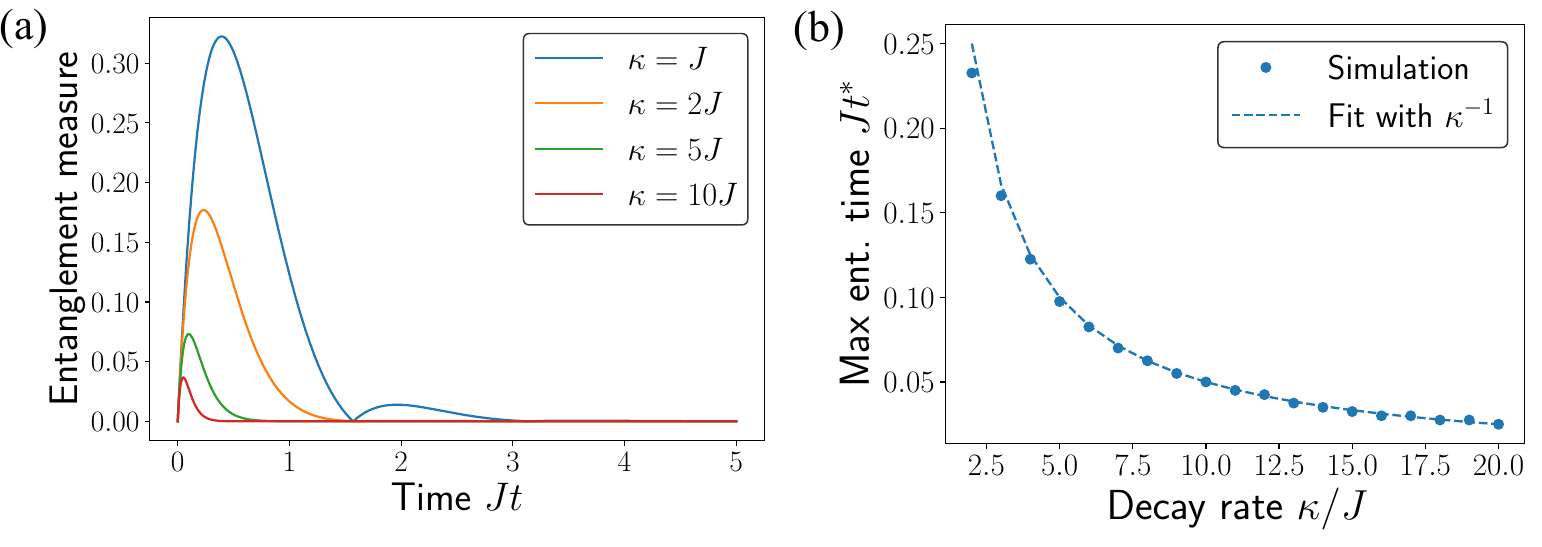}
    \caption{Numerical simulation of the two-fermionic-mode model considered in Proposition \ref{prop:non_sep_fermion_all_obs} with both particle loss rate $\kappa_1$ and particle gain rate $\kappa_2$ set to $\kappa$ and $\rho(0) = a_1^\dagger \ket{\text{vac}}\!\bra{\text{vac}}a_1$. (a) Time evolution of the entanglement measure computed by adding all off-diagonal elements of the two-mode density matrix of the model. As per Lemma \ref{lemma:sep_criteria_2_mode}, this quantifies non-separability with respect to all observables for 2-mode fermionic systems. (b) The time $t^*$ at which the entanglement measure computed in (a) is maximum as a function of $\kappa$. We see that entanglement is developed at time-scales $t^* \sim 1 / \kappa$ no matter how large $\kappa$ is.
    }\label{fig:fermionic_allobs_timescale}
\end{figure}

\subsubsection{Non-separability for even observables}
\noindent We will use the following lemma, which reduces the problem of checking the non-separability of a 4-mode fermionic state with respect to even observables to an effective problem with 2 qubits. 
\begin{lemma}\label{lemma:sep_suff}
    Suppose $\rho$ is a state of $m = 4$ fermionic modes, with sub-system $A$ with modes 1 and 2 and sub-system $B$ with modes $3$ and $4$, and suppose the 2-qubit state $\sigma = (Q_{A, e} Q_{B, o})\rho (Q_{A, e}^\dagger Q_{B, o}^\dagger) $, where
    \begin{align*}
    Q_{A, e} = \ket{0_A}\!\bra{\textnormal{vac}} + \ket{1_A}\!\bra{\textnormal{vac}}a_1 a_2, Q_{B,o} = \ket{0_B}\!\bra{\textnormal{vac}}a_3 + \ket{1_B}\!\bra{\textnormal{vac}}a_4,
    \end{align*}
    is entangled, then $\rho$ is not separable with respect to even observables.
\end{lemma}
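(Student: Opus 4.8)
The plan is to prove the contrapositive: assuming $\rho$ is separable with respect to even observables in the sense of Definition~\ref{def:ferm_sep_even_obs}, I would show that the two‑qubit operator $\sigma$ is separable, hence not entangled. The starting observation is that the entries of $\sigma$ in the two‑qubit computational basis are expectation values of \emph{even} local operators in $\rho$. Indeed, $Q_{A,e}^\dagger$ sends $\ket{0_A},\ket{1_A}$ to the even Fock states $\ket{\phi_0}=\ket{\textnormal{vac}}_A$ and $\ket{\phi_1}=a_1^\dagger a_2^\dagger\ket{\textnormal{vac}}_A$ of modes $1,2$ (up to an irrelevant overall sign), while $Q_{B,o}^\dagger$ sends $\ket{0_B},\ket{1_B}$ to the single‑particle states $\ket{\chi_0}=a_3^\dagger\ket{\textnormal{vac}}_B$ and $\ket{\chi_1}=a_4^\dagger\ket{\textnormal{vac}}_B$ of modes $3,4$. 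Writing $\ket{\Psi_{ij}}=(Q_{A,e}^\dagger\otimes Q_{B,o}^\dagger)\ket{i_A j_B}$, this gives
\[
\sigma_{(ij),(i'j')}\;=\;\bra{\Psi_{ij}}\rho\ket{\Psi_{i'j'}}\;=\;\textnormal{Tr}\!\big(X_A^{(i',i)}\,Y_B^{(j',j)}\,\rho\big),
\]
with $X_A^{(i',i)}=\ket{\phi_{i'}}\!\bra{\phi_i}$ and $Y_B^{(j',j)}=\ket{\chi_{j'}}\!\bra{\chi_j}$. Both families consist of even operators: $X_A^{(i',i)}$ because the $\phi$'s have even particle number, and $Y_B^{(j',j)}$ because it maps a one‑particle state to a one‑particle state. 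Consequently the product $X_A^{(i',i)}Y_B^{(j',j)}$ is unambiguous (the two factors are even operators on disjoint mode sets, so they commute) and, by bilinearity, lies in the complex span of products of even Hermitian operators on $A$ and on $B$. This is exactly the point where the ``odd'' dressing $Q_{B,o}$ does no harm: conjugation by it only ever probes $\rho$ through even $B$‑operators, so Definition~\ref{def:ferm_sep_even_obs} is applicable.

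Next I would feed in the separable decomposition. Write $\rho=\sum_k p_k\tau_k$ with each $\tau_k$ a product state with respect to even observables, with local states $\rho_A^{(k)}$ on modes $1,2$ and $\rho_B^{(k)}$ on modes $3,4$. Applying Definition~\ref{def:ferm_sep_even_obs} to the even operators above (extended bilinearly to their non‑Hermitian off‑diagonal members) gives
\[
\sigma_{(ij),(i'j')}\;=\;\sum_k p_k\,\bra{\phi_i}\rho_A^{(k)}\ket{\phi_{i'}}\,\bra{\chi_j}\rho_B^{(k)}\ket{\chi_{j'}},
\]
that is, $\sigma=\sum_k p_k\,\mu^{(k)}\otimes\nu^{(k)}$, where $\mu^{(k)}$ is the $2\times2$ matrix $[\bra{\phi_i}\rho_A^{(k)}\ket{\phi_{i'}}]_{i,i'}$, i.e.\ the compression of the state $\rho_A^{(k)}$ to the two‑dimensional subspace $\mathrm{span}\{\ket{\phi_0},\ket{\phi_1}\}$, and similarly $\nu^{(k)}$ is the compression of $\rho_B^{(k)}$ to $\mathrm{span}\{\ket{\chi_0},\ket{\chi_1}\}$. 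A compression of a positive semidefinite operator is positive semidefinite, so $\mu^{(k)},\nu^{(k)}\succeq 0$ and $p_k\ge 0$; hence $\sigma$ is a conic combination of product positive operators, i.e.\ separable as an (unnormalized) two‑qubit state. This contradicts the hypothesis that $\sigma$ is entangled and proves the lemma. (Note that it is not needed that the $\rho_A^{(k)},\rho_B^{(k)}$ be even parity states; only positivity of their compressions is used. One should also read ``entangled'' for the possibly unnormalized $\sigma$ as ``not a conic combination of product positive operators,'' which is exactly what the construction rules out.)

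I expect the only real work to be the fermionic bookkeeping in the first paragraph: carefully computing the action of $Q_{A,e}^\dagger$ and $Q_{B,o}^\dagger$ on the qubit basis, tracking the Jordan–Wigner ordering signs in $\ket{\Psi_{ij}}$ (harmless, since such signs appear in both the bra and the ket of each entry of $\sigma$ and cancel), and—most importantly—verifying that $\ket{\Psi_{i'j'}}\!\bra{\Psi_{ij}}$ genuinely factors as a product of an even operator on modes $1,2$ times an even operator on modes $3,4$, so that the even‑observable separability hypothesis can be invoked. Once this identification is in place, the positivity/compression argument and the conclusion are routine.
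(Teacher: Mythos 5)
Your proof is correct and follows essentially the same route as the paper's: you argue the contrapositive, observe that the computational-basis matrix elements of $\sigma$ are expectation values of products of even local observables (the paper writes these as $Q_{A,e}^\dagger O_A Q_{A,e}$ and $Q_{B,o}^\dagger O_B Q_{B,o}$), and then invoke the even-observable separable decomposition, your compressions $\mu^{(k)},\nu^{(k)}$ being exactly the paper's $Q_{A,e}\rho_{A,x}Q_{A,e}^\dagger$ and $Q_{B,o}\rho_{B,x}Q_{B,o}^\dagger$. The sign/ordering bookkeeping you flag is indeed harmless (the even $A$-operators carry no Jordan--Wigner string across the cut, and any residual signs are of product form, i.e.\ a local unitary), so no gap remains.
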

\begin{proof}
    This lemma follows by contradiction---let us assume that $\rho$ is separable with respect to even observables. Now, for any two operators $O_A, O_B \in \mathbb{C}^{2\times 2}$, consider the even observables
    \begin{align}\label{eq:O_A_O_B}
        O_A^{(+)} = Q_{A, e}^\dagger O_A Q_{A, e} \text{ and }O_B^{(+)} = Q_{B, o}^\dagger O_B Q_{B, o}.
    \end{align}
    Note that $O_A^{(+)}$ acts on the fermionic modes in $A$ and $O_B^{(+)}$ acts on the fermionic modes in $B$. We note also that
    \begin{align}
    \textnormal{Tr}(O_A^{(+)}O_B^{(+)}\rho) = \textnormal{Tr}(O_A O_B \sigma).
    \end{align}
    Now, since $\rho$ is separable with respect to even observables by assumption, it follows that $\exists \rho_{A, x}, \rho_{B, x}$ and a probability measure $\mu$ such that
    \begin{align}
    \text{Tr}(O_A^{(+)}O_B^{(+)}\rho) = \int \text{Tr}(O_A^{(+)}\rho_{A, x}) \text{Tr}(O_B^{(+)}\rho_{B, x}) d\mu (x),
    \end{align}
    and consequently, using Eq.~(\ref{eq:O_A_O_B}), we find that
    \begin{align}
    \text{Tr}(O_A O_B \sigma) = \int \textnormal{Tr}(O_A \sigma_{A, x}) \textnormal{Tr}(O_B \sigma_{B, x}) d\mu (x),
    \end{align}
    where $\sigma_{A, x} = Q_{A, e} \rho_{A, x} Q_{A, e}^\dagger$ and $\sigma_{B, x} = Q_{B, o} \rho_{B, x} Q_{B, o}^\dagger$. This would imply that $\sigma$ is separable and therefore, by contradiction, we conclude that $\rho$ cannot be separable even with respect to even observables.
\end{proof}

\begin{proposition}\label{prop:non_sep_fermion_even_obs}
    Consider a system with $m = 4$ fermionic modes, with the sub-system $A$ with modes 1 and 2 and sub-system $B$ with modes 3 and 4, whose density matrix $\rho(t)$ satisfies the Lindblad master equation 
    \begin{align*}
    \frac{d}{dt}\rho(t) = -i[J(a_2 a_3 + a_3^\dagger a_2^\dagger), \rho(t)] +  \sum_{j = 1}^4 \big(\kappa_1 \mathcal{D}_{a_j}+\kappa_2 \mathcal{D}_{a_j^\dagger}\big) \rho(t).
    \end{align*}
    Then, $\forall \kappa_1, \kappa_2 > 0$, $\exists \rho(0)$ which is separable with respect to even observables (Definition \ref{def:ferm_sep_even_obs}) such that $\rho(t)$ cannot be separable for all $t \geq 0$.
\end{proposition}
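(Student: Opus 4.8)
The plan is to run the same short-time strategy as in the proof of Proposition~\ref{prop:non_sep_fermion_all_obs}, but now feeding the result into the two-qubit reduction of Lemma~\ref{lemma:sep_suff}. Concretely, I would pick an initial state $\rho(0)$ that is separable with respect to even observables, expand $\rho(t)=\rho(0)+t\,\mathcal{L}\rho(0)+O(t^2)$, and show that the associated effective two-qubit state $\sigma(t)=Q_{A,e}Q_{B,o}\,\rho(t)\,Q_{A,e}^\dagger Q_{B,o}^\dagger$ is entangled for all sufficiently small $t>0$, uniformly in $\kappa_1,\kappa_2$; Lemma~\ref{lemma:sep_suff} then forbids $\rho(t)$ from being separable with respect to even observables at those times, contradicting separability for all $t\geq 0$. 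The initial state I would use is the four-mode ``delocalized particle'' state $\rho(0)=\ket{\phi}\!\bra{\phi}$ with
\begin{align*}
\ket{\phi}=\tfrac{1}{\sqrt{2}}\big(a_1^\dagger+a_4^\dagger\big)\,a_2^\dagger a_3^\dagger\ket{\text{vac}},
\end{align*}
i.e.\ modes $2$ and $3$ are filled and one further particle is coherently shared between mode $1\in A$ and mode $4\in B$. This is the natural analogue of the two-mode example $(a_1^\dagger+a_2^\dagger)\ket{\text{vac}}/\sqrt{2}$ given above Definition~\ref{def:ferm_sep_all_obs}: the ``$a_1^\dagger$'' and ``$a_4^\dagger$'' branches differ in the particle-number parity of \emph{both} subsystems $A$ and $B$, so no even observable on $A$ (and none on $B$) can connect them; running the same parity argument as in that example shows that $\rho(0)$ is separable --- indeed a product state --- with respect to even observables.

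Next I would track $\sigma(t)$ to first order in $t$. Because modes $2$ and $3$ are both occupied in $\ket{\phi}$, the pairing Hamiltonian annihilates the $(2,3)$ pair, $H\ket{\phi}=J(a_2a_3+a_2^\dagger a_3^\dagger)\ket{\phi}=-\tfrac{J}{\sqrt{2}}\big(a_1^\dagger+a_4^\dagger\big)\ket{\text{vac}}$, so $-i[H,\rho(0)]$ contains a coherence between the ``$a_1^\dagger$'' branch with the $(2,3)$ pair removed and the original ``$a_4^\dagger$'' branch. Applying the reduction map $Q_{A,e}Q_{B,o}$: the branch carrying a particle in mode $1$ is even on $A$ and is mapped into qubit $A$, while the branch carrying it in mode $4$ is odd on $A$ and is killed; a direct computation gives $\sigma(0)=\tfrac{1}{2}\ket{1_A 0_B}\!\bra{1_A 0_B}$ (a product state) together with $\bra{0_A 1_B}\sigma(t)\ket{1_A 0_B}=\tfrac{J}{2}t+O(t^2)$ up to a phase, i.e.\ a single entangling coherence of order $t$. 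The two populations that enter the partial-transpose test against this coherence, $\bra{0_A 0_B}\sigma(t)\ket{0_A 0_B}$ and $\bra{1_A 1_B}\sigma(t)\ket{1_A 1_B}$, both vanish to first order: the Fock configurations they single out differ from $\mathrm{supp}(\rho(0))$ by two single-mode occupations, so they are populated neither by $[H,\rho(0)]$ (which has no diagonal part) nor by any single-mode jump operator $\mathcal{D}_{a_j},\mathcal{D}_{a_j^\dagger}$, and are hence $O(t^2)$ with a coefficient that may depend on $\kappa_1,\kappa_2$ but plays no role. Consequently
\begin{align*}
\big|\bra{0_A 1_B}\sigma(t)\ket{1_A 0_B}\big|^2>\bra{0_A 0_B}\sigma(t)\ket{0_A 0_B}\,\bra{1_A 1_B}\sigma(t)\ket{1_A 1_B}
\end{align*}
for all sufficiently small $t>0$ and all $\kappa_1,\kappa_2>0$, so the partial transpose of $\sigma(t)$ is not positive and $\sigma(t)$ is entangled.

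By Lemma~\ref{lemma:sep_suff}, $\rho(t)$ is then not separable with respect to even observables for those $t$, which contradicts separability of $\rho(t)$ at all $t\geq 0$ and proves the proposition. As for Proposition~\ref{prop:non_sep_fermion_all_obs}, one can complement this with a numerical simulation of the four-mode model showing that a suitable even-observable entanglement measure of $\rho(t)$ peaks at a time $t^*\sim 1/\kappa$, confirming that, unlike the bosonic case, no rate of incoherent particle loss or gain renders the fermionic state separable at all times. The step I expect to be the most delicate is the bookkeeping inside the projection $Q_{A,e}Q_{B,o}$ --- tracking the Jordan--Wigner signs when evaluating $H\ket{\phi}$, the jump-operator action on $\rho(0)$, and the images of the relevant Fock configurations under the $Q$'s (although none of these signs affects the magnitudes entering the partial-transpose test) --- together with the accompanying verification that $\rho(0)$ is genuinely separable with respect to even observables, which requires the parity argument for the two-mode subsystems $A$ and $B$ rather than the single-mode version spelled out in the text.
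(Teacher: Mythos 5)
Your proposal is correct and follows essentially the same route as the paper's proof: the paper likewise feeds a first-order expansion of $\rho(t)$ into the two-qubit reduction of Lemma~\ref{lemma:sep_suff}, using the particle-hole-conjugate initial state $(a_1^\dagger+a_4^\dagger)\ket{\text{vac}}/\sqrt{2}$ (empty $(2,3)$ pair created by $a_3^\dagger a_2^\dagger$) instead of your filled-pair state (annihilated by $a_2a_3$), and both yield the same $O(t)$ coherence between $\ket{0_A1_B}$ and $\ket{1_A0_B}$ with $O(t^2)$ populations in the conjugate pair, so the partial-transpose certificate goes through identically. The only slip is the parenthetical claim that $\rho(0)$ is ``indeed a product state'' with respect to even observables --- it is not (e.g.\ $\expect{P_AP_B}=-1$ while $\expect{P_A}=\expect{P_B}=0$), but your own parity argument correctly exhibits it as a mixture of two even-product states, which is all the proposition requires.
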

\begin{proof}
    Throughout this proof, ``separability" refers to separability with respect to even observables (Definition \ref{def:ferm_sep_even_obs}). We choose $\rho(0) = \ket{\psi(0)}\!\bra{\psi(0)}$, where
    \begin{align}
    \ket{\psi(0)} = \frac{1}{\sqrt{2}}\big(a_1^\dagger\ket{\text{vac}} + a_4^\dagger \ket{\text{vac}}\big).
    \end{align}
    It can be noted that $\rho(0)$ is separable with respect to even observables since, for any even observables $O_A^{(+)}$ on $A$ and $O_B^{(+)}$ on $B$,
    \begin{align}
    \textnormal{Tr}(\rho(0) O_A^{(+)}O_B^{(+)}) = \frac{1}{2}\bra{\psi_A^{(1)}} O_A^{(+)} \ket{\psi_A^{(1)}}\bra{\psi_B^{(1)}} O_B^{(+)}\ket{\psi_B^{(1)}} + \frac{1}{2}\bra{\psi_A^{(2)}} O_A^{(+)}\ket{\psi_A^{(2)}} \bra{\psi_B^{(2)}} O_B^{(+)} \ket{\psi_B^{(2)}},
    \end{align}
    where $\ket{\psi_A^{(1)}} = a_1^\dagger \ket{\text{vac}}, \ket{\psi_B^{(1)}} = \ket{\text{vac}}, \ket{\psi_A^{(2)}} = \ket{\text{vac}} $, and $\ket{\psi_B^{(2)}} = a_4^\dagger \ket{\text{vac}}$. Again, to show that $\rho(t)$ is not separable for all $t > 0$, it is enough to show that there isn't a separable state $\sigma(t)$ such that $\norm{\rho(t) - \sigma(t)}_1 \leq O(t^2)$ as $t \to 0$. To show this, we consider a first-order expansion of $\rho(t)$:
    \begin{align}
    \rho(t) &= \rho(0) + t\mathcal{L}\rho(0) + O(t^2) \nonumber \\
    &=\ket{\psi(t)}\!\bra{\psi(t)} +  t \sum_{j = 1}^4 \big(\kappa_1\mathcal{D}_{a_j}(\ket{\psi(0)}\!\bra{\psi(0)}) + \kappa_2\mathcal{D}_{a_j^\dagger}(\ket{\psi(0)}\!\bra{\psi(0)})\big) + O(t^2),
    \end{align}
    where $\ket{\psi(t)} = (a_1^\dagger + a_4^\dagger - it a^\dagger_3 a^\dagger_2 a_1^\dagger -i t a_3^\dagger a_2^\dagger a_4^\dagger)\ket{\text{vac}}/\sqrt{2}$. We can now compute the state $\sigma(t) = (Q_{A, e}Q_{B, o})\rho(t) (Q_{B, o}^\dagger Q_{A, e}^\dagger) $ defined in Lemma \ref{lemma:sep_suff}, which effectively amounts to projecting $\rho(t)$ on the subspace spanned by $\{a_3^\dagger\ket{\text{vac}}, a_4^\dagger \ket{\text{vac}}, a_3^\dagger a_2^\dagger a_1^\dagger \ket{\text{vac}}, a_4^\dagger a_2^\dagger a_1^\dagger \ket{\text{vac}}\}$ and identifying $a_3^\dagger\ket{\text{vac}} \to \ket{0_A, 0_B}, a_4^\dagger \ket{\text{vac}} \to \ket{0_A, 1_B}, a_3^\dagger a_2^\dagger a_1^\dagger \ket{\text{vac}} \to \ket{1_A, 0_B}, a_4^\dagger a_2^\dagger a_1^\dagger \ket{\text{vac}} \to \ket{1_A, 1_B}$:
    \begin{align}
        \sigma = ((1 - (\kappa_1 + \kappa_2) t)\ket{0_A, 1_B} - i t \ket{1_A, 0_B})((1 - (\kappa_1+\kappa_2) t)\bra{0_A, 1_B} + it \bra{1_A, 0_B}) + O(t^2),
    \end{align}
It is easy to see that $\forall \kappa_1, \kappa_2 > 0$, $\sigma(t)$ (as a 2-qubit state), does not admit an $O(t^2)$ separable approximation for sufficiently small $t$. Consequently, from Lemma \ref{lemma:sep_suff}, we find that $\rho(t)$ (as a 4-mode fermionic state) does not admit an $O(t^2)$ separable approximation, thus proving the lemma.
\end{proof}
\noindent To analyze the timescales at which the state $\rho(t)$ becomes non-separable relative to even observables, in Fig.~\ref{fig:fermionic_even_obs_timescale}, we numerically simulate the four-fermionic-mode model from Proposition \ref{prop:non_sep_fermion_even_obs} with $\kappa_1=\kappa_2=\kappa$ and compute $\rho(t)$. To quantify the extent to which $\rho(t)$ is non-separable, we first construct the effective two-qubit state $\sigma(t) = Q_{A, e} Q_{B, o} \rho(t) Q_{B, o}^\dagger Q_{A, e}^\dagger$ from $\rho(t)$ defined in Lemma \ref{lemma:sep_suff} and then compute the minimum eigenvalue of its partial transpose. The negative of this minimum eigenvalue is the entanglement measure shown in in Fig.~\ref{fig:fermionic_even_obs_timescale}(a). We find that, while at long times $\sigma$ has a non-negative partial transpose and is thus separable, at short times $\sigma$ is entangled irrespective of how large $\kappa$ is. Furthermore, the minimum eigenvalue of the partial transpose of $\sigma$ is attained at $t^* \sim 1/ \kappa$ [Fig.~\ref{fig:fermionic_even_obs_timescale}(b)], which sets the time-scale at which non-separability with respect to even observables in this system is developed.

\begin{figure}
    \includegraphics[width=0.7\textwidth]{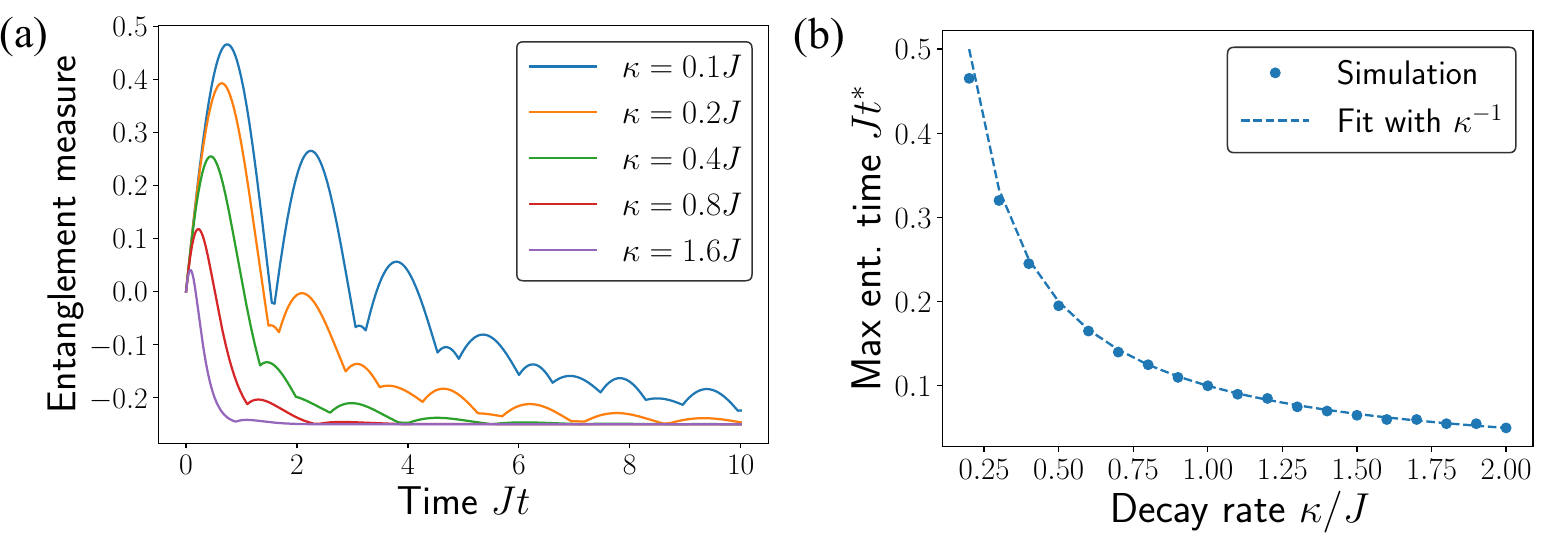}
    \caption{Numerical simulation of the four-mode fermionic model considered in Proposition \ref{prop:non_sep_fermion_even_obs} with both particle loss rate $\kappa_1$ and particle gain rate $\kappa_2$ set to $\kappa$ and $\ket{\psi(0)} = (a_1^\dagger + a_4^\dagger) \ket{\text{vac}}$. (a) Time evolution of the entanglement measure computed by first computing the two-qubit state $\sigma(t)$ corresponding to the 4-mode fermionic state $\rho(t)$ from Lemma \ref{lemma:sep_suff} and then computing (the negative) of the minimum eigenvalue of its partial transpose. As per Lemma \ref{lemma:sep_suff}, this quantifies non-separability with respect to all observables for 2-mode fermionic systems. (b) The time $t^*$ at which the entanglement measure computed in (a) is maximum as a function of $\kappa$ --- we see that entanglement is developed at time-scales $t^* \sim 1 / \kappa$ no matter how large $\kappa$ is.}\label{fig:fermionic_even_obs_timescale}
\end{figure}

\bibliographystyle{apsrev4-1}
\bibliography{references.bib}